\documentclass[reprint,amsfonts, amssymb, amsmath,  showkeys, nofootinbib,pra, superscriptaddress, twocolumn,longbibliography]{revtex4-2}
\usepackage{float}
\makeatletter
\let\newfloat\newfloat@ltx
\makeatother
\usepackage[english]{babel}
\usepackage[utf8]{inputenc}
\usepackage{graphics}
\usepackage{selinput}
\usepackage{enumerate}

\usepackage{braket}
\usepackage{amsthm}
\usepackage{mathtools}
\usepackage{physics}
\usepackage{graphicx}
\usepackage[left=16mm,right=16mm,top=35mm,columnsep=15pt]{geometry} 
\usepackage{adjustbox}
\usepackage{placeins}
\usepackage[T1]{fontenc}
\usepackage{lipsum}
\usepackage{csquotes}
\usepackage{mathbbol}  
\usepackage[linesnumbered,ruled,vlined]{algorithm2e}
\SetKwInput{kwInit}{Init}
\usepackage{xcolor} 
\usepackage{tabularray}
\definecolor{WildSand}{rgb}{0.968,0.968,0.968}     
\definecolor{NewYorkPink}{rgb}{0.918,0.604,0.682}
\definecolor{Feijoa}{rgb}{0.459, 0.765, 0.541}          
\definecolor{Glacier}{rgb}{0.682,0.843,0.918}

\def\LC{\mathcal{L}}

\def\a{\alpha}

\usepackage[makeroom]{cancel}
\usepackage[colorlinks=true,citecolor=blue,linkcolor=magenta]{hyperref}

\usepackage{tikz}
\tikzset{every picture/.style=remember picture}

\newcommand{\poly}{\operatorname{poly}}

\newcommand{\Ebb}{\mathbb{E}}

\newcommand{\DC}{\mathcal{D}}

\newcommand{\OC}{\mathcal{O}}
\newcommand{\PC}{\mathcal{P}}

\newcommand{\VC}{\mathcal{V}}

\newcommand{\Var}{{\rm Var}}
\newcommand{\Cov}{{\rm Cov}}

\renewcommand{\geq}{\geqslant}
\renewcommand{\leq}{\leqslant}

\DeclareMathOperator*{\argmax}{arg\,max}
\DeclareMathOperator*{\argmin}{arg\,min}
\renewcommand{\vec}[1]{\boldsymbol{#1}}

\newcommand{\bs}{\textsf{BS}}

\newcommand{\al}{\alpha }

\renewcommand{\th}{\theta }

\newcommand{\sinc}{\mathrm{sinc} }
\newcommand{\av}{\vec{a} }
\newcommand{\bv}{\vec{b} }
\newcommand{\zv}{\vec{z} }
\newcommand{\xv}{\vec{x} }
\newcommand{\yv}{\vec{y} }
\newcommand{\thv}{\vec{\theta}}

\def\be{\begin{equation}}
\def\ee{\end{equation}}
\def\bs{\begin{split}}
\def\e{\end{split}}
\def\ba{\begin{eqnarray}}
\def\bea{\begin{eqnarray}}

\def\tea{\end{eqnarray}}
\def\ea{\end{eqnarray}}
\def\eea{\end{eqnarray}}

\def\a{\alpha}

\def\a{\alpha}

\def\a{\alpha}

\newtheorem{theorem}{Theorem}
\newtheorem{theoremappendix}{Theorem}
\newtheorem{lemma}{Lemma}
\newtheorem{corollary}{Corollary}

\newtheorem{proposition}{Proposition}
\newtheorem*{proposition*}{Proposition}

\newtheorem{assumption}{Assumption}

\usepackage{amssymb}
\usepackage{dsfont}

\usepackage[normalem]{ulem}
\usepackage{hyperref}

\usepackage{ulem}

\usepackage{minitoc}
\usepackage{tocloft}

\makeatletter
\renewcommand{\part}[1]{
  \refstepcounter{part}
  \addcontentsline{toc}{part}{#1}
  \mtcaddpart[part]{#1}
}
\makeatother

\makeatletter
\renewcommand\onecolumngrid{
\do@columngrid{one}{\@ne}
\def\set@footnotewidth{\onecolumngrid}
\def\footnoterule{\kern-6pt\hrule width 1.5in\kern6pt}
}
\renewcommand\twocolumngrid{
        \def\footnoterule{
        \dimen@\skip\footins\divide\dimen@\thr@@
        \kern-\dimen@\hrule width.5in\kern\dimen@}
        \do@columngrid{mlt}{\tw@}
}
\makeatother

\newcommand{\nparams}{m}

\begin{document}
\doparttoc 
\faketableofcontents 
\part{}

\title{IQP Born Machines under Data-dependent and Agnostic Initialization Strategies}

\author{Sacha Lerch}
\affiliation{Institute of Physics, Ecole Polytechnique F\'{e}d\'{e}rale de Lausanne (EPFL), CH-1015 Lausanne, Switzerland}
\affiliation{Centre for Quantum Science and Engineering, \'{E}cole Polytechnique F\'{e}d\'{e}rale de Lausanne (EPFL), Lausanne, Switzerland}

\author{Joseph Bowles}
\affiliation{Xanadu, Toronto, ON, M5G 2C8, Canada}

\author{Ricard Puig}
\affiliation{Institute of Physics, Ecole Polytechnique F\'{e}d\'{e}rale de Lausanne (EPFL), CH-1015 Lausanne, Switzerland}
\affiliation{Centre for Quantum Science and Engineering, \'{E}cole Polytechnique F\'{e}d\'{e}rale de Lausanne (EPFL), Lausanne, Switzerland}

\author{Erik Armengol}
\affiliation{ICFO-Institut de Ciències Fotòniques, The Barcelona Institute of Science and Technology, 08860 Castelldefels (Barcelona), Spain}
\affiliation{Eurecat, Centre Tecnològic de Catalunya, Multimedia Technologies, 08005 Barcelona, Spain}

\author{Zo\"{e} Holmes}
\affiliation{Institute of Physics, Ecole Polytechnique F\'{e}d\'{e}rale de Lausanne (EPFL), CH-1015 Lausanne, Switzerland}
\affiliation{Centre for Quantum Science and Engineering, \'{E}cole Polytechnique F\'{e}d\'{e}rale de Lausanne (EPFL), Lausanne, Switzerland}

\author{Supanut Thanasilp}
\affiliation{Chula Intelligent and Complex Systems, Department of Physics, Faculty of Science, Chulalongkorn University, Bangkok, Thailand, 10330}

\date{\today}

\begin{abstract}
Quantum circuit Born machines based on instantaneous quantum polynomial-time (IQP) circuits are natural candidates for quantum generative modeling, both because of their probabilistic structure and because IQP sampling is provably classically hard in certain regimes. Recent proposals focus on training IQP-QCBMs using Maximum Mean Discrepancy (MMD) losses built from low-body Pauli-$Z$ correlators, but the effect of initialization on the resulting optimization landscape remains poorly understood.
In this work, we address this by first proving that the MMD loss landscape suffers from barren plateaus for random full-angle-range initializations of IQP circuits. We then establish lower bounds on the loss variance for identity and an unbiased data-agnostic initialization. We then additionally consider a data-dependent initialization that is better aligned with the target distribution and, under suitable assumptions, yields provable gradients and generally converges quicker to a good minimum (as indicated by our training of circuits with 150 qubits on genomic data). Finally, as a by-product, the developed variance lower bound framework is applicable to a general class of non-linear losses, offering a broader toolset for analyzing warm-starts in quantum machine learning.
\end{abstract}

\maketitle

\section{Introduction}

Quantum generative models are viewed as a promising route to quantum advantage because they can offer intrinsically efficient sampling and can potentially efficiently represent complex distributions~\cite{sweke2021quantum, gao2022enhancing,huang2025generative,hibat2024framework,coyle2020born}. A standard framework is the Quantum Circuit Born machine (QCBM)~\cite{coyle2020born,benedetti2019generative, liu2018differentiable, rudolph2023trainability}, in which a parametrized quantum state is measured in the computational basis to produce samples from a model distribution. More recently, attention has shifted toward “train-classically-deploy-quantumly” generative models, where training is performed entirely classically and then the resulting model is executed on quantum hardware to generate samples~\cite{cerezo2023does, bako2025fermionic,huang2025generative, recio2025train,herrero2025born, kasture2023protocols, kurkin2025universality}.

However, a key obstacle for quantum generative models, as for variational quantum algorithms more broadly, is trainability~\cite{larocca2024review, anschuetz2022quantum, abbas2023quantum,rudolph2023trainability,shen2026characterizing,mcclean2018barren,herbst2025limits}. It is now well established that variational quantum models can exhibit exponential concentration phenomena in which optimization signals decay exponentially with system size under random initialization~\cite{mcclean2018barren, larocca2024review}. In such cases the loss landscape can be visualized as a so-called `barren plateau' where exponential precision is typically required to successfully train.
In quantum generative modeling, “explicit” losses such as the KL divergence are known to be especially susceptible to exponential concentration~\cite{rudolph2023trainability, liu2018differentiable}. By contrast, the Maximum Mean Discrepancy (MMD) loss, when used with an appropriate bandwidth, can be expressed in terms of low-body Pauli-$Z$ correlators and therefore provably avoids exponential concentration for shallow hardware-efficient ans\"atze~\cite{rudolph2023trainability}. Thus, much of the effort on training QCBMs has shifted toward MMD-based losses~\cite{recio2025train, herrero2025born, kasture2023protocols, kurkin2025universality, zou2025conquer,majumder2024variational,bako2025fermionic,herbst2025limits}.

Instantaneous Quantum Polynomial-time (IQP) based QCBMs have recently attracted attention as natural candidates for quantum generative learning because they offer a convenient combination of features. Firstly, the key quantities needed for MMD-based training (notably low-body Pauli-$Z$ correlators) can be evaluated efficiently on a classical computer~\cite{nest2009simulating,recio2025train}. In parallel, the corresponding sampling task is nevertheless believed to be classically hard under standard complexity-theoretic assumptions~\cite{shepherd2009temporally,bremner2011classical,bremner2016average, hangleiter2023computational}. This makes IQP-QCBMs a convenient testbed for probing trainability and optimization behavior with classical resources today, while still offering the potential for a quantum advantage in the longer term once hardware improves~\cite{recio2025train, huang2025generative,herrero2025born, kasture2023protocols, kurkin2025universality}.

Motivated by this paradigm, recent empirical studies suggest that IQP-QCBMs exponentially concentrate under full-range uniform initialization, yet can be trained at surprisingly large scales when equipped with data-dependent initialization strategies~\cite{recio2025train}. At present, however, these numerical observations lack a clear theoretical explanation.
Here we address this gap by analyzing the loss-landscape variance for three families of initialization: (i) uniformly random initializations over the full angle range, (ii) small-angle initializations in data-agnostic regions, and (iii) data-dependent small-angle initialization strategies. In particular, we rigorously show the emergence of barren plateaus over the whole loss landscape with all-to-all IQP topology, excluding the practicality of the full angle range initialization. For small-angle range initialization strategies, we explicitly connect the loss variance around a small patch to the curvature of the loss landscape at the initialization center. Consequently, we show that agnostic initialization strategies and data-dependent strategy under reasonable physical assumptions of a target distribution results in patches with substantial gradients, exhibiting initial trainability guarantees, which we confirm for small-scale circuits via numerical simulations in the PennyLane software platform  \cite{bergholm2018pennylane}. More broadly, these variance lower bounds apply to a general class of non-linear losses, providing a versatile toolset for analyzing warm-starts across quantum machine learning.

Going beyond the variance analysis, we provide empirical evidence with large scale $150$ qubit simulation (on genomic data via the IQPopt package \cite{recio2025iqpopt, ErikRecio2025}) that data-dependent initialization strategies consistently converge faster than an agnostic unbiased strategy where the initial guess corresponds to the uniform distribution over all possible bit strings, and have better performance over the (popular) identity initialization strategy. This analysis clarifies when and why data-dependent warm starts can outperform data-agnostic ones. A summary of our results is provided in Fig.~\ref{fig:main_fig} and  Section~\ref{sec:overview}.

\begin{figure*}[ht]
    \centering
    \begin{tikzpicture}
        \node[anchor=south west, inner sep=0] (image) at (0,0) {
           \includegraphics[width=0.95\linewidth]{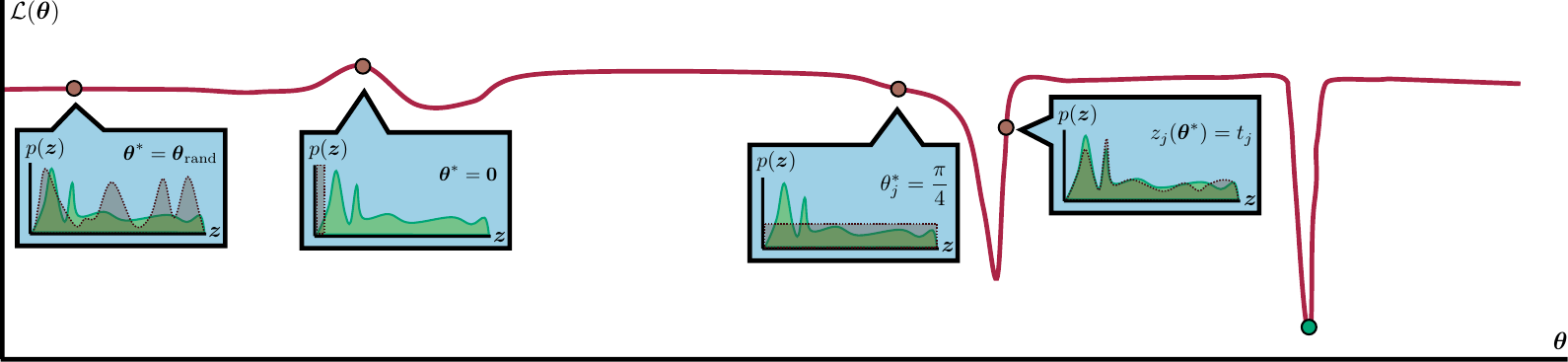}
        };

        \begin{scope}[x={(image.south east)}, y={(image.north west)}]
            \node at (0.07, 0.2) {Thm.~\ref{thm:mmd-loss-concentration-random-alltoall}};
            \node at (0.26, 0.2) {Thm.~\ref{thm:variance-guarantee-agnostic-initialisation-main}};
            \node at (0.548, 0.18) {Thm.~\ref{thm:variance-guarantee-agnostic-initialisation-main}};
            \node at (0.74, 0.3) {Thm.~\ref{thm:data-dependent-local-trainability-main}};
        \end{scope}
    \end{tikzpicture}
    \caption{\textbf{Illustration of our main results.} Schematic MMD loss landscape $\mathcal{L}(\boldsymbol{\theta})$ for an IQP generative model. The insets show the corresponding computational-basis distributions at representative initialization points: the model distribution $p_{\thv}(\zv)$ is shown in gray, and the target distribution $p_{\rm data}(\zv)$ in green. Identity initialization ($\thv^*=\vec{0}$) yields a highly biased distribution, $p_{\thv^*}(\vec{0})=1$, which corresponds to a local maximum of the MMD. An unbiased data-agnostic initialization ($\theta_j^{*}=\pi/4$) yields a uniform distribution over bit strings and usually corresponds to a region with gradient. A data-dependent initialization yields an initial distribution that is better aligned with the target. The green point indicates schematically the minimum corresponding to $p_{\rm data}(\zv)$. }
    \label{fig:main_fig}
\end{figure*}

\section{Framework}

\subsection{Quantum generative modeling}

Generative modeling aims to train a model using samples drawn from an unknown target distribution, such that the model distribution closely approximates the target and enables direct sampling from the learned distribution~\cite{vaswani2017attention,ho2020denoising,kingma2013auto,goodfellow2020generative}. 
More precisely, consider the target distribution $\PC_{\rm data} = \{ p_{\mathrm{data}}(\zv) \}_{\zv}$ over bit strings $\zv\in\{0,1\}^n$ where $p_{\rm data}(\zv)$ is a target probability associated with the bit string $\zv$, and similarly a model distribution $\PC_{\thv} = \{p_{\thv}(\zv) \}_{\zv}$ parametrized by a set of trainable parameters $\thv$. The goal is to identify optimal parameters $\thv^*$ such that $\PC_{\thv^*} \approx  \PC_{\rm data}$ thereby enabling the generation of new samples according to $\zv_{\rm new} \sim \PC_{\thv^*}$.

A quantum state acts as a natural sampler: performing a measurement in the computational basis yields a bit string drawn from the probability distribution defined by the Born rule. Among the various quantum generative models~\cite{zoufal2019quantum,dallaire2018quantum,Amin2018Quantum, coopmans2023sample,zoufal2021variational,demidik2025expressive,chang2024latent,barthe2025parameterized,romero2021variational, martinez2025quantum,wu2025multidimensional,majumder2024variational}, a Quantum Circuit Born Machine (QCBM) leverages this property, functioning as a generative model by parameterizing the outcome distribution through a quantum state~\cite{coyle2020born,benedetti2019generative, liu2018differentiable, rudolph2023trainability}. Specifically, we consider a parameterized state of the form
\begin{align}
    |\psi(\thv)\rangle = U(\thv) |\psi_0\rangle \;,
\end{align}
where $U(\thv)$ is a parametrized quantum circuit and  $|\psi_0\rangle$ is some initial state.
The resulting model distribution, obtained through measurement in the computational basis, is defined as
\begin{equation}\label{eq:def-model-dist}
    p_{\thv}(\zv)=|\langle \zv | \psi(\thv)\rangle|^2 \;\;.
\end{equation}

Fundamental properties of the model distribution, such as its expressivity and scalability, depend significantly on the architectural design of $U(\boldsymbol{\theta})$. In this work, we focus on parameterized Instantaneous Quantum Polynomial-time Quantum Circuits Born Machines (IQP-QCBMs)~\cite{recio2025train, huang2025generative,herrero2025born, kasture2023protocols, kurkin2025universality}. This architecture is motivated by proposals for demonstrating a quantum advantage in sampling tasks, where certain instances of IQP circuits produce distributions that are computationally intractable to sample from using classical resources~\cite{shepherd2009temporally,bremner2011classical,bremner2016average, hangleiter2023computational}.

An IQP-QCBM is constructed by applying an initial layer of Hadamard gates to the all-zero state $|0\rangle^{\otimes n}$, followed by a sequence of diagonal one- and two-body Pauli-$Z$ rotations, and concluding with a final layer of Hadamard gates. In the $X$-basis, this circuit architecture is equivalent to the unitary operator
\begin{align}\label{eq:IQP-circuit}
    U(\thv) = \exp\left(i\sum_{j=1}^n \theta_j X_j + i\sum_{(j,k)\in E}\theta_{jk}X_jX_k\right) \;\;,
\end{align}
where $E \subseteq \{(j,k) : j < k\}$ denotes the set of edges defining the interaction graph (i.e., the qubit topology). Notably, since all generators in the exponent commute, the unitary $U(\thv)$ is invariant under the ordering of the individual gate operations. Here we focus primarily on the case of all-to-all two-qubit connectivity.

\subsection{Maximum Mean Discrepancy loss}

Training the quantum generative model is performed by minimizing a loss function $\LC(\thv)$ that quantifies the distance between the model and target distributions to find
\begin{align}
    \thv^* = \argmin_{\thv} \LC (\thv) \;\;.
\end{align}
Since the probability distributions $p_{\mathrm{data}}(\mathbf{z})$ and $p_{\boldsymbol{\theta}}(\mathbf{z})$ can be represented as vectors in a high-dimensional space, various metrics can be employed to quantify their discrepancy. An \textit{explicit} loss function directly compares individual probabilities across the distributions' support. Specifically, for each bit string $\zv$, the loss evaluates the proximity of $p_{\boldsymbol{\theta}}(\zv)$ to $p_{\mathrm{data}}(\zv)$. Many standard divergence measures can be fundamentally characterized as \textit{explicit} losses~\cite{rudolph2023trainability}, including Kullback-Leibler Divergence~\cite{kullback1951on}, Total Variation Distance, and Jensen-Shannon Divergence~\cite{lin1991divergence}.

In contrast, the Maximum Mean Discrepancy (MMD) loss~\cite{gretton2012kernel}, is an \textit{implicit} loss. Namely, rather than requiring explicit access to the probability distributions, it implicitly compares them by matching the \textit{correlators} of the two distributions~\cite{rudolph2023trainability,recio2025train}. If all correlators (or moments) of the distributions align perfectly, the two distributions are identical. More precisely, consider a subset of qubits $A \subseteq [n]$ with $[n] = \{1,2,...,n\}$; a model correlator associated with this subset can be expressed as the expectation value of the multi-qubit Pauli-Z operator 
\begin{align}\label{eq:def-correlator}
    z_A(\thv) = \langle \psi(\thv) | Z_A|\psi(\thv)\rangle \;,
\end{align}
where the Pauli-Z operator on the subset is defined as
\begin{equation}
   Z_A := \bigotimes_{j \in A} Z_j \;\;.
\end{equation}
Similarly, the target correlators associated with the distribution $p_{\mathrm{data}}(\zv)$ are defined as
\begin{align}\label{eq:target-correlators-definition}
    t_A&:=\Ebb_{\yv\sim p_{\mathrm{data}}}[\bra{\yv}Z_A\ket{\yv}] \\
    &=\Ebb_{\yv\sim p_{\mathrm{data}}}[(-1)^{\sum_{j\in A}y_j}] \;\;,
\end{align}
The MMD loss is then expressed as a weighted sum of the squared differences between the model and target correlators
\begin{equation}\label{eq:MMD-loss}
    \mathcal{L}(\thv)=\sum_{A\subseteq[n]} w_A \big(z_A(\thv)-t_A\big)^2 \;\;,
\end{equation}
where, for the Gaussian kernel of bandwidth $\sigma$, the weights are given by $w_A=(1-p_\sigma)^{n-|A|}p_\sigma^{|A|}$ with $p_\sigma=\frac{1-e^{-1/(2\sigma^2)}}{2}$. Note that this loss is the squared value of the MMD distance \cite{gretton2012kernel}.

Notably, the bandwidth parameter $\sigma$ dictates the `bodyness' of the MMD loss \cite{rudolph2023trainability}. Specifically, when $\sigma \in \Theta(\sqrt{n})$, the weights $w_A$ primarily emphasize low-body correlators (i.e., small $|A|$). Conversely, for $\sigma \in \mathcal{O}(1)$, the loss assigns significant weight to higher-order correlators. This effectively shifts the model's sensitivity, prioritizing the matching of many-body correlations over simpler, low-order ones. Therefore, we will refer to low-body MMD loss as those in which the bandwidth $\sigma \in \Theta(\sqrt{n})$.

\subsection{Exponential concentration}
\label{sec:global-vs-local-trainability}

Variational quantum algorithms exhibit a range of scalability challenges that also apply to generative models. Training can be hindered by limited circuit expressivity, local minima, or an unfavorable optimization landscape. Among these issues, exponential concentration of the loss landscape—equivalently, the barren plateau phenomenon~\cite{mcclean2018barren,larocca2024review,arrasmith2021equivalence}—is the most thoroughly studied.

A landscape is said to exponentially concentrate if the variance of the loss over parameters vanishes exponentially with the system size \(n\), i.e.,
\begin{align}
\Var_{\thv}[\LC(\thv)] \in \OC(\exp(-n))\,.
\end{align}
Combined with Chebyshev’s inequality,
\begin{align}
{\rm Pr}\!\left(|\LC(\thv) - \Ebb_{\thv\sim\DC}[\LC(\thv)]|\geq \epsilon\right)\leq\frac{{\rm Var}_{\thv\sim\DC}[\LC(\thv)]}{\epsilon^2}\, , 
\end{align}
this implies that the probability a random point in the landscape deviates appreciably from its mean vanishes exponentially in \(n\). Consequently, detecting such deviations at typical parameter values requires exponentially fine statistical precision, and hence an exponential number of measurement shots~\cite{arrasmith2020effect, aghaei2026pitfalls}.

Exponential concentration is now understood to arise from several mechanisms, including highly expressive circuits~\cite{mcclean2018barren, holmes2021connecting, fontana2023theadjoint, ragone2023unified, srimahajariyapong2025connecting}, properties of the target problem~\cite{holmes2020barren}, and other effects such as entanglement structure~\cite{patti2020entanglement,marrero2020entanglement}, loss-function locality~\cite{cerezo2020cost,letcher2023tight}, and noise~\cite{wang2020noise,crognaletti2024estimates}. More broadly, these phenomena can be seen as a manifestation of the curse of dimensionality~\cite{cerezo2023does}. 

While barren plateaus were initially characterized within the context of variational quantum algorithms, they were more recently extended to quantum machine learning problems involving training data and non-linear losses~\cite{xiong2025role, thanaslip2021subtleties, thanasilp2022exponential, kubler2021inductive, suzuki2023effect, xiong2023fundamental, shaydulin2021importance,leone2022practical}, including quantum generative modeling~\cite{rudolph2023trainability, shen2026characterizing, tangpanitanon2020expressibility,hirviniemi2026preventing, chang2024latent}. This transition highlights the shared fundamental challenge of exponential concentration across diverse parametrized quantum architectures and frameworks.

\subsection{Initialization strategies}\label{sec:initialization_strategies}
The presence of a barren plateau landscape for a uniform initialization across the full parameter space does not necessarily imply that a quantum model is untrainable. Even within a predominantly flat landscape, there may exist small regions with substantial gradients that contain favorable local minima. This observation has motivated a range of analytical and numerical studies exploring alternative initialization strategies to bypass barren plateaus~\cite{mhiri2025unifying,puig2024variational,puig2026warm,dborin2022matrix, goh2023lie,sauvage2021flip, gibbs2024exploiting, liu2023mitigating, rudolph2022synergy,grimsley2022adapt, mele2022avoiding,grant2019initialization,volkoff2021large,zhang2022escaping,wang2023trainability,park2023hamiltonian,park2024hardware,chang2024latent,shi2024avoiding}. 
One well-known problem-agnostic approach is the (near) identity initialization, in which the circuit parameters are chosen within a small neighborhood around the identity of the ansatz~\cite{grant2019initialization,volkoff2021large,zhang2022escaping,wang2023trainability,park2023hamiltonian,park2024hardware,chang2024latent,shi2024avoiding}. Although this represents one of the earliest and most extensively studied techniques for combating barren plateaus, it does not exploit problem-specific structure and may not always successfully avoid gradient suppression~\cite{mhiri2025unifying}.

Notably, for losses that are expressed as a single expectation value of an observable Ref.~\cite{mhiri2025unifying} unifies previous results on small-angle initialization and extends them to a broader class of circuit architectures. The key insight is the established link between local curvature (i.e., the Hessian) and the fluctuations of the loss function. Specifically, if the curvature at the center of the patch is at least polynomially large in the system size, the variance of the loss within a small neighborhood of that point is likewise at least polynomially large. Consequently, initializing parameters within this neighborhood ensures that the gradients do not vanish exponentially. In this crude form, the argument may appear somewhat circular and borderline trivial but we argue that this intuition implicitly underlies all known provable guarantees for small-angle initializations~\cite{grant2019initialization,volkoff2021large,zhang2022escaping,wang2023trainability,park2023hamiltonian,park2024hardware,chang2024latent,shi2024avoiding}.
In generative modeling, where loss functions are typically non-linear functions of expectation values, the relationship between alternative initialization and barren plateaus however remains less explored. 

Importantly, beyond the barren plateau phenomenon, initialization plays a crucial role in determining overall optimization performance. This has been empirically observed across various domains, including quantum chemistry~\cite{zou2025generative,ravi2022cafqa, mitarai2022quadratic,wilson2019optimizing,grimsley2022adapt,rudolph2022synergy}, quantum optimization~\cite{zhou2020quantum,egger2021warm,verdon2019learning,tate2021classically,wurtz2021fixed,zhou2020quantum, akshay2021parameter} as well as machine learning~\cite{recio2025train,niu2023warm,rudolph2022synergy}. 
Specifically for generative modeling, Ref.~\cite{recio2025train} employs a data-dependent technique to classically train IQP-QCBMs, achieving remarkable success on diverse target distributions for systems as large as $1,000$ qubits. Given that IQP-QCBMs exhibit barren plateaus under full-angle random initialization, as corroborated by Ref.~\cite{shen2026characterizing} and the present work, these results provide strong empirical evidence that data-dependent initialization can substantially mitigate barren plateaus. Nevertheless, a rigorous analytical understanding of this data-dependent initialization strategy remains an open problem.

In this work, we investigate alternative initialization strategies in the context of generative models, focusing primarily on IQP circuits and the MMD loss. We consider a parameter patch centered around the point $\thv^*$ with a half-width $r$, defined as the hypercube  
\begin{align}
    \VC(\thv^*,r) = \{\thv \;|\;  \th_i \in  [\th_i^* -r , \th_i^*+r]\,\;;\;\forall \,i\}\,\;\;,
\end{align}
with a uniform distribution over this region given by
\begin{align}\label{eq:param-drawn}
    \DC(\thv^*,r) = {\rm Unif}[ \VC(\thv^*,r)] \;\;.
\end{align}
We then examine the following initialization strategies, starting with the data-agnostic approaches.
\begin{description}
    \item[Full-angle random initialization] In this case $\thv^*=\vec{0}$ and $r=\pi/2$.
    This corresponds to the random initialization over the whole loss landscape. 
    \item[Identity initialization] In this case $\thv^*=\vec{0}$ and $r\ll 1$.
    This defines a restricted patch around the identity and serves as a canonical small-angle baseline for comparison with a data-dependent method.
    \item[Unbiased initialization] In this case $\thv^*$ is such that single qubit gates parameters are set to $\theta_j^*=\pi/4$ and two qubits gates to $\theta_{ij}^*=0$, and we consider $r \ll 1$. 
    At this reference point, the model distribution in Eq.~\eqref{eq:def-model-dist} becomes a uniform distribution over all possible bit strings, hence representing the maximally unbiased initial guess.
\end{description}
The final initialization strategy we consider uses prior knowledge of the data structure.
\begin{description}
    \item [Data-dependent initialization] In this case $\thv^*$ depends on the training data (or empirical target statistics) through marginal matching.
    We will focus on a simplified version of the initialization used in Ref.~\cite{recio2025train}.
\end{description}

In this paper we analytically investigate whether these initialization strategies allow the model to operate in parameter regimes where the loss function has substantial enough variation to allow for training at initialization. To quantify this, we adopt the following technical criterion: a loss $\LC(\thv)$ is considered \emph{initially trainable} with respect to a given initialization strategy if 
\begin{equation}\label{eq:initially-trainable}
    \Var_{\thv}[\LC(\thv)]\in\Omega\!\left(\frac{1}{\poly(n)}\right) \;\;.
\end{equation}
In practical terms, this ensures that fluctuations in the loss $\LC(\thv)$ can be resolved with a polynomially scaling number of samples.

\section{Results}
\label{sec:results}

\subsection{Overview and practical guidelines}\label{sec:overview}
We investigate how different initialization strategies influence variance scaling and training for the MMD loss with IQP circuit architectures. In this section, we provide an executive summary and guidelines for practitioners. Our analytical results are summarized in Fig.~\ref{fig:main_fig} which provides a conceptual illustration of the optimization landscapes under these distinct initialization regimes. Our key findings are informally summarized as follows: \\

\noindent \textbf{1.~Full-angle random initializations exponentially concentrate.} 
    For IQP circuits with all-to-all connectivity and full-angle random initialization, we prove that the MMD loss exponentially concentrates (see Theorem~\ref{thm:mmd-loss-concentration-random-alltoall}). However, for less connected models, e.g., those with constant degree connectivity, exponential concentration can be avoided.

\medskip

\noindent  \textbf{2.~Small-angle initializations can avoid exponential concentration at any point with sufficient curvature.} Despite the overall barren plateau landscape, we show in  Sec.~\ref{sec:restricted-patch-can-avoid-BP} that the correlators and the MMD loss exhibit at least polynomially large variance within any patch (with width $r \in \OC(1/\poly(n))$) provided that the curvature (i.e., the Hessian) at the center of the patch is non-vanishing ($\in  \OC(1/\poly(n))$). This is shown in Theorem~\ref{thm:MMD-variance-patch-lower-bound-curvature-informal}. 
Practically, this suggests that even in non-linear losses one can diagnose the susceptibility of an initialization point to concentration by evaluating its local curvature at that point. As detailed in Sec.~\ref{sec:restricted-patch-can-avoid-BP}, we show that this curvature decomposes into two distinct components: a \emph{mismatch-driven} term and a \emph{model-sensitivity} term.

\medskip

\noindent  \textbf{3.~Data-agnostic initialization strategies.} In Sec.~\ref{sec:data-agnostic-initialization-main}, we show that initializing within a small patch around the identity ($\thv^*=\vec{0}$) or the unbiased point ($\theta_j^* = \frac{\pi}{4}$) yields sufficient curvature to ensure polynomially large variances, thereby evading barren plateaus as the system scales (see Theorem~\ref{thm:variance-guarantee-agnostic-initialisation-main}).
However, their practical utility differs significantly. While identity initialization is a popular approach, the resulting model distribution at this point is a delta distribution concentrated on the all-zero bit string. Such a highly biased starting point is often detrimental to generative modeling (as shown in Fig.~\ref{fig:performance_curves}). In contrast, the unbiased initialization produces a uniform distribution over all bit strings, representing perhaps the most natural prior for a data-agnostic strategy.

\medskip

\noindent \textbf{4.~Data-dependent initialization:}  
For many real world data sets (see~Assumption~\ref{ass:weakly-correlated-target} and Assumption~\ref{ass:no-scaling-with-n}), we prove that the data-dependent marginal-matching initialization guarantees inverse-polynomial local curvature (we will also refer to this as \textit{polynomially large} curvature), thereby ensuring non-vanishing patch variance for the MMD loss. 
Empirical evidence further demonstrates that this strategy yields superior convergence speed and model fidelity compared to data-agnostic alternatives. 
This highlights a fundamental shift in perspective: while both agnostic and data-dependent strategies can evade barren plateaus by providing non-vanishing gradients, the latter ensures the optimization begins in a significantly more favorable region. By aligning the initial model distribution with the underlying correlations of the target, this strategy generally initializes the parameters within a ``high-quality'' patch of the landscape, thereby significantly reducing the effective distance to a global minimum (or, at least, a higher-performing local minimum).    

\subsection{Barren plateaus under the full-angle random initialization}
\label{sec:global-concentration}

We start by investigating the variance scaling of the MMD loss with for all-to-all IQP architectures with full-range random initializations. That is, we here study the case where the parameters are sampled independently according to $\DC(\vec{0},\pi/2)$. We find that the loss exponentially concentrates in this regime and therefore the model is untrainable when initialized randomly over the whole parameter space.

Intuitively, this concentration is expected at the level of individual correlators. With all-to-all connectivity, the effective Pauli light-cone of any $Z_A$ operator, when back-propagated through the circuit, becomes global i.e., encompasses all $n$ qubits. This setup therefore triggers something akin to so called `globality-induced' concentration of the correlators~\cite{cerezo2020cost,letcher2023tight}.

In Appendix~\ref{app:full-angle-correlator-variance}, we provide a general derivation of the correlator variance that explicitly reveals its dependence on the underlying circuit topology. This result shows that the variance of any $z_A(\thv)$ is fundamentally governed by the reach of its effective Pauli light-cone within the architecture. By specializing this bound to the all-to-all case, we find that the variance of every non-trivial correlator vanishes exponentially with the system size $n$.

While no individual component of the MMD loss provides a detectable signal, and therefore it is entirely intuitive that it will not be able to train the full loss, formally establishing that the full MMD loss exponentially concentrates is a little fiddly. One must further ensure that the cross-correlations arising from the quadratic expansion of the variance in Eq.~\eqref{eq:MMD-loss} do not constructively combine to produce a non-vanishing signal. By bounding these interference terms, we demonstrate that no such signal recovery occurs, leading to the following theorem:

\begin{theorem}[Exponential concentration of the MMD loss]
\label{thm:mmd-loss-concentration-random-alltoall} 
Let the parameters of the IQP circuit with all-to-all connectivity in Eq.~\eqref{eq:IQP-circuit} be independently drawn from ${\rm Unif}([-\pi/2,\pi/2])$. Then the variance of the MMD loss over the parameter space exponentially concentrates with
\begin{equation}
    \Var_{\thv}[\mathcal{L}(\thv)]\in\OC(2^{-n}) \;\;.
\end{equation}
\end{theorem}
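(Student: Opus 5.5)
The plan is to reduce everything to a single exact second-moment identity for the individual correlators, $\Ebb_{\thv}[z_A(\thv)^2]=2^{-n}$ for every nonempty $A$. Crude convexity bounds then handle the quadratic cross terms in the MMD loss, so no delicate cancellation argument is needed.

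\textbf{Step 1: closed form for $z_A$.} Write $U(\thv)=\prod_g e^{i\theta_g G_g}$, where $G_g$ ranges over the commuting generators $X_j$ and $X_jX_k$. Conjugating $Z_A$ through $U$:
\begin{itemize}
\item generators that commute with $Z_A$ cancel;
\item each generator that anticommutes with $Z_A$ contributes a factor $e^{2i\theta_g G_g}=\cos 2\theta_g + i\sin 2\theta_g\, G_g$.
\end{itemize}
The anticommuting set $\mathcal{G}_A$ consists of the $X_j$ with $j\in A$ and the $X_jX_k$ with exactly one endpoint in $A$. Since $Z_A\ket{0^n}=\ket{0^n}$, this gives
\begin{equation*}
z_A(\thv)=\sum_{S\subseteq\mathcal{G}_A}\Big(\prod_{g\in S} i\sin2\theta_g\prod_{g\in\mathcal{G}_A\setminus S}\cos2\theta_g\Big)\,\bra{0^n}\prod_{g\in S}G_g\ket{0^n}.
\end{equation*}
The matrix element equals $1$ if the $X$-strings in $S$ multiply to the identity and $0$ otherwise.

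\textbf{Step 2: second moment of a correlator.} For $\theta_g\sim{\rm Unif}[-\pi/2,\pi/2]$ we have
\begin{itemize}
\item $\Ebb[\cos2\theta]=\Ebb[\sin2\theta]=0$;
\item $\Ebb[\cos^2 2\theta]=\Ebb[\sin^2 2\theta]=1/2$;
\item $\Ebb[\cos2\theta\sin2\theta]=0$.
\end{itemize}
Independence across $g$ then makes the monomials for distinct $S$ orthogonal in $L^2$, and each has squared norm $2^{-|\mathcal{G}_A|}$. Hence $\Ebb[z_A]=0$ for $A\neq\emptyset$, and
\begin{equation*}
\Ebb[z_A^2]=2^{-|\mathcal{G}_A|}\,\#\{S\subseteq\mathcal{G}_A:\textstyle\sum_{g\in S}v_g=0\}=2^{-\mathrm{rank}\{v_g\}_{g\in\mathcal{G}_A}},
\end{equation*}
where $v_g\in\mathbb{F}_2^n$ is the support vector of $G_g$. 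Under all-to-all connectivity the rank is full:
\begin{itemize}
\item the $e_j$ with $j\in A$ are present directly;
\item for $k\notin A$, fix any $j\in A$; then $e_j+e_k\in\mathcal{G}_A$, so $e_k$ lies in the span.
\end{itemize}
The rank is therefore $n$, and $\Ebb[z_A^2]=2^{-n}$ for every nonempty $A$. This topology-dependent rank is precisely where constant-degree graphs would escape the bound.

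\textbf{Step 3: assembling the loss.} Expand
\begin{equation*}
\LC(\thv)=\sum_A w_A z_A^2-2\sum_A w_At_Az_A+\text{const}.
\end{equation*}
The $A=\emptyset$ term is constant because $z_\emptyset=1$. Using $\Var[X+Y]\le2\Var[X]+2\Var[Y]$, together with $\sum_Aw_A=1$, $w_A\ge0$, and convexity (Cauchy--Schwarz), we get
\begin{equation*}
\Var\Big[\sum_A w_AX_A\Big]\le\sum_Aw_A\Var[X_A].
\end{equation*}
Now bound each term:
\begin{itemize}
\item $\Var[z_A^2]\le\Ebb[z_A^4]\le\Ebb[z_A^2]$, since $|z_A|\le1$;
\item $\Var[t_Az_A]\le\Ebb[z_A^2]$, since $|t_A|\le1$.
\end{itemize}
Therefore $\Var_{\thv}[\LC]\le 10\cdot 2^{-n}$, which is the claimed $\OC(2^{-n})$.

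The step I expect to be the main obstacle is the orthogonality and rank counting in Step 2. The phases from reordering Pauli products need care, although the $G_g$ commute, so this should be harmless. One must also check that each monomial's expectation factorizes cleanly. Once $\Ebb[z_A^2]=2^{-n}$ is established, the cross-correlations in the loss are absorbed by the convexity bound, and no interference analysis is required.
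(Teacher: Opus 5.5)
Your argument is correct, and it departs from the paper's in two places.

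For the correlator moment, the paper works in the Fourier form of Lemma~\ref{lemma:correlators-fourier-expansion}. It averages each phase to a sinc and counts bit strings $\zv$ with $z_j=1$ on $A$ and $z_k=0$ on $N_E(A)$, which gives $2^{-d_A}$. You instead expand in sine/cosine monomials, use their orthogonality, and count subsets of anticommuting generators with vanishing sum, which gives $2^{-\mathrm{rank}}$. These are dual counts of the same $\mathbb{F}_2$-linear system: the paper counts solutions of $\bv\cdot\zv=1$ for all $\bv\in\mathcal{A}_A$, and you count the kernel of $S\mapsto\sum_{g\in S}v_g$. The rank equals $d_A$ for any graph. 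Your formula holds verbatim for arbitrary $X$-string generators, while the paper's form makes the light-cone picture explicit.

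The phase issue you flag is harmless. $z_A$ is real and the monomials $m_S$ are real and orthogonal, so $\Ebb[z_A^2]=\Ebb[|z_A|^2]=\sum_S|c_S|^2\,\Ebb[m_S^2]$ whatever the phases $c_S\in\{0,\pm1,\pm i\}$. Alternatively, every surviving $S$ has even size, since each $v_g$ has odd overlap with $A$.

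The more substantive difference is in assembling the loss. The paper expands $\Var[\LC]$ into covariances and invokes Proposition~\ref{prop:cross-term-vanish} ($\Ebb[z_Az_B]=\delta_{A,B}\Ebb[z_A^2]$ and $\Ebb[z_A^2z_B]=0$) to delete the linear and mixed cross terms. It applies Cauchy--Schwarz only to $\Cov[z_A^2,z_B^2]$, which yields the constant $5$. Your bound $\Var[\sum_A w_AX_A]\le(\sum_A w_A)\sum_A w_A\Var[X_A]$ shows that no cancellation is needed at all: the normalization of the weights alone rules out constructive interference of the cross terms. The cost is the constant $10$, and you do not even use $\Ebb[z_A]=0$.

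What the paper's route buys is the exact decomposition Eq.~\eqref{eq:MMD-variance-full-angle-general-without-cross-terms} into nonnegative pieces. The paper reuses it to obtain the inverse-polynomial \emph{lower} bound for sparse $K$-regular graphs in Proposition~\ref{thm:mmd-trainability-random-kregular}. Your convexity argument, by contrast, only ever produces upper bounds.
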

The proof is given in Appendix~\ref{app:exponential-concentration-MMD}. 
Notably, this concentration bound holds regardless of the specific structure or correlations present in the target distribution.
Theorem~\ref{thm:mmd-loss-concentration-random-alltoall} therefore shows that full-angle random initialization in all-to-all IQP circuits leads to ineffective training at scale.

In Appendix~\ref{app:mmd-variance-lower-bound-full-angle-random}, we further show that IQP circuits with low connectivity do not concentrate with a low-body MMD ($\sigma\in\Theta(\sqrt{n})$). In particular, as long as there is at least one target low-body correlator (with $|A|\in\OC(1)$) such that $t_A\in\Omega(1/{\poly}(n))$, we show that the MMD loss function with an IQP circuit in which each qubit is connected to $K$ other qubits, with $K\in\order{\log(\poly(n))}$, will have a non-exponentially vanishing variance $\Var_{\thv}[\mathcal{L}(\thv)]\in\Omega(1/\poly(n))$. This is proven in Proposition~\ref{thm:mmd-trainability-random-kregular}, in Appendix~\ref{app:mmd-variance-lower-bound-full-angle-random}.  

\subsection{Evading barren plateaus with small-angle initializations}
\label{sec:restricted-patch-can-avoid-BP}

Having established that IQP-QCBMs exhibit exponential concentration over the full loss landscape, we now turn to the analysis of local subregions. In particular, concentration over a large domain does not preclude the existence of smaller regions with high curvature and only inverse-polynomial loss variance, in which gradient signals remain accessible. \\

\paragraph*{Variance of an arbitrary non-linear loss in patch around curvature.}
We begin by considering a more general setting, comprising of an \textit{arbitrary} non-linear loss $\mathcal{L}(\thv)$ and an arbitrary parametrized quantum circuit. Here, we prove a variance lower bound showing that if a point exhibits substantial curvature and a loss satisfies certain conditions regarding high-order derivatives, then a local neighborhood around that point must exhibit substantial loss variance. 

Technically, this requires extending the proof techniques developed for observable expectations in Ref.~\cite{mhiri2025unifying} to an arbitrary non-linear functions. 
The required conditions on the loss primarily ensure that higher-order derivatives remain bounded within the region of interest (see. Eq.~\eqref{eq:assumption-even-deriv-bound-thm} and Eq.~\eqref{eq:assumption-fourth-deriv-bound-thm}).
As a primary application, we focus on the IQP circuit architecture and the MMD loss, which involves quadratic combinations of model correlators together with target-dependent offsets.  

\begin{theorem}[Lower bound guarantee of an arbitrary non-linear loss with sufficient curvature, informal]
\label{thm:MMD-variance-patch-lower-bound-curvature-informal}
Consider an arbitrary non-linear loss $\LC(\thv)$ that satisfies some bounded derivative conditions and any quantum process parametrized with the set of parameters $\thv$.  Particularly, these include the MMD loss in Eq.~\eqref{eq:MMD-loss} and the IQP circuit in Eq.~\eqref{eq:IQP-circuit}. Further suppose that the parameters are distributed over a hypercube patch
\begin{equation}
    \thv \sim \thv^*+\mathrm{Unif}([-r,r]^m) \;\;.
\end{equation}
If there exists a parameter index $\alpha$ such that
\begin{equation}
\left|
\left.
\left(\frac{\partial^2 \mathcal{L}(\thv)}{\partial\theta_\alpha^2}
\right)\right|_{\thv=\thv^*}
\right|
\in \Omega\!\left(\frac{1}{\poly(n)}\right),
\end{equation}
then there exists a patch with half-width $r\in \mathcal{O}(1/\poly(n))$ such that
\begin{equation}
\Var_{\thv}[\mathcal{L}(\thv)]\in \Omega\!\left(\frac{1}{\poly(n)}\right).
\end{equation}
\end{theorem}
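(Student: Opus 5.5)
Our approach follows the curvature-to-variance argument of Ref.~\cite{mhiri2025unifying}, adapted to non-linear losses. The key simplification is to lower bound the total variance by the variance along a single coordinate. Fix the index $\alpha$ with large curvature. By the law of total variance,
\begin{equation}
\Var_{\thv}[\LC(\thv)] \geq \Ebb_{\thv_{\bar\alpha}}\!\left[\Var_{\theta_\alpha}[\LC(\thv)]\right],
\end{equation}
where $\thv_{\bar\alpha}$ denotes all other parameters. It therefore suffices to show that, for every $\thv_{\bar\alpha}$ in the patch, the one-dimensional function $g(\theta_\alpha)=\LC(\thv)$ has variance of order $r^4 |\partial_\alpha^2\LC(\thv^*)|^2$ over $\theta_\alpha\in[\theta^*_\alpha-r,\theta^*_\alpha+r]$.

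We then Taylor expand $g$ around $\theta^*_\alpha$ with $\delta=\theta_\alpha-\theta_\alpha^*$:
\begin{equation}
g(\delta)=g_0+g_1\delta+\tfrac12 g_2\delta^2+R_3(\delta).
\end{equation}
For $\delta$ uniform on $[-r,r]$, the linear and quadratic terms are orthogonal. Hence $\Var[g_1\delta+\tfrac12 g_2\delta^2]=g_1^2 r^2/3+g_2^2 r^4/180\geq g_2^2r^4/180$. We use $\Var[X+Y]\geq (\sqrt{\Var X}-\sqrt{\Var Y})^2$ to absorb the remainder, bounded by $\Var R_3\leq \Ebb R_3^2\leq (M_3 r^3/6)^2$. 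Here $M_3$ is a bound on $|\partial_\alpha^3\LC|$ over the patch; this is the role of the bounded-derivative conditions (the even and fourth-derivative assumptions in the formal version). The remainder is thus subleading once $r\ll |g_2|/M_3$.

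Next we must relate $g_2=\partial_\alpha^2\LC$ at the shifted point $(\theta^*_\alpha,\thv_{\bar\alpha})$ to the curvature at $\thv^*$. By the mean value theorem in the remaining $m-1$ coordinates,
\begin{equation}
|g_2-\partial^2_\alpha\LC(\thv^*)|\leq r\sum_{\beta}\sup|\partial_\beta\partial_\alpha^2\LC|\leq r\, m\, M_3' .
\end{equation}
Choosing $r\in\OC(|\partial_\alpha^2\LC(\thv^*)|/(m M_3'))$ keeps $|g_2|\geq \tfrac12|\partial_\alpha^2\LC(\thv^*)|$ throughout the patch. The result is $\Var_{\thv}[\LC]\in\Omega(r^4|\partial^2_\alpha\LC(\thv^*)|^2)$, which is $\Omega(1/\poly(n))$ when $r\in\Theta(1/\poly(n))$. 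Alternatively, one can follow Ref.~\cite{mhiri2025unifying} exactly and expand the full multivariate variance, using the even-derivative bound to control cross terms; the conditioning trick seems cleaner.

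The main obstacle is verifying the derivative bounds $M_3,M_3'\in\OC(\poly(n))$ for the MMD loss on IQP circuits. Since $z_A(\thv)$ is an expectation of the Pauli string $Z_A$ and every generator is a Pauli, each derivative of $z_A$ is bounded by a constant times $2^k$ (parameter shift), uniformly in $\thv$. Because $\LC=\sum_A w_A(z_A-t_A)^2$ with $\sum_A w_A=1$ and $|z_A-t_A|\leq 2$, the product rule gives third derivatives of $\LC$ bounded by an absolute constant. The sum over $\beta$ is the only place where $m\in\OC(n^2)$ enters, which is polynomial. The $m$-dependence of this step is what I would check most carefully, since a naive count could be loose.
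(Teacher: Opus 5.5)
Your argument is correct and has the same skeleton as the paper. The paper also reduces to $\Ebb_{\bar\thv}[\Var_{\theta_\alpha}[\LC]]$ via the variance decomposition of Ref.~\cite{mhiri2025unifying}, bounds the one-dimensional variance by the curvature at $(\theta_\alpha^*,\bar\thv)$, and transports that curvature back to $\thv^*$. Both technical steps, however, are done differently.

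\emph{One-dimensional step.} The paper invokes Proposition~3 of Ref.~\cite{mhiri2025unifying}. It needs only $|\partial_\alpha^{2k}\LC|\le a\gamma^{2k}$ at $\theta_\alpha^*$ for all $k$, because the odd part in $\theta_\alpha-\theta_\alpha^*$ is uncorrelated with the even part under the symmetric distribution and can be dropped. It loses only $a^2\gamma^6r^6/270$, an error of relative order $r^2$. Your third-order Lagrange remainder is more elementary, needing only $C^3$ regularity, but its error is of relative order $r$. Also, $\Var[\tfrac12 g_2\delta^2]=g_2^2r^4/45$, not $g_2^2r^4/180$; your inequality still holds, just loosely.

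\emph{Transport step.} The paper does not control $g_2$ pointwise. It applies Jensen, $\Ebb[g_2^2]\ge(\Ebb g_2)^2$, then expands to second order in each other coordinate; the linear term averages to zero over the symmetric interval. The shift is therefore at most $\frac{r^2}{6}\sum_{j\neq\alpha}\gamma_j$, where $\gamma_j$ bounds the mixed fourth derivatives $\partial_j^2\partial_\alpha^2\LC$. Your mean-value bound gives a stronger, pointwise conclusion using only third derivatives, but it is first order in $r$.

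\emph{Consequence for the patch width.} Both routes prove the informal statement; they differ in how wide a patch they allow. Your suspicion about the $m$-count is right. Only subsets $A$ anticommuting with both generators contribute, with total weight $\OC(p_\sigma^2)$, so the paper gets $\beta_1\propto m p_\sigma^2=\OC(1)$. With curvature $c_\alpha\in\Theta(1/n)$, the paper's condition $r^2\le\Delta c_\alpha^2/(2\beta_1c_\alpha+\beta_2)$ permits $r\sim n^{-1/2}$ and variance $\Omega(n^{-4})$. Your first-order transport forces $r\lesssim c_\alpha\sim 1/n$ even with these weight-aware constants, giving variance $\Omega(n^{-6})$. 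With your cruder $\OC(1)$ third-derivative bound it forces $r\sim n^{-3}$.
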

A formal version of Theorem~\ref{thm:MMD-variance-patch-lower-bound-curvature-informal} is provided in in Appendix~\ref{app:patch-variance-curvature} together with its proof.

We emphasize that this analytical framework is highly general and not restricted to the MMD loss. By bridging the gap between local curvature and variance lower bounds, it offers a broad toolset for analyzing warm-start strategies across diverse quantum machine learning paradigms. However, a degree of caution is necessary when applying these results beyond the MMD loss. While a non-vanishing loss variance is a strong indicator that it is possible to avoid the problems associated with exponential concentration, it is not a stand-alone guarantee of success. As highlighted in Ref.~\cite{aghaei2026pitfalls}, one must pay attention to the specific quantities in the loss that are estimated with a quantum computer. If the underlying components of the loss function themselves concentrate, the impact of exponential concentration will still be detrimental to training, even if some post processing step superficially leads to a non-exponentially vanishing loss variance~\footnote{For a toy example to illustrate this, one could turn a loss from one that exponentially concentrates to one that does not by taking the exponential of it. However, this does not avoid the problems of exponential concentration because it also amplifies the shot noise exponentially.}.  \\

\paragraph*{Curvature Mechanisms at Initialization.}

While the previous subsection establishes that local curvature guarantees patch variance, it does not prescribe how to locate such regions of high curvature. To bridge this gap, we now investigate the mechanisms that generate curvature. Concretely, we derive a formal curvature decomposition for generative modelling with the MMD loss. Note that this curvature decomposition is not specific to IQP-QCBMs and applicable to other circuit architectures.
We will later apply this decomposition to data-agnostic and data-dependent initialization strategies in Section~\ref{sec:data-agnostic-initialization-main} and Section~\ref{sec:data-dependent-guarantee} respectively. 

For a parameter $\theta_\alpha$ associated with a single-qubit generator, the MMD curvature (i.e., the second-order derivative of the MMD loss) can be expressed as:
\footnotesize
\begin{align}\label{eq:MMD-double-derivative-single-parameter-main}
    \frac{\partial^2 \mathcal{L}(\thv)}{\partial\theta_{\alpha}^2}
    =\sum_{\substack{A\subseteq [n]\\ \alpha \in A}}\!2w_A\!
    \left[
    4 \underbrace{z_A(\thv)(t_A-z_A(\thv))}_{\rm Data\,Mismatch}
    +\!\underbrace{\big(g_A^{(\alpha)}(\thv)\big)^2}_{\rm Model\, Sensitivity}
    \right] ,
\end{align}
\normalsize
where $g_A^{(\alpha)}(\thv)=\partial_{\theta_\alpha} z_A(\thv)$ represents the gradient of the correlator and we recall that $t_A$ represents the target correlators as defined in Eq.~\eqref{eq:target-correlators-definition}. Thus we see that the local curvature of the MMD loss arises from two fundamental mechanisms:
\begin{enumerate}
    \item \emph{Data-mismatch contribution:} $z_A(\thv)(t_A-z_A(\thv))$. 
    This term scales with the discrepancy between model and target correlators. It is large when the model and target distributions are significantly misaligned but share common feature support.
    \item \emph{Model-sensitivity contribution:} $\big(g_A^{(\alpha)}(\thv^*)\big)^2$. This term is intrinsic to the model's architecture, reflecting how sensitively the model's output changes with respect to its parameters independently of the specific target data.
\end{enumerate}

\subsection{Data-agnostic initialization strategies}
\label{sec:data-agnostic-initialization-main}
We now examine standard data-agnostic initialization strategies, which serve as examples of where each curvature mechanism (data-mismatch or model-sensitivity) dominates. \\

\paragraph*{Identity initialization (Mismatch-driven).}
We first consider the identity initialization strategy, $\thv = \vec{0}$, as described in Section~\ref{sec:initialization_strategies}. This corresponds to initializing the loss function in a maximum with respect to $\thv$. This follows trivially from the fact that the maximum value of the correlators $z_A$ is 1 and  $z_A(\vec{0}) = \langle\vec{0}|Z_A|\vec{0}\rangle = 1$. Then, because this is a maximum, we can trivially conclude that the corresponding derivatives are zero, $g_A^{(\alpha)}(\vec{0}) = \partial_{\theta_\alpha} z_A(\thv) = 0$). Hence the model sensitivity term vanishes and the curvature is entirely dominated by the data mismatch in Eq.~\eqref{eq:MMD-double-derivative-single-parameter-main}, and we have
\begin{equation}
    \frac{\partial^2 \mathcal{L}(\vec{0})}{\partial\theta_{\alpha}^2}
    = -\sum_{\substack{A\subseteq[n]\\ \alpha\in A}} 8w_A\,(1-t_A).
\end{equation}
Because $ 0\leq t_A\leq 1$, we find that the second derivative is always negative. Therefore, we can conclude that the full loss $\LC(\vec{0})$ is also a maximum. 

In this manner, we observe that the identity initialization does indeed have substantial gradients and that this comes from a substantial target--model mismatch across many low-weight features. Intuitively, always guessing the all-zero bit string is rarely a "good" guess for complex data - but this means there is substantial local curvature which, via Theorem~\ref{thm:MMD-variance-patch-lower-bound-curvature-informal}, translates into a non-vanishing patch variance (see Theorem~\ref{thm:variance-guarantee-agnostic-initialisation-main} below). In this sense, identity initialization can be used to avoid barren plateaus at initialization, but does not necessarily provide a high-quality starting point for the generative task in terms of training performance. \\

\paragraph*{Unbiased initialization (Sensitivity-driven).}
We next consider an unbiased initialization where all single-qubit angles are set to $\pi/4$ and all two-qubit angles to $0$. At this point, the model distribution is uniform over all computational-basis states, resulting in:
\begin{equation}
    z_A(\thv^*)=0 \qquad\text{for all non-trivial }A.
\end{equation}
Consequently, the mismatch-driven contribution in Eq.~\eqref{eq:MMD-double-derivative-single-parameter-main} vanishes. The only surviving terms are those arising from model sensitivity. For a single-qubit parameter $\theta_\alpha$, all the model sensitivities vanish except for the one corresponding\footnote{Since two-qubit parameters are set to $0$, we have $z_A(\thv^*)=\prod_{j\in A}{\rm cos}(2\theta_j^*)$ and $g_{\{\al\}}^{(\al)}(\thv^*)=-2{\rm sin}(2\theta_{\al}^*)=-2$.} to $[g^{(\alpha)}_{\{\alpha\}}(\thv^*)]^2 = 4$, thus we have
\begin{equation}
    \left|
    \partial_{\theta_\alpha}^2 \mathcal{L}(\thv^*)
    \right|
    =8w_{\{\alpha\}}.
\end{equation}
Thus this provides a target-agnostic example where a non-vanishing optimization signal arises solely from the architectural geometry. It demonstrates that sensitivity-driven curvature is not exclusive to target-aware strategies; even when all non-trivial correlators vanish and the mismatch signal is suppressed, the local landscape remains curvature-rich due to the sensitivity of the IQP circuit's first layer. \\

To formally conclude this subsection, the following theorem summarizes how both the identity and unbiased initialization strategies successfully provide a non-vanishing signal within a local patch.
\begin{theorem}[Variance guarantee for agnostic initialization strategies, informal]
\label{thm:variance-guarantee-agnostic-initialisation-main}
Consider the low-body MMD, $\sigma\in\Theta(\sqrt{n})$, loss and the following initialization scenarios, either
\begin{enumerate}
    \item The initialization center is at identity $\thv^*=\vec{0}$, and further assume that there exists at least one single bit marginal that is system size independent, or:
    \item The unbiased initialization center such that single qubit gates parameters are set to $\pi/4$ and two-qubit gates parameters are set to $0$.
\end{enumerate}     
Then, there exists a patch size
\begin{equation}
r\in\mathcal{O}\!\left(\frac{1}{\poly(n)}\right)
\end{equation}
such that
\begin{equation}
\Var_{\thv}[\mathcal{L}(\thv)]
\in\Omega\!\left(\frac{1}{\poly(n)}\right)
\end{equation}
for $\thv\sim\mathrm{Unif}([-r,r]^m)$.
\end{theorem}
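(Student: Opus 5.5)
The plan is to reduce both cases to Theorem~\ref{thm:MMD-variance-patch-lower-bound-curvature-informal}. That requires two things: a single-qubit parameter $\theta_\alpha$ whose curvature at the center is $\Omega(1/\poly(n))$, and the bounded-derivative conditions on the MMD loss. For the second point, note that every correlator of the IQP circuit in Eq.~\eqref{eq:IQP-circuit} is a trigonometric polynomial of degree at most two in each angle. Each derivative of $z_A$ therefore pulls down a factor of at most $2\|\,\text{generator}\|\le 2$, so $|\partial^k_{\theta_\alpha} z_A|\le 2^k$ uniformly in $\thv$. Since $\mathcal{L}$ is quadratic in the $z_A$, with $|t_A|\le1$ and $\sum_A w_A = 1$, the Leibniz rule bounds the $k$-th derivative of $\mathcal{L}$ by a constant $C_k$ that does not depend on $n$. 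This verifies the even- and fourth-derivative hypotheses of the formal version in Appendix~\ref{app:patch-variance-curvature}, with constants independent of $n$.

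\textbf{Identity center.} From the curvature decomposition in Eq.~\eqref{eq:MMD-double-derivative-single-parameter-main}, all model-sensitivity terms vanish at $\vec{0}$. This gives $|\partial^2_{\theta_\alpha}\mathcal{L}(\vec 0)| = \sum_{A\ni\alpha} 8 w_A (1-t_A)$, and every summand is nonnegative. Keep only the term $A=\{\alpha\}$, taking $\alpha$ to be the qubit whose single-bit marginal is assumed system-size independent, so that $1-t_{\{\alpha\}}\in\Omega(1)$. (I read the assumption as saying this marginal is bounded away from the all-zero value; if $t_A<0$ is allowed, the bound $1-t_A\ge0$ still holds.) Next, for $\sigma\in\Theta(\sqrt n)$ we have $p_\sigma = \tfrac{1-e^{-1/(2\sigma^2)}}{2}\in\Theta(1/n)$. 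Hence
\[
w_{\{\alpha\}} = p_\sigma (1-p_\sigma)^{n-1} \in \Theta(1/n),
\]
because $(1-c/n)^{n-1}\to e^{-c}$. The curvature is therefore $\Omega(1/n)$.

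\textbf{Unbiased center.} Here $z_A(\thv^*)=\prod_{j\in A}\cos(2\theta_j^*)=0$ for every nonempty $A$, so the mismatch terms vanish. The only surviving sensitivity term is $g^{(\alpha)}_{\{\alpha\}}=-2$, which gives $|\partial^2_{\theta_\alpha}\mathcal{L}| = 8 w_{\{\alpha\}}\in\Theta(1/n)$ by the same weight estimate. No assumption on the data is needed in this case.

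In both cases, applying Theorem~\ref{thm:MMD-variance-patch-lower-bound-curvature-informal} yields a half-width $r\in\mathcal{O}(1/\poly(n))$ with $\Var_{\thv}[\mathcal{L}]\in\Omega(1/\poly(n))$. I expect the main obstacle to be quantitative. The patch-variance bound presumably has the form $\Var\gtrsim r^4(\partial^2_\alpha\mathcal{L})^2 - (\text{higher-order and cross terms})$. The cross terms come from all $m\in\Theta(n^2)$ parameters, including the two-qubit ones, so I must check that their sum is controlled by $n$-independent derivative constants times a polynomial in $m$. I would then choose $r = c/\poly(n)$ small enough that these remainder terms are dominated by the $\Theta(r^4/n^2)$ curvature contribution. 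If the formal theorem already packages this, the step reduces to plugging in $C_k=\mathcal{O}(1)$ and curvature $\Omega(1/n)$. Otherwise, I would bound the mixed derivatives using the fact that $\sum_A w_A|A|^k=\mathcal{O}(1)$ for low-body weights.
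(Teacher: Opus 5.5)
Your proposal is correct and follows the paper's proof essentially step by step. At the identity center you keep the single term $8w_{\{\alpha\}}(1-t_\alpha)$ of the purely mismatch-driven curvature, whose summands are all nonnegative; at the unbiased center you get exactly $8w_{\{\alpha\}}$ from the sensitivity term; in both cases you use $w_{\{\alpha\}}\in\Theta(1/n)$ and then invoke the curvature-to-variance theorem. The quantitative step you flag as the main obstacle is already packaged in Theorem~\ref{thm-sup:patch-variance-double-derivative}, which supplies the MMD constants $a=4p_\sigma(1-p_\sigma)$, $\gamma=4$ and $\beta_1=192(m-1)p_\sigma^2(1-p_\sigma)^2=\mathcal{O}(1)$; your cruder $n$-independent constants give $\beta_1=\mathcal{O}(m)$, hence a smaller but still inverse-polynomial admissible $r$ and variance bound.
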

A formal version of this Theorem and the accompanying proofs is given in Appendix~\ref{app:data-agnostic-initialization-curvature}. We further note that the assumption regarding the target distribution for the identity initialization is primarily a technical requirement for the proof. The only family of distributions that fails to satisfy this condition consists of highly peaked distributions centered around the all-zero bit string which are not sufficiently complex as to be interesting anyway.

\subsection{Data-dependent initialization strategy: variance guarantee and superior empirical performance}
\label{sec:data-dependent-guarantee}

To further refine the starting point, we consider a data-dependent initialization strategy where the model's low-weight correlators are explicitly matched to those of the target distribution. Specifically, we initialize the single-qubit parameters $\theta_\alpha$ such that the model's single-body correlators $z_{\{\alpha\}}(\thv^*)$ equal the target correlators $t_{\{\alpha\}}$. All two-qubit parameters are set to zero. \\

\paragraph*{Variance guarantee.}

For this initialization, first-order mismatch terms in the curvature vanish by construction. However, information about the target distribution is implicitly encoded in the model-sensitivity terms.
More explicitly, and as shown in Appendix~\ref{app:data-dependent-initialization-curvature}, for a single-qubit parameter $\theta_\alpha$, the curvature in Eq.~\eqref{eq:MMD-double-derivative-single-parameter-main} becomes
\begin{equation}\label{eq:curvature-data-dependent-initialization-centre}
\frac{\partial^2 \mathcal{L}(\thv^*)}{\partial\theta_{\alpha}^2}
= \sum_{\substack{A\subseteq[n]\\ \alpha\in A}} 8w_A
\Big( M_A^{(\alpha)} + S_A^{(\alpha)} \Big),
\end{equation}
where
\begin{align}
M_A^{(\alpha)} &:= \left(t_A-\prod_{j\in A}t_j\right)\prod_{j\in A}t_j\;\;,\\
S_A^{(\alpha)} &:= (1-t_\alpha^2)\prod_{j\in A\setminus\{\alpha\}} t_j^2 \;\;.
\end{align}
Here, $M_A^{(\alpha)}$ represents a residual mismatch contribution arising from higher-order correlations that single-qubit matching cannot resolve, while $S_A^{(\alpha)}$ represents the model-sensitivity contribution, now modulated by the target's own marginals.

To proceed further we observe that many natural data distributions are dominated by low-body correlations with higher-order correlators typically very small. A canonical physical realization of this structure is found in the Curie-Weiss model within high-temperature mean-field regimes, where higher-order moments are functionally determined by lower-order correlators~\cite{kirsch2019curie,arous1999increasing,vsamaj1988improved}. Beyond physical systems, this trend is observed in the empirical image and genetic datasets of Ref.~\cite{recio2025train}, and more broadly this observation could be seen as underpinning the success of modern generative modelling in high-dimensional spaces. 
We will thus assume the target distribution to have only low-body correlations in the following sense:\begin{assumption}[Approximately factorizable target correlators]
\label{ass:weakly-correlated-target}
Consider the target Pauli-$Z$ correlators $t_A=\Ebb_{\xv\sim p_{\mathrm{data}}}[(-1)^{\sum_{j\in A}x_j}]$.
Assume that for all subsets $A\subseteq[n]$ with $|A|\ge 2$,
\begin{equation}\label{eq:weak-correlation-assumption}
\Big|t_A-\prod_{j\in A}t_j\Big|
\le \left(\frac{C}{n}\right)^{|A|/2}\,,
\end{equation}
where $C$ is a constant and we consider $n$ large enough such that $C \leq n$.
\end{assumption}
Physically, this implies that the target distribution behaves as an approximate product state, wherein the influence of higher-order correlators decays rapidly as the subsystem size increases.

In addition, we exclude from our analysis the trivial case of target distributions that are highly concentrated around a single bit string. Such highly peaked distributions lack the complexity required for meaningful generative tasks. Formally, we require that the target distribution maintains non-vanishing fluctuations in at least one marginal, leading to the following assumption
\begin{assumption}[No peak target around one bit string]\label{ass:no-scaling-with-n}
The target distribution has at least one constant single bit marginal:
\begin{equation}\label{eq:assumption-nonsaturated-marginals-main}
\exists\al \in[n] \;: \quad1-t_{\al}^2\in\Theta(1).
\end{equation}
\end{assumption}
Crucially, we note that while these assumptions are essential for the formal rigor of our proof (especially for Assumption~\ref{ass:no-scaling-with-n}), they should be viewed as technical sufficient conditions rather than strict requirements for the method's practicality.

We can now return to the analysis of Eq.~\eqref{eq:curvature-data-dependent-initialization-centre} in light of the above discussion. Under Assumptions~\ref{ass:weakly-correlated-target} and~\ref{ass:no-scaling-with-n}, the residual mismatch terms $M_A^{(\alpha)}$ for $|A| \ge 2$ are exponentially suppressed. However, the non-saturation condition, Eq.~\eqref{eq:assumption-nonsaturated-marginals-main}, implies that $|t_\alpha| < 1$ which prevents the sensitivity-driven terms $S_A^{(\alpha)}$ from vanishing. In particular, the first-order sensitivity $S_{\{\alpha\}}^{(\alpha)} = 1 - t_\alpha^2$ remains bounded away from zero. This leads to the following variance guarantee for data-dependent initialization (with the proof in Appendix~\ref{app:target-analysis}).
As detailed further in the appendices, we suspect our assumptions on the target distribution are only necessary for our proof strategy and the initialization strategy is likely to be effective for a broader regime of target distributions. Indeed, this hunch is supported by our numerical implementations below.

\begin{theorem}[Variance guarantee for data-dependent initialization strategy, informal]
\label{thm:data-dependent-local-trainability-main}
Consider the low-body MMD, $\sigma\in\Theta(\sqrt{n})$, regime and a marginal-matching initialization center $\thv^*$ with all single-qubit parameters initialized to match one-body marginals and all two-qubit parameters initialized to $0$.
Under Assumptions~\ref{ass:weakly-correlated-target} and~\ref{ass:no-scaling-with-n}, there exists at least one single-qubit parameter direction $\alpha$ such that
\begin{equation}
\left|
\left.
\left(\frac{\partial^2 \mathcal{L}(\thv)}{\partial\theta_\alpha^2}
\right)\right|_{\thv=\thv^*}
\right|
\in \Omega\!\left(\frac{1}{\poly(n)}\right).
\end{equation}
Consequently, by Theorem~\ref{thm:MMD-variance-patch-lower-bound-curvature-informal}, there exists a patch with half-width
\begin{equation}
r\in\mathcal{O}\!\left(\frac{1}{\poly(n)}\right)
\end{equation}
such that
\begin{equation}
\Var_{\thv}[\mathcal{L}(\thv)]
\in\Omega\!\left(\frac{1}{\poly(n)}\right)
\end{equation}
for $\thv\sim\thv^*+\mathrm{Unif}([-r,r]^m)$.
\end{theorem}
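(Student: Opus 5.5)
The plan is to lower bound the curvature at the marginal-matching centre in the direction $\alpha$ supplied by Assumption~\ref{ass:no-scaling-with-n}, and then invoke Theorem~\ref{thm:MMD-variance-patch-lower-bound-curvature-informal}.

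First, I would derive Eq.~\eqref{eq:curvature-data-dependent-initialization-centre} at the centre. With all two-qubit angles zero, the state is a product state, so $z_A(\thv^*)=\prod_{j\in A}\cos(2\theta_j^*)=\prod_{j\in A}t_j$. Choosing $\cos(2\theta_j^*)=t_j$ is possible because $|t_j|\le 1$.

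Next I need the correlator gradient. Because we are at a product point, the derivative with respect to the single-qubit angle $\theta_\alpha$ only hits the $\alpha$ factor:
\begin{equation}
g_A^{(\alpha)}=-2\sin(2\theta_\alpha^*)\prod_{j\in A\setminus\{\alpha\}}t_j .
\end{equation}
Its square is therefore $4(1-t_\alpha^2)\prod_{j\in A\setminus\{\alpha\}}t_j^2$. Substituting into Eq.~\eqref{eq:MMD-double-derivative-single-parameter-main} produces $8w_A(M_A^{(\alpha)}+S_A^{(\alpha)})$, and for $A=\{\alpha\}$ the mismatch term vanishes.

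I would then split the sum. The $A=\{\alpha\}$ term alone contributes $8w_{\{\alpha\}}(1-t_\alpha^2)$. Every $S_A^{(\alpha)}$ is nonnegative, so all remaining sensitivity terms can only increase the curvature and may be dropped. What remains is to show that the mismatch terms cannot cancel this positive contribution.

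By Assumption~\ref{ass:weakly-correlated-target} and $|\prod_j t_j|\le 1$, we have $|M_A^{(\alpha)}|\le (C/n)^{|A|/2}$. Grouping by $k=|A|$, there are $\binom{n-1}{k-1}$ sets containing $\alpha$, each with weight $w_A=(1-p_\sigma)^{n-k}p_\sigma^k$. This gives
\begin{equation}
\sum_{k\ge2}\binom{n-1}{k-1}(1-p_\sigma)^{n-k}p_\sigma^k (C/n)^{k/2}
= w_{\{\alpha\}}\sum_{k\ge 2}\binom{n-1}{k-1}\Big(\tfrac{p_\sigma}{1-p_\sigma}\Big)^{k-1}(C/n)^{k/2}.
\end{equation}

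In the low-body regime $\sigma\in\Theta(\sqrt n)$ one has $p_\sigma\approx 1/(4\sigma^2)=\Theta(1/n)$. Using $\binom{n-1}{k-1}\le n^{k-1}/(k-1)!$, the $k$-th term is at most
\begin{equation}
\frac{(c\sqrt{C/n})^{k-1}}{(k-1)!}\sqrt{C/n},
\end{equation}
with $c=\Theta(1)$. The whole tail is therefore $w_{\{\alpha\}}\cdot O(n^{-1})$.

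Putting these together, the curvature is at least $8w_{\{\alpha\}}\big[(1-t_\alpha^2)-O(1/n)\big]$. By Assumption~\ref{ass:no-scaling-with-n}, $(1-t_\alpha^2)\in\Theta(1)$, so the bracket is $\Theta(1)$ for large $n$. Separately, $w_{\{\alpha\}}=p_\sigma(1-p_\sigma)^{n-1}=\Theta(1/n)\cdot e^{-\Theta(1)}$, which is $\Omega(1/n)$.

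Finally, I would apply Theorem~\ref{thm:MMD-variance-patch-lower-bound-curvature-informal}. I would check that its bounded-derivative conditions hold for the MMD with IQP circuits, which they do since all derivatives of the correlators are bounded by powers of $2$ times $\sum_A w_A\le 1$. The theorem then yields the patch with $r\in\OC(1/\poly(n))$.

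The main obstacle is the tail estimate. It must be uniform in $n$, and it relies on the interplay between the $\Theta(1/n)$ scaling of $p_\sigma$ and the $(C/n)^{k/2}$ decay. If the constants are unfavourable, I would fall back on taking $n$ large enough, since the tail is $O(1/n)$ relative to $w_{\{\alpha\}}$.
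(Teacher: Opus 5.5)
Your proof is correct and has the same skeleton as the paper's: you isolate the $A=\{\alpha\}$ sensitivity term $8w_{\{\alpha\}}(1-t_\alpha^2)\in\Theta(1/n)$, show that the $|A|\ge 2$ contributions cannot cancel it, sum over $|A|$ with binomial counting, and then invoke the curvature-to-variance theorem.

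The one real difference is how you handle the $|A|\ge 2$ terms.
\begin{itemize}
\item \emph{Your route.} You discard the nonnegative $S_A^{(\alpha)}$ and use the crude bound $|M_A^{(\alpha)}|\le (C/n)^{|A|/2}$. With $\binom{n-1}{k-1}\le n^{k-1}/(k-1)!$ and an exponential series, this gives a negative contribution of order $w_{\{\alpha\}}\cdot\mathcal{O}(1/n)=\mathcal{O}(1/n^2)$.
\item \emph{The paper's route.} It keeps $M_A^{(\alpha)}+S_A^{(\alpha)}$ together and lower-bounds it by $x\big((1-t_\alpha^2)x-(C/n)^{|A|/2}|t_\alpha|\big)$ with $x=\prod_{j\in A\setminus\{\alpha\}}|t_j|$. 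Minimizing this quadratic over $x$ gives the per-term bound $-(C/n)^{|A|}t_\alpha^2/\big(4(1-t_\alpha^2)\big)$, so the decay factor is squared. It then sums the binomial series exactly and telescopes, obtaining a total negative contribution of $\mathcal{O}(1/n^3)$.
\end{itemize}

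What each buys:
\begin{itemize}
\item Your version is more elementary. Its error term does not involve $1/(1-t_\alpha^2)$, and it needs no separate treatment of the case $t_\alpha=0$, which the paper handles on its own.
\item The paper's version gets a sharper error term by letting the sensitivity term absorb part of the mismatch. The cost is a prefactor $t_\alpha^2/(1-t_\alpha^2)$ that is $\mathcal{O}(1)$ only because of Assumption~\ref{ass:no-scaling-with-n}.
\end{itemize}

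Either error margin is dominated by the $\Theta(1/n)$ leading term, so both routes give the same conclusion.
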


Theorem~\ref{thm:MMD-variance-patch-lower-bound-curvature-informal} showcases that for a low-body MMD, data-dependent strategies guarantee non-exponentially vanishing loss functions. Because of their nature, one could expect data-dependent initialization strategies to work better than the data-agnostic strategies. However, as this is hard to establish via variance guarantees alone, we proceed to analyze this numerically.

\medskip

\subsection{Numerical studies}
\label{sec:numerical-studies}
\begin{figure*}
    \centering
    \includegraphics[width=1\linewidth]{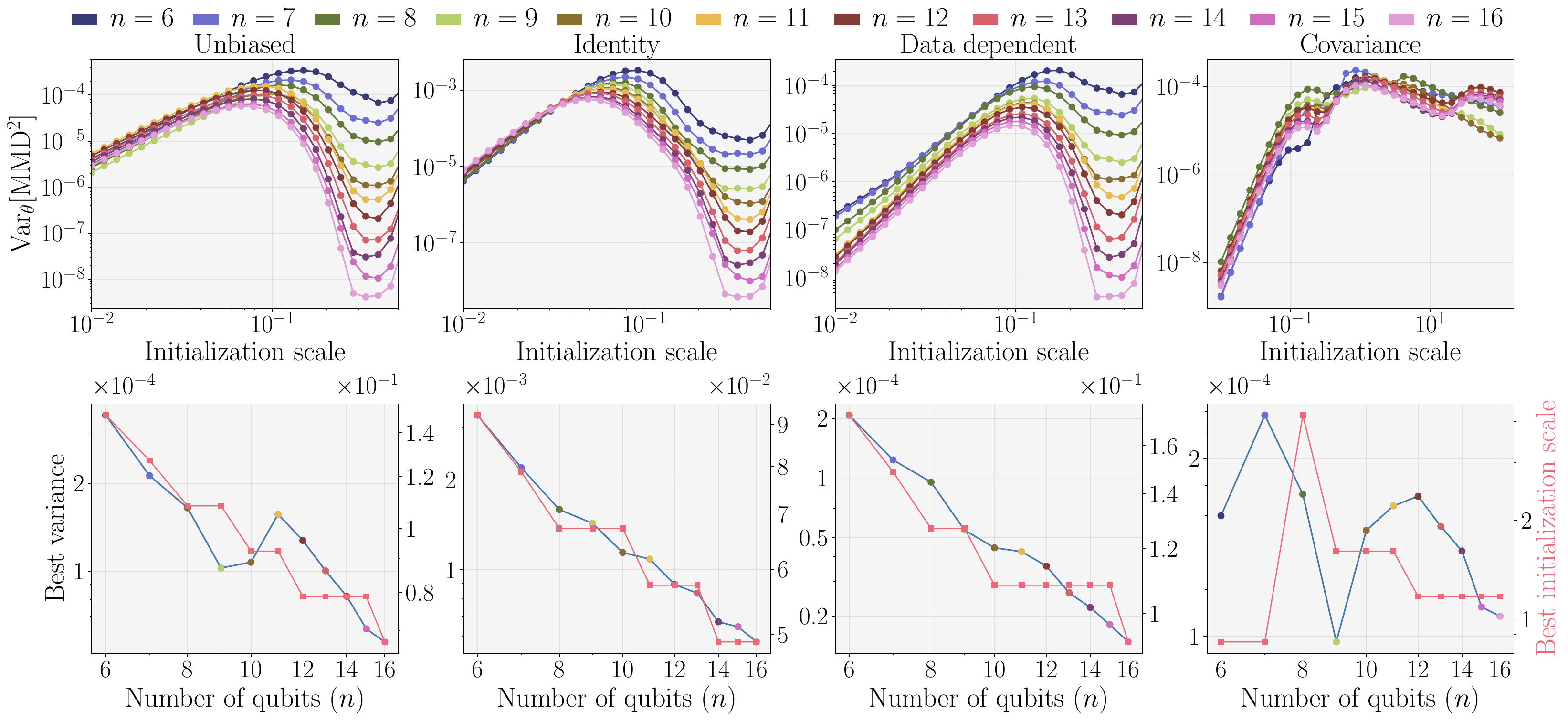}
    \caption{\textbf{Variance of the low-body MMD estimator versus initialization scale.}
Top row: Variance of the MMD estimator as a function of the initialization scale for four initialization schemes (identity, unbiased, data-dependent, and covariance) using a kernel with bandwidth that scales as $\sigma\in\Theta(\sqrt{n})$. Curves correspond to different numbers of qubits $n$ ranging from $n=6$ to $n=16$. For the first three schemes, parameters are initialized as
$\theta_j\sim\theta_j^*+{\rm Unif}[-\frac{\pi}{2}s,\frac{\pi}{2}s]$
where $s$ is the initialization scale. The covariance initialization (right column) follows Eq.~\eqref{eq:new_data_dependent_init_covariances}, where two-qubit gate parameters are correlated and perturbations are rescaled according to the target distribution covariances. The target distribution is given by a genomic dataset.
Bottom row: Maximum variance over initialization scales (blue) and the initialization scale achieving this maximum (red dashed) as functions of the number of qubits $n$.}
    \label{fig:mmd_var_vs_patch-size}
\end{figure*}

We devote this section to analyzing the performance of data-dependent initializations and comparing them to data-independent ones. Concretely, we compare two data-agnostic initializations ($\thv^* = \vec{0}$ and $\th_j = \pi/4$) with two data-dependent strategies. These include (i) the approach we theoretically analyzed (which matches one-body marginals and sets the two-qubit gate parameters to zero) and (ii) the initialization method from Ref.~\cite{recio2025train}. In the latter, the one-qubit gates are fixed to match the one-body marginals, as in (i), and two-qubit gate angles are initialized based on the covariances of the target distribution, $C_{ij}=t_{ij}-t_it_j$:
\begin{align}\label{eq:new_data_dependent_init_covariances}
    \theta_{jk}\sim\frac{C_{ij}}{C_{\rm max}}{\rm Unif}\left(\left[-\frac{\pi}{2} s,\frac{\pi}{2} s\right]\right),
\end{align}
where $C_{\max}=\max_{ij}|C_{ij}|$ and $s\in[0,1]$ denotes the initialization scale. For the three cases studied analytically we have $r=\frac{\pi}{2}s$ so the initialization scale can be interpreted as the patch half-width $r$. The target distributions are given by empirical distributions of the first $n$ dimensions of the genomic training dataset described in \cite{recio2025train}. We investigate the scaling of the MMD loss variance as well as loss curves resulting from training IQP-QCBMs with 150 qubits. This is done via state-vector simulation via the JAX interface of PennyLane \cite{bergholm2018pennylane} and the IQPopt package \cite{recio2025iqpopt, ErikRecio2025}, respectively.

\begin{figure*}
    \centering
    \includegraphics[width=1\linewidth]{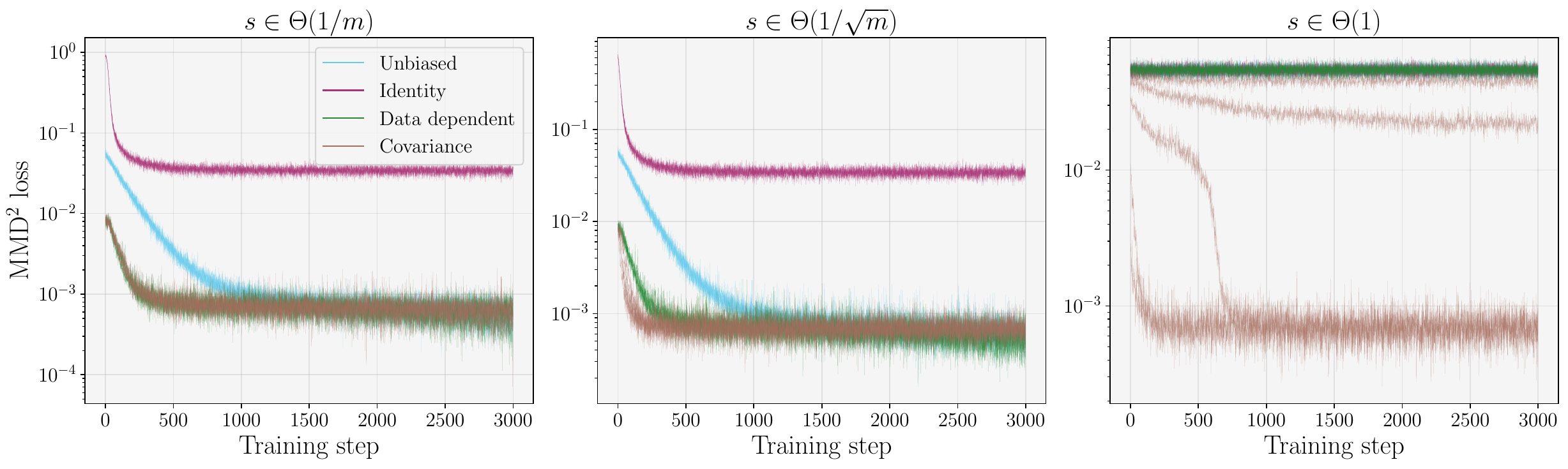}
    \caption{\textbf{Training of the low-body MMD estimator for different initializations. }Training loss $\mathcal{L}(\thv)$ as a function of gradient-descent iterations for a $n=150$ qubit model trained on the genomic dataset. Each curve corresponds to one of the four initialization strategies: identity (red), unbiased (blue), marginal matching (green), and the covariance-based data-dependent initialization (light green). The three panels correspond to different initialization scales $s$ (as defined in Fig.~\ref{fig:mmd_var_vs_patch-size}): 
 $s=1/m$, $s=1/\sqrt{m}$, and $s=1$, where $m$ denotes the total number of parameters. Each model is trained five times, with most initializations leading to similar overlapping loss curves except for the covariance-based initialization for $s\in\Theta(1)$.
For the linear and sqrt scalings, identity initialization shows an initial rapid decrease of the loss but quickly reaches a plateau at substantially larger loss values, leading to poor convergence compared to the other strategies. The unbiased initialization remains trainable for these scalings but converges more slowly than the data-dependent approaches. In contrast,  under the unit scaling  (corresponding to a full angle random initialization) the marginal-matching and unbiased initializations fail to train, whereas the covariance-based initialization can still achieve successful training in some trials, highlighting the benefit of incorporating parameter correlations in the initialization.}
    \label{fig:performance_curves}
\end{figure*}

In the upper panels of Fig.~\ref{fig:mmd_var_vs_patch-size}, we plot the variance of the low-body MMD loss  as function of the initialization scale $s$, initialized according to the identity, the unbiased, and the two data-dependent strategies. This figure illustrates that while data-independent strategies provide a region with non-zero variance, the width of this region is noticeably smaller. This is further detailed in the lower panels, where we plot the value of $s\sim r$ that maximizes the variance as a function of the number of qubits. On the right axis, we show that although $\argmax_s \Var[\LC(\thv)]$ roughly scales polynomially with the number of qubits (as the $\log{\rm-}\log$ plots look linear) for all methods, the data-dependent values are strictly larger.

This advantage is then clearly reflected in the convergence behaviour of the data-dependent initialization strategies. Indeed, Fig.~\ref{fig:performance_curves} presents the convergence of a low-bodied $\sigma\in\Theta{(\sqrt{n})}$ MMD loss function that targets two-body correlations on average, demonstrating that data-dependent initializations converge significantly better and faster. Specifically, this panel shows the MMD loss value for $150$ qubits over the number of training steps for each initialization strategy for $s= 1/m$, $s= 1/\sqrt{m}$ and $s=1$. The two curves corresponding to the data-dependent strategies both exhibit a rapid decay for linear and sqrt (first and second panel). Conversely, the zero-initialization strategy  (identity) quickly stalls and fails to converge.

Finally, while the uniform initialization strategy appears to converge, it requires a significantly larger number of steps, suggesting potential scalability issues for different datasets or higher qubit counts. As expected, when randomly initializing over the full landscape, most strategies fail. This is because, for all initializations except the covariance one, the initialization is equivalent to that of Theorem~\ref{thm:mmd-loss-concentration-random-alltoall}. As described in Eq.~\eqref{eq:new_data_dependent_init_covariances}, the covariance approach has a multiplicative prefactor in $s$ that can make the range smaller for some parameters and such that all $2$-qubit gate parameters are correlated. These correlations explain why the covariance based approach can converge even for the case of `full angle range' initialization.

\section{Discussion}
\label{sec:discussion}

Our results provide an analysis of the MMD loss landscape for IQP-QCBMs that separates (i) concentration over the full parameter domain, (ii) curvature inside restricted data agnostic initialization patches, and (iii) curvature inside data-dependent initialization patches. These distinctions are useful for understanding warm starts both in the ``variational quantum algorithm'' and in the ``train on classical, deploy on quantum'' paradigms. In particular, we observe that data-dependent initializations are advantageous because they combine sufficient local loss variances with a better target alignment at initialization.

While data-dependent initialization provides a significant competitive edge over agnostic strategies, it does not provide an absolute guarantee of optimization success. The efficacy of the strategy is fundamentally capped by the information content of the available samples; specifically, if the provided training data do not contain sufficient statistical information to resolve the target distribution's structure, the initialization remains uninformed. Put more crudely, matching the $k_{\rm th}$ order marginals cannot \textit{strictly guarantee} a signal for the $(k+1)_{\rm th}$ order marginals. 

A pedagogical example of this limitation occurs when the target distribution is a computational-basis measurement of a Haar-random state, $|\psi_{\rm Haar}\rangle$. In this regime, one can show that with a realistic access to polynomial number of samples drawn from $p_{\rm data}(\zv) = |\langle \zv| \psi_{\rm Haar}\rangle|^2$, estimated correlators carry no instance-specific information with a probability exponentially close to $1$~\cite{aghaei2026pitfalls}. Consequently, the MMD loss remains statistically insensitive to the particular target instance.
Notably, this result applies to any QCBM architecture, as the bottleneck lies within the target itself rather than the model's expressive capacity. More generally, there exists a large family of target distributions that cannot be efficiently learned. For instance, Ref.~\cite{nietner2025average} utilizes the statistical query model to demonstrate the inefficiency of learning distributions arising from quantum random circuits, while Ref.~\cite{herbst2025limits} shows that target distributions exhibiting anti-concentration inevitably lead to loss concentration.

Recent work has highlighted connections between the provable absence of barren plateaus and the classical simulability (or surrogatability) of the hybrid optimization loop in variational quantum algorithms~\cite{cerezo2023does, bermejo2024quantum, angrisani2024classically,lerch2024efficient}. However, these results concern the simulability of expectation values rather than samples. IQP circuits sit in this gap: relevant expectation values can be computed efficiently classically, while sampling from typical IQP circuits is believed to be classically hard. This motivates the possibility of training an IQP model with an MMD objective (classically or quantumly) while targeting a setting in which sampling is performed on quantum hardware.
A key subtlety is that standard IQP hardness results are formulated in terms of small total-variation (i.e., $\ell_1$) distance, whereas our training guarantees are for the MMD loss. It therefore remains unclear whether a classical simulator could approximate samples from an IQP distribution to within the MMD accuracy achieved by training.

More fundamentally, an important next step is to study trainability beyond the first few training steps since optimization may quickly leave regions where initialization-based guarantees are informative. Analytically understanding gradient-variance degradation during optimization, or indeed to dive beyond our average case analyses and understand training gradient flows, remains an important but hard open problem. However, we hope the mechanism-based viewpoint developed here could provide intuitions to help with future developments. 

\medskip

\paragraph*{\textbf{Added note:}}
While preparing this manuscript, the independent work \emph{Characterizing Trainability of Instantaneous Quantum Polynomial Circuit Born Machines} \cite{shen2026characterizing} appeared. Despite considerable overlap at the early part regarding the exponential concentration under the full angle random initialization, the two works are complementary in that Ref.~\cite{shen2026characterizing} studies gradient variance for gaussian near-identity initializations, including average-case guarantees over unstructured target ensembles, whereas here we study the loss variance and derive curvature-based guarantees around multiple initialization centers including data-dependent guarantees under certain structural assumptions on the target.

\section{Acknowledgments}
R.P. acknowledges the support of the SNF Quantum Flagship Replacement Scheme (grant No. 215933). EA acknowledges the support of the Government of Spain (Severo Ochoa CEX2019-000910-S, FUNQIP and European Union NextGenerationEU PRTR-C17.I1), Fundació Cellex, Fundació Mir-Puig, Generalitat de Catalunya (CERCA program) and European Union (PASQuanS2.1, 101113690). EA is a fellow of Eurecat's \emph{Vicente López} PhD grant program. Z.H. acknowledges support from the Sandoz Family Foundation-Monique de Meuron program for Academic Promotion. ST acknowledges Exchange Faculty Travel Grant: TG168033 from Chulalongkorn University, as well as funding from National Research Council of Thailand
(NRCT) [grant number N42A680126]. ST further acknowledges Thailand Science research and Innovation Fund Chulalongkorn University (IND\_FF\_69\_258\_2300\_062).

\bibliography{quantum.bib,IQP-trainability.bib}

\begin{thebibliography}{108}%
\makeatletter
\providecommand \@ifxundefined [1]{%
 \@ifx{#1\undefined}
}%
\providecommand \@ifnum [1]{%
 \ifnum #1\expandafter \@firstoftwo
 \else \expandafter \@secondoftwo
 \fi
}%
\providecommand \@ifx [1]{%
 \ifx #1\expandafter \@firstoftwo
 \else \expandafter \@secondoftwo
 \fi
}%
\providecommand \natexlab [1]{#1}%
\providecommand \enquote  [1]{``#1''}%
\providecommand \bibnamefont  [1]{#1}%
\providecommand \bibfnamefont [1]{#1}%
\providecommand \citenamefont [1]{#1}%
\providecommand \href@noop [0]{\@secondoftwo}%
\providecommand \href [0]{\begingroup \@sanitize@url \@href}%
\providecommand \@href[1]{\@@startlink{#1}\@@href}%
\providecommand \@@href[1]{\endgroup#1\@@endlink}%
\providecommand \@sanitize@url [0]{\catcode `\\12\catcode `\$12\catcode `\&12\catcode `\#12\catcode `\^12\catcode `\_12\catcode `\%12\relax}%
\providecommand \@@startlink[1]{}%
\providecommand \@@endlink[0]{}%
\providecommand \url  [0]{\begingroup\@sanitize@url \@url }%
\providecommand \@url [1]{\endgroup\@href {#1}{\urlprefix }}%
\providecommand \urlprefix  [0]{URL }%
\providecommand \Eprint [0]{\href }%
\providecommand \doibase [0]{https://doi.org/}%
\providecommand \selectlanguage [0]{\@gobble}%
\providecommand \bibinfo  [0]{\@secondoftwo}%
\providecommand \bibfield  [0]{\@secondoftwo}%
\providecommand \translation [1]{[#1]}%
\providecommand \BibitemOpen [0]{}%
\providecommand \bibitemStop [0]{}%
\providecommand \bibitemNoStop [0]{.\EOS\space}%
\providecommand \EOS [0]{\spacefactor3000\relax}%
\providecommand \BibitemShut  [1]{\csname bibitem#1\endcsname}%
\let\auto@bib@innerbib\@empty
\bibitem [{\citenamefont {Sweke}\ \emph {et~al.}(2021)\citenamefont {Sweke}, \citenamefont {Seifert}, \citenamefont {Hangleiter},\ and\ \citenamefont {Eisert}}]{sweke2021quantum}%
  \BibitemOpen
  \bibfield  {author} {\bibinfo {author} {\bibfnamefont {R.}~\bibnamefont {Sweke}}, \bibinfo {author} {\bibfnamefont {J.-P.}\ \bibnamefont {Seifert}}, \bibinfo {author} {\bibfnamefont {D.}~\bibnamefont {Hangleiter}},\ and\ \bibinfo {author} {\bibfnamefont {J.}~\bibnamefont {Eisert}},\ }\bibfield  {title} {\bibinfo {title} {On the quantum versus classical learnability of discrete distributions},\ }\href {https://doi.org/10.22331/q-2021-03-23-417} {\bibfield  {journal} {\bibinfo  {journal} {Quantum}\ }\textbf {\bibinfo {volume} {5}},\ \bibinfo {pages} {417} (\bibinfo {year} {2021})}\BibitemShut {NoStop}%
\bibitem [{\citenamefont {Gao}\ \emph {et~al.}(2022)\citenamefont {Gao}, \citenamefont {Anschuetz}, \citenamefont {Wang}, \citenamefont {Cirac},\ and\ \citenamefont {Lukin}}]{gao2022enhancing}%
  \BibitemOpen
  \bibfield  {author} {\bibinfo {author} {\bibfnamefont {X.}~\bibnamefont {Gao}}, \bibinfo {author} {\bibfnamefont {E.~R.}\ \bibnamefont {Anschuetz}}, \bibinfo {author} {\bibfnamefont {S.-T.}\ \bibnamefont {Wang}}, \bibinfo {author} {\bibfnamefont {J.~I.}\ \bibnamefont {Cirac}},\ and\ \bibinfo {author} {\bibfnamefont {M.~D.}\ \bibnamefont {Lukin}},\ }\bibfield  {title} {\bibinfo {title} {Enhancing generative models via quantum correlations},\ }\href {https://doi.org/10.1103/PhysRevX.12.021037} {\bibfield  {journal} {\bibinfo  {journal} {Phys. Rev. X}\ }\textbf {\bibinfo {volume} {12}},\ \bibinfo {pages} {021037} (\bibinfo {year} {2022})}\BibitemShut {NoStop}%
\bibitem [{\citenamefont {Huang}\ \emph {et~al.}(2025)\citenamefont {Huang}, \citenamefont {Broughton}, \citenamefont {Eassa}, \citenamefont {Neven}, \citenamefont {Babbush},\ and\ \citenamefont {McClean}}]{huang2025generative}%
  \BibitemOpen
  \bibfield  {author} {\bibinfo {author} {\bibfnamefont {H.-Y.}\ \bibnamefont {Huang}}, \bibinfo {author} {\bibfnamefont {M.}~\bibnamefont {Broughton}}, \bibinfo {author} {\bibfnamefont {N.}~\bibnamefont {Eassa}}, \bibinfo {author} {\bibfnamefont {H.}~\bibnamefont {Neven}}, \bibinfo {author} {\bibfnamefont {R.}~\bibnamefont {Babbush}},\ and\ \bibinfo {author} {\bibfnamefont {J.~R.}\ \bibnamefont {McClean}},\ }\bibfield  {title} {\bibinfo {title} {Generative quantum advantage for classical and quantum problems},\ }\bibfield  {journal} {\bibinfo  {journal} {arXiv preprint arXiv:2509.09033}\ }\href {https://doi.org/10.48550/arXiv.2509.09033} {10.48550/arXiv.2509.09033} (\bibinfo {year} {2025})\BibitemShut {NoStop}%
\bibitem [{\citenamefont {Hibat-Allah}\ \emph {et~al.}(2024)\citenamefont {Hibat-Allah}, \citenamefont {Mauri}, \citenamefont {Carrasquilla},\ and\ \citenamefont {Perdomo-Ortiz}}]{hibat2024framework}%
  \BibitemOpen
  \bibfield  {author} {\bibinfo {author} {\bibfnamefont {M.}~\bibnamefont {Hibat-Allah}}, \bibinfo {author} {\bibfnamefont {M.}~\bibnamefont {Mauri}}, \bibinfo {author} {\bibfnamefont {J.}~\bibnamefont {Carrasquilla}},\ and\ \bibinfo {author} {\bibfnamefont {A.}~\bibnamefont {Perdomo-Ortiz}},\ }\bibfield  {title} {\bibinfo {title} {A framework for demonstrating practical quantum advantage: comparing quantum against classical generative models},\ }\href {https://www.nature.com/articles/s42005-024-01552-6} {\bibfield  {journal} {\bibinfo  {journal} {Communications Physics}\ }\textbf {\bibinfo {volume} {7}},\ \bibinfo {pages} {68} (\bibinfo {year} {2024})}\BibitemShut {NoStop}%
\bibitem [{\citenamefont {Coyle}\ \emph {et~al.}(2020)\citenamefont {Coyle}, \citenamefont {Mills}, \citenamefont {Danos},\ and\ \citenamefont {Kashefi}}]{coyle2020born}%
  \BibitemOpen
  \bibfield  {author} {\bibinfo {author} {\bibfnamefont {B.}~\bibnamefont {Coyle}}, \bibinfo {author} {\bibfnamefont {D.}~\bibnamefont {Mills}}, \bibinfo {author} {\bibfnamefont {V.}~\bibnamefont {Danos}},\ and\ \bibinfo {author} {\bibfnamefont {E.}~\bibnamefont {Kashefi}},\ }\bibfield  {title} {\bibinfo {title} {The born supremacy: quantum advantage and training of an ising born machine},\ }\href {https://doi.org/10.1038/s41534-020-00288-9} {\bibfield  {journal} {\bibinfo  {journal} {npj Quantum Information}\ }\textbf {\bibinfo {volume} {6}},\ \bibinfo {pages} {60} (\bibinfo {year} {2020})}\BibitemShut {NoStop}%
\bibitem [{\citenamefont {Benedetti}\ \emph {et~al.}(2019)\citenamefont {Benedetti}, \citenamefont {Garcia-Pintos}, \citenamefont {Perdomo}, \citenamefont {Leyton-Ortega}, \citenamefont {Nam},\ and\ \citenamefont {Perdomo-Ortiz}}]{benedetti2019generative}%
  \BibitemOpen
  \bibfield  {author} {\bibinfo {author} {\bibfnamefont {M.}~\bibnamefont {Benedetti}}, \bibinfo {author} {\bibfnamefont {D.}~\bibnamefont {Garcia-Pintos}}, \bibinfo {author} {\bibfnamefont {O.}~\bibnamefont {Perdomo}}, \bibinfo {author} {\bibfnamefont {V.}~\bibnamefont {Leyton-Ortega}}, \bibinfo {author} {\bibfnamefont {Y.}~\bibnamefont {Nam}},\ and\ \bibinfo {author} {\bibfnamefont {A.}~\bibnamefont {Perdomo-Ortiz}},\ }\bibfield  {title} {\bibinfo {title} {A generative modeling approach for benchmarking and training shallow quantum circuits},\ }\href {https://doi.org/10.1038/s41534-019-0157-8} {\bibfield  {journal} {\bibinfo  {journal} {npj Quantum Information}\ }\textbf {\bibinfo {volume} {5}},\ \bibinfo {pages} {45} (\bibinfo {year} {2019})}\BibitemShut {NoStop}%
\bibitem [{\citenamefont {Liu}\ and\ \citenamefont {Wang}(2018)}]{liu2018differentiable}%
  \BibitemOpen
  \bibfield  {author} {\bibinfo {author} {\bibfnamefont {J.-G.}\ \bibnamefont {Liu}}\ and\ \bibinfo {author} {\bibfnamefont {L.}~\bibnamefont {Wang}},\ }\bibfield  {title} {\bibinfo {title} {Differentiable learning of quantum circuit born machines},\ }\href {https://doi.org/10.1103/PhysRevA.98.062324} {\bibfield  {journal} {\bibinfo  {journal} {Phys. Rev. A}\ }\textbf {\bibinfo {volume} {98}},\ \bibinfo {pages} {062324} (\bibinfo {year} {2018})}\BibitemShut {NoStop}%
\bibitem [{\citenamefont {Rudolph}\ \emph {et~al.}(2024)\citenamefont {Rudolph}, \citenamefont {Lerch}, \citenamefont {Thanasilp}, \citenamefont {Kiss}, \citenamefont {Shaya}, \citenamefont {Vallecorsa}, \citenamefont {Grossi},\ and\ \citenamefont {Holmes}}]{rudolph2023trainability}%
  \BibitemOpen
  \bibfield  {author} {\bibinfo {author} {\bibfnamefont {M.~S.}\ \bibnamefont {Rudolph}}, \bibinfo {author} {\bibfnamefont {S.}~\bibnamefont {Lerch}}, \bibinfo {author} {\bibfnamefont {S.}~\bibnamefont {Thanasilp}}, \bibinfo {author} {\bibfnamefont {O.}~\bibnamefont {Kiss}}, \bibinfo {author} {\bibfnamefont {O.}~\bibnamefont {Shaya}}, \bibinfo {author} {\bibfnamefont {S.}~\bibnamefont {Vallecorsa}}, \bibinfo {author} {\bibfnamefont {M.}~\bibnamefont {Grossi}},\ and\ \bibinfo {author} {\bibfnamefont {Z.}~\bibnamefont {Holmes}},\ }\bibfield  {title} {\bibinfo {title} {Trainability barriers and opportunities in quantum generative modeling},\ }\href {https://doi.org/https://doi.org/10.1038/s41534-024-00902-0} {\bibfield  {journal} {\bibinfo  {journal} {npj Quantum Information}\ }\textbf {\bibinfo {volume} {10}},\ \bibinfo {pages} {116} (\bibinfo {year} {2024})}\BibitemShut {NoStop}%
\bibitem [{\citenamefont {Cerezo}\ \emph {et~al.}(2025)\citenamefont {Cerezo}, \citenamefont {Larocca}, \citenamefont {Garc{\'\i}a-Mart{\'\i}n}, \citenamefont {Diaz}, \citenamefont {Braccia}, \citenamefont {Fontana}, \citenamefont {Rudolph}, \citenamefont {Bermejo}, \citenamefont {Ijaz}, \citenamefont {Thanasilp} \emph {et~al.}}]{cerezo2023does}%
  \BibitemOpen
  \bibfield  {author} {\bibinfo {author} {\bibfnamefont {M.}~\bibnamefont {Cerezo}}, \bibinfo {author} {\bibfnamefont {M.}~\bibnamefont {Larocca}}, \bibinfo {author} {\bibfnamefont {D.}~\bibnamefont {Garc{\'\i}a-Mart{\'\i}n}}, \bibinfo {author} {\bibfnamefont {N.~L.}\ \bibnamefont {Diaz}}, \bibinfo {author} {\bibfnamefont {P.}~\bibnamefont {Braccia}}, \bibinfo {author} {\bibfnamefont {E.}~\bibnamefont {Fontana}}, \bibinfo {author} {\bibfnamefont {M.~S.}\ \bibnamefont {Rudolph}}, \bibinfo {author} {\bibfnamefont {P.}~\bibnamefont {Bermejo}}, \bibinfo {author} {\bibfnamefont {A.}~\bibnamefont {Ijaz}}, \bibinfo {author} {\bibfnamefont {S.}~\bibnamefont {Thanasilp}}, \emph {et~al.},\ }\bibfield  {title} {\bibinfo {title} {Does provable absence of barren plateaus imply classical simulability?},\ }\href {https://doi.org/10.1038/s41467-025-63099-6} {\bibfield  {journal} {\bibinfo  {journal} {Nature Communications}\ }\textbf {\bibinfo {volume} {16}},\ \bibinfo {pages} {7907} (\bibinfo {year} {2025})}\BibitemShut
  {NoStop}%
\bibitem [{\citenamefont {Bak{\'o}}\ \emph {et~al.}(2025)\citenamefont {Bak{\'o}}, \citenamefont {Kolarovszki},\ and\ \citenamefont {Zimboras}}]{bako2025fermionic}%
  \BibitemOpen
  \bibfield  {author} {\bibinfo {author} {\bibfnamefont {B.}~\bibnamefont {Bak{\'o}}}, \bibinfo {author} {\bibfnamefont {Z.}~\bibnamefont {Kolarovszki}},\ and\ \bibinfo {author} {\bibfnamefont {Z.}~\bibnamefont {Zimboras}},\ }\bibfield  {title} {\bibinfo {title} {Fermionic born machines: Classical training of quantum generative models based on fermion sampling},\ }\bibfield  {journal} {\bibinfo  {journal} {arXiv preprint arXiv:2511.13844}\ }\href {https://doi.org/https://doi.org/10.48550/arXiv.2511.13844} {https://doi.org/10.48550/arXiv.2511.13844} (\bibinfo {year} {2025})\BibitemShut {NoStop}%
\bibitem [{\citenamefont {Recio-Armengol}\ \emph {et~al.}(2025)\citenamefont {Recio-Armengol}, \citenamefont {Ahmed},\ and\ \citenamefont {Bowles}}]{recio2025train}%
  \BibitemOpen
  \bibfield  {author} {\bibinfo {author} {\bibfnamefont {E.}~\bibnamefont {Recio-Armengol}}, \bibinfo {author} {\bibfnamefont {S.}~\bibnamefont {Ahmed}},\ and\ \bibinfo {author} {\bibfnamefont {J.}~\bibnamefont {Bowles}},\ }\bibfield  {title} {\bibinfo {title} {Train on classical, deploy on quantum: scaling generative quantum machine learning to a thousand qubits},\ }\bibfield  {journal} {\bibinfo  {journal} {arXiv preprint arXiv:2503.02934}\ }\href {https://doi.org/10.48550/arXiv.2503.02934} {10.48550/arXiv.2503.02934} (\bibinfo {year} {2025})\BibitemShut {NoStop}%
\bibitem [{\citenamefont {Herrero-Gonzalez}\ \emph {et~al.}(2025)\citenamefont {Herrero-Gonzalez}, \citenamefont {Coyle}, \citenamefont {McDowall}, \citenamefont {Grassie}, \citenamefont {Beentjes}, \citenamefont {Khamseh},\ and\ \citenamefont {Kashefi}}]{herrero2025born}%
  \BibitemOpen
  \bibfield  {author} {\bibinfo {author} {\bibfnamefont {M.}~\bibnamefont {Herrero-Gonzalez}}, \bibinfo {author} {\bibfnamefont {B.}~\bibnamefont {Coyle}}, \bibinfo {author} {\bibfnamefont {K.}~\bibnamefont {McDowall}}, \bibinfo {author} {\bibfnamefont {R.}~\bibnamefont {Grassie}}, \bibinfo {author} {\bibfnamefont {S.}~\bibnamefont {Beentjes}}, \bibinfo {author} {\bibfnamefont {A.}~\bibnamefont {Khamseh}},\ and\ \bibinfo {author} {\bibfnamefont {E.}~\bibnamefont {Kashefi}},\ }\bibfield  {title} {\bibinfo {title} {The born ultimatum: Conditions for classical surrogation of quantum generative models with correlators},\ }\bibfield  {journal} {\bibinfo  {journal} {arXiv preprint arXiv:2511.01845}\ }\href {https://doi.org/10.48550/arXiv.2511.01845} {10.48550/arXiv.2511.01845} (\bibinfo {year} {2025})\BibitemShut {NoStop}%
\bibitem [{\citenamefont {Kasture}\ \emph {et~al.}(2023)\citenamefont {Kasture}, \citenamefont {Kyriienko},\ and\ \citenamefont {Elfving}}]{kasture2023protocols}%
  \BibitemOpen
  \bibfield  {author} {\bibinfo {author} {\bibfnamefont {S.}~\bibnamefont {Kasture}}, \bibinfo {author} {\bibfnamefont {O.}~\bibnamefont {Kyriienko}},\ and\ \bibinfo {author} {\bibfnamefont {V.~E.}\ \bibnamefont {Elfving}},\ }\bibfield  {title} {\bibinfo {title} {Protocols for classically training quantum generative models on probability distributions},\ }\href {https://journals.aps.org/pra/abstract/10.1103/PhysRevA.108.042406} {\bibfield  {journal} {\bibinfo  {journal} {Physical Review A}\ }\textbf {\bibinfo {volume} {108}},\ \bibinfo {pages} {042406} (\bibinfo {year} {2023})}\BibitemShut {NoStop}%
\bibitem [{\citenamefont {Kurkin}\ \emph {et~al.}(2025)\citenamefont {Kurkin}, \citenamefont {Shen}, \citenamefont {Pielawa}, \citenamefont {Wang},\ and\ \citenamefont {Dunjko}}]{kurkin2025universality}%
  \BibitemOpen
  \bibfield  {author} {\bibinfo {author} {\bibfnamefont {A.}~\bibnamefont {Kurkin}}, \bibinfo {author} {\bibfnamefont {K.}~\bibnamefont {Shen}}, \bibinfo {author} {\bibfnamefont {S.}~\bibnamefont {Pielawa}}, \bibinfo {author} {\bibfnamefont {H.}~\bibnamefont {Wang}},\ and\ \bibinfo {author} {\bibfnamefont {V.}~\bibnamefont {Dunjko}},\ }\bibfield  {title} {\bibinfo {title} {Universality and kernel-adaptive training for classically trained, quantum-deployed generative models},\ }\href {https://arxiv.org/abs/2510.08476} {\bibfield  {journal} {\bibinfo  {journal} {arXiv preprint arXiv:2510.08476}\ } (\bibinfo {year} {2025})}\BibitemShut {NoStop}%
\bibitem [{\citenamefont {Larocca}\ \emph {et~al.}(2025)\citenamefont {Larocca}, \citenamefont {Thanasilp}, \citenamefont {Wang}, \citenamefont {Sharma}, \citenamefont {Biamonte}, \citenamefont {Coles}, \citenamefont {Cincio}, \citenamefont {McClean}, \citenamefont {Holmes},\ and\ \citenamefont {Cerezo}}]{larocca2024review}%
  \BibitemOpen
  \bibfield  {author} {\bibinfo {author} {\bibfnamefont {M.}~\bibnamefont {Larocca}}, \bibinfo {author} {\bibfnamefont {S.}~\bibnamefont {Thanasilp}}, \bibinfo {author} {\bibfnamefont {S.}~\bibnamefont {Wang}}, \bibinfo {author} {\bibfnamefont {K.}~\bibnamefont {Sharma}}, \bibinfo {author} {\bibfnamefont {J.}~\bibnamefont {Biamonte}}, \bibinfo {author} {\bibfnamefont {P.~J.}\ \bibnamefont {Coles}}, \bibinfo {author} {\bibfnamefont {L.}~\bibnamefont {Cincio}}, \bibinfo {author} {\bibfnamefont {J.~R.}\ \bibnamefont {McClean}}, \bibinfo {author} {\bibfnamefont {Z.}~\bibnamefont {Holmes}},\ and\ \bibinfo {author} {\bibfnamefont {M.}~\bibnamefont {Cerezo}},\ }\bibfield  {title} {\bibinfo {title} {Barren plateaus in variational quantum computing},\ }\href {https://doi.org/10.1038/s42254-025-00813-9} {\bibfield  {journal} {\bibinfo  {journal} {Nature Reviews Physics}\ }\textbf {\bibinfo {volume} {7}},\ \bibinfo {pages} {174–189} (\bibinfo {year} {2025})}\BibitemShut {NoStop}%
\bibitem [{\citenamefont {Anschuetz}\ and\ \citenamefont {Kiani}(2022)}]{anschuetz2022quantum}%
  \BibitemOpen
  \bibfield  {author} {\bibinfo {author} {\bibfnamefont {E.~R.}\ \bibnamefont {Anschuetz}}\ and\ \bibinfo {author} {\bibfnamefont {B.~T.}\ \bibnamefont {Kiani}},\ }\bibfield  {title} {\bibinfo {title} {Quantum variational algorithms are swamped with traps},\ }\href {https://doi.org/10.1038/s41467-022-35364-5} {\bibfield  {journal} {\bibinfo  {journal} {Nature Communications}\ }\textbf {\bibinfo {volume} {13}},\ \bibinfo {pages} {7760} (\bibinfo {year} {2022})}\BibitemShut {NoStop}%
\bibitem [{\citenamefont {Abbas}\ \emph {et~al.}(2023)\citenamefont {Abbas}, \citenamefont {King}, \citenamefont {Huang}, \citenamefont {Huggins}, \citenamefont {Movassagh}, \citenamefont {Gilboa},\ and\ \citenamefont {McClean}}]{abbas2023quantum}%
  \BibitemOpen
  \bibfield  {author} {\bibinfo {author} {\bibfnamefont {A.}~\bibnamefont {Abbas}}, \bibinfo {author} {\bibfnamefont {R.}~\bibnamefont {King}}, \bibinfo {author} {\bibfnamefont {H.-Y.}\ \bibnamefont {Huang}}, \bibinfo {author} {\bibfnamefont {W.~J.}\ \bibnamefont {Huggins}}, \bibinfo {author} {\bibfnamefont {R.}~\bibnamefont {Movassagh}}, \bibinfo {author} {\bibfnamefont {D.}~\bibnamefont {Gilboa}},\ and\ \bibinfo {author} {\bibfnamefont {J.~R.}\ \bibnamefont {McClean}},\ }\bibfield  {title} {\bibinfo {title} {On quantum backpropagation, information reuse, and cheating measurement collapse},\ }\href {https://papers.nips.cc/paper_files/paper/2023/hash/8c3caae2f725c8e2a55ecd600563d172-Abstract-Conference.html} {\bibfield  {journal} {\bibinfo  {journal} {Advances in Neural Information Processing Systems}\ }\textbf {\bibinfo {volume} {36}},\ \bibinfo {pages} {44792} (\bibinfo {year} {2023})}\BibitemShut {NoStop}%
\bibitem [{\citenamefont {Shen}\ \emph {et~al.}(2026)\citenamefont {Shen}, \citenamefont {Pielawa}, \citenamefont {Dunjko},\ and\ \citenamefont {Wang}}]{shen2026characterizing}%
  \BibitemOpen
  \bibfield  {author} {\bibinfo {author} {\bibfnamefont {K.}~\bibnamefont {Shen}}, \bibinfo {author} {\bibfnamefont {S.}~\bibnamefont {Pielawa}}, \bibinfo {author} {\bibfnamefont {V.}~\bibnamefont {Dunjko}},\ and\ \bibinfo {author} {\bibfnamefont {H.}~\bibnamefont {Wang}},\ }\bibfield  {title} {\bibinfo {title} {Characterizing trainability of instantaneous quantum polynomial circuit born machines},\ }\bibfield  {journal} {\bibinfo  {journal} {arXiv preprint arXiv:2602.11042}\ }\href {https://doi.org/https://doi.org/10.48550/arXiv.2602.11042} {https://doi.org/10.48550/arXiv.2602.11042} (\bibinfo {year} {2026})\BibitemShut {NoStop}%
\bibitem [{\citenamefont {McClean}\ \emph {et~al.}(2018)\citenamefont {McClean}, \citenamefont {Boixo}, \citenamefont {Smelyanskiy}, \citenamefont {Babbush},\ and\ \citenamefont {Neven}}]{mcclean2018barren}%
  \BibitemOpen
  \bibfield  {author} {\bibinfo {author} {\bibfnamefont {J.~R.}\ \bibnamefont {McClean}}, \bibinfo {author} {\bibfnamefont {S.}~\bibnamefont {Boixo}}, \bibinfo {author} {\bibfnamefont {V.~N.}\ \bibnamefont {Smelyanskiy}}, \bibinfo {author} {\bibfnamefont {R.}~\bibnamefont {Babbush}},\ and\ \bibinfo {author} {\bibfnamefont {H.}~\bibnamefont {Neven}},\ }\bibfield  {title} {\bibinfo {title} {Barren plateaus in quantum neural network training landscapes},\ }\href {https://doi.org/10.1038/s41467-018-07090-4} {\bibfield  {journal} {\bibinfo  {journal} {Nature {C}ommunications}\ }\textbf {\bibinfo {volume} {9}},\ \bibinfo {pages} {1} (\bibinfo {year} {2018})}\BibitemShut {NoStop}%
\bibitem [{\citenamefont {Herbst}\ \emph {et~al.}(2025)\citenamefont {Herbst}, \citenamefont {Brandi{\'c}},\ and\ \citenamefont {P{\'e}rez-Salinas}}]{herbst2025limits}%
  \BibitemOpen
  \bibfield  {author} {\bibinfo {author} {\bibfnamefont {S.}~\bibnamefont {Herbst}}, \bibinfo {author} {\bibfnamefont {I.}~\bibnamefont {Brandi{\'c}}},\ and\ \bibinfo {author} {\bibfnamefont {A.}~\bibnamefont {P{\'e}rez-Salinas}},\ }\bibfield  {title} {\bibinfo {title} {Limits of quantum generative models with classical sampling hardness},\ }\href {https://arxiv.org/abs/2512.24801} {\bibfield  {journal} {\bibinfo  {journal} {arXiv preprint arXiv:2512.24801}\ } (\bibinfo {year} {2025})}\BibitemShut {NoStop}%
\bibitem [{\citenamefont {Zou}\ \emph {et~al.}(2025{\natexlab{a}})\citenamefont {Zou}, \citenamefont {Duan}, \citenamefont {Fleming}, \citenamefont {Liu}, \citenamefont {Kompella}, \citenamefont {Ren},\ and\ \citenamefont {Xu}}]{zou2025conquer}%
  \BibitemOpen
  \bibfield  {author} {\bibinfo {author} {\bibfnamefont {X.}~\bibnamefont {Zou}}, \bibinfo {author} {\bibfnamefont {S.}~\bibnamefont {Duan}}, \bibinfo {author} {\bibfnamefont {C.}~\bibnamefont {Fleming}}, \bibinfo {author} {\bibfnamefont {G.}~\bibnamefont {Liu}}, \bibinfo {author} {\bibfnamefont {R.~R.}\ \bibnamefont {Kompella}}, \bibinfo {author} {\bibfnamefont {S.}~\bibnamefont {Ren}},\ and\ \bibinfo {author} {\bibfnamefont {X.}~\bibnamefont {Xu}},\ }\bibfield  {title} {\bibinfo {title} {Conquer: Modular architectures for control and bias mitigation in iqp quantum generative models},\ }\href {https://arxiv.org/abs/2509.22551} {\bibfield  {journal} {\bibinfo  {journal} {arXiv preprint arXiv:2509.22551}\ } (\bibinfo {year} {2025}{\natexlab{a}})}\BibitemShut {NoStop}%
\bibitem [{\citenamefont {Majumder}\ \emph {et~al.}(2024)\citenamefont {Majumder}, \citenamefont {Krumm}, \citenamefont {Radkohl}, \citenamefont {Fiderer}, \citenamefont {Nautrup}, \citenamefont {Jerbi},\ and\ \citenamefont {Briegel}}]{majumder2024variational}%
  \BibitemOpen
  \bibfield  {author} {\bibinfo {author} {\bibfnamefont {A.}~\bibnamefont {Majumder}}, \bibinfo {author} {\bibfnamefont {M.}~\bibnamefont {Krumm}}, \bibinfo {author} {\bibfnamefont {T.}~\bibnamefont {Radkohl}}, \bibinfo {author} {\bibfnamefont {L.~J.}\ \bibnamefont {Fiderer}}, \bibinfo {author} {\bibfnamefont {H.~P.}\ \bibnamefont {Nautrup}}, \bibinfo {author} {\bibfnamefont {S.}~\bibnamefont {Jerbi}},\ and\ \bibinfo {author} {\bibfnamefont {H.~J.}\ \bibnamefont {Briegel}},\ }\bibfield  {title} {\bibinfo {title} {Variational measurement-based quantum computation for generative modeling},\ }\href {https://journals.aps.org/pra/abstract/10.1103/PhysRevA.110.062616} {\bibfield  {journal} {\bibinfo  {journal} {Physical Review A}\ }\textbf {\bibinfo {volume} {110}},\ \bibinfo {pages} {062616} (\bibinfo {year} {2024})}\BibitemShut {NoStop}%
\bibitem [{\citenamefont {Van~den Nest}(2009)}]{nest2009simulating}%
  \BibitemOpen
  \bibfield  {author} {\bibinfo {author} {\bibfnamefont {M.}~\bibnamefont {Van~den Nest}},\ }\bibfield  {title} {\bibinfo {title} {Simulating quantum computers with probabilistic methods},\ }\bibfield  {journal} {\bibinfo  {journal} {arXiv preprint arXiv:0911.1624}\ }\href {https://doi.org/https://doi.org/10.48550/arXiv.0911.1624} {https://doi.org/10.48550/arXiv.0911.1624} (\bibinfo {year} {2009})\BibitemShut {NoStop}%
\bibitem [{\citenamefont {Shepherd}\ and\ \citenamefont {Bremner}(2009)}]{shepherd2009temporally}%
  \BibitemOpen
  \bibfield  {author} {\bibinfo {author} {\bibfnamefont {D.}~\bibnamefont {Shepherd}}\ and\ \bibinfo {author} {\bibfnamefont {M.~J.}\ \bibnamefont {Bremner}},\ }\bibfield  {title} {\bibinfo {title} {Temporally unstructured quantum computation},\ }\href {https://doi.org/https://doi.org/10.1098/rspa.2008.0443} {\bibfield  {journal} {\bibinfo  {journal} {Proceedings of the Royal Society A: Mathematical, Physical and Engineering Sciences}\ }\textbf {\bibinfo {volume} {465}},\ \bibinfo {pages} {1413} (\bibinfo {year} {2009})}\BibitemShut {NoStop}%
\bibitem [{\citenamefont {Bremner}\ \emph {et~al.}(2011)\citenamefont {Bremner}, \citenamefont {Jozsa},\ and\ \citenamefont {Shepherd}}]{bremner2011classical}%
  \BibitemOpen
  \bibfield  {author} {\bibinfo {author} {\bibfnamefont {M.~J.}\ \bibnamefont {Bremner}}, \bibinfo {author} {\bibfnamefont {R.}~\bibnamefont {Jozsa}},\ and\ \bibinfo {author} {\bibfnamefont {D.~J.}\ \bibnamefont {Shepherd}},\ }\bibfield  {title} {\bibinfo {title} {Classical simulation of commuting quantum computations implies collapse of the polynomial hierarchy},\ }\href {https://royalsocietypublishing.org/doi/abs/10.1098/rspa.2010.0301} {\bibfield  {journal} {\bibinfo  {journal} {Proceedings of the Royal Society A: Mathematical, Physical and Engineering Sciences}\ }\textbf {\bibinfo {volume} {467}},\ \bibinfo {pages} {459} (\bibinfo {year} {2011})}\BibitemShut {NoStop}%
\bibitem [{\citenamefont {Bremner}\ \emph {et~al.}(2016)\citenamefont {Bremner}, \citenamefont {Montanaro},\ and\ \citenamefont {Shepherd}}]{bremner2016average}%
  \BibitemOpen
  \bibfield  {author} {\bibinfo {author} {\bibfnamefont {M.~J.}\ \bibnamefont {Bremner}}, \bibinfo {author} {\bibfnamefont {A.}~\bibnamefont {Montanaro}},\ and\ \bibinfo {author} {\bibfnamefont {D.~J.}\ \bibnamefont {Shepherd}},\ }\bibfield  {title} {\bibinfo {title} {Average-case complexity versus approximate simulation of commuting quantum computations},\ }\href {https://doi.org/10.1103/PhysRevLett.117.080501} {\bibfield  {journal} {\bibinfo  {journal} {Physical review letters}\ }\textbf {\bibinfo {volume} {117}},\ \bibinfo {pages} {080501} (\bibinfo {year} {2016})}\BibitemShut {NoStop}%
\bibitem [{\citenamefont {Hangleiter}\ and\ \citenamefont {Eisert}(2023)}]{hangleiter2023computational}%
  \BibitemOpen
  \bibfield  {author} {\bibinfo {author} {\bibfnamefont {D.}~\bibnamefont {Hangleiter}}\ and\ \bibinfo {author} {\bibfnamefont {J.}~\bibnamefont {Eisert}},\ }\bibfield  {title} {\bibinfo {title} {Computational advantage of quantum random sampling},\ }\href {https://journals.aps.org/rmp/abstract/10.1103/RevModPhys.95.035001} {\bibfield  {journal} {\bibinfo  {journal} {Reviews of Modern Physics}\ }\textbf {\bibinfo {volume} {95}},\ \bibinfo {pages} {035001} (\bibinfo {year} {2023})}\BibitemShut {NoStop}%
\bibitem [{\citenamefont {Bergholm}\ \emph {et~al.}(2018)\citenamefont {Bergholm}, \citenamefont {Izaac}, \citenamefont {Schuld}, \citenamefont {Gogolin}, \citenamefont {Alam}, \citenamefont {Ahmed}, \citenamefont {Arrazola}, \citenamefont {Blank}, \citenamefont {Delgado}, \citenamefont {Jahangiri} \emph {et~al.}}]{bergholm2018pennylane}%
  \BibitemOpen
  \bibfield  {author} {\bibinfo {author} {\bibfnamefont {V.}~\bibnamefont {Bergholm}}, \bibinfo {author} {\bibfnamefont {J.}~\bibnamefont {Izaac}}, \bibinfo {author} {\bibfnamefont {M.}~\bibnamefont {Schuld}}, \bibinfo {author} {\bibfnamefont {C.}~\bibnamefont {Gogolin}}, \bibinfo {author} {\bibfnamefont {M.~S.}\ \bibnamefont {Alam}}, \bibinfo {author} {\bibfnamefont {S.}~\bibnamefont {Ahmed}}, \bibinfo {author} {\bibfnamefont {J.~M.}\ \bibnamefont {Arrazola}}, \bibinfo {author} {\bibfnamefont {C.}~\bibnamefont {Blank}}, \bibinfo {author} {\bibfnamefont {A.}~\bibnamefont {Delgado}}, \bibinfo {author} {\bibfnamefont {S.}~\bibnamefont {Jahangiri}}, \emph {et~al.},\ }\bibfield  {title} {\bibinfo {title} {Pennylane: Automatic differentiation of hybrid quantum-classical computations},\ }\href {https://arxiv.org/abs/1811.04968} {\bibfield  {journal} {\bibinfo  {journal} {arXiv preprint arXiv:1811.04968}\ } (\bibinfo {year} {2018})}\BibitemShut {NoStop}%
\bibitem [{\citenamefont {Recio-Armengol}\ and\ \citenamefont {Bowles}(2025)}]{recio2025iqpopt}%
  \BibitemOpen
  \bibfield  {author} {\bibinfo {author} {\bibfnamefont {E.}~\bibnamefont {Recio-Armengol}}\ and\ \bibinfo {author} {\bibfnamefont {J.}~\bibnamefont {Bowles}},\ }\bibfield  {title} {\bibinfo {title} {Iqpopt: Fast optimization of instantaneous quantum polynomial circuits in jax},\ }\bibfield  {journal} {\bibinfo  {journal} {arXiv preprint arXiv:2501.04776}\ }\href {https://doi.org/https://doi.org/10.48550/arXiv.2501.04776} {https://doi.org/10.48550/arXiv.2501.04776} (\bibinfo {year} {2025})\BibitemShut {NoStop}%
\bibitem [{\citenamefont {Recio}\ and\ \citenamefont {Bowles}(2025)}]{ErikRecio2025}%
  \BibitemOpen
  \bibfield  {author} {\bibinfo {author} {\bibfnamefont {E.}~\bibnamefont {Recio}}\ and\ \bibinfo {author} {\bibfnamefont {J.}~\bibnamefont {Bowles}},\ }\href@noop {} {\bibinfo {title} {Fast optimization of instantaneous quantum polynomial circuits}},\ \bibinfo {howpublished} {\url{https://pennylane.ai/qml/demos/tutorial_iqp_circuit_optimization_jax}} (\bibinfo {year} {2025}),\ \bibinfo {note} {date Accessed: 2026-03-13}\BibitemShut {NoStop}%
\bibitem [{\citenamefont {Vaswani}\ \emph {et~al.}(2017)\citenamefont {Vaswani}, \citenamefont {Shazeer}, \citenamefont {Parmar}, \citenamefont {Uszkoreit}, \citenamefont {Jones}, \citenamefont {Gomez}, \citenamefont {Kaiser},\ and\ \citenamefont {Polosukhin}}]{vaswani2017attention}%
  \BibitemOpen
  \bibfield  {author} {\bibinfo {author} {\bibfnamefont {A.}~\bibnamefont {Vaswani}}, \bibinfo {author} {\bibfnamefont {N.}~\bibnamefont {Shazeer}}, \bibinfo {author} {\bibfnamefont {N.}~\bibnamefont {Parmar}}, \bibinfo {author} {\bibfnamefont {J.}~\bibnamefont {Uszkoreit}}, \bibinfo {author} {\bibfnamefont {L.}~\bibnamefont {Jones}}, \bibinfo {author} {\bibfnamefont {A.~N.}\ \bibnamefont {Gomez}}, \bibinfo {author} {\bibfnamefont {{\L}.}~\bibnamefont {Kaiser}},\ and\ \bibinfo {author} {\bibfnamefont {I.}~\bibnamefont {Polosukhin}},\ }\bibfield  {title} {\bibinfo {title} {Attention is all you need},\ }\href {https://proceedings.neurips.cc/paper/2017/file/3f5ee243547dee91fbd053c1c4a845aa-Paper.pdf} {\bibfield  {journal} {\bibinfo  {journal} {Advances in neural information processing systems}\ }\textbf {\bibinfo {volume} {30}} (\bibinfo {year} {2017})}\BibitemShut {NoStop}%
\bibitem [{\citenamefont {Ho}\ \emph {et~al.}(2020)\citenamefont {Ho}, \citenamefont {Jain},\ and\ \citenamefont {Abbeel}}]{ho2020denoising}%
  \BibitemOpen
  \bibfield  {author} {\bibinfo {author} {\bibfnamefont {J.}~\bibnamefont {Ho}}, \bibinfo {author} {\bibfnamefont {A.}~\bibnamefont {Jain}},\ and\ \bibinfo {author} {\bibfnamefont {P.}~\bibnamefont {Abbeel}},\ }\bibfield  {title} {\bibinfo {title} {Denoising diffusion probabilistic models},\ }\href {https://proceedings.neurips.cc/paper/2020/hash/4c5bcfec8584af0d967f1ab10179ca4b-Abstract.html} {\bibfield  {journal} {\bibinfo  {journal} {Advances in neural information processing systems}\ }\textbf {\bibinfo {volume} {33}},\ \bibinfo {pages} {6840} (\bibinfo {year} {2020})}\BibitemShut {NoStop}%
\bibitem [{\citenamefont {Kingma}\ and\ \citenamefont {Welling}(2013)}]{kingma2013auto}%
  \BibitemOpen
  \bibfield  {author} {\bibinfo {author} {\bibfnamefont {D.~P.}\ \bibnamefont {Kingma}}\ and\ \bibinfo {author} {\bibfnamefont {M.}~\bibnamefont {Welling}},\ }\bibfield  {title} {\bibinfo {title} {Auto-encoding variational bayes},\ }\href {https://arxiv.org/abs/1312.6114} {\bibfield  {journal} {\bibinfo  {journal} {arXiv preprint arXiv:1312.6114}\ } (\bibinfo {year} {2013})}\BibitemShut {NoStop}%
\bibitem [{\citenamefont {Goodfellow}\ \emph {et~al.}(2020)\citenamefont {Goodfellow}, \citenamefont {Pouget-Abadie}, \citenamefont {Mirza}, \citenamefont {Xu}, \citenamefont {Warde-Farley}, \citenamefont {Ozair}, \citenamefont {Courville},\ and\ \citenamefont {Bengio}}]{goodfellow2020generative}%
  \BibitemOpen
  \bibfield  {author} {\bibinfo {author} {\bibfnamefont {I.}~\bibnamefont {Goodfellow}}, \bibinfo {author} {\bibfnamefont {J.}~\bibnamefont {Pouget-Abadie}}, \bibinfo {author} {\bibfnamefont {M.}~\bibnamefont {Mirza}}, \bibinfo {author} {\bibfnamefont {B.}~\bibnamefont {Xu}}, \bibinfo {author} {\bibfnamefont {D.}~\bibnamefont {Warde-Farley}}, \bibinfo {author} {\bibfnamefont {S.}~\bibnamefont {Ozair}}, \bibinfo {author} {\bibfnamefont {A.}~\bibnamefont {Courville}},\ and\ \bibinfo {author} {\bibfnamefont {Y.}~\bibnamefont {Bengio}},\ }\bibfield  {title} {\bibinfo {title} {Generative adversarial networks},\ }\href {https://dl.acm.org/doi/abs/10.1145/3422622} {\bibfield  {journal} {\bibinfo  {journal} {Communications of the ACM}\ }\textbf {\bibinfo {volume} {63}},\ \bibinfo {pages} {139} (\bibinfo {year} {2020})}\BibitemShut {NoStop}%
\bibitem [{\citenamefont {Zoufal}\ \emph {et~al.}(2019)\citenamefont {Zoufal}, \citenamefont {Lucchi},\ and\ \citenamefont {Woerner}}]{zoufal2019quantum}%
  \BibitemOpen
  \bibfield  {author} {\bibinfo {author} {\bibfnamefont {C.}~\bibnamefont {Zoufal}}, \bibinfo {author} {\bibfnamefont {A.}~\bibnamefont {Lucchi}},\ and\ \bibinfo {author} {\bibfnamefont {S.}~\bibnamefont {Woerner}},\ }\bibfield  {title} {\bibinfo {title} {Quantum generative adversarial networks for learning and loading random distributions},\ }\href {https://doi.org/10.1038/s41534-019-0223-2} {\bibfield  {journal} {\bibinfo  {journal} {npj Quantum Information}\ }\textbf {\bibinfo {volume} {5}},\ \bibinfo {pages} {1} (\bibinfo {year} {2019})}\BibitemShut {NoStop}%
\bibitem [{\citenamefont {Dallaire-Demers}\ and\ \citenamefont {Killoran}(2018)}]{dallaire2018quantum}%
  \BibitemOpen
  \bibfield  {author} {\bibinfo {author} {\bibfnamefont {P.-L.}\ \bibnamefont {Dallaire-Demers}}\ and\ \bibinfo {author} {\bibfnamefont {N.}~\bibnamefont {Killoran}},\ }\bibfield  {title} {\bibinfo {title} {Quantum generative adversarial networks},\ }\href {https://doi.org/10.1103/PhysRevA.98.012324} {\bibfield  {journal} {\bibinfo  {journal} {Physical Review A}\ }\textbf {\bibinfo {volume} {98}},\ \bibinfo {pages} {012324} (\bibinfo {year} {2018})}\BibitemShut {NoStop}%
\bibitem [{\citenamefont {Amin}\ \emph {et~al.}(2018)\citenamefont {Amin}, \citenamefont {Andriyash}, \citenamefont {Rolfe}, \citenamefont {Kulchytskyy},\ and\ \citenamefont {Melko}}]{Amin2018Quantum}%
  \BibitemOpen
  \bibfield  {author} {\bibinfo {author} {\bibfnamefont {M.~H.}\ \bibnamefont {Amin}}, \bibinfo {author} {\bibfnamefont {E.}~\bibnamefont {Andriyash}}, \bibinfo {author} {\bibfnamefont {J.}~\bibnamefont {Rolfe}}, \bibinfo {author} {\bibfnamefont {B.}~\bibnamefont {Kulchytskyy}},\ and\ \bibinfo {author} {\bibfnamefont {R.}~\bibnamefont {Melko}},\ }\bibfield  {title} {\bibinfo {title} {Quantum boltzmann machine},\ }\href {https://doi.org/10.1103/PhysRevX.8.021050} {\bibfield  {journal} {\bibinfo  {journal} {Phys. Rev. X}\ }\textbf {\bibinfo {volume} {8}},\ \bibinfo {pages} {021050} (\bibinfo {year} {2018})}\BibitemShut {NoStop}%
\bibitem [{\citenamefont {Coopmans}\ and\ \citenamefont {Benedetti}(2024)}]{coopmans2023sample}%
  \BibitemOpen
  \bibfield  {author} {\bibinfo {author} {\bibfnamefont {L.}~\bibnamefont {Coopmans}}\ and\ \bibinfo {author} {\bibfnamefont {M.}~\bibnamefont {Benedetti}},\ }\bibfield  {title} {\bibinfo {title} {On the sample complexity of quantum boltzmann machine learning},\ }\href {https://doi.org/10.1038/s42005-024-01763-x} {\bibfield  {journal} {\bibinfo  {journal} {Communications Physics}\ }\textbf {\bibinfo {volume} {7}},\ \bibinfo {pages} {274} (\bibinfo {year} {2024})}\BibitemShut {NoStop}%
\bibitem [{\citenamefont {Zoufal}\ \emph {et~al.}(2021)\citenamefont {Zoufal}, \citenamefont {Lucchi},\ and\ \citenamefont {Woerner}}]{zoufal2021variational}%
  \BibitemOpen
  \bibfield  {author} {\bibinfo {author} {\bibfnamefont {C.}~\bibnamefont {Zoufal}}, \bibinfo {author} {\bibfnamefont {A.}~\bibnamefont {Lucchi}},\ and\ \bibinfo {author} {\bibfnamefont {S.}~\bibnamefont {Woerner}},\ }\bibfield  {title} {\bibinfo {title} {Variational quantum boltzmann machines},\ }\href {https://doi.org/10.1007/s42484-020-00033-7} {\bibfield  {journal} {\bibinfo  {journal} {Quantum Machine Intelligence}\ }\textbf {\bibinfo {volume} {3}},\ \bibinfo {pages} {1} (\bibinfo {year} {2021})}\BibitemShut {NoStop}%
\bibitem [{\citenamefont {Demidik}\ \emph {et~al.}(2025)\citenamefont {Demidik}, \citenamefont {T{\"u}ys{\"u}z}, \citenamefont {Piatkowski}, \citenamefont {Grossi},\ and\ \citenamefont {Jansen}}]{demidik2025expressive}%
  \BibitemOpen
  \bibfield  {author} {\bibinfo {author} {\bibfnamefont {M.}~\bibnamefont {Demidik}}, \bibinfo {author} {\bibfnamefont {C.}~\bibnamefont {T{\"u}ys{\"u}z}}, \bibinfo {author} {\bibfnamefont {N.}~\bibnamefont {Piatkowski}}, \bibinfo {author} {\bibfnamefont {M.}~\bibnamefont {Grossi}},\ and\ \bibinfo {author} {\bibfnamefont {K.}~\bibnamefont {Jansen}},\ }\bibfield  {title} {\bibinfo {title} {Expressive equivalence of classical and quantum restricted boltzmann machines},\ }\href {https://www.nature.com/articles/s42005-025-02353-1} {\bibfield  {journal} {\bibinfo  {journal} {Communications Physics}\ }\textbf {\bibinfo {volume} {8}},\ \bibinfo {pages} {413} (\bibinfo {year} {2025})}\BibitemShut {NoStop}%
\bibitem [{\citenamefont {Chang}\ \emph {et~al.}(2024)\citenamefont {Chang}, \citenamefont {Thanasilp}, \citenamefont {Saux}, \citenamefont {Vallecorsa},\ and\ \citenamefont {Grossi}}]{chang2024latent}%
  \BibitemOpen
  \bibfield  {author} {\bibinfo {author} {\bibfnamefont {S.~Y.}\ \bibnamefont {Chang}}, \bibinfo {author} {\bibfnamefont {S.}~\bibnamefont {Thanasilp}}, \bibinfo {author} {\bibfnamefont {B.~L.}\ \bibnamefont {Saux}}, \bibinfo {author} {\bibfnamefont {S.}~\bibnamefont {Vallecorsa}},\ and\ \bibinfo {author} {\bibfnamefont {M.}~\bibnamefont {Grossi}},\ }\bibfield  {title} {\bibinfo {title} {Latent style-based quantum gan for high-quality image generation},\ }\bibfield  {journal} {\bibinfo  {journal} {arXiv preprint arXiv:2406.02668}\ }\href {https://doi.org/10.48550/arXiv.2406.02668} {10.48550/arXiv.2406.02668} (\bibinfo {year} {2024})\BibitemShut {NoStop}%
\bibitem [{\citenamefont {Barthe}\ \emph {et~al.}(2025)\citenamefont {Barthe}, \citenamefont {Grossi}, \citenamefont {Vallecorsa}, \citenamefont {Tura},\ and\ \citenamefont {Dunjko}}]{barthe2025parameterized}%
  \BibitemOpen
  \bibfield  {author} {\bibinfo {author} {\bibfnamefont {A.}~\bibnamefont {Barthe}}, \bibinfo {author} {\bibfnamefont {M.}~\bibnamefont {Grossi}}, \bibinfo {author} {\bibfnamefont {S.}~\bibnamefont {Vallecorsa}}, \bibinfo {author} {\bibfnamefont {J.}~\bibnamefont {Tura}},\ and\ \bibinfo {author} {\bibfnamefont {V.}~\bibnamefont {Dunjko}},\ }\bibfield  {title} {\bibinfo {title} {Parameterized quantum circuits as universal generative models for continuous multivariate distributions},\ }\href {https://www.nature.com/articles/s41534-025-01064-3} {\bibfield  {journal} {\bibinfo  {journal} {npj Quantum Information}\ }\textbf {\bibinfo {volume} {11}},\ \bibinfo {pages} {121} (\bibinfo {year} {2025})}\BibitemShut {NoStop}%
\bibitem [{\citenamefont {Romero}\ and\ \citenamefont {Aspuru-Guzik}(2021)}]{romero2021variational}%
  \BibitemOpen
  \bibfield  {author} {\bibinfo {author} {\bibfnamefont {J.}~\bibnamefont {Romero}}\ and\ \bibinfo {author} {\bibfnamefont {A.}~\bibnamefont {Aspuru-Guzik}},\ }\bibfield  {title} {\bibinfo {title} {Variational quantum generators: Generative adversarial quantum machine learning for continuous distributions},\ }\href {https://doi.org/10.1002/qute.202000003} {\bibfield  {journal} {\bibinfo  {journal} {Advanced Quantum Technologies}\ }\textbf {\bibinfo {volume} {4}},\ \bibinfo {pages} {2000003} (\bibinfo {year} {2021})}\BibitemShut {NoStop}%
\bibitem [{\citenamefont {Mart{\'\i}nez~de Lejarza}\ \emph {et~al.}(2025)\citenamefont {Mart{\'\i}nez~de Lejarza}, \citenamefont {Wu}, \citenamefont {Kyriienko}, \citenamefont {Rodrigo},\ and\ \citenamefont {Grossi}}]{martinez2025quantum}%
  \BibitemOpen
  \bibfield  {author} {\bibinfo {author} {\bibfnamefont {J.~J.}\ \bibnamefont {Mart{\'\i}nez~de Lejarza}}, \bibinfo {author} {\bibfnamefont {H.-Y.}\ \bibnamefont {Wu}}, \bibinfo {author} {\bibfnamefont {O.}~\bibnamefont {Kyriienko}}, \bibinfo {author} {\bibfnamefont {G.}~\bibnamefont {Rodrigo}},\ and\ \bibinfo {author} {\bibfnamefont {M.}~\bibnamefont {Grossi}},\ }\bibfield  {title} {\bibinfo {title} {Quantum chebyshev probabilistic models for fragmentation functions},\ }\href {https://www.nature.com/articles/s42005-025-02361-1} {\bibfield  {journal} {\bibinfo  {journal} {Communications Physics}\ }\textbf {\bibinfo {volume} {8}},\ \bibinfo {pages} {448} (\bibinfo {year} {2025})}\BibitemShut {NoStop}%
\bibitem [{\citenamefont {Wu}\ \emph {et~al.}(2025)\citenamefont {Wu}, \citenamefont {Elfving},\ and\ \citenamefont {Kyriienko}}]{wu2025multidimensional}%
  \BibitemOpen
  \bibfield  {author} {\bibinfo {author} {\bibfnamefont {H.-Y.}\ \bibnamefont {Wu}}, \bibinfo {author} {\bibfnamefont {V.~E.}\ \bibnamefont {Elfving}},\ and\ \bibinfo {author} {\bibfnamefont {O.}~\bibnamefont {Kyriienko}},\ }\bibfield  {title} {\bibinfo {title} {Multidimensional quantum generative modeling by quantum hartley transform},\ }\href {https://advanced.onlinelibrary.wiley.com/doi/full/10.1002/qute.202400337} {\bibfield  {journal} {\bibinfo  {journal} {Advanced Quantum Technologies}\ }\textbf {\bibinfo {volume} {8}},\ \bibinfo {pages} {2400337} (\bibinfo {year} {2025})}\BibitemShut {NoStop}%
\bibitem [{\citenamefont {Kullback}\ and\ \citenamefont {Leibler}(1951)}]{kullback1951on}%
  \BibitemOpen
  \bibfield  {author} {\bibinfo {author} {\bibfnamefont {S.}~\bibnamefont {Kullback}}\ and\ \bibinfo {author} {\bibfnamefont {R.~A.}\ \bibnamefont {Leibler}},\ }\bibfield  {title} {\bibinfo {title} {On information and sufficiency},\ }\bibfield  {journal} {\bibinfo  {journal} {The annals of mathematical statistics}\ }\textbf {\bibinfo {volume} {22}},\ \href {https://doi.org/10.1214/aoms/1177729694} {10.1214/aoms/1177729694} (\bibinfo {year} {1951})\BibitemShut {NoStop}%
\bibitem [{\citenamefont {Lin}(1991)}]{lin1991divergence}%
  \BibitemOpen
  \bibfield  {author} {\bibinfo {author} {\bibfnamefont {J.}~\bibnamefont {Lin}},\ }\bibfield  {title} {\bibinfo {title} {Divergence measures based on the shannon entropy},\ }\href {https://doi.org/10.1109/18.61115} {\bibfield  {journal} {\bibinfo  {journal} {IEEE Transactions on Information theory}\ }\textbf {\bibinfo {volume} {37}},\ \bibinfo {pages} {145} (\bibinfo {year} {1991})}\BibitemShut {NoStop}%
\bibitem [{\citenamefont {Gretton}\ \emph {et~al.}(2012)\citenamefont {Gretton}, \citenamefont {Borgwardt}, \citenamefont {Rasch}, \citenamefont {Sch{\"o}lkopf},\ and\ \citenamefont {Smola}}]{gretton2012kernel}%
  \BibitemOpen
  \bibfield  {author} {\bibinfo {author} {\bibfnamefont {A.}~\bibnamefont {Gretton}}, \bibinfo {author} {\bibfnamefont {K.~M.}\ \bibnamefont {Borgwardt}}, \bibinfo {author} {\bibfnamefont {M.~J.}\ \bibnamefont {Rasch}}, \bibinfo {author} {\bibfnamefont {B.}~\bibnamefont {Sch{\"o}lkopf}},\ and\ \bibinfo {author} {\bibfnamefont {A.}~\bibnamefont {Smola}},\ }\bibfield  {title} {\bibinfo {title} {A kernel two-sample test},\ }\href {https://doi.org/10.5555/2188385.2188410} {\bibfield  {journal} {\bibinfo  {journal} {The Journal of Machine Learning Research}\ }\textbf {\bibinfo {volume} {13}},\ \bibinfo {pages} {723} (\bibinfo {year} {2012})}\BibitemShut {NoStop}%
\bibitem [{\citenamefont {Arrasmith}\ \emph {et~al.}(2022)\citenamefont {Arrasmith}, \citenamefont {Holmes}, \citenamefont {Cerezo},\ and\ \citenamefont {Coles}}]{arrasmith2021equivalence}%
  \BibitemOpen
  \bibfield  {author} {\bibinfo {author} {\bibfnamefont {A.}~\bibnamefont {Arrasmith}}, \bibinfo {author} {\bibfnamefont {Z.}~\bibnamefont {Holmes}}, \bibinfo {author} {\bibfnamefont {M.}~\bibnamefont {Cerezo}},\ and\ \bibinfo {author} {\bibfnamefont {P.~J.}\ \bibnamefont {Coles}},\ }\bibfield  {title} {\bibinfo {title} {Equivalence of quantum barren plateaus to cost concentration and narrow gorges},\ }\href {https://doi.org/10.1088/2058-9565/ac7d06} {\bibfield  {journal} {\bibinfo  {journal} {Quantum Science and Technology}\ }\textbf {\bibinfo {volume} {7}},\ \bibinfo {pages} {045015} (\bibinfo {year} {2022})}\BibitemShut {NoStop}%
\bibitem [{\citenamefont {Arrasmith}\ \emph {et~al.}(2021)\citenamefont {Arrasmith}, \citenamefont {Cerezo}, \citenamefont {Czarnik}, \citenamefont {Cincio},\ and\ \citenamefont {Coles}}]{arrasmith2020effect}%
  \BibitemOpen
  \bibfield  {author} {\bibinfo {author} {\bibfnamefont {A.}~\bibnamefont {Arrasmith}}, \bibinfo {author} {\bibfnamefont {M.}~\bibnamefont {Cerezo}}, \bibinfo {author} {\bibfnamefont {P.}~\bibnamefont {Czarnik}}, \bibinfo {author} {\bibfnamefont {L.}~\bibnamefont {Cincio}},\ and\ \bibinfo {author} {\bibfnamefont {P.~J.}\ \bibnamefont {Coles}},\ }\bibfield  {title} {\bibinfo {title} {Effect of barren plateaus on gradient-free optimization},\ }\href {https://doi.org/10.22331/q-2021-10-05-558} {\bibfield  {journal} {\bibinfo  {journal} {Quantum}\ }\textbf {\bibinfo {volume} {5}},\ \bibinfo {pages} {558} (\bibinfo {year} {2021})}\BibitemShut {NoStop}%
\bibitem [{\citenamefont {Aghaei~Saem}\ \emph {et~al.}(2026)\citenamefont {Aghaei~Saem}, \citenamefont {Tafreshi}, \citenamefont {Holmes},\ and\ \citenamefont {Thanasilp}}]{aghaei2026pitfalls}%
  \BibitemOpen
  \bibfield  {author} {\bibinfo {author} {\bibfnamefont {R.}~\bibnamefont {Aghaei~Saem}}, \bibinfo {author} {\bibfnamefont {B.}~\bibnamefont {Tafreshi}}, \bibinfo {author} {\bibfnamefont {Z.}~\bibnamefont {Holmes}},\ and\ \bibinfo {author} {\bibfnamefont {S.}~\bibnamefont {Thanasilp}},\ }\bibfield  {title} {\bibinfo {title} {Pitfalls when tackling the exponential concentration of parameterized quantum models},\ }\href {https://doi.org/https://doi.org/10.1088/2058-9565/ae2202} {\bibfield  {journal} {\bibinfo  {journal} {Quantum Science and Technology}\ }\textbf {\bibinfo {volume} {11}},\ \bibinfo {pages} {015049} (\bibinfo {year} {2026})}\BibitemShut {NoStop}%
\bibitem [{\citenamefont {Holmes}\ \emph {et~al.}(2022)\citenamefont {Holmes}, \citenamefont {Sharma}, \citenamefont {Cerezo},\ and\ \citenamefont {Coles}}]{holmes2021connecting}%
  \BibitemOpen
  \bibfield  {author} {\bibinfo {author} {\bibfnamefont {Z.}~\bibnamefont {Holmes}}, \bibinfo {author} {\bibfnamefont {K.}~\bibnamefont {Sharma}}, \bibinfo {author} {\bibfnamefont {M.}~\bibnamefont {Cerezo}},\ and\ \bibinfo {author} {\bibfnamefont {P.~J.}\ \bibnamefont {Coles}},\ }\bibfield  {title} {\bibinfo {title} {Connecting ansatz expressibility to gradient magnitudes and barren plateaus},\ }\href {https://doi.org/10.1103/PRXQuantum.3.010313} {\bibfield  {journal} {\bibinfo  {journal} {PRX Quantum}\ }\textbf {\bibinfo {volume} {3}},\ \bibinfo {pages} {010313} (\bibinfo {year} {2022})}\BibitemShut {NoStop}%
\bibitem [{\citenamefont {Fontana}\ \emph {et~al.}(2024)\citenamefont {Fontana}, \citenamefont {Herman}, \citenamefont {Chakrabarti}, \citenamefont {Kumar}, \citenamefont {Yalovetzky}, \citenamefont {Heredge}, \citenamefont {Sureshbabu},\ and\ \citenamefont {Pistoia}}]{fontana2023theadjoint}%
  \BibitemOpen
  \bibfield  {author} {\bibinfo {author} {\bibfnamefont {E.}~\bibnamefont {Fontana}}, \bibinfo {author} {\bibfnamefont {D.}~\bibnamefont {Herman}}, \bibinfo {author} {\bibfnamefont {S.}~\bibnamefont {Chakrabarti}}, \bibinfo {author} {\bibfnamefont {N.}~\bibnamefont {Kumar}}, \bibinfo {author} {\bibfnamefont {R.}~\bibnamefont {Yalovetzky}}, \bibinfo {author} {\bibfnamefont {J.}~\bibnamefont {Heredge}}, \bibinfo {author} {\bibfnamefont {S.~H.}\ \bibnamefont {Sureshbabu}},\ and\ \bibinfo {author} {\bibfnamefont {M.}~\bibnamefont {Pistoia}},\ }\bibfield  {title} {\bibinfo {title} {Characterizing barren plateaus in quantum ansätze with the adjoint representation},\ }\href {https://doi.org/10.1038/s41467-024-49910-w} {\bibfield  {journal} {\bibinfo  {journal} {Nature Communications}\ }\textbf {\bibinfo {volume} {15}},\ \bibinfo {pages} {7171} (\bibinfo {year} {2024})}\BibitemShut {NoStop}%
\bibitem [{\citenamefont {Ragone}\ \emph {et~al.}(2024)\citenamefont {Ragone}, \citenamefont {Bakalov}, \citenamefont {Sauvage}, \citenamefont {Kemper}, \citenamefont {Ortiz~Marrero}, \citenamefont {Larocca},\ and\ \citenamefont {Cerezo}}]{ragone2023unified}%
  \BibitemOpen
  \bibfield  {author} {\bibinfo {author} {\bibfnamefont {M.}~\bibnamefont {Ragone}}, \bibinfo {author} {\bibfnamefont {B.~N.}\ \bibnamefont {Bakalov}}, \bibinfo {author} {\bibfnamefont {F.}~\bibnamefont {Sauvage}}, \bibinfo {author} {\bibfnamefont {A.~F.}\ \bibnamefont {Kemper}}, \bibinfo {author} {\bibfnamefont {C.}~\bibnamefont {Ortiz~Marrero}}, \bibinfo {author} {\bibfnamefont {M.}~\bibnamefont {Larocca}},\ and\ \bibinfo {author} {\bibfnamefont {M.}~\bibnamefont {Cerezo}},\ }\bibfield  {title} {\bibinfo {title} {A lie algebraic theory of barren plateaus for deep parameterized quantum circuits},\ }\href {https://doi.org/10.1038/s41467-024-49909-3} {\bibfield  {journal} {\bibinfo  {journal} {Nature Communications}\ }\textbf {\bibinfo {volume} {15}},\ \bibinfo {pages} {7172} (\bibinfo {year} {2024})}\BibitemShut {NoStop}%
\bibitem [{\citenamefont {Srimahajariyapong}\ \emph {et~al.}(2025)\citenamefont {Srimahajariyapong}, \citenamefont {Thanasilp},\ and\ \citenamefont {Chotibut}}]{srimahajariyapong2025connecting}%
  \BibitemOpen
  \bibfield  {author} {\bibinfo {author} {\bibfnamefont {K.}~\bibnamefont {Srimahajariyapong}}, \bibinfo {author} {\bibfnamefont {S.}~\bibnamefont {Thanasilp}},\ and\ \bibinfo {author} {\bibfnamefont {T.}~\bibnamefont {Chotibut}},\ }\bibfield  {title} {\bibinfo {title} {Connecting phases of matter to the flatness of the loss landscape in analog variational quantum algorithms},\ }\href {https://arxiv.org/abs/2506.13865} {\bibfield  {journal} {\bibinfo  {journal} {arXiv preprint arXiv:2506.13865}\ } (\bibinfo {year} {2025})}\BibitemShut {NoStop}%
\bibitem [{\citenamefont {Holmes}\ \emph {et~al.}(2021)\citenamefont {Holmes}, \citenamefont {Arrasmith}, \citenamefont {Yan}, \citenamefont {Coles}, \citenamefont {Albrecht},\ and\ \citenamefont {Sornborger}}]{holmes2020barren}%
  \BibitemOpen
  \bibfield  {author} {\bibinfo {author} {\bibfnamefont {Z.}~\bibnamefont {Holmes}}, \bibinfo {author} {\bibfnamefont {A.}~\bibnamefont {Arrasmith}}, \bibinfo {author} {\bibfnamefont {B.}~\bibnamefont {Yan}}, \bibinfo {author} {\bibfnamefont {P.~J.}\ \bibnamefont {Coles}}, \bibinfo {author} {\bibfnamefont {A.}~\bibnamefont {Albrecht}},\ and\ \bibinfo {author} {\bibfnamefont {A.~T.}\ \bibnamefont {Sornborger}},\ }\bibfield  {title} {\bibinfo {title} {Barren plateaus preclude learning scramblers},\ }\href {https://doi.org/10.1103/PhysRevLett.126.190501} {\bibfield  {journal} {\bibinfo  {journal} {Physical Review Letters}\ }\textbf {\bibinfo {volume} {126}},\ \bibinfo {pages} {190501} (\bibinfo {year} {2021})}\BibitemShut {NoStop}%
\bibitem [{\citenamefont {Patti}\ \emph {et~al.}(2021)\citenamefont {Patti}, \citenamefont {Najafi}, \citenamefont {Gao},\ and\ \citenamefont {Yelin}}]{patti2020entanglement}%
  \BibitemOpen
  \bibfield  {author} {\bibinfo {author} {\bibfnamefont {T.~L.}\ \bibnamefont {Patti}}, \bibinfo {author} {\bibfnamefont {K.}~\bibnamefont {Najafi}}, \bibinfo {author} {\bibfnamefont {X.}~\bibnamefont {Gao}},\ and\ \bibinfo {author} {\bibfnamefont {S.~F.}\ \bibnamefont {Yelin}},\ }\bibfield  {title} {\bibinfo {title} {Entanglement devised barren plateau mitigation},\ }\href {https://doi.org/10.1103/PhysRevResearch.3.033090} {\bibfield  {journal} {\bibinfo  {journal} {Physical Review Research}\ }\textbf {\bibinfo {volume} {3}},\ \bibinfo {pages} {033090} (\bibinfo {year} {2021})}\BibitemShut {NoStop}%
\bibitem [{\citenamefont {Marrero}\ \emph {et~al.}(2021)\citenamefont {Marrero}, \citenamefont {Kieferov{\'a}},\ and\ \citenamefont {Wiebe}}]{marrero2020entanglement}%
  \BibitemOpen
  \bibfield  {author} {\bibinfo {author} {\bibfnamefont {C.~O.}\ \bibnamefont {Marrero}}, \bibinfo {author} {\bibfnamefont {M.}~\bibnamefont {Kieferov{\'a}}},\ and\ \bibinfo {author} {\bibfnamefont {N.}~\bibnamefont {Wiebe}},\ }\bibfield  {title} {\bibinfo {title} {Entanglement-induced barren plateaus},\ }\href {https://doi.org/10.1103/PRXQuantum.2.040316} {\bibfield  {journal} {\bibinfo  {journal} {PRX Quantum}\ }\textbf {\bibinfo {volume} {2}},\ \bibinfo {pages} {040316} (\bibinfo {year} {2021})}\BibitemShut {NoStop}%
\bibitem [{\citenamefont {Cerezo}\ \emph {et~al.}(2021)\citenamefont {Cerezo}, \citenamefont {Sone}, \citenamefont {Volkoff}, \citenamefont {Cincio},\ and\ \citenamefont {Coles}}]{cerezo2020cost}%
  \BibitemOpen
  \bibfield  {author} {\bibinfo {author} {\bibfnamefont {M.}~\bibnamefont {Cerezo}}, \bibinfo {author} {\bibfnamefont {A.}~\bibnamefont {Sone}}, \bibinfo {author} {\bibfnamefont {T.}~\bibnamefont {Volkoff}}, \bibinfo {author} {\bibfnamefont {L.}~\bibnamefont {Cincio}},\ and\ \bibinfo {author} {\bibfnamefont {P.~J.}\ \bibnamefont {Coles}},\ }\bibfield  {title} {\bibinfo {title} {Cost function dependent barren plateaus in shallow parametrized quantum circuits},\ }\href {https://doi.org/10.1038/s41467-021-21728-w} {\bibfield  {journal} {\bibinfo  {journal} {Nature {C}ommunications}\ }\textbf {\bibinfo {volume} {12}},\ \bibinfo {pages} {1} (\bibinfo {year} {2021})}\BibitemShut {NoStop}%
\bibitem [{\citenamefont {Letcher}\ \emph {et~al.}(2024)\citenamefont {Letcher}, \citenamefont {Woerner},\ and\ \citenamefont {Zoufal}}]{letcher2023tight}%
  \BibitemOpen
  \bibfield  {author} {\bibinfo {author} {\bibfnamefont {A.}~\bibnamefont {Letcher}}, \bibinfo {author} {\bibfnamefont {S.}~\bibnamefont {Woerner}},\ and\ \bibinfo {author} {\bibfnamefont {C.}~\bibnamefont {Zoufal}},\ }\bibfield  {title} {\bibinfo {title} {Tight and efficient gradient bounds for parameterized quantum circuits},\ }\href {https://quantum-journal.org/papers/q-2024-09-25-1484/} {\bibfield  {journal} {\bibinfo  {journal} {Quantum}\ }\textbf {\bibinfo {volume} {8}},\ \bibinfo {pages} {1484} (\bibinfo {year} {2024})}\BibitemShut {NoStop}%
\bibitem [{\citenamefont {Wang}\ \emph {et~al.}(2021)\citenamefont {Wang}, \citenamefont {Fontana}, \citenamefont {Cerezo}, \citenamefont {Sharma}, \citenamefont {Sone}, \citenamefont {Cincio},\ and\ \citenamefont {Coles}}]{wang2020noise}%
  \BibitemOpen
  \bibfield  {author} {\bibinfo {author} {\bibfnamefont {S.}~\bibnamefont {Wang}}, \bibinfo {author} {\bibfnamefont {E.}~\bibnamefont {Fontana}}, \bibinfo {author} {\bibfnamefont {M.}~\bibnamefont {Cerezo}}, \bibinfo {author} {\bibfnamefont {K.}~\bibnamefont {Sharma}}, \bibinfo {author} {\bibfnamefont {A.}~\bibnamefont {Sone}}, \bibinfo {author} {\bibfnamefont {L.}~\bibnamefont {Cincio}},\ and\ \bibinfo {author} {\bibfnamefont {P.~J.}\ \bibnamefont {Coles}},\ }\bibfield  {title} {\bibinfo {title} {Noise-induced barren plateaus in variational quantum algorithms},\ }\href {https://doi.org/10.1038/s41467-021-27045-6} {\bibfield  {journal} {\bibinfo  {journal} {Nature Communications}\ }\textbf {\bibinfo {volume} {12}},\ \bibinfo {pages} {1} (\bibinfo {year} {2021})}\BibitemShut {NoStop}%
\bibitem [{\citenamefont {Crognaletti}\ \emph {et~al.}(2024)\citenamefont {Crognaletti}, \citenamefont {Grossi},\ and\ \citenamefont {Bassi}}]{crognaletti2024estimates}%
  \BibitemOpen
  \bibfield  {author} {\bibinfo {author} {\bibfnamefont {G.}~\bibnamefont {Crognaletti}}, \bibinfo {author} {\bibfnamefont {M.}~\bibnamefont {Grossi}},\ and\ \bibinfo {author} {\bibfnamefont {A.}~\bibnamefont {Bassi}},\ }\bibfield  {title} {\bibinfo {title} {Estimates of loss function concentration in noisy parametrized quantum circuits},\ }\href {https://arxiv.org/abs/2410.01893} {\bibfield  {journal} {\bibinfo  {journal} {arXiv preprint arXiv:2410.01893}\ } (\bibinfo {year} {2024})}\BibitemShut {NoStop}%
\bibitem [{\citenamefont {Xiong}\ \emph {et~al.}(2025)\citenamefont {Xiong}, \citenamefont {Holmes}, \citenamefont {Angrisani}, \citenamefont {Suzuki}, \citenamefont {Chotibut},\ and\ \citenamefont {Thanasilp}}]{xiong2025role}%
  \BibitemOpen
  \bibfield  {author} {\bibinfo {author} {\bibfnamefont {W.}~\bibnamefont {Xiong}}, \bibinfo {author} {\bibfnamefont {Z.}~\bibnamefont {Holmes}}, \bibinfo {author} {\bibfnamefont {A.}~\bibnamefont {Angrisani}}, \bibinfo {author} {\bibfnamefont {Y.}~\bibnamefont {Suzuki}}, \bibinfo {author} {\bibfnamefont {T.}~\bibnamefont {Chotibut}},\ and\ \bibinfo {author} {\bibfnamefont {S.}~\bibnamefont {Thanasilp}},\ }\bibfield  {title} {\bibinfo {title} {Role of scrambling and noise in temporal information processing with quantum systems},\ }\href {https://arxiv.org/abs/2505.10080} {\bibfield  {journal} {\bibinfo  {journal} {arXiv preprint arXiv:2505.10080}\ } (\bibinfo {year} {2025})}\BibitemShut {NoStop}%
\bibitem [{\citenamefont {Thanaslip}\ \emph {et~al.}(2023)\citenamefont {Thanaslip}, \citenamefont {Wang}, \citenamefont {Nghiem}, \citenamefont {Coles},\ and\ \citenamefont {Cerezo}}]{thanaslip2021subtleties}%
  \BibitemOpen
  \bibfield  {author} {\bibinfo {author} {\bibfnamefont {S.}~\bibnamefont {Thanaslip}}, \bibinfo {author} {\bibfnamefont {S.}~\bibnamefont {Wang}}, \bibinfo {author} {\bibfnamefont {N.~A.}\ \bibnamefont {Nghiem}}, \bibinfo {author} {\bibfnamefont {P.~J.}\ \bibnamefont {Coles}},\ and\ \bibinfo {author} {\bibfnamefont {M.}~\bibnamefont {Cerezo}},\ }\bibfield  {title} {\bibinfo {title} {Subtleties in the trainability of quantum machine learning models},\ }\href {https://doi.org/10.1007/s42484-023-00103-6} {\bibfield  {journal} {\bibinfo  {journal} {Quantum Machine Intelligence}\ }\textbf {\bibinfo {volume} {5}},\ \bibinfo {pages} {21} (\bibinfo {year} {2023})}\BibitemShut {NoStop}%
\bibitem [{\citenamefont {Thanasilp}\ \emph {et~al.}(2024)\citenamefont {Thanasilp}, \citenamefont {Wang}, \citenamefont {Cerezo},\ and\ \citenamefont {Holmes}}]{thanasilp2022exponential}%
  \BibitemOpen
  \bibfield  {author} {\bibinfo {author} {\bibfnamefont {S.}~\bibnamefont {Thanasilp}}, \bibinfo {author} {\bibfnamefont {S.}~\bibnamefont {Wang}}, \bibinfo {author} {\bibfnamefont {M.}~\bibnamefont {Cerezo}},\ and\ \bibinfo {author} {\bibfnamefont {Z.}~\bibnamefont {Holmes}},\ }\bibfield  {title} {\bibinfo {title} {Exponential concentration in quantum kernel methods},\ }\href {https://doi.org/10.1038/s41467-024-49287-w} {\bibfield  {journal} {\bibinfo  {journal} {Nature Communications}\ }\textbf {\bibinfo {volume} {15}},\ \bibinfo {pages} {5200} (\bibinfo {year} {2024})}\BibitemShut {NoStop}%
\bibitem [{\citenamefont {K{\"u}bler}\ \emph {et~al.}(2021)\citenamefont {K{\"u}bler}, \citenamefont {Buchholz},\ and\ \citenamefont {Sch{\"o}lkopf}}]{kubler2021inductive}%
  \BibitemOpen
  \bibfield  {author} {\bibinfo {author} {\bibfnamefont {J.}~\bibnamefont {K{\"u}bler}}, \bibinfo {author} {\bibfnamefont {S.}~\bibnamefont {Buchholz}},\ and\ \bibinfo {author} {\bibfnamefont {B.}~\bibnamefont {Sch{\"o}lkopf}},\ }\bibfield  {title} {\bibinfo {title} {The inductive bias of quantum kernels},\ }\href {https://doi.org/10.5555/3540261.3541230} {\bibfield  {journal} {\bibinfo  {journal} {Advances in Neural Information Processing Systems}\ }\textbf {\bibinfo {volume} {34}},\ \bibinfo {pages} {12661} (\bibinfo {year} {2021})}\BibitemShut {NoStop}%
\bibitem [{\citenamefont {Suzuki}\ and\ \citenamefont {Li}(2023)}]{suzuki2023effect}%
  \BibitemOpen
  \bibfield  {author} {\bibinfo {author} {\bibfnamefont {Y.}~\bibnamefont {Suzuki}}\ and\ \bibinfo {author} {\bibfnamefont {M.}~\bibnamefont {Li}},\ }\bibfield  {title} {\bibinfo {title} {Effect of alternating layered ansatzes on trainability of projected quantum kernel},\ }\href {https://arxiv.org/abs/2310.00361} {\bibfield  {journal} {\bibinfo  {journal} {arXiv preprint arXiv:2310.00361}\ } (\bibinfo {year} {2023})}\BibitemShut {NoStop}%
\bibitem [{\citenamefont {Xiong}\ \emph {et~al.}(2023)\citenamefont {Xiong}, \citenamefont {Facelli}, \citenamefont {Sahebi}, \citenamefont {Agnel}, \citenamefont {Chotibut}, \citenamefont {Thanasilp},\ and\ \citenamefont {Holmes}}]{xiong2023fundamental}%
  \BibitemOpen
  \bibfield  {author} {\bibinfo {author} {\bibfnamefont {W.}~\bibnamefont {Xiong}}, \bibinfo {author} {\bibfnamefont {G.}~\bibnamefont {Facelli}}, \bibinfo {author} {\bibfnamefont {M.}~\bibnamefont {Sahebi}}, \bibinfo {author} {\bibfnamefont {O.}~\bibnamefont {Agnel}}, \bibinfo {author} {\bibfnamefont {T.}~\bibnamefont {Chotibut}}, \bibinfo {author} {\bibfnamefont {S.}~\bibnamefont {Thanasilp}},\ and\ \bibinfo {author} {\bibfnamefont {Z.}~\bibnamefont {Holmes}},\ }\bibfield  {title} {\bibinfo {title} {On fundamental aspects of quantum extreme learning machines},\ }\href {https://arxiv.org/abs/2312.15124} {\bibfield  {journal} {\bibinfo  {journal} {arXiv preprint arXiv:2312.15124}\ } (\bibinfo {year} {2023})}\BibitemShut {NoStop}%
\bibitem [{\citenamefont {Shaydulin}\ and\ \citenamefont {Wild}(2022)}]{shaydulin2021importance}%
  \BibitemOpen
  \bibfield  {author} {\bibinfo {author} {\bibfnamefont {R.}~\bibnamefont {Shaydulin}}\ and\ \bibinfo {author} {\bibfnamefont {S.~M.}\ \bibnamefont {Wild}},\ }\bibfield  {title} {\bibinfo {title} {Importance of kernel bandwidth in quantum machine learning},\ }\href {https://doi.org/10.1103/PhysRevA.106.042407} {\bibfield  {journal} {\bibinfo  {journal} {Physical Review A}\ }\textbf {\bibinfo {volume} {106}},\ \bibinfo {pages} {042407} (\bibinfo {year} {2022})}\BibitemShut {NoStop}%
\bibitem [{\citenamefont {Leone}\ \emph {et~al.}(2024)\citenamefont {Leone}, \citenamefont {Oliviero}, \citenamefont {Cincio},\ and\ \citenamefont {Cerezo}}]{leone2022practical}%
  \BibitemOpen
  \bibfield  {author} {\bibinfo {author} {\bibfnamefont {L.}~\bibnamefont {Leone}}, \bibinfo {author} {\bibfnamefont {S.~F.}\ \bibnamefont {Oliviero}}, \bibinfo {author} {\bibfnamefont {L.}~\bibnamefont {Cincio}},\ and\ \bibinfo {author} {\bibfnamefont {M.}~\bibnamefont {Cerezo}},\ }\bibfield  {title} {\bibinfo {title} {On the practical usefulness of the hardware efficient ansatz},\ }\href {https://doi.org/10.22331/q-2024-07-03-1395} {\bibfield  {journal} {\bibinfo  {journal} {Quantum}\ }\textbf {\bibinfo {volume} {8}},\ \bibinfo {pages} {1395} (\bibinfo {year} {2024})}\BibitemShut {NoStop}%
\bibitem [{\citenamefont {Tangpanitanon}\ \emph {et~al.}(2020)\citenamefont {Tangpanitanon}, \citenamefont {Thanasilp}, \citenamefont {Dangniam}, \citenamefont {Lemonde},\ and\ \citenamefont {Angelakis}}]{tangpanitanon2020expressibility}%
  \BibitemOpen
  \bibfield  {author} {\bibinfo {author} {\bibfnamefont {J.}~\bibnamefont {Tangpanitanon}}, \bibinfo {author} {\bibfnamefont {S.}~\bibnamefont {Thanasilp}}, \bibinfo {author} {\bibfnamefont {N.}~\bibnamefont {Dangniam}}, \bibinfo {author} {\bibfnamefont {M.-A.}\ \bibnamefont {Lemonde}},\ and\ \bibinfo {author} {\bibfnamefont {D.~G.}\ \bibnamefont {Angelakis}},\ }\bibfield  {title} {\bibinfo {title} {Expressibility and trainability of parametrized analog quantum systems for machine learning applications},\ }\href {https://doi.org/10.1103/PhysRevResearch.2.043364} {\bibfield  {journal} {\bibinfo  {journal} {Physical Review Research}\ }\textbf {\bibinfo {volume} {2}},\ \bibinfo {pages} {043364} (\bibinfo {year} {2020})}\BibitemShut {NoStop}%
\bibitem [{\citenamefont {Hirviniemi}\ \emph {et~al.}(2026)\citenamefont {Hirviniemi}, \citenamefont {Basheer},\ and\ \citenamefont {Cope}}]{hirviniemi2026preventing}%
  \BibitemOpen
  \bibfield  {author} {\bibinfo {author} {\bibfnamefont {O.}~\bibnamefont {Hirviniemi}}, \bibinfo {author} {\bibfnamefont {A.}~\bibnamefont {Basheer}},\ and\ \bibinfo {author} {\bibfnamefont {T.}~\bibnamefont {Cope}},\ }\bibfield  {title} {\bibinfo {title} {Preventing barren plateaus in continuous quantum generative models},\ }\href {https://arxiv.org/abs/2602.10049} {\bibfield  {journal} {\bibinfo  {journal} {arXiv preprint arXiv:2602.10049}\ } (\bibinfo {year} {2026})}\BibitemShut {NoStop}%
\bibitem [{\citenamefont {Mhiri}\ \emph {et~al.}(2025)\citenamefont {Mhiri}, \citenamefont {Puig}, \citenamefont {Lerch}, \citenamefont {Rudolph}, \citenamefont {Chotibut}, \citenamefont {Thanasilp},\ and\ \citenamefont {Holmes}}]{mhiri2025unifying}%
  \BibitemOpen
  \bibfield  {author} {\bibinfo {author} {\bibfnamefont {H.}~\bibnamefont {Mhiri}}, \bibinfo {author} {\bibfnamefont {R.}~\bibnamefont {Puig}}, \bibinfo {author} {\bibfnamefont {S.}~\bibnamefont {Lerch}}, \bibinfo {author} {\bibfnamefont {M.~S.}\ \bibnamefont {Rudolph}}, \bibinfo {author} {\bibfnamefont {T.}~\bibnamefont {Chotibut}}, \bibinfo {author} {\bibfnamefont {S.}~\bibnamefont {Thanasilp}},\ and\ \bibinfo {author} {\bibfnamefont {Z.}~\bibnamefont {Holmes}},\ }\bibfield  {title} {\bibinfo {title} {A unifying account of warm start guarantees for patches of quantum landscapes},\ }\bibfield  {journal} {\bibinfo  {journal} {arXiv preprint arXiv:2502.07889}\ }\href {https://doi.org/https://doi.org/10.48550/arXiv.2502.07889} {https://doi.org/10.48550/arXiv.2502.07889} (\bibinfo {year} {2025})\BibitemShut {NoStop}%
\bibitem [{\citenamefont {Puig}\ \emph {et~al.}(2025)\citenamefont {Puig}, \citenamefont {Drudis}, \citenamefont {Thanasilp},\ and\ \citenamefont {Holmes}}]{puig2024variational}%
  \BibitemOpen
  \bibfield  {author} {\bibinfo {author} {\bibfnamefont {R.}~\bibnamefont {Puig}}, \bibinfo {author} {\bibfnamefont {M.}~\bibnamefont {Drudis}}, \bibinfo {author} {\bibfnamefont {S.}~\bibnamefont {Thanasilp}},\ and\ \bibinfo {author} {\bibfnamefont {Z.}~\bibnamefont {Holmes}},\ }\bibfield  {title} {\bibinfo {title} {Variational quantum simulation: A case study for understanding warm starts},\ }\href {https://doi.org/10.1103/PRXQuantum.6.010317} {\bibfield  {journal} {\bibinfo  {journal} {PRX Quantum}\ }\textbf {\bibinfo {volume} {6}},\ \bibinfo {pages} {010317} (\bibinfo {year} {2025})}\BibitemShut {NoStop}%
\bibitem [{\citenamefont {Puig}\ \emph {et~al.}(2026)\citenamefont {Puig}, \citenamefont {Casas}, \citenamefont {Cervera-Lierta}, \citenamefont {Holmes},\ and\ \citenamefont {P{\'e}rez-Salinas}}]{puig2026warm}%
  \BibitemOpen
  \bibfield  {author} {\bibinfo {author} {\bibfnamefont {R.}~\bibnamefont {Puig}}, \bibinfo {author} {\bibfnamefont {B.}~\bibnamefont {Casas}}, \bibinfo {author} {\bibfnamefont {A.}~\bibnamefont {Cervera-Lierta}}, \bibinfo {author} {\bibfnamefont {Z.}~\bibnamefont {Holmes}},\ and\ \bibinfo {author} {\bibfnamefont {A.}~\bibnamefont {P{\'e}rez-Salinas}},\ }\bibfield  {title} {\bibinfo {title} {Warm starts, cold states: Exploiting adiabaticity for variational ground-states},\ }\bibfield  {journal} {\bibinfo  {journal} {arXiv preprint arXiv:2602.06137}\ }\href {https://doi.org/https://doi.org/10.48550/arXiv.2602.06137} {https://doi.org/10.48550/arXiv.2602.06137} (\bibinfo {year} {2026})\BibitemShut {NoStop}%
\bibitem [{\citenamefont {Dborin}\ \emph {et~al.}(2022)\citenamefont {Dborin}, \citenamefont {Barratt}, \citenamefont {Wimalaweera}, \citenamefont {Wright},\ and\ \citenamefont {Green}}]{dborin2022matrix}%
  \BibitemOpen
  \bibfield  {author} {\bibinfo {author} {\bibfnamefont {J.}~\bibnamefont {Dborin}}, \bibinfo {author} {\bibfnamefont {F.}~\bibnamefont {Barratt}}, \bibinfo {author} {\bibfnamefont {V.}~\bibnamefont {Wimalaweera}}, \bibinfo {author} {\bibfnamefont {L.}~\bibnamefont {Wright}},\ and\ \bibinfo {author} {\bibfnamefont {A.~G.}\ \bibnamefont {Green}},\ }\bibfield  {title} {\bibinfo {title} {Matrix product state pre-training for quantum machine learning},\ }\href {https://doi.org/10.1088/2058-9565/ac7073} {\bibfield  {journal} {\bibinfo  {journal} {Quantum Science and Technology}\ }\textbf {\bibinfo {volume} {7}},\ \bibinfo {pages} {035014} (\bibinfo {year} {2022})}\BibitemShut {NoStop}%
\bibitem [{\citenamefont {Goh}\ \emph {et~al.}(2025)\citenamefont {Goh}, \citenamefont {Larocca}, \citenamefont {Cincio}, \citenamefont {Cerezo},\ and\ \citenamefont {Sauvage}}]{goh2023lie}%
  \BibitemOpen
  \bibfield  {author} {\bibinfo {author} {\bibfnamefont {M.~L.}\ \bibnamefont {Goh}}, \bibinfo {author} {\bibfnamefont {M.}~\bibnamefont {Larocca}}, \bibinfo {author} {\bibfnamefont {L.}~\bibnamefont {Cincio}}, \bibinfo {author} {\bibfnamefont {M.}~\bibnamefont {Cerezo}},\ and\ \bibinfo {author} {\bibfnamefont {F.}~\bibnamefont {Sauvage}},\ }\bibfield  {title} {\bibinfo {title} {Lie-algebraic classical simulations for quantum computing},\ }\href {https://doi.org/10.1103/3y65-f5w6} {\bibfield  {journal} {\bibinfo  {journal} {Physical Review Research}\ }\textbf {\bibinfo {volume} {7}},\ \bibinfo {pages} {033266} (\bibinfo {year} {2025})}\BibitemShut {NoStop}%
\bibitem [{\citenamefont {Sauvage}\ \emph {et~al.}(2021)\citenamefont {Sauvage}, \citenamefont {Sim}, \citenamefont {Kunitsa}, \citenamefont {Simon}, \citenamefont {Mauri},\ and\ \citenamefont {Perdomo-Ortiz}}]{sauvage2021flip}%
  \BibitemOpen
  \bibfield  {author} {\bibinfo {author} {\bibfnamefont {F.}~\bibnamefont {Sauvage}}, \bibinfo {author} {\bibfnamefont {S.}~\bibnamefont {Sim}}, \bibinfo {author} {\bibfnamefont {A.~A.}\ \bibnamefont {Kunitsa}}, \bibinfo {author} {\bibfnamefont {W.~A.}\ \bibnamefont {Simon}}, \bibinfo {author} {\bibfnamefont {M.}~\bibnamefont {Mauri}},\ and\ \bibinfo {author} {\bibfnamefont {A.}~\bibnamefont {Perdomo-Ortiz}},\ }\bibfield  {title} {\bibinfo {title} {Flip: A flexible initializer for arbitrarily-sized parametrized quantum circuits},\ }\href {https://arxiv.org/abs/2103.08572} {\bibfield  {journal} {\bibinfo  {journal} {arXiv preprint arXiv:2103.08572}\ } (\bibinfo {year} {2021})}\BibitemShut {NoStop}%
\bibitem [{\citenamefont {Gibbs}\ \emph {et~al.}(2024)\citenamefont {Gibbs}, \citenamefont {Holmes},\ and\ \citenamefont {Stevenson}}]{gibbs2024exploiting}%
  \BibitemOpen
  \bibfield  {author} {\bibinfo {author} {\bibfnamefont {J.}~\bibnamefont {Gibbs}}, \bibinfo {author} {\bibfnamefont {Z.}~\bibnamefont {Holmes}},\ and\ \bibinfo {author} {\bibfnamefont {P.}~\bibnamefont {Stevenson}},\ }\bibfield  {title} {\bibinfo {title} {Exploiting symmetries in nuclear hamiltonians for ground state preparation},\ }\bibfield  {journal} {\bibinfo  {journal} {arXiv preprint arXiv:2402.10277}\ }\href {https://doi.org/10.48550/arXiv.2402.10277} {10.48550/arXiv.2402.10277} (\bibinfo {year} {2024})\BibitemShut {NoStop}%
\bibitem [{\citenamefont {Liu}\ \emph {et~al.}(2023)\citenamefont {Liu}, \citenamefont {Sun}, \citenamefont {Wu}, \citenamefont {Han},\ and\ \citenamefont {Guo}}]{liu2023mitigating}%
  \BibitemOpen
  \bibfield  {author} {\bibinfo {author} {\bibfnamefont {H.-Y.}\ \bibnamefont {Liu}}, \bibinfo {author} {\bibfnamefont {T.-P.}\ \bibnamefont {Sun}}, \bibinfo {author} {\bibfnamefont {Y.-C.}\ \bibnamefont {Wu}}, \bibinfo {author} {\bibfnamefont {Y.-J.}\ \bibnamefont {Han}},\ and\ \bibinfo {author} {\bibfnamefont {G.-P.}\ \bibnamefont {Guo}},\ }\bibfield  {title} {\bibinfo {title} {Mitigating barren plateaus with transfer-learning-inspired parameter initializations},\ }\href {https://doi.org/10.1088/1367-2630/acb58e} {\bibfield  {journal} {\bibinfo  {journal} {New Journal of Physics}\ }\textbf {\bibinfo {volume} {25}},\ \bibinfo {pages} {013039} (\bibinfo {year} {2023})}\BibitemShut {NoStop}%
\bibitem [{\citenamefont {Rudolph}\ \emph {et~al.}(2023)\citenamefont {Rudolph}, \citenamefont {Miller}, \citenamefont {Motlagh}, \citenamefont {Chen}, \citenamefont {Acharya},\ and\ \citenamefont {Perdomo-Ortiz}}]{rudolph2022synergy}%
  \BibitemOpen
  \bibfield  {author} {\bibinfo {author} {\bibfnamefont {M.~S.}\ \bibnamefont {Rudolph}}, \bibinfo {author} {\bibfnamefont {J.}~\bibnamefont {Miller}}, \bibinfo {author} {\bibfnamefont {D.}~\bibnamefont {Motlagh}}, \bibinfo {author} {\bibfnamefont {J.}~\bibnamefont {Chen}}, \bibinfo {author} {\bibfnamefont {A.}~\bibnamefont {Acharya}},\ and\ \bibinfo {author} {\bibfnamefont {A.}~\bibnamefont {Perdomo-Ortiz}},\ }\bibfield  {title} {\bibinfo {title} {Synergistic pretraining of parametrized quantum circuits via tensor networks},\ }\href {https://doi.org/10.1038/s41467-023-43908-6} {\bibfield  {journal} {\bibinfo  {journal} {Nature Communications}\ }\textbf {\bibinfo {volume} {14}},\ \bibinfo {pages} {8367} (\bibinfo {year} {2023})}\BibitemShut {NoStop}%
\bibitem [{\citenamefont {Grimsley}\ \emph {et~al.}(2023)\citenamefont {Grimsley}, \citenamefont {Mayhall}, \citenamefont {Barron}, \citenamefont {Barnes},\ and\ \citenamefont {Economou}}]{grimsley2022adapt}%
  \BibitemOpen
  \bibfield  {author} {\bibinfo {author} {\bibfnamefont {H.~R.}\ \bibnamefont {Grimsley}}, \bibinfo {author} {\bibfnamefont {N.~J.}\ \bibnamefont {Mayhall}}, \bibinfo {author} {\bibfnamefont {G.~S.}\ \bibnamefont {Barron}}, \bibinfo {author} {\bibfnamefont {E.}~\bibnamefont {Barnes}},\ and\ \bibinfo {author} {\bibfnamefont {S.~E.}\ \bibnamefont {Economou}},\ }\bibfield  {title} {\bibinfo {title} {Adaptive, problem-tailored variational quantum eigensolver mitigates rough parameter landscapes and barren plateaus},\ }\href {https://doi.org/10.1038/s41534-023-00681-0} {\bibfield  {journal} {\bibinfo  {journal} {npj Quantum Information}\ }\textbf {\bibinfo {volume} {9}},\ \bibinfo {pages} {19} (\bibinfo {year} {2023})}\BibitemShut {NoStop}%
\bibitem [{\citenamefont {Mele}\ \emph {et~al.}(2022)\citenamefont {Mele}, \citenamefont {Mbeng}, \citenamefont {Santoro}, \citenamefont {Collura},\ and\ \citenamefont {Torta}}]{mele2022avoiding}%
  \BibitemOpen
  \bibfield  {author} {\bibinfo {author} {\bibfnamefont {A.~A.}\ \bibnamefont {Mele}}, \bibinfo {author} {\bibfnamefont {G.~B.}\ \bibnamefont {Mbeng}}, \bibinfo {author} {\bibfnamefont {G.~E.}\ \bibnamefont {Santoro}}, \bibinfo {author} {\bibfnamefont {M.}~\bibnamefont {Collura}},\ and\ \bibinfo {author} {\bibfnamefont {P.}~\bibnamefont {Torta}},\ }\bibfield  {title} {\bibinfo {title} {Avoiding barren plateaus via transferability of smooth solutions in a {H}amiltonian variational ansatz},\ }\href {https://doi.org/10.1103/PhysRevA.106.L060401} {\bibfield  {journal} {\bibinfo  {journal} {Physical Review A}\ }\textbf {\bibinfo {volume} {106}},\ \bibinfo {pages} {L060401} (\bibinfo {year} {2022})}\BibitemShut {NoStop}%
\bibitem [{\citenamefont {Grant}\ \emph {et~al.}(2019)\citenamefont {Grant}, \citenamefont {Wossnig}, \citenamefont {Ostaszewski},\ and\ \citenamefont {Benedetti}}]{grant2019initialization}%
  \BibitemOpen
  \bibfield  {author} {\bibinfo {author} {\bibfnamefont {E.}~\bibnamefont {Grant}}, \bibinfo {author} {\bibfnamefont {L.}~\bibnamefont {Wossnig}}, \bibinfo {author} {\bibfnamefont {M.}~\bibnamefont {Ostaszewski}},\ and\ \bibinfo {author} {\bibfnamefont {M.}~\bibnamefont {Benedetti}},\ }\bibfield  {title} {\bibinfo {title} {An initialization strategy for addressing barren plateaus in parametrized quantum circuits},\ }\href {https://doi.org/10.22331/q-2019-12-09-214} {\bibfield  {journal} {\bibinfo  {journal} {Quantum}\ }\textbf {\bibinfo {volume} {3}},\ \bibinfo {pages} {214} (\bibinfo {year} {2019})}\BibitemShut {NoStop}%
\bibitem [{\citenamefont {Volkoff}\ and\ \citenamefont {Coles}(2021)}]{volkoff2021large}%
  \BibitemOpen
  \bibfield  {author} {\bibinfo {author} {\bibfnamefont {T.}~\bibnamefont {Volkoff}}\ and\ \bibinfo {author} {\bibfnamefont {P.~J.}\ \bibnamefont {Coles}},\ }\bibfield  {title} {\bibinfo {title} {Large gradients via correlation in random parameterized quantum circuits},\ }\href {https://doi.org/10.1088/2058-9565/abd89} {\bibfield  {journal} {\bibinfo  {journal} {Quantum Science and Technology}\ }\textbf {\bibinfo {volume} {6}},\ \bibinfo {pages} {025008} (\bibinfo {year} {2021})}\BibitemShut {NoStop}%
\bibitem [{\citenamefont {Zhang}\ \emph {et~al.}(2022)\citenamefont {Zhang}, \citenamefont {Liu}, \citenamefont {Hsieh},\ and\ \citenamefont {Tao}}]{zhang2022escaping}%
  \BibitemOpen
  \bibfield  {author} {\bibinfo {author} {\bibfnamefont {K.}~\bibnamefont {Zhang}}, \bibinfo {author} {\bibfnamefont {L.}~\bibnamefont {Liu}}, \bibinfo {author} {\bibfnamefont {M.-H.}\ \bibnamefont {Hsieh}},\ and\ \bibinfo {author} {\bibfnamefont {D.}~\bibnamefont {Tao}},\ }\bibfield  {title} {\bibinfo {title} {Escaping from the barren plateau via {G}aussian initializations in deep variational quantum circuits},\ }in\ \href {https://openreview.net/forum?id=jXgbJdQ2YIy} {\emph {\bibinfo {booktitle} {Advances in Neural Information Processing Systems}}}\ (\bibinfo {year} {2022})\BibitemShut {NoStop}%
\bibitem [{\citenamefont {Wang}\ \emph {et~al.}(2024)\citenamefont {Wang}, \citenamefont {Qi}, \citenamefont {Ferrie},\ and\ \citenamefont {Dong}}]{wang2023trainability}%
  \BibitemOpen
  \bibfield  {author} {\bibinfo {author} {\bibfnamefont {Y.}~\bibnamefont {Wang}}, \bibinfo {author} {\bibfnamefont {B.}~\bibnamefont {Qi}}, \bibinfo {author} {\bibfnamefont {C.}~\bibnamefont {Ferrie}},\ and\ \bibinfo {author} {\bibfnamefont {D.}~\bibnamefont {Dong}},\ }\bibfield  {title} {\bibinfo {title} {Trainability enhancement of parameterized quantum circuits via reduced-domain parameter initialization},\ }\href {https://doi.org/10.1103/PhysRevApplied.22.054005} {\bibfield  {journal} {\bibinfo  {journal} {Physical Review Applied}\ }\textbf {\bibinfo {volume} {22}},\ \bibinfo {pages} {054005} (\bibinfo {year} {2024})}\BibitemShut {NoStop}%
\bibitem [{\citenamefont {Park}\ and\ \citenamefont {Killoran}(2024)}]{park2023hamiltonian}%
  \BibitemOpen
  \bibfield  {author} {\bibinfo {author} {\bibfnamefont {C.-Y.}\ \bibnamefont {Park}}\ and\ \bibinfo {author} {\bibfnamefont {N.}~\bibnamefont {Killoran}},\ }\bibfield  {title} {\bibinfo {title} {Hamiltonian variational ansatz without barren plateaus},\ }\href {https://doi.org/10.22331/q-2024-02-01-1239} {\bibfield  {journal} {\bibinfo  {journal} {Quantum}\ }\textbf {\bibinfo {volume} {8}},\ \bibinfo {pages} {1239} (\bibinfo {year} {2024})}\BibitemShut {NoStop}%
\bibitem [{\citenamefont {Park}\ \emph {et~al.}(2024)\citenamefont {Park}, \citenamefont {Kang},\ and\ \citenamefont {Huh}}]{park2024hardware}%
  \BibitemOpen
  \bibfield  {author} {\bibinfo {author} {\bibfnamefont {C.-Y.}\ \bibnamefont {Park}}, \bibinfo {author} {\bibfnamefont {M.}~\bibnamefont {Kang}},\ and\ \bibinfo {author} {\bibfnamefont {J.}~\bibnamefont {Huh}},\ }\bibfield  {title} {\bibinfo {title} {Hardware-efficient ansatz without barren plateaus in any depth},\ }\href {https://arxiv.org/abs/2403.04844} {\bibfield  {journal} {\bibinfo  {journal} {arXiv preprint arXiv:2403.04844}\ } (\bibinfo {year} {2024})}\BibitemShut {NoStop}%
\bibitem [{\citenamefont {Shi}\ and\ \citenamefont {Shang}(2024)}]{shi2024avoiding}%
  \BibitemOpen
  \bibfield  {author} {\bibinfo {author} {\bibfnamefont {X.}~\bibnamefont {Shi}}\ and\ \bibinfo {author} {\bibfnamefont {Y.}~\bibnamefont {Shang}},\ }\bibfield  {title} {\bibinfo {title} {Avoiding barren plateaus via {G}aussian mixture model},\ }\href {https://arxiv.org/abs/2402.13501} {\bibfield  {journal} {\bibinfo  {journal} {arXiv preprint arXiv:2402.13501}\ } (\bibinfo {year} {2024})}\BibitemShut {NoStop}%
\bibitem [{\citenamefont {Zou}\ \emph {et~al.}(2025{\natexlab{b}})\citenamefont {Zou}, \citenamefont {Rahm}, \citenamefont {Kockum},\ and\ \citenamefont {Olsson}}]{zou2025generative}%
  \BibitemOpen
  \bibfield  {author} {\bibinfo {author} {\bibfnamefont {H.}~\bibnamefont {Zou}}, \bibinfo {author} {\bibfnamefont {M.}~\bibnamefont {Rahm}}, \bibinfo {author} {\bibfnamefont {A.~F.}\ \bibnamefont {Kockum}},\ and\ \bibinfo {author} {\bibfnamefont {S.}~\bibnamefont {Olsson}},\ }\bibfield  {title} {\bibinfo {title} {Generative flow-based warm start of the variational quantum eigensolver},\ }\href {https://doi.org/10.1038/s41534-025-01159-x} {\bibfield  {journal} {\bibinfo  {journal} {npj Quantum Information}\ }\textbf {\bibinfo {volume} {12}},\ \bibinfo {pages} {5} (\bibinfo {year} {2025}{\natexlab{b}})}\BibitemShut {NoStop}%
\bibitem [{\citenamefont {Ravi}\ \emph {et~al.}(2022)\citenamefont {Ravi}, \citenamefont {Gokhale}, \citenamefont {Ding}, \citenamefont {Kirby}, \citenamefont {Smith}, \citenamefont {Baker}, \citenamefont {Love}, \citenamefont {Hoffmann}, \citenamefont {Brown},\ and\ \citenamefont {Chong}}]{ravi2022cafqa}%
  \BibitemOpen
  \bibfield  {author} {\bibinfo {author} {\bibfnamefont {G.}~\bibnamefont {Ravi}}, \bibinfo {author} {\bibfnamefont {P.}~\bibnamefont {Gokhale}}, \bibinfo {author} {\bibfnamefont {Y.}~\bibnamefont {Ding}}, \bibinfo {author} {\bibfnamefont {W.}~\bibnamefont {Kirby}}, \bibinfo {author} {\bibfnamefont {K.}~\bibnamefont {Smith}}, \bibinfo {author} {\bibfnamefont {J.}~\bibnamefont {Baker}}, \bibinfo {author} {\bibfnamefont {P.}~\bibnamefont {Love}}, \bibinfo {author} {\bibfnamefont {H.}~\bibnamefont {Hoffmann}}, \bibinfo {author} {\bibfnamefont {K.}~\bibnamefont {Brown}},\ and\ \bibinfo {author} {\bibfnamefont {F.}~\bibnamefont {Chong}},\ }\bibfield  {title} {\bibinfo {title} {Cafqa: A classical simulation bootstrap for variational quantum algorithms},\ }\href {https://arxiv.org/abs/2202.12924} {\bibfield  {journal} {\bibinfo  {journal} {arXiv preprint arXiv:2202.12924}\ } (\bibinfo {year} {2022})}\BibitemShut {NoStop}%
\bibitem [{\citenamefont {Mitarai}\ \emph {et~al.}(2022)\citenamefont {Mitarai}, \citenamefont {Suzuki}, \citenamefont {Mizukami}, \citenamefont {Nakagawa},\ and\ \citenamefont {Fujii}}]{mitarai2022quadratic}%
  \BibitemOpen
  \bibfield  {author} {\bibinfo {author} {\bibfnamefont {K.}~\bibnamefont {Mitarai}}, \bibinfo {author} {\bibfnamefont {Y.}~\bibnamefont {Suzuki}}, \bibinfo {author} {\bibfnamefont {W.}~\bibnamefont {Mizukami}}, \bibinfo {author} {\bibfnamefont {Y.~O.}\ \bibnamefont {Nakagawa}},\ and\ \bibinfo {author} {\bibfnamefont {K.}~\bibnamefont {Fujii}},\ }\bibfield  {title} {\bibinfo {title} {Quadratic clifford expansion for efficient benchmarking and initialization of variational quantum algorithms},\ }\href {https://doi.org/10.1103/PhysRevResearch.4.033012} {\bibfield  {journal} {\bibinfo  {journal} {Physical Review Research}\ }\textbf {\bibinfo {volume} {4}},\ \bibinfo {pages} {033012} (\bibinfo {year} {2022})}\BibitemShut {NoStop}%
\bibitem [{\citenamefont {Wilson}\ \emph {et~al.}(2019)\citenamefont {Wilson}, \citenamefont {Stromswold}, \citenamefont {Wudarski}, \citenamefont {Hadfield}, \citenamefont {Tubman},\ and\ \citenamefont {Rieffel}}]{wilson2019optimizing}%
  \BibitemOpen
  \bibfield  {author} {\bibinfo {author} {\bibfnamefont {M.}~\bibnamefont {Wilson}}, \bibinfo {author} {\bibfnamefont {S.}~\bibnamefont {Stromswold}}, \bibinfo {author} {\bibfnamefont {F.}~\bibnamefont {Wudarski}}, \bibinfo {author} {\bibfnamefont {S.}~\bibnamefont {Hadfield}}, \bibinfo {author} {\bibfnamefont {N.~M.}\ \bibnamefont {Tubman}},\ and\ \bibinfo {author} {\bibfnamefont {E.}~\bibnamefont {Rieffel}},\ }\bibfield  {title} {\bibinfo {title} {Optimizing quantum heuristics with meta-learning},\ }\href {https://arxiv.org/abs/1908.03185} {\bibfield  {journal} {\bibinfo  {journal} {arXiv preprint arXiv:1908.03185}\ } (\bibinfo {year} {2019})}\BibitemShut {NoStop}%
\bibitem [{\citenamefont {Zhou}\ \emph {et~al.}(2020)\citenamefont {Zhou}, \citenamefont {Wang}, \citenamefont {Choi}, \citenamefont {Pichler},\ and\ \citenamefont {Lukin}}]{zhou2020quantum}%
  \BibitemOpen
  \bibfield  {author} {\bibinfo {author} {\bibfnamefont {L.}~\bibnamefont {Zhou}}, \bibinfo {author} {\bibfnamefont {S.-T.}\ \bibnamefont {Wang}}, \bibinfo {author} {\bibfnamefont {S.}~\bibnamefont {Choi}}, \bibinfo {author} {\bibfnamefont {H.}~\bibnamefont {Pichler}},\ and\ \bibinfo {author} {\bibfnamefont {M.~D.}\ \bibnamefont {Lukin}},\ }\bibfield  {title} {\bibinfo {title} {Quantum approximate optimization algorithm: Performance, mechanism, and implementation on near-term devices},\ }\href {https://doi.org/10.1103/PhysRevX.10.021067} {\bibfield  {journal} {\bibinfo  {journal} {Physical Review X}\ }\textbf {\bibinfo {volume} {10}},\ \bibinfo {pages} {021067} (\bibinfo {year} {2020})}\BibitemShut {NoStop}%
\bibitem [{\citenamefont {Egger}\ \emph {et~al.}(2021)\citenamefont {Egger}, \citenamefont {Mare{\v{c}}ek},\ and\ \citenamefont {Woerner}}]{egger2021warm}%
  \BibitemOpen
  \bibfield  {author} {\bibinfo {author} {\bibfnamefont {D.~J.}\ \bibnamefont {Egger}}, \bibinfo {author} {\bibfnamefont {J.}~\bibnamefont {Mare{\v{c}}ek}},\ and\ \bibinfo {author} {\bibfnamefont {S.}~\bibnamefont {Woerner}},\ }\bibfield  {title} {\bibinfo {title} {Warm-starting quantum optimization},\ }\href {https://doi.org/10.22331/q-2021-06-17-479} {\bibfield  {journal} {\bibinfo  {journal} {{Quantum}}\ }\textbf {\bibinfo {volume} {5}},\ \bibinfo {pages} {479} (\bibinfo {year} {2021})}\BibitemShut {NoStop}%
\bibitem [{\citenamefont {Verdon}\ \emph {et~al.}(2019)\citenamefont {Verdon}, \citenamefont {Broughton}, \citenamefont {McClean}, \citenamefont {Sung}, \citenamefont {Babbush}, \citenamefont {Jiang}, \citenamefont {Neven},\ and\ \citenamefont {Mohseni}}]{verdon2019learning}%
  \BibitemOpen
  \bibfield  {author} {\bibinfo {author} {\bibfnamefont {G.}~\bibnamefont {Verdon}}, \bibinfo {author} {\bibfnamefont {M.}~\bibnamefont {Broughton}}, \bibinfo {author} {\bibfnamefont {J.~R.}\ \bibnamefont {McClean}}, \bibinfo {author} {\bibfnamefont {K.~J.}\ \bibnamefont {Sung}}, \bibinfo {author} {\bibfnamefont {R.}~\bibnamefont {Babbush}}, \bibinfo {author} {\bibfnamefont {Z.}~\bibnamefont {Jiang}}, \bibinfo {author} {\bibfnamefont {H.}~\bibnamefont {Neven}},\ and\ \bibinfo {author} {\bibfnamefont {M.}~\bibnamefont {Mohseni}},\ }\bibfield  {title} {\bibinfo {title} {Learning to learn with quantum neural networks via classical neural networks},\ }\href {https://arxiv.org/abs/1907.05415} {\bibfield  {journal} {\bibinfo  {journal} {arXiv preprint arXiv:1907.05415}\ } (\bibinfo {year} {2019})}\BibitemShut {NoStop}%
\bibitem [{\citenamefont {Tate}\ \emph {et~al.}(2023)\citenamefont {Tate}, \citenamefont {Moondra}, \citenamefont {Gard}, \citenamefont {Mohler},\ and\ \citenamefont {Gupta}}]{tate2021classically}%
  \BibitemOpen
  \bibfield  {author} {\bibinfo {author} {\bibfnamefont {R.}~\bibnamefont {Tate}}, \bibinfo {author} {\bibfnamefont {J.}~\bibnamefont {Moondra}}, \bibinfo {author} {\bibfnamefont {B.}~\bibnamefont {Gard}}, \bibinfo {author} {\bibfnamefont {G.}~\bibnamefont {Mohler}},\ and\ \bibinfo {author} {\bibfnamefont {S.}~\bibnamefont {Gupta}},\ }\bibfield  {title} {\bibinfo {title} {Warm-started qaoa with custom mixers provably converges and computationally beats goemans-williamson's max-cut at low circuit depths},\ }\href {https://doi.org/10.22331/q-2023-09-26-1121} {\bibfield  {journal} {\bibinfo  {journal} {Quantum}\ }\textbf {\bibinfo {volume} {7}},\ \bibinfo {pages} {1121} (\bibinfo {year} {2023})}\BibitemShut {NoStop}%
\bibitem [{\citenamefont {Wurtz}\ and\ \citenamefont {Lykov}(2021)}]{wurtz2021fixed}%
  \BibitemOpen
  \bibfield  {author} {\bibinfo {author} {\bibfnamefont {J.}~\bibnamefont {Wurtz}}\ and\ \bibinfo {author} {\bibfnamefont {D.}~\bibnamefont {Lykov}},\ }\bibfield  {title} {\bibinfo {title} {Fixed-angle conjectures for the quantum approximate optimization algorithm on regular maxcut graphs},\ }\href {https://doi.org/10.1103/PhysRevA.104.052419} {\bibfield  {journal} {\bibinfo  {journal} {Physical Review A}\ }\textbf {\bibinfo {volume} {104}},\ \bibinfo {pages} {052419} (\bibinfo {year} {2021})}\BibitemShut {NoStop}%
\bibitem [{\citenamefont {Akshay}\ \emph {et~al.}(2021)\citenamefont {Akshay}, \citenamefont {Rabinovich}, \citenamefont {Campos},\ and\ \citenamefont {Biamonte}}]{akshay2021parameter}%
  \BibitemOpen
  \bibfield  {author} {\bibinfo {author} {\bibfnamefont {V.}~\bibnamefont {Akshay}}, \bibinfo {author} {\bibfnamefont {D.}~\bibnamefont {Rabinovich}}, \bibinfo {author} {\bibfnamefont {E.}~\bibnamefont {Campos}},\ and\ \bibinfo {author} {\bibfnamefont {J.}~\bibnamefont {Biamonte}},\ }\bibfield  {title} {\bibinfo {title} {Parameter concentrations in quantum approximate optimization},\ }\href {https://doi.org/10.1103/PhysRevA.104.L010401} {\bibfield  {journal} {\bibinfo  {journal} {Phys. Rev. A}\ }\textbf {\bibinfo {volume} {104}},\ \bibinfo {pages} {L010401} (\bibinfo {year} {2021})}\BibitemShut {NoStop}%
\bibitem [{\citenamefont {Niu}\ \emph {et~al.}(2023)\citenamefont {Niu}, \citenamefont {Zhang},\ and\ \citenamefont {Bao}}]{niu2023warm}%
  \BibitemOpen
  \bibfield  {author} {\bibinfo {author} {\bibfnamefont {Y.-F.}\ \bibnamefont {Niu}}, \bibinfo {author} {\bibfnamefont {S.}~\bibnamefont {Zhang}},\ and\ \bibinfo {author} {\bibfnamefont {W.-S.}\ \bibnamefont {Bao}},\ }\bibfield  {title} {\bibinfo {title} {Warm starting variational quantum algorithms with near clifford circuits},\ }\href {https://doi.org/10.3390/electronics12020347} {\bibfield  {journal} {\bibinfo  {journal} {Electronics}\ }\textbf {\bibinfo {volume} {12}},\ \bibinfo {pages} {347} (\bibinfo {year} {2023})}\BibitemShut {NoStop}%
\bibitem [{\citenamefont {Kirsch}(2019)}]{kirsch2019curie}%
  \BibitemOpen
  \bibfield  {author} {\bibinfo {author} {\bibfnamefont {W.}~\bibnamefont {Kirsch}},\ }\bibfield  {title} {\bibinfo {title} {The curie-weiss model--an approach using moments},\ }\bibfield  {journal} {\bibinfo  {journal} {arXiv preprint arXiv:1909.05612}\ }\href {https://doi.org/https://doi.org/10.48550/arXiv.1909.05612} {https://doi.org/10.48550/arXiv.1909.05612} (\bibinfo {year} {2019})\BibitemShut {NoStop}%
\bibitem [{\citenamefont {Arous}\ and\ \citenamefont {Zeitouni}(1999)}]{arous1999increasing}%
  \BibitemOpen
  \bibfield  {author} {\bibinfo {author} {\bibfnamefont {G.~B.}\ \bibnamefont {Arous}}\ and\ \bibinfo {author} {\bibfnamefont {O.}~\bibnamefont {Zeitouni}},\ }\bibfield  {title} {\bibinfo {title} {Increasing propagation of chaos for mean field models},\ }in\ \href {https://www.numdam.org/item/?id=AIHPB_1999__35_1_85_0} {\emph {\bibinfo {booktitle} {Annales de l'Institut Henri Poincare (B) Probability and Statistics}}},\ Vol.~\bibinfo {volume} {35}\ (\bibinfo {organization} {Elsevier},\ \bibinfo {year} {1999})\ pp.\ \bibinfo {pages} {85--102}\BibitemShut {NoStop}%
\bibitem [{\citenamefont {{\v{S}}amaj}(1988)}]{vsamaj1988improved}%
  \BibitemOpen
  \bibfield  {author} {\bibinfo {author} {\bibfnamefont {L.}~\bibnamefont {{\v{S}}amaj}},\ }\bibfield  {title} {\bibinfo {title} {An improved theory of higher-order correlations for an ising model with general spin},\ }\href {https://doi.org/https://doi.org/10.1016/0378-4371(88)90239-7} {\bibfield  {journal} {\bibinfo  {journal} {Physica A: Statistical Mechanics and its Applications}\ }\textbf {\bibinfo {volume} {153}},\ \bibinfo {pages} {530} (\bibinfo {year} {1988})}\BibitemShut {NoStop}%
\bibitem [{\citenamefont {Nietner}\ \emph {et~al.}(2025)\citenamefont {Nietner}, \citenamefont {Ioannou}, \citenamefont {Sweke}, \citenamefont {Kueng}, \citenamefont {Eisert}, \citenamefont {Hinsche},\ and\ \citenamefont {Haferkamp}}]{nietner2025average}%
  \BibitemOpen
  \bibfield  {author} {\bibinfo {author} {\bibfnamefont {A.}~\bibnamefont {Nietner}}, \bibinfo {author} {\bibfnamefont {M.}~\bibnamefont {Ioannou}}, \bibinfo {author} {\bibfnamefont {R.}~\bibnamefont {Sweke}}, \bibinfo {author} {\bibfnamefont {R.}~\bibnamefont {Kueng}}, \bibinfo {author} {\bibfnamefont {J.}~\bibnamefont {Eisert}}, \bibinfo {author} {\bibfnamefont {M.}~\bibnamefont {Hinsche}},\ and\ \bibinfo {author} {\bibfnamefont {J.}~\bibnamefont {Haferkamp}},\ }\bibfield  {title} {\bibinfo {title} {On the average-case complexity of learning output distributions of quantum circuits},\ }\href {https://doi.org/10.22331/q-2025-10-13-1883} {\bibfield  {journal} {\bibinfo  {journal} {Quantum}\ }\textbf {\bibinfo {volume} {9}},\ \bibinfo {pages} {1883} (\bibinfo {year} {2025})}\BibitemShut {NoStop}%
\bibitem [{\citenamefont {Bermejo}\ \emph {et~al.}(2024)\citenamefont {Bermejo}, \citenamefont {Braccia}, \citenamefont {Rudolph}, \citenamefont {Holmes}, \citenamefont {Cincio},\ and\ \citenamefont {Cerezo}}]{bermejo2024quantum}%
  \BibitemOpen
  \bibfield  {author} {\bibinfo {author} {\bibfnamefont {P.}~\bibnamefont {Bermejo}}, \bibinfo {author} {\bibfnamefont {P.}~\bibnamefont {Braccia}}, \bibinfo {author} {\bibfnamefont {M.~S.}\ \bibnamefont {Rudolph}}, \bibinfo {author} {\bibfnamefont {Z.}~\bibnamefont {Holmes}}, \bibinfo {author} {\bibfnamefont {L.}~\bibnamefont {Cincio}},\ and\ \bibinfo {author} {\bibfnamefont {M.}~\bibnamefont {Cerezo}},\ }\bibfield  {title} {\bibinfo {title} {Quantum convolutional neural networks are (effectively) classically simulable},\ }\bibfield  {journal} {\bibinfo  {journal} {arXiv preprint arXiv:2408.12739}\ }\href {https://doi.org/10.48550/arXiv.2408.12739} {10.48550/arXiv.2408.12739} (\bibinfo {year} {2024})\BibitemShut {NoStop}%
\bibitem [{\citenamefont {Angrisani}\ \emph {et~al.}(2025)\citenamefont {Angrisani}, \citenamefont {Schmidhuber}, \citenamefont {Rudolph}, \citenamefont {Cerezo}, \citenamefont {Holmes},\ and\ \citenamefont {Huang}}]{angrisani2024classically}%
  \BibitemOpen
  \bibfield  {author} {\bibinfo {author} {\bibfnamefont {A.}~\bibnamefont {Angrisani}}, \bibinfo {author} {\bibfnamefont {A.}~\bibnamefont {Schmidhuber}}, \bibinfo {author} {\bibfnamefont {M.~S.}\ \bibnamefont {Rudolph}}, \bibinfo {author} {\bibfnamefont {M.}~\bibnamefont {Cerezo}}, \bibinfo {author} {\bibfnamefont {Z.}~\bibnamefont {Holmes}},\ and\ \bibinfo {author} {\bibfnamefont {H.-Y.}\ \bibnamefont {Huang}},\ }\bibfield  {title} {\bibinfo {title} {Classically estimating observables of noiseless quantum circuits},\ }\href {https://doi.org/10.1103/lh6x-7rc3} {\bibfield  {journal} {\bibinfo  {journal} {Phys. Rev. Lett.}\ }\textbf {\bibinfo {volume} {135}},\ \bibinfo {pages} {170602} (\bibinfo {year} {2025})}\BibitemShut {NoStop}%
\bibitem [{\citenamefont {Lerch}\ \emph {et~al.}(2024)\citenamefont {Lerch}, \citenamefont {Puig}, \citenamefont {Rudolph}, \citenamefont {Angrisani}, \citenamefont {Jones}, \citenamefont {Cerezo}, \citenamefont {Thanasilp},\ and\ \citenamefont {Holmes}}]{lerch2024efficient}%
  \BibitemOpen
  \bibfield  {author} {\bibinfo {author} {\bibfnamefont {S.}~\bibnamefont {Lerch}}, \bibinfo {author} {\bibfnamefont {R.}~\bibnamefont {Puig}}, \bibinfo {author} {\bibfnamefont {M.}~\bibnamefont {Rudolph}}, \bibinfo {author} {\bibfnamefont {A.}~\bibnamefont {Angrisani}}, \bibinfo {author} {\bibfnamefont {T.}~\bibnamefont {Jones}}, \bibinfo {author} {\bibfnamefont {M.}~\bibnamefont {Cerezo}}, \bibinfo {author} {\bibfnamefont {S.}~\bibnamefont {Thanasilp}},\ and\ \bibinfo {author} {\bibfnamefont {Z.}~\bibnamefont {Holmes}},\ }\bibfield  {title} {\bibinfo {title} {Efficient quantum-enhanced classical simulation for patches of quantum landscapes},\ }\bibfield  {journal} {\bibinfo  {journal} {arXiv preprint arXiv:2411.19896}\ }\href {https://doi.org/10.48550/arXiv.2411.19896} {10.48550/arXiv.2411.19896} (\bibinfo {year} {2024})\BibitemShut {NoStop}%
\end{thebibliography}%
\onecolumngrid
\newpage

\appendix
{\huge \noindent \textbf{Appendix}}

\bigskip

\part{}
\parttoc 

\bigskip

\section{Notation table}
\label{app:notation-table}
\noindent
\begin{tabular}{ |p{2.4cm}|p{13.1cm}| }
\hline
\multicolumn{2}{|c|}{\textbf{Core objects (used throughout)}} \\
\hline
Symbol & Definition \\
\hline

$[n]$ & Index set $\{1,2,\dots,n\}$. \\

$A\subseteq [n]$ & Subset of qubit indices (support of an observable/correlator). \\

$A^c$ & Complement of $A$ in $[n]$, i.e. $[n]\setminus A$. \\

$\boldsymbol{z}\in\{0,1\}^n$ & Bit string (typically summed over {\rm mod} $2$). \\
$Z_i$ & Single Pauli-$Z$ on qubit $i\in[n]$ \\
$Z_A$ & Pauli-$Z$ string supported on $A$: $Z_A=\bigotimes_{i\in A} Z_i$. \\

$X_{\boldsymbol{b}}$ & Pauli-$X$ string labelled by $\boldsymbol{b}\in\{0,1\}^n$:
$X_{\boldsymbol{b}}=\bigotimes_{i=1}^n X_i^{b_i}$. \\

$U(\boldsymbol{\theta})$ & Parametrized IQP circuit with parameters $\boldsymbol{\theta}$. \\

$\theta_{\boldsymbol{b}}$ & Parameter associated with generator $X_{\boldsymbol{b}}$. \\
$m$ & Number of parameters (length of the parameter vector $\thv$).\\
$\mathcal{A}_A$ & Set of generator labels that anti-commute with $Z_A$ (architecture-dependent). \\

$z_A(\boldsymbol{\theta})$ & Model correlator for subset $A$:
$z_A(\boldsymbol{\theta})=\langle \mathbf{0}|U(\boldsymbol{\theta}) Z_A U^\dagger(\boldsymbol{\theta})|\mathbf{0}\rangle$. For a single qubit $j\in[n]$, we write $z_j(\thv):=z_{\{j\}}(\thv)$.\\

$t_A$ & Target correlator for subset $A$: $t_A=\Ebb_{\xv\sim p_{\rm data}}\left[(-1)^{\sum_{i\in A}x_i}\right]$. For a single qubit $j\in[n]$, we write $t_j:=t_{\{j\}}$. \\

$g_A^{(\alpha)}(\boldsymbol{\theta})$ & Partial derivative of the correlator w.r.t.\ parameter $\theta_\alpha$:
$g_A^{(\alpha)}(\boldsymbol{\theta})=\partial z_A(\boldsymbol{\theta})/\partial\theta_\alpha$.
(When unambiguous, written $g_A(\boldsymbol{\theta})$.) \\

$\mathcal{L}(\boldsymbol{\theta})$ & MMD loss:
$\mathcal{L}(\boldsymbol{\theta})=\sum_{A\subseteq[n]} w_A\,(z_A(\boldsymbol{\theta})-t_A)^2$. \\

$w_A$ & MMD weight for subset $A$, usually depends only on $|A|$.
(e.g. $w_A=p_\sigma^{|A|}(1-p_\sigma)^{n-|A|}$.) \\

$p_\sigma$ & Locality/kernel parameter used in the weights $w_A$: $p_{\sigma}=\frac{1-e^{-\frac{1}{2\sigma^2}}}{2}$. \\

$r$ & Patch size:
$\theta_{\boldsymbol{b}}\sim\theta_{\bv}^*+\mathrm{Unif}(-r,r)$. \\
$\thv^*$ & initialization parameters.
\\

\hline
\end{tabular}

\bigskip

\noindent
\begin{tabular}{ |p{2.4cm}|p{13.1cm}| }
\hline
\multicolumn{2}{|c|}{\textbf{Graph/topology notation (when using 1- and 2-qubit gates)}} \\
\hline
Symbol & Definition \\
\hline

$E$ & Edge set specifying allowed two-qubit interactions (connectivity graph). \\

$N_E(A)$ & External neighborhood of $A$:
$N_E(A)=\{k\in A^c:\exists j\in A \text{ s.t. } (j,k)\in E\}$. \\

$d_A$ & Effective Pauli light cone: $d_A = |A| + |N_E(A)|$. \\

\hline
\end{tabular}

\bigskip

\noindent
\begin{tabular}{ |p{2.4cm}|p{13.1cm}| }
\hline
\multicolumn{2}{|c|}{\textbf{Dictionary (subset vs bit string support)}} \\
\hline
Symbol & Definition \\
\hline

$\boldsymbol{\alpha}\leftrightarrow A$ &
We identify a subset $A\subseteq[n]$ with its indicator bit string
$\boldsymbol{\alpha}\in\{0,1\}^n$ defined by $\alpha_i=1 \Leftrightarrow i\in A$.
Conversely, $A=\{i:\alpha_i=1\}$. \\

\hline
\end{tabular}

\bigskip

\section{Preliminary notation and useful identities}
\label{app:preliminary-notation-identities}

This section serves as a preliminary discussion aimed at introducing notation and proving a lemma used in the later appendices. The arguments are straightforward and included mainly for completeness, so readers who are already familiar with the setup may prefer to skip this section and refer directly to the notation table in Appendix~\ref{app:notation-table}.

\subsection{General IQP notation}

Let us first consider the IQP circuits that include higher-order interaction terms, where each indicator bit string $\bv\in\{0,1\}^n$ is associated with the Pauli-$X$ rotation
\begin{equation}
    U_{\bv}(\theta_{\bv})=e^{i\th_{\bv} X_{\bv}}\;,
\end{equation}
where
\begin{equation}
    X_{\bv}=\bigotimes_{i=1}^nX_i^{b_i}\;\;.
\end{equation}
Notice that there are $2^n-1$ non-trivial rotations. Later we will restrict our attention to single-qubit and two-qubit rotations, but for now let us keep things generally. In this general set-up, the IQP circuit can be expressed as
\begin{equation}
\label{eq:IQP-circuit-general}
    U(\thv)=\prod_{\bv \in\{0,1\}^n} e^{i\th_{\bv} X_{\bv}}\;.
\end{equation}
The state evolved with the IQP circuit from an initial state $\ket{\vec{0}}$  is
\begin{equation}
    \ket{\psi(\thv)}=U^{\dagger}(\thv)\ket{\vec{0}}\;.
\end{equation}
We measure Pauli-$Z$ observables indexed by subsets of qubits $A\subseteq [n] = \{1,2,\dots,n\}$ on which they act non-trivially:
\begin{equation}
    Z_A=\bigotimes_{i\in A}Z_i\;,
\end{equation}
where $Z_i$ is the Pauli-$Z$ acting on qubit $i$. 
The model Pauli-$Z_A$ correlators are given by
\begin{equation}
    z_A(\thv)=\langle\psi(\thv)|Z_A|\psi(\thv)\rangle\; \;.
\end{equation}
We will simplify the notation for single qubit indices, i.e. when $A=\{j\}$, assuming $z_{\{j\}}(\thv)=z_j(\thv)$.
\subsection{Fourier expansion of model correlators}

We introduce the following lemma which shows the exact expression of the Pauli-$Z_A$ correlator.
\begin{lemma}
\label{lemma:correlators-fourier-expansion}
The correlator $z_A(\thv)$ with a general IQP circuit as in Eq.~\eqref{eq:IQP-circuit-general} can be written as
\begin{equation}
\label{eq:correlator-expansion-fourier-lemma}
z_A(\thv)
=\langle \vec{0}|U(\thv)Z_AU^\dagger(\thv)|\vec{0}\rangle=
\frac{1}{2^n}
\sum_{\zv\in\{0,1\}^n}
\exp\!\left(
2i\sum_{\bv\in\mathcal{A}_A}
\theta_{\bv} (-1)^{\bv\cdot\zv}
\right)\;\;,
\end{equation}
where $\mathcal{A}_A$ denotes the set of indicator strings corresponding to rotations that anti-commute with $Z_A$, i.e. $\{X_{\bv},Z_A\}=0$.
\end{lemma}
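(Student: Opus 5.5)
The plan is to diagonalize the IQP circuit in the Hadamard-transformed basis and then compute the expectation value directly. Two observations are needed. First, every $X_{\bv}$ is diagonal in the $X$-eigenbasis $\{H^{\otimes n}\ket{\zv}\}$, with $X_{\bv}H^{\otimes n}\ket{\zv}=(-1)^{\bv\cdot\zv}H^{\otimes n}\ket{\zv}$. Second, $\ket{\vec{0}}=H^{\otimes n}\ket{+}^{\otimes n}$, so in this basis the initial state is the uniform superposition $2^{-n/2}\sum_{\zv}\ket{\zv}$.

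The first step is to pull the observable through the circuit. Since all generators commute, $U(\thv)=\exp\!\big(i\sum_{\bv}\theta_{\bv}X_{\bv}\big)$, and we can split the generators into those commuting with $Z_A$ and those in $\mathcal{A}_A$. For $\bv\notin\mathcal{A}_A$ we have $U_{\bv}Z_AU_{\bv}^\dagger=Z_A$. For $\bv\in\mathcal{A}_A$ we have $U_{\bv}Z_A=Z_A U_{\bv}^\dagger$. Together these give
\begin{equation}
U(\thv)Z_AU^\dagger(\thv)=\exp\!\Big(2i\sum_{\bv\in\mathcal{A}_A}\theta_{\bv}X_{\bv}\Big)Z_A .
\end{equation}
Keeping the operator ordering consistent here, so that the factor comes out as $e^{+2i\cdots}$ placed to the left of $Z_A$, is the only delicate bookkeeping. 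Note that the opposite sign would give the complex conjugate, which is the same real number.

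The second step evaluates the expectation in $\ket{\vec 0}$. Since $Z_A\ket{\vec0}=\ket{\vec0}$, we get $z_A(\thv)=\bra{\vec0}\exp\!\big(2i\sum_{\bv\in\mathcal{A}_A}\theta_{\bv}X_{\bv}\big)\ket{\vec0}$. We then insert the resolution of identity in the $X$-eigenbasis. The operator acts on $H^{\otimes n}\ket{\zv}$ by the phase $\exp\!\big(2i\sum_{\bv\in\mathcal{A}_A}\theta_{\bv}(-1)^{\bv\cdot\zv}\big)$, and $|\langle\vec0|H^{\otimes n}|\zv\rangle|^2=2^{-n}$ for every $\zv$. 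Summing over $\zv$ yields exactly Eq.~\eqref{eq:correlator-expansion-fourier-lemma}.

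The main obstacle is just the sign and ordering convention in the first step, given that the paper writes $\ket{\psi(\thv)}=U^\dagger(\thv)\ket{\vec0}$. Two facts settle it. The sum is real, because replacing $\theta\to-\theta$ conjugates it, and this is consistent with $z_A$ being a Hermitian expectation value. Also, the map $\zv\mapsto\zv\oplus\mathbf{1}$ does not flip the signs in general, so I will simply carry the ordering carefully and note that either sign convention gives the same real value.
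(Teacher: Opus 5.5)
Your proposal is correct and follows essentially the same route as the paper: you conjugate $Z_A$ through the anticommuting rotations to obtain $\exp\bigl(2i\sum_{\bv\in\mathcal{A}_A}\theta_{\bv}X_{\bv}\bigr)Z_A$, drop $Z_A$ using $Z_A\ket{\vec{0}}=\ket{\vec{0}}$, and evaluate the remaining operator in the Hadamard-rotated basis, where it is diagonal. This is exactly what the paper does via $X=HZH$ and $H^{\otimes n}\ket{\vec{0}}=\ket{+}^{\otimes n}$. Your closing worry about conventions is unnecessary, since the lemma is stated with $U Z_A U^\dagger$. If you want reality of the sum directly, note that the observation that $\thv\to-\thv$ conjugates it does not by itself give this. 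Instead use the relabeling $\zv\mapsto\zv\oplus\vec{a}$, with $\vec{a}$ the indicator string of $A$: every $\bv\in\mathcal{A}_A$ has $\bv\cdot\vec{a}$ odd, so all signs $(-1)^{\bv\cdot\zv}$ flip and the sum equals its own complex conjugate.
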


\begin{proof}
First, note that for a bit string $\bv\in\mathcal{A}_A$, the anti-commutation property implies
\begin{align}
    U_{\bv}(\th_{\bv})Z_AU^{\dagger}_{\bv}(\th_{\bv})
    &=e^{i\th_{\bv} X_{\bv}}Z_Ae^{-i\th_{\bv} X_{\bv}}\\
    &=e^{2i\th_{\bv} X_{\bv}}Z_A\;,
\end{align}
whereas if $\bv\notin\mathcal{A}_A$ (i.e., they commute) the corresponding rotation leaves $Z_A$ invariant. Therefore, we have
\begin{align}
    \langle \vec{0}|U(\thv)Z_AU^\dagger(\thv)|\vec{0}\rangle
    &= \langle\vec{0}|\left(\prod_{\bv\in\mathcal{A}_A}e^{2i\theta_{\bv}X_{\bv}}\right)Z_A|\vec{0}\rangle\\
    &=\langle\vec{0}|\left(\prod_{\bv\in\mathcal{A}_A}e^{2i\theta_{\bv}X_{\bv}}\right)|\vec{0}\rangle\;,\label{eq:rewrite_iqp_expval}
\end{align}
where we used $Z_A\ket{\vec{0}}=\ket{\vec{0}}$.

Now, using the Hadamard gate identity $X=HZH$, we write
\begin{equation}\label{eq:XtoZhaddamard}
    \prod_{\bv\in\mathcal{A}_A}e^{2i\theta_{\bv}X_{\bv}}
    =
    H^{\otimes n}\left(\prod_{\bv\in\mathcal{A}_A}e^{2i\theta_{\bv}Z_{\bv}} \right)H^{\otimes n}\;.
\end{equation}
Recalling that the action of the Hadamard gate into the zero state is 
\begin{equation}
    H^{\otimes n}\ket{\vec{0}}=|+\rangle^{\otimes n}
    =\frac{1}{2^{n/2}}\sum_{\zv\in\{0,1\}^n} |\zv\rangle\;,
\end{equation}
we then obtain
\begin{align}
    \langle \vec{0}|U(\thv)Z_AU^\dagger(\thv)|\vec{0}\rangle
    &=\langle\vec{0}|\left(\prod_{\bv\in\mathcal{A}_A}e^{2i\theta_{\bv}X_{\bv}}\right)|\vec{0}\rangle \label{eq:correlation_as_as_product_ofexp1}\\
    &=\frac{1}{2^n}\sum_{\yv,\zv\in\{0,1\}^n}\langle \yv|\left(\prod_{\bv\in\mathcal{A}_A}e^{2i\theta_{\bv}Z_{\bv}}\right)|\zv\rangle\\
    &=\frac{1}{2^n}\sum_{\yv,\zv\in\{0,1\}^n}\langle \yv|\zv\rangle\left(\prod_{\bv\in\mathcal{A}_A}e^{2i\theta_{\bv}(-1)^{\zv\cdot\bv}}\right)\\
    &=\frac{1}{2^n}\sum_{\zv\in\{0,1\}^n}\prod_{\bv\in\mathcal{A}_A}e^{2i\theta_{\bv}(-1)^{\zv\cdot\bv}}\label{eq:correlation_as_as_product_ofexp2}\;,
\end{align}
where in the first equality we used the identity derived in Eq.~\eqref{eq:rewrite_iqp_expval}, in the second equality we rewrote the $X_{\vec{b}}$ rotation in terms of Hadamard and $Z_{\vec{b}}$ rotations as in Eq.~\eqref{eq:XtoZhaddamard}, we also used $Z_{\bv}|\zv\rangle=(-1)^{\zv\cdot\bv}|\zv\rangle$ in the third equality  and $\langle \yv|\zv\rangle=\delta_{\yv,\zv}$ in the last equality.

Finally, we can rewrite the result as
\begin{equation}
    \prod_{\bv\in\mathcal{A}_A}e^{2i\theta_{\bv}(-1)^{\zv\cdot\bv}}
    =
    \exp\!\left(
    2i\sum_{\bv\in\mathcal{A}_A}\theta_{\bv}(-1)^{\bv\cdot\zv}
    \right)\;,
\end{equation}
which completes the proof.
\end{proof}

\section{General first and second moments of model correlators}

In this section, we compute the first and second moments of correlators $z_A(\thv)$ according to IQP circuits with $m$ parameters drawn as $\thv\sim {\rm Unif}\left([-r,r]^m\right)$, centered at the initialization point $\thv^* = \vec{0}$. These expressions are used in the next section to show that the full-angle range $r=\pi/2$ exhibits a barren plateau.

\subsection{First order expectation}

Let us consider a general IQP circuit architecture and let $\mathcal{A}_A$ be the set of generator labels that anti-commutes with $Z_A$ i.e. $\mathcal{A}_A=\{\bv\in\{0,1\}^n\,:\,\{X_{\bv},Z_A\}=0\}$. From Lemma~\ref{lemma:correlators-fourier-expansion}, the model correlator can be written as

\begin{equation} \label{eq:correlator-expansion-fourier}
z_A(\thv)
=\langle \vec{0}|U(\thv)Z_AU^\dagger(\thv)|\vec{0}\rangle=
\frac{1}{2^n}
\sum_{\zv\in\{0,1\}^n}
\exp\!\left(
2i\sum_{\bv\in\mathcal{A}_A}
\theta_{\bv} (-1)^{\bv\cdot\zv}
\right).
\end{equation}

 Let us consider each parameter is initialised around $0$ such that $\theta_{\bv}\sim{\rm Unif}(-r,r)$. The average of the correlator becomes
\begin{align}
    \Ebb[z_A(\thv)]&=\frac{1}{2^n}\sum_{\zv\in\{0,1\}^n}\Ebb\left[\exp\left(2i\sum_{\bv\in\mathcal{A}_A}\theta_{\bv}(-1)^{\bv\cdot\zv}\right)\right]\\
    &=\frac{1}{2^n}\sum_{\zv\in\{0,1\}^n}\prod_{\bv\in\mathcal{A}_A}\sinc(2r)\\
    &=\prod_{\bv\in\mathcal{A}_A}\sinc(2r)\;, \label{eq:general-first-oder-moment-correlator}
\end{align}
where averaging over parameters uses
\begin{equation}
\mathbb{E}_\theta[e^{i c \theta}] = \mathrm{sinc}(rc),
\qquad
\mathrm{sinc}(x)=\frac{\sin x}{x},
\end{equation}
together with even parity of $\sinc$ function which mean the $(-1)^{\bv\cdot\zv}$ dependence vanishes. So, for $r=\pi/2$, we have $\sinc(\pi)=0$ so all $z_A(\thv)$ expectations vanishes (except the identity which is not considered).
\subsection{Second order expectation}
For the second moment, we restrict our attention to IQP circuits of all $1$-qubit gates and $2$-qubit gates specified by a graph $E\subseteq\{(i,j)|i,j\in[n]\;{\rm and}\;i\neq j\}$:

\begin{equation}
    \label{eq:IQP-circuit-2qubits-appendix} U(\thv)=\exp\left(i\sum_{j=1}^n\theta_jX_j+i\sum_{\substack{j<k \\(j,k)\in E}}\theta_{jk}X_jX_k\right)\;.
\end{equation}
In this case, the gate generators that anti-commute with $Z_A$ are all $1$-qubit gates in $A$ i.e. $j\in A$ and all $2$-qubit gates in $E$ such that one qubit is in $A$ and the other in $A^c$ i.e. all $(i,j)\in E$ such  that $i\in A$ and $j\in A^c$. Therefore, from Lemma~\ref{lemma:correlators-fourier-expansion}, we can rewrite
\begin{align}
    z_A(\thv)^2&=\frac{1}{2^{2n}}\sum_{\boldsymbol{z},\boldsymbol{z'}\in\{0,1\}^n }\exp\left(2i\sum_{j\in A}\theta_{j}((-1)^{z_j}+(-1)^{z_j'})+2i\sum_{\substack{ j\in A, k\in A^c\\ (j,k)\in E}}\theta_{jk}((-1)^{z_j+z_k} +(-1)^{z_j'+z_k'})\right)\;.
\end{align}
Therefore, averaging over $\boldsymbol{\theta}$ leads to
\begin{align}
    \mathbb{E}_{\boldsymbol{\theta}}[z_A(\thv)^2]
    &=\frac{1}{2^{2n}}\sum_{\boldsymbol{z},\boldsymbol{z'}\in\{0,1\}^n }\prod_{j\in A}\text{sinc}(2r((-1)^{z_j}+(-1)^{z_j'}))\prod_{\substack{j\in A, k\in A^c \\ (j,k)\in E}}\text{sinc}(2r((-1)^{z_j+z_k} +(-1)^{z_j'+z_k'})) \\
    &=\frac{1}{2^{2n}}\sum_{\boldsymbol{z},\boldsymbol{z'}\in\{0,1\}^n }\prod_{j\in A}\text{sinc}(2r(1+(-1)^{z_j+z_j'}))\prod_{\substack{j\in A, k\in A^c \\ (j,k)\in E}}\text{sinc}(2r(1 +(-1)^{z_j+z_k+z_j'+z_k'})) \\
    &=\frac{1}{2^{n}}\sum_{\boldsymbol{z}\in\{0,1\}^n }\prod_{j\in A}\text{sinc}(2r(1+(-1)^{z_j}))\prod_{\substack{j\in A, k\in A^c \\ (j,k)\in E}}\text{sinc}(2r(1 +(-1)^{z_j+z_k}))\;,
\end{align}
where the second equality is due to the parity of sinc ($
\sinc(x) = \sinc(-x)$).
For the last equality, notice that for any $z_j'\in\{0,1\}$ we have $\sum_{z_j\in\{0,1\}}f(z_j+z_j' \text{ mod }  2)=\sum_{z_j\in\{0,1\}}f(z_j \text{ mod }  2)$.  Therefore, for any $\zv'\in\{0,1\}^{n}$, we have $\sum_{\zv}f(\zv+\zv' \;{\rm mod} \;2)=\sum_{\zv} f(\zv)$ since the change of variable $\zv+\zv'\to \zv$ is is a bijection of $\{0,1\}^{n}$ with addition mod $2$.  

Now, given that $\sinc(2r(1+(-1)^b))=\delta_{b,1}+\sinc(4r)\delta_{b,0}$ for $b\in\{0,1\}$, we have that 
\begin{equation}
    \prod_{j\in A}\text{sinc}(2r(1+(-1)^{z_j}))=\prod_{j\in A}\sinc(4r)^{\delta_{z_j,0}}\;,
\end{equation}
similarly for the second product we have 
\begin{equation}
   \prod_{\substack{j\in A, k\in A^c \\ (j,k)\in E}}\text{sinc}(2r(1 +(-1)^{z_j+z_k}))=\prod_{\substack{j\in A, k\in A^c \\ (j,k)\in E}}\sinc(4r)^{\delta_{z_k,z_j}}\;.
\end{equation}
Therefore, we can rewrite 
\begin{equation}\label{eq:second-moment-correlator-small-angle-sum-general}
\mathbb{E}_{\boldsymbol{\theta}}[z_A(\thv)^2]=\frac{1}{2^n}\sum_{\zv\in\{0,1\}^n}\left(\prod_{j\in A}\sinc(4r)^{\delta_{z_j,0}}\right)\left( \prod_{\substack{j\in A, k\in A^c \\ (j,k)\in E}}\sinc(4r)^{\delta_{z_k,z_j}}\right) \;.
\end{equation}

\section{Correlators analysis with random initialization}
\label{app:full-angle-correlator-variance}

\subsection{Summary of the correlator variance results: Topology dependence and concentration}

We now specialize our analysis to the full-angle regime ($r=\pi/2$), which corresponds to a standard random initialization. To ground this in a concrete architecture, we focus on the most common IQP circuit structure involving $1$- and $2$-qubit gates, as specified in Eq.~\eqref{eq:IQP-circuit-2qubits-appendix}.

We first compute the variance of the correlators $z_A(\thv)$ over the full-angle random initialization. Crucially, the correlator depends solely on the circuit architecture $U(\thv)$ and is independent of the target distribution $p_{\rm data}(\vec{z})$. For IQP circuits, we find that the correlator variance is strongly influenced by the topology of the two-body interaction graph underlying the circuit.

To characterize this relationship more precisely, consider an interaction graph $E$ associated with the set of two-qubit rotations in the IQP circuit defined in Eq.~\eqref{eq:IQP-circuit}. For any two qubits $i\neq j$, we say that $(i,j)\in E$ if qubits $i$ and $j$ are connected in the interaction graph. For any subset $A\subseteq[n]$, we define the external neighborhood of $A$ as
\begin{equation}\label{eq:external-neighbourhood}
    N_E(A):=\{i\in A^c:\exists j\in A \text{ such that } (i,j)\in E\}\;\;,
\end{equation}
where $A^c$ denotes the complement of $A$. Concretely, $N_E(A)$ consists of the qubits in $A^c$ that are connected to, or equivalently interact with, at least one qubit in $A$. This is illustrated with a simple example in Fig.~\ref{fig:circuit}, where the external neighbourhood of $A$ is highlighted in red. 

\begin{figure}
    \centering
    \includegraphics[width=0.8\linewidth]{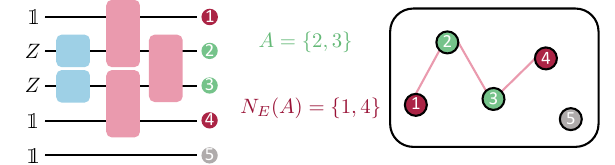}
    \caption{Illustration of the effective Pauli light cone. The subset $A$ is shown in green, and its external neighborhood $N_E(A)$ in red. The pink gates represent the interactions (equivalent to the edges of graph $E$), namely the $2$-qubits gates in the IQP circuit. Gray qubits lie outside $A \cup N_E(A)$. In this $5$ qubits example, $A=\{2,3\}$ and $N_E(A)=\{1,4\}$.}
    \label{fig:circuit}
\end{figure}
For a non-trivial Pauli string $Z_A$ where $A\neq\emptyset$, we define the \emph{effective Pauli light cone} as
\begin{equation}
    d_A := |A|+|N_E(A)| \;\;.
\end{equation}
Intuitively, $d_A$ represents the light-cone of $Z_A$ under Heisenberg evolution through the IQP circuit, counting the number of qubits that enter the causal support of the operator.

The following proposition gives the exact correlator moments in the full-angle random-initialization regime. 
\begin{proposition}[Exact correlator variance under full-angle random initialization]
\label{prop:correlator-variance-random-initialization-exact}
Let the parameters of the IQP circuit in Eq.~\eqref{eq:IQP-circuit} be independently drawn from ${\rm Unif}([-\pi/2,\pi/2])$.
For any nontrivial Pauli string $Z_A$ ($A\neq\emptyset$),
\begin{align}
    \Ebb_{\thv}[z_A(\thv)] & = 0 \;\;\;\;\;\;\;\;, \\
    \Var_{\thv}[z_A(\thv)] & = 2^{-d_A}\;\;\,,
\end{align}
where $d_A=|A|+|N_E(A)|$ is the effective Pauli light cone.
\end{proposition}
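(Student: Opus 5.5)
The plan is to specialize the general moment formulas of the previous section to $r=\pi/2$. For the mean, I will invoke Eq.~\eqref{eq:general-first-oder-moment-correlator}: $\Ebb[z_A(\thv)]=\prod_{\bv\in\mathcal{A}_A}\sinc(2r)$. Since $A\neq\emptyset$, the set $\mathcal{A}_A$ is nonempty, because it contains the single-qubit generator $X_j$ for every $j\in A$. At $r=\pi/2$ one factor is $\sinc(\pi)=0$, so the mean vanishes. Hence $\Var_{\thv}[z_A]=\Ebb_{\thv}[z_A^2]$, and it remains to evaluate the second moment.

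For the second moment I will start from Eq.~\eqref{eq:second-moment-correlator-small-angle-sum-general} with $\sinc(4r)=\sinc(2\pi)=0$. Adopting the convention $0^0=1$, which is exactly what the derivation produced, since $\sinc(0)=1$ appears when the exponent is zero, each summand is an indicator. The summand equals $1$ iff every exponent vanishes, namely:
\begin{itemize}
\item $z_j=1$ for all $j\in A$, and
\item $z_k\neq z_j$ for every edge $(j,k)\in E$ with $j\in A$, $k\in A^c$.
\end{itemize}
The first condition fixes all bits in $A$ to $1$. The second then forces $z_k=0$ for every $k\in N_E(A)$. This is consistent even if $k$ is adjacent to several vertices of $A$, because all those vertices carry the same value $1$. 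The bits outside $A\cup N_E(A)$ are unconstrained.

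The number of admissible $\zv$ is therefore $2^{\,n-|A|-|N_E(A)|}$. Dividing by $2^n$ gives $\Ebb[z_A^2]=2^{-d_A}$, which is the claimed variance.

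The only delicate step is the bookkeeping of the second-moment formula. One must confirm that edges with both endpoints in $A$ (or both in $A^c$) commute with $Z_A$ and so do not appear. One must also check that the edge condition is imposed for each edge rather than for each neighbour, and that this causes no overcounting. Since $A$ is entirely fixed to $1$, the per-edge constraints collapse to the per-vertex constraint $z_k=0$ on $N_E(A)$, and the count follows directly.
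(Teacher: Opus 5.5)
Your proposal is correct and follows the same route as the paper. The paper also evaluates the first-moment product at $\sinc(\pi)=0$ and the second-moment sum in Eq.~\eqref{eq:second-moment-correlator-small-angle-sum-general} at $\sinc(2\pi)=0$, then counts the bit strings with $z_j=1$ on $A$ and $z_k=0$ on $N_E(A)$, giving $2^{n-d_A}$. Your remarks on the $0^0=1$ convention and on per-edge constraints collapsing to per-vertex ones simply make explicit bookkeeping that the paper leaves implicit.
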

The proof of the proposition is given in Appendix~\ref{app:exact-correlator-variance-full-angle}.
One can see that the variance is dictated purely by the reach of the interaction graph. As an immediate consequence, we can bound this variance for $K$-regular interaction graphs by bounding $d_A$ in terms of $K$ leading to
\begin{corollary}[Architecture-dependent bounds for $K$-regular graphs]
\label{coro:correlator-variance-random-initialization-k-regular}
Under the same setting as in Proposition~\ref{prop:correlator-variance-random-initialization-exact}, if $E$ is a $K$-regular graph, then for every non-trivial $A\subseteq[n]$,
\begin{equation}
    2^{-\min(n,(K+1)|A|)}
    \le
    \Var_{\thv}[z_A(\thv)]
    \le
    2^{-\max(|A|,K+1)}.
\end{equation}
\end{corollary}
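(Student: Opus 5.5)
The plan is to take Proposition~\ref{prop:correlator-variance-random-initialization-exact} as given, so that $\Var_{\thv}[z_A(\thv)]=2^{-d_A}$ with $d_A=|A|+|N_E(A)|$. The corollary then reduces to a purely combinatorial statement about $K$-regular graphs:
\begin{equation}
\max(|A|,K+1)\;\le\; d_A \;\le\; \min(n,(K+1)|A|).
\end{equation}
Since $x\mapsto 2^{-x}$ is decreasing, the lower bound on $d_A$ gives the upper bound on the variance, and the upper bound on $d_A$ gives the lower bound on the variance. I would prove the two inequalities separately.

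For the upper bound on $d_A$, first note that $A\cup N_E(A)\subseteq[n]$ and the union is disjoint, so $d_A\le n$. Next, $N_E(A)$ is contained in the union over $j\in A$ of the neighbourhoods of $j$. Each such neighbourhood has exactly $K$ elements by regularity, so $|N_E(A)|\le K|A|$ and hence $d_A\le (K+1)|A|$. Combining the two gives the minimum.

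For the lower bound on $d_A$, the bound $d_A\ge|A|$ is trivial. For $d_A\ge K+1$, fix any $j\in A$; it has $K$ distinct neighbours. Each neighbour lies either in $A\setminus\{j\}$ or in $A^c$, and in the latter case it belongs to $N_E(A)$. Hence
\begin{equation}
K\le (|A|-1)+|N_E(A)|,
\end{equation}
which gives $d_A\ge K+1$. Taking the maximum of the two lower bounds completes the argument.

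The main (mild) obstacle is only this last counting step: one must remember that neighbours of $j$ inside $A$ are still counted in $d_A$ through the $|A|$ term, so the bound holds even when $N_E(A)$ is small (e.g.\ when $A$ is nearly all of $[n]$). No further use of the circuit structure is needed beyond the exact variance formula.
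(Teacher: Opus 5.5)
Your proposal is correct and follows the paper's proof. Both take $\Var_{\thv}[z_A(\thv)]=2^{-d_A}$ from Proposition~\ref{prop:correlator-variance-random-initialization-exact} and reduce the claim to $\max(|A|,K+1)\le d_A\le\min(n,(K+1)|A|)$. The upper bound uses $|N_E(A)|\le\min(K|A|,|A^c|)$, and the lower bound counts the $K$ neighbours of a fixed $j\in A$ to get $|N_E(A)|\ge K-|A|+1$. Your one-line inequality $K\le(|A|-1)+|N_E(A)|$ is a slightly tidier form of the paper's case split on $|A|\le K$ versus $|A|>K$.
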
 
\begin{figure}
    \centering
    \includegraphics[width=1\linewidth]{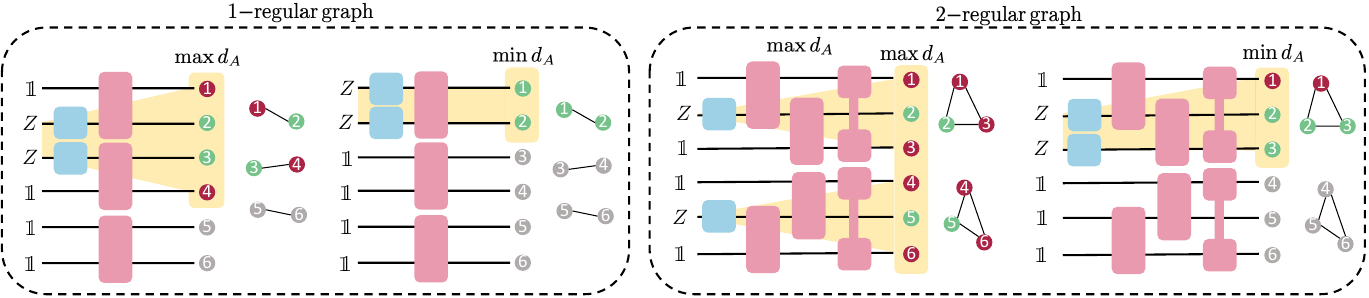}
    \caption{Representation of a $1$-regular and a $2$-regular graph with different observables such that the maximum and minimum $d_A$ are obtained.}
    \label{fig:kregularfigure}
\end{figure}
A representation of $1$-regular and $2$-regular graphs is shown in Fig.~\ref{fig:kregularfigure}.
That the correlator variance depends solely on $d_A$ implies the inevitable emergence of exponential concentration in highly connected graphs. For example, if the light-cone encompasses $\Theta(n)$ qubits, the variance vanishes exponentially with the system size. This is precisely the case for all-to-all connectivity, where the effective Pauli light cone saturates to $d_A = n$ for every non-trivial $A$, leading to immediate and uniform exponential concentration across all correlators. This effect is akin to the globality induced barren plateaus for unstructured circuits previously observed in Ref.~\cite{cerezo2020cost}. This is formalized in the following corollary.
\begin{corollary}[All-to-all full-angle correlator concentration]
\label{coro:correlator-variance-all-to-all-random-initialization}
Under the same setting as in Proposition~\ref{prop:correlator-variance-random-initialization-exact}, if $E$ is all-to-all, then for every non-trivial $A\subseteq[n]$,
\begin{equation}
    \Var_{\thv}[z_A(\thv)]=2^{-n}.
\end{equation}
\end{corollary}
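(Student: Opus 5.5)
This is an immediate consequence of Proposition~\ref{prop:correlator-variance-random-initialization-exact}: we only need to compute the effective Pauli light cone $d_A=|A|+|N_E(A)|$ for the all-to-all graph. In that case $E$ contains every pair $(j,k)$ with $j\neq k$. We split into two cases according to whether $A$ is the whole register.

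\emph{Case $A\neq[n]$.} Take any $k\in A^c$. Since $A$ is non-trivial, we can pick some $j\in A$, and $(j,k)\in E$ because the graph is all-to-all. Hence $k\in N_E(A)$ by the definition in Eq.~\eqref{eq:external-neighbourhood}. This gives $N_E(A)=A^c$, and therefore $d_A=|A|+|A^c|=n$.

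\emph{Case $A=[n]$.} Here $A^c=\emptyset$, so $N_E(A)=\emptyset$ and again $d_A=|A|=n$.

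Substituting $d_A=n$ into $\Var_{\thv}[z_A(\thv)]=2^{-d_A}$ from Proposition~\ref{prop:correlator-variance-random-initialization-exact} yields $2^{-n}$ for every non-trivial $A$. As a consistency check, this matches Corollary~\ref{coro:correlator-variance-random-initialization-k-regular} with $K=n-1$: the two bounds $2^{-\min(n,n|A|)}$ and $2^{-\max(|A|,n)}$ both equal $2^{-n}$.

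There is no real obstacle beyond the two-case check. The only subtlety is the edge case $A=[n]$: the saturation of the light cone there comes from $|A|$ itself rather than from the neighbourhood, but the result is the same.
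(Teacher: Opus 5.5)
Your proposal is correct and essentially matches the paper's proof. The paper derives the corollary from Corollary~\ref{coro:correlator-variance-random-initialization-k-regular} with $K=n-1$, and alternatively directly from Proposition~\ref{prop:correlator-variance-random-initialization-exact} via $N_E(A)=A^c$; these are exactly your consistency check and your main argument (your separate treatment of the edge case $A=[n]$ is a small extra the paper does not spell out).
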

\noindent The proof of Corollaries~\ref{coro:correlator-variance-random-initialization-k-regular} and~\ref{coro:correlator-variance-all-to-all-random-initialization} are given in Appendix~\ref{app:architecture-dependent-bound}.

Lastly, we show that under full-angle random initialization the cross term average between different correlators disappear. That is, we have
\begin{proposition}[Vanishing correlator cross terms]\label{prop:cross-term-vanish} Under the same setting as in Proposition~\ref{prop:correlator-variance-random-initialization-exact}, we have
\begin{align} \label{eq:cross-ter-zAzB}
     \Ebb_{\thv}[z_A(\thv)z_B(\thv)]
    =
    \delta_{A,B}\Ebb_{\thv}[z_A(\thv)^2],
\end{align}
where $\delta_{A,B}$ is the Kronecker delta, and
\begin{align}\label{eq:cross-term-zAzAzB}
    \Ebb_{\thv}[z_A(\thv)^2z_B(\thv)]=0 \;\;,
\end{align}
for every non-empty $B$.
\end{proposition}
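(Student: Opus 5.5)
The plan is to work directly from the Fourier expansion of Lemma~\ref{lemma:correlators-fourier-expansion}, exactly as in the computation of the first and second moments. For the architecture of Eq.~\eqref{eq:IQP-circuit-2qubits-appendix}, write each correlator as
\[
z_A(\thv)=\frac{1}{2^n}\sum_{\zv}\exp\Big(2i\sum_{\bv\in\mathcal{A}_A}\theta_{\bv}(-1)^{\bv\cdot\zv}\Big).
\]
Multiply two (or three) such expansions, with independent summation variables $\zv,\zv'$ (and $\zv''$). The product is then a normalized sum over these bit strings of terms $\exp\big(2i\sum_{\bv}\theta_{\bv}c_{\bv}\big)$. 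The integer coefficient $c_{\bv}$ collects one sign $\pm1$ from each factor whose anti-commuting set contains $\bv$, and it is $0$ if no factor contains $\bv$.

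Next, I average over independent $\theta_{\bv}\sim{\rm Unif}([-\pi/2,\pi/2])$. Using $\Ebb[e^{2ic\theta}]=\sinc(\pi c)$, every term factorizes as $\prod_{\bv}\sinc(\pi c_{\bv})$. For integer $c$, $\sinc(\pi c)$ vanishes unless $c=0$. Hence a term survives only if $c_{\bv}=0$ for \emph{every} generator $\bv$. The whole proof therefore reduces to a parity/cancellation argument on the integers $c_{\bv}$. It suffices to exhibit one generator whose coefficient is forced to be nonzero for all choices of $\zv,\zv',\dots$.

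For Eq.~\eqref{eq:cross-ter-zAzB}, I use only the single-qubit generators $X_j$. Such a generator anti-commutes with $Z_A$ iff $j\in A$. If $A\neq B$, pick $j$ in the symmetric difference $A\triangle B$. Then $c_j=\pm1$ is contributed by exactly one factor, so $\sinc(\pi c_j)=0$ and every term vanishes, giving $\Ebb[z_Az_B]=0$. The case $A=B$ is trivially $\Ebb[z_A^2]$, which yields the Kronecker delta.

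For Eq.~\eqref{eq:cross-term-zAzAzB}, pick any $j\in B$, which is possible since $B\neq\emptyset$. In $z_A^2z_B$ the coefficient of $\theta_j$ is the sum of one $\pm1$ from $z_B$ plus either zero terms (if $j\notin A$) or two $\pm1$ terms (if $j\in A$) from $z_A^2$. In both cases it is a sum of an odd number of $\pm1$'s, hence odd and never zero. Therefore every term is killed.

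I expect no real obstacle. The only care needed is bookkeeping: $c_{\bv}$ must be defined correctly when generators lie in some anti-commuting sets but not others. One must also check that the two-qubit generators, whose contributions are left unconstrained, do not matter. They do not, because a single vanishing factor in the product already kills each term.
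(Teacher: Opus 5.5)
Your proposal is correct and follows essentially the same route as the paper's proof. Both expand the correlators via Lemma~\ref{lemma:correlators-fourier-expansion}, average each independent $\theta_{\bv}$ to get factors $\sinc(\pi c_{\bv})$, and exhibit a generator whose integer coefficient is forced to be odd and hence nonzero: one in the symmetric difference of the anti-commuting sets when $A\neq B$, and one in $\mathcal{A}_B$ (contributing $1$ or $3$ signs) for $z_A^2 z_B$. Restricting to single-qubit generators is only a mild streamlining, and the paper itself notes that the single-qubit rotations suffice.
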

\noindent The proof of this is given in Appendix~\ref{app:cross-term-vanish-proof}.

\subsection{Proof of Proposition~\ref{prop:correlator-variance-random-initialization-exact}: Exact correlator variance under full-angle random initialization}
\label{app:exact-correlator-variance-full-angle}

\begin{proof}
Considering full rotations for each parameter i.e. $r=\pi/2$, we have $\sinc(4r)=\sinc(2\pi)=0$. Therefore, Eq.~\eqref{eq:general-first-oder-moment-correlator} shows that the first order expectation vanishes for $\mathcal{A}_A\neq\emptyset$ (which is equivalent to $A\neq\emptyset$ here). This can be seen by noting that each single qubit gates acting on $A$ anti-commutes with $Z_A$ so $\mathcal{A}_A\neq\emptyset$ and thus
\begin{equation}
    \Ebb_{\thv}[z_A(\thv)]=0\;.
\end{equation}
For the second order expectation, the only non-vanishing terms in Eq.~\eqref{eq:second-moment-correlator-small-angle-sum-general} satisfy
\begin{equation}
   \sum_{j\in A}\delta_{z_j,0} = 0 \,\,\, \text{and}   \,\, \sum_{\substack{j\in A, k\in A^c \\ (j,k)\in E}}\delta_{z_j,z_k}=0\; .
\end{equation}
The first constraint means that each element $j\in A$ needs to satisfy  $z_j = 1$ (i.e. $\delta_{z_j,0}=0$). If that is fulfilled then, from the second constraint, we also need
\begin{equation}
    \sum_{\substack{j\in A, k\in A^c \\ (j,k)\in E}}\delta_{1,z_k}=0\;,
\end{equation}
which means that the elements in the complementary $k\in A^c$ such that $(j,k)\in E$ for some $j\in A$ needs to be $z_k = 0$. 

Let us count how many strings satisfy this condition. First, we must impose $z_j=1$ for every $j\in A$. Moreover, for every $k\in N_E(A)$ (i.e., every $k\in A^c$ such that there exists $j\in A$ with $(j,k)\in E$), we must have $z_k=0$. 
Hence the only indices that remain free are those in
\begin{equation}
A^c\setminus N_E(A)=\{\,k\in A^c\,:\,\ k\notin N_E(A)\,\}.
\end{equation}
All other coordinates are fixed, so the number of binary strings $\zv \in\{0,1\}^n$ satisfying the constraints is
\begin{equation}
2^{|A^c\setminus N_E(A)|}=2^{n-|A|-|N_E(A)|}\,,
\end{equation}
where we used that the number of qubits is $n$ and that if a qubit is in $A^c\setminus N_E(A)$ it cannot be on $A$ nor on $N_E(A)$ as well as the fact that these two sets have no overlap. Because of this, $|A^c\setminus N_E(A)| = n - |A|-|N_E(A)|$. Therefore, the variance of any given correlator (with $A\neq\emptyset$) is given by 
\begin{equation}
\Var_{\thv}[z_A(\thv)]=\mathbb{E}_{\boldsymbol{\theta}}[z_A(\thv)^2]=2^{-|A|-|N_E(A)|}=2^{-d_A}\;,
\end{equation}
where we define $d_A=|A|+|N_E(A)|$  (effective Pauli light cone). 
\end{proof}

\subsection{Proof of 
Corollaries~\ref{coro:correlator-variance-random-initialization-k-regular} and~\ref{coro:correlator-variance-all-to-all-random-initialization}: Architecture-dependent consequence}

\label{app:architecture-dependent-bound}
\begin{proof}
\underline{Let us first prove Corollary~\ref{coro:correlator-variance-random-initialization-k-regular}.}
Assume each qubit has $K$ nearest neighbours (i.e., the graph is $K$-regular): \begin{equation}
    |\{j\in [n]|(k,j)\in E\}|=K\;,\quad \forall k\in [n]\;.
    \end{equation}
    Fig.~\ref{fig:kregularfigure} shows a representation of $1$-regular and $2$-regular graphs for different observables.
     We first bound $|N_E(A)|$ in terms of $K$ in order to bound $d_A$. 

The largest value of $|N_E(A)|$ arises when each element of $A$ is connected to $K$ elements in $A^c$ which are all different such that $|N_E(A)|=K|A|$ provided $K|A|\leq |A^c|$ since $N_E(A)\subseteq A^c $ and equivalently $|N_E(A)|\leq |A^c|$. Therefore, we have $|N_E(A)|\leq \min(K|A|,|A^c|)$ which gives the following upper-bound for $d_A$:
\begin{equation}
    d_A\leq |A|+\min(K|A|,|A^c|)=\min((K+1)|A|,n)\;,
\end{equation}
where we used $|A^c|=n-|A|$ in the last equality.

For the lower-bound on $d_A$, we aim at lower-bounding the size of $|N_E(A)|$. If $|A|\leq K$ each element of $A$ can be connected to the remaining $|A|-1$ in $A$ so it is connected to at least $K-(|A|-1)$ elements in $|A^c|$. These elements can, in principle, be the same for each element of $A$, and therefore $|N_E(A)|\geq K-|A|+1$. If $|A|>K$, we could have $|N_E(A)|=0$ (case where elements in $A$ are not connected to any element in $A^c$). So, we have $|N_E(A)|\geq \max(0,K-|A|+1)$. Equivalently for $d_A$ we have
\begin{equation}
    d_A\geq |A|+\max(0,K-|A|+1)=\max(|A|,K+1)\;.
\end{equation}

Proposition~\ref{prop:correlator-variance-random-initialization-exact} then implies $\Var_{\thv}[z_A(\thv)]=2^{-d_A}$, so a lower-bound on $d_A$ gives an upper-bound on the variance, while an upper-bound on $d_A$ yields a lower-bound on the variance. This leads to the result of Corollary~\ref{coro:correlator-variance-random-initialization-k-regular}:
\begin{equation}
    2^{-n}\leq 2^{-\min(n,(K+1)|A|)}\leq \text{Var}_{\thv}[z_A(\thv)]=\Ebb_{\thv}[z_A(\thv)^2]\leq 2^{-\max(|A|,K+1)}\leq 2^{-|A|}\;.
\end{equation}
\underline{The proof of Corollary~\ref{coro:correlator-variance-all-to-all-random-initialization}} is quite straight-forward since for the all-to-all case we have $K=n-1$, so previous expression (i.e. Corollary~\ref{coro:correlator-variance-random-initialization-k-regular}) directly implies
\begin{equation}
    \Var_{\thv}[z_A(\thv)]=2^{-n}\;.
\end{equation} 
Alternatively, we can show it using Proposition~\ref{prop:correlator-variance-random-initialization-exact}. For all-to-all topology, every vertex in $A^c$ is connected to each vertex in $A$ (for $A\neq\emptyset$), so $N_E(A)=A^c$ and thus $|N_E(A)|=n-|A|$ (equivalently, $d_A=n$). 
\end{proof}

\subsection{Proof of Proposition~\ref{prop:cross-term-vanish}: Vanishing cross-expectations over full angle range}\label{app:cross-term-vanish-proof}

Before starting the formal proof, let us note that we could see that the expected values are zero just out of the parity of the functions. Indeed, $z_A(\thv)$ is a product of trigonometric functions with different angles. Because $\int_{-\pi/2}^{\pi/2} \cos(2\th)=0$ and $\int_{-\pi/2}^{\pi/2}\sin(2\th) = 0$, but the squares do not, the condition $\Ebb[z_A(\thv)z_B(\thv)]\not=0$ implies that $z_A(\thv)$ and $z_B(\thv)$ need to contain exactly the same elements, which means that $A = B$. Similarly, $\Ebb[z_A(\thv)^2z_B(\thv)]\not=0$ is equivalent to having integrals of the third power of the trigonometric functions, which also vanish. In that case, $z_B(\thv)$ can never compensate for this, so it will always be zero. We proceed now to rigorously prove this.

\begin{proof} 
\underline{Derivation of Eq.~\eqref{eq:cross-ter-zAzB}.} We consider $A\neq B$ and show that $\Ebb_{\thv}[z_A(\thv)z_B(\thv)]=0$. 

We write the correlators as in Lemma~\ref{lemma:correlators-fourier-expansion} to obtain
\begin{align}
  \Ebb_{\thv}[z_A(\thv)z_B(\thv)]&=\frac{1}{2^{2n}}\sum_{\zv,\zv'}  \Ebb_{\thv}\left[\exp\left(2i\left(\sum_{\bv\in\mathcal{A}_A}\theta_{\bv}(-1)^{\zv\cdot\bv}+\sum_{\bv\in\mathcal{A}_B}\theta_{\bv}(-1)^{\zv'\cdot\bv}\right)\right)\right]\label{eq:toreference_below_expvalue_to_sinc_prop3}\\
  &= \frac{1}{2^{2n}}\sum_{\zv,\zv'}  \prod_{\bv\in \mathcal{A}_A\cap\mathcal{A}_B}\Ebb_{\theta_{\bv}}\left[\exp\left(2i\theta_{\bv}\left((-1)^{\zv\cdot\bv}+(-1)^{\zv'\cdot\bv}\right)\right)\right]\\
  &\cdot\prod_{\bv\in \mathcal{A}_A\backslash\mathcal{A}_A\cap\mathcal{A}_B}\Ebb_{\theta_{\bv}}\left[\exp\left(2i\theta_{\bv}(-1)^{\zv\cdot\bv}\right)\right]\prod_{\bv\in \mathcal{A}_B\backslash\mathcal{A}_A\cap\mathcal{A}_B}\Ebb_{\theta_{\bv}}\left[\exp\left(2i\theta_{\bv}(-1)^{\zv'\cdot\bv}\right)\right]\\
  &=\frac{1}{2^{2n}}\sum_{\zv,\zv'}  \prod_{\bv\in \mathcal{A}_A\cap\mathcal{A}_B} \sinc(\pi(1+(-1)^{\bv\cdot(\zv+\zv')}))\prod_{\bv\in \mathcal{A}_A\cup\mathcal{A}_B\backslash \mathcal{A}_A\cap\mathcal{A}_B}\sinc(\pi)\label{eq:toreference_below_expvalue_to_sinc_prop3_v2}\\
  &=0\;, \label{eq:cross-term-vanishes-proofbelow-lasteq}
\end{align}
where in the second equality we used that $e^{i\sum_j x_j} = \prod_j e^{i x_j}$ and the fact that all parameters are independent. In the third equality, we used: 
\begin{align}
    \Ebb_{x\in[-r,r]}\left[e^{2i x(-1)^y}\right] &= \sinc(2r)\;,\\
    \Ebb_{x\in[-r,r]}\left[e^{2i x((-1)^y+(-1)^{y'}}\right] &= \sinc(2r(1+(-1)^{y+y'}))\;,
\end{align}
where $y,y'\in\{0,1\}$ for the terms in $\mathcal{A}_A\cup\mathcal{A}_B\backslash \mathcal{A}_A\cap\mathcal{A}_B$ and $\mathcal{A}_A\cap\mathcal{A}_B$ respectively, and we take $r = \pi/2$ because we are here considering the full angle range.
In Eq.~\eqref{eq:cross-term-vanishes-proofbelow-lasteq}, we use that $\sinc(\pi)=0$ so if there exists an element in $\mathcal{A}_A\cup\mathcal{A}_B\backslash \mathcal{A}_A\cap\mathcal{A}_B$, the cross expectation vanishes - and this has to be true because we have by assumption $A\neq B$.  Note that this does not require all-to-all topology, but single qubit rotations on each qubits, with a random $\thv$ in full angle range, is enough. If $A=B$, this corresponds to the second order moment which is computed above.

\medskip

\underline{Derivation of Eq.~\eqref{eq:cross-term-zAzAzB}.} Consider the following quantity
\begin{align}
  \Ebb_{\thv}[z_A(\thv)^2z_B(\thv)]&=\frac{1}{2^{3n}}\sum_{\zv,\zv',\zv''}  \Ebb_{\thv}\left[\exp\left(2i\left(\sum_{\bv\in\mathcal{A}_A}\theta_{\bv}((-1)^{\zv\cdot\bv}+(-1)^{\bv\cdot\zv'})+\sum_{\bv\in\mathcal{A}_B}\theta_{\bv}(-1)^{\zv''\cdot\bv}\right)\right)\right]\;.
\end{align}
Note that this quantity has a very similar form as the one above in Eq.~\eqref{eq:toreference_below_expvalue_to_sinc_prop3}. Thus we can use the same technique to analyze it. Indeed, if there exists an element $\bv\in\mathcal{A}_B\backslash \mathcal{A}_A$ it gives automatically $0$ after averaging w.r.t  to $\theta_{\bv}$, because this corresponds to the case in which we obtained a $\sinc(\pi)$ in the previous proof (Eq.~\eqref{eq:toreference_below_expvalue_to_sinc_prop3_v2}). Moreover, the terms that are in $\mathcal{A}_A\cap \mathcal{A}_B$ will also give $0$ since
\begin{align}
    \Ebb[e^{2i\theta_{\bv}(-1)^{\bv\cdot \zv''}}]&=0\;,\\
    \Ebb[e^{2i\theta_{\bv}((-1)^{\zv\cdot\bv}+(-1)^{\bv\cdot\zv'}+(-1)^{\bv\cdot \zv''})}]&=0\;,
\end{align}
where the second equality is valid since $((-1)^{\zv\cdot\bv}+(-1)^{\bv\cdot\zv'}+(-1)^{\bv\cdot \zv''})\in\{-3,-1,1,3\}$ so that $\sinc(\pi((-1)^{\zv\cdot\bv}+(-1)^{\bv\cdot\zv'}+(-1)^{\bv\cdot \zv''})=0$ always. So, the cross-expectation is non-vanishing if and only if $\mathcal{A}_A\cap \mathcal{A}_B=\emptyset$ and $\mathcal{A}_B\backslash \mathcal{A}_A=\emptyset$ i.e. if $B=\emptyset$ which is not allowed.

\end{proof}

\section{From correlator variance to MMD variance with full-angle random initialization}
\label{app:from-correlator-variance-to-MMD-variance-full-angle}
In this section, we relate the correlator variance results of the previous section to the variance of the MMD loss under full-angle random initialization. While we would intuitively think that this connection is straightforward, the non-linearity and data-dependent structure of the MMD makes the formal proof more fiddly. We first show that, for all-to-all IQP circuits, the exponential concentration of individual correlators implies exponential concentration of the full MMD loss. We then briefly comment on graph-dependent regimes in which full-angle random initialization need not lead to barren plateaus, although these lie outside the main scope of the present work.

\subsection{Exponential concentration of the MMD loss for random initialization and all-to-all IQP topology}
\label{app:exponential-concentration-MMD}
We focus on the all-to-all setting and show that the correlator concentration established in Corollary~\ref{coro:correlator-variance-all-to-all-random-initialization} implies exponential concentration of the MMD loss itself. This shows that full-angle random initialization over the entire parameter landscape is generically untrainable in this regime due to the barren plateau phenomenon.

\begin{theoremappendix}[Exponential concentration of the MMD loss, formal]
\label{thm:mmd-loss-concentration-random-alltoall-appendix}
For IQP circuits with all-to-all topology where parameters are independently and uniformly drawn from ${\rm Unif}([-\pi/2,\pi/2])$, the MMD loss satisfies
\begin{equation}
    \Var_{\thv}[\mathcal{L}(\thv)]\leq \frac{5}{2^n}\;.
\end{equation}
\end{theoremappendix}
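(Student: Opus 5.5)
We expand the loss as $\mathcal{L}(\thv)=\sum_{A}w_A\big(z_A^2-2t_Az_A+t_A^2\big)$. The constant $t_A^2$ terms and the trivial subset $A=\emptyset$ (where $z_\emptyset=t_\emptyset=1$) drop out of the variance. It therefore suffices to bound the variance of
\begin{equation}
    X(\thv):=\sum_{A\neq\emptyset}w_A z_A(\thv)^2,\qquad Y(\thv):=-2\sum_{A\neq\emptyset}w_A t_A z_A(\thv),
\end{equation}
using $\Var[X+Y]=\Var[X]+\Var[Y]+2\Cov[X,Y]$. The weights satisfy $\sum_{A}w_A=1$ (binomial theorem), so $\sum_{A\neq\emptyset}w_A\leq 1$. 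We also have $|t_A|\leq 1$ and $|z_A|\leq 1$.

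\textbf{Term $Y$.} By Proposition~\ref{prop:correlator-variance-random-initialization-exact}, $\Ebb[z_A]=0$. By Proposition~\ref{prop:cross-term-vanish}, $\Ebb[z_Az_B]=\delta_{A,B}\Ebb[z_A^2]$, and Corollary~\ref{coro:correlator-variance-all-to-all-random-initialization} gives $\Ebb[z_A^2]=2^{-n}$. Hence
\begin{equation}
    \Var[Y]=4\sum_{A\neq\emptyset}w_A^2t_A^2\,2^{-n}\leq \frac{4}{2^n}\sum_A w_A^2\leq\frac{4}{2^n},
\end{equation}
using $w_A\leq 1$.

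\textbf{Cross term.} The covariance is $\Cov[X,Y]=-2\sum_{A,B}w_Aw_Bt_B\big(\Ebb[z_A^2z_B]-\Ebb[z_A^2]\Ebb[z_B]\big)$. Both pieces vanish: the first by Eq.~\eqref{eq:cross-term-zAzAzB} since $B\neq\emptyset$, the second because $\Ebb[z_B]=0$. So $\Cov[X,Y]=0$.

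\textbf{Term $X$ (main obstacle).} This requires fourth moments $\Ebb[z_A^2z_B^2]$, which none of the stated results control. A crude bound $\Var[X]\leq \Ebb[X^2]$ fails, since $\Ebb[X]\sim 2^{-n}$ makes $\Ebb[X^2]$ only tiny if the fourth moments are. I would attempt a direct calculation via Lemma~\ref{lemma:correlators-fourier-expansion}. Writing $z_A^2z_B^2$ as a sum over four bit strings and averaging parameter by parameter, every generator $\bv$ contributes a factor $\sinc(\pi s_{\bv})$, where $s_{\bv}$ is the sum of the signs $(-1)^{\bv\cdot\zv^{(i)}}$ over the copies in which $\bv$ anticommutes. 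This factor is $0$ unless $s_{\bv}=0$, and $1$ if $s_{\bv}=0$.

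For the all-to-all graph, $\mathcal{A}_A$ consists of all $X_j$ with $j\in A$ and all $X_jX_k$ with $j\in A,k\notin A$. Two cases arise.
\begin{enumerate}
    \item For $\bv\in\mathcal{A}_A\triangle\mathcal{A}_B$, we need the two signs from one pair to cancel.
    \item For $\bv\in\mathcal{A}_A\cap\mathcal{A}_B$, we need the four signs to sum to zero.
\end{enumerate}
I expect counting the surviving tuples to give $\Ebb[z_A^2z_B^2]\leq c\,2^{-n}$ for a small constant $c$: the constraints from all edges touching $A\cup B$ pin down essentially $n$ bits per copy pair. If that holds, then
\begin{equation}
    \Var[X]\leq \Ebb[X^2]\leq \sum_{A,B}w_Aw_B\,\Ebb[z_A^2z_B^2]\lesssim 2^{-n}.
\end{equation}
Since the target is $5/2^n$ and $Y$ already uses $4/2^n$, I would aim to show $\Var[X]\leq 2^{-n}$. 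The cleanest route would be to first prove $\Ebb[z_A^2z_B^2]$ is at most $2^{-n}$ plus a term cancelled by $\Ebb[z_A^2]\Ebb[z_B^2]=2^{-2n}$. The tuple-counting is the delicate step, especially the case $A\cap B=\emptyset$ and complementary pairs $B=A^c$, where $\mathcal{A}_A\cap\mathcal{A}_B$ contains all the cut edges.

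Summing the three contributions gives $\Var[\mathcal{L}]\leq 4/2^n+1/2^n=5/2^n$.
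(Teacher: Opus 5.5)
Your handling of $Y$ and of the cross term is correct, and it matches the paper. The paper expands the variance into $\Cov[z_A^2,z_B^2]+4t_At_B\Cov[z_A,z_B]-4t_B\Cov[z_A^2,z_B]$ and removes the last two off-diagonal pieces with Proposition~\ref{prop:cross-term-vanish}, which is your decomposition in bundled form.

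\textbf{The gap is the $X$ term.} As written, you never prove $\Var[X]\le 2^{-n}$. You only conjecture a fourth-moment bound that would follow from a four-copy tuple count you do not carry out. You also assert that the crude bound $\Var[X]\le\Ebb[X^2]$ ``fails.'' That assertion is wrong. The missing idea is the boundedness $|z_A|\le 1$, which you list among your facts but never use.

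\textbf{The fix.} Since $z_B^2\le 1$, every pair satisfies $\Ebb[z_A^2z_B^2]\le\Ebb[z_A^2]=2^{-n}$. Hence
\[
\Var[X]\le\Ebb[X^2]=\sum_{A,B\neq\emptyset}w_Aw_B\,\Ebb[z_A^2z_B^2]\le 2^{-n}\Big(\sum_{A\neq\emptyset}w_A\Big)^2\le 2^{-n}.
\]
Equivalently, $0\le X\le 1$ gives $X^2\le X$, so $\Ebb[X^2]\le\Ebb[X]\le 2^{-n}$.

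The paper uses the same boundedness, phrased through Cauchy--Schwarz: $\Cov[z_A^2,z_B^2]\le\sqrt{\Var[z_A^2]\Var[z_B^2]}$ with $\Var[z_A^2]\le\Ebb[z_A^4]\le\Ebb[z_A^2]=2^{-n}$. This gives the same $(1-w_\emptyset)^2\,2^{-n}$ contribution. No Fourier-expansion counting via Lemma~\ref{lemma:correlators-fourier-expansion} is needed. The cases $A\cap B=\emptyset$ and $B=A^c$ that you flag as delicate need no special treatment. With this one line inserted, your three contributions sum to $5/2^n$ as claimed.
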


\begin{proof}
We recall Corollary~\ref{coro:correlator-variance-all-to-all-random-initialization} that for random initialization and all-to-all IQP topology, we have that any correlator $z_A(\thv)$ exponentially concentrates
\begin{align}
    \Var_{\thv} [z_A(\thv)] = 2^{-n}\;,
\end{align}
for any non-trivial $A$.
Now, let us consider the full MMD loss that we recall
\begin{equation}
  \LC(\thv)  = \sum_{A \subseteq[n]} w_{A} (z_A(\thv) - t_A)^2\;,  
\end{equation}
where the trivial term $A=\emptyset$ is zero since $z_{\emptyset}(\thv)=t_{\emptyset}=1$.
Therefore we can compute the variance as,
\begin{align}
    \Var_{\thv}[\LC(\thv)] =& \Var_{\thv}[\sum_{A \subseteq[n]} w_{A} (z_A(\thv) - t_A)^2 ]\\
    =&\Var_{\thv}[\sum_{\substack{A \subseteq[n]\\ A\neq\emptyset}} w_{A} (z_A(\thv) - t_A)^2]\\
    =& \Var_{\thv}\left[\sum_{\substack{A \subseteq[n]\\ A\neq\emptyset}} w_{A} (z_A(\thv)^2 -2 t_Az_A(\thv))\right]\,,
\end{align}
were we just got rid of the constants by using that $\Var[f(x)+ c] = \Var[f(x)]$ (where $c$ is a constant). We can pick the expression up from here 
\begin{align}
    \Var_{\thv}[\LC(\thv)]&= \Var_{\thv}\left[\sum_{\substack{A \subseteq[n]\\ A\neq\emptyset}} w_{A} (z_A(\thv)^2 -2 t_Az_A(\thv))\right] \\
    &=\sum_{\substack{A,B \subseteq[n]\\ A,B\neq\emptyset}}w_Aw_B \left(\Cov[z_A(\thv)^2,z_B(\thv)^2]+4t_At_B\Cov[z_A(\thv),z_B(\thv)]-4t_B\Cov[z_A(\thv)^2,z_B(\thv)]\right)\\
    &=\sum_{\substack{A,B \subseteq[n]\\ A,B\neq\emptyset}}w_Aw_B \left(\Cov[z_A(\thv)^2,z_B(\thv)^2]+4t_A^2\Ebb[z_A(\thv)^2]\delta_{A,B}\right)\;, \label{eq:MMD-variance-full-angle-general-without-cross-terms}
\end{align}
where in the second equality we expanded the variance as a sum of covariances, i.e. $\Var[\sum_j X_j]=\sum_{i,j}\Cov[X_i,X_j]$. We used Proposition~\ref{prop:cross-term-vanish} together with $\Ebb_{\thv}[z_A(\thv)]=0$ in the last equality to get rid of the crossed terms. Now from Cauchy-Schwartz, we have 
\begin{equation}   \Cov[z_A(\thv)^2,z_B(\thv)^2]\leq\sqrt{\Var[z_A(\thv)^2]}\sqrt{\Var[z_B(\thv)^2]}\;.
\end{equation}
Additionally, we upper-bound the fourth moment by the second $\Ebb[z_A(\thv)^4]\leq \Ebb[z_A(\thv)^2]$ since $z_A(\thv)^2\leq 1$ so that $\Var[z_A(\thv)^2]\leq\Ebb[z_A(\thv)^4]\leq \Ebb[z_A(\thv)^2]=1/2^n$ and previous expression can be bounded as 
\begin{equation}\label{eq:upperbound_covar_squared}
  \Cov[z_A(\thv)^2,z_B(\thv)^2]\leq\sqrt{\Ebb[z_A(\thv)^2]}\sqrt{\Ebb[z_B(\thv)^2]}=1/2^n\;.
\end{equation}

Therefore, we can upper-bound the MMD variance as follows 
\begin{align}
    \Var_{\thv}[\LC(\thv)] &=\sum_{\substack{A,B \subseteq[n]\\ A,B\neq\emptyset}}w_Aw_B \left(\Cov[z_A(\thv)^2,z_B(\thv)^2]+4t_A^2\Ebb[z_A(\thv)^2]\delta_{A,B}\right)\\
    &\leq \sum_{\substack{A,B \subseteq[n]\\ A,B\neq\emptyset}}w_Aw_B \left(\frac{1}{2^n}+4\frac{1}{2^n}\delta_{A,B}\right)\\
    &=\frac{1}{2^n}\left((1-w_{\emptyset})^2+4\sum_{\substack{A \subseteq[n]\\ A\neq\emptyset}}w_A^2\right)\\
    &\leq \frac{5}{2^n}\;,
\end{align}
where in the first inequality we use that $t_A^2\leq 1$ as well as the upper-bound on the covariances in Eq.~\eqref{eq:upperbound_covar_squared}. In the second equality, we perform the sum over the weight which correspond to a probability distribution, so the sum of each element except $A=\emptyset$ is upper-bounded by $1$ as well as the sum of the square weights since $w_A^2\leq w_A$.  This completes the proof of the theorem.
\end{proof}

\subsection{Polynomially large MMD variance for full-angle regime with restricted topology}
\label{app:mmd-variance-lower-bound-full-angle-random}
We observe from Corollary~\ref{coro:correlator-variance-random-initialization-k-regular} that if $K|A|\in\mathcal{O}(\log(\poly(n)))$, then we can lower-bound the variance of the loss as
\begin{equation}
    \Var_{\thv}[z_A(\thv)]\in \Omega\!\left(\frac{1}{\poly(n)}\right).
\end{equation}

Indeed, for sufficiently sparse topologies and low-body observables, the correlator variance need not be exponentially small. To understand why this is true, let us consider as an example: In a $d$-dimensional lattice topology, where we have $K=2d$, if $d\in\order{1}$ (such as in $1$D and $2$D, i.e.  $K=2$ and $K=4$), and $|A|=\mathcal{O}(\log n)$-body correlators, then the variance under random initialization will be inverse-polynomial, rather than exponentially small. Intuitively, in the full-angle regime and for low-expressivity circuits, only low-body correlators carry non-negligible signal. Consequently, in this regime we should not expect the low-body MMD loss to exhibit exponential concentration for data whose low-order correlations remain inverse-polynomial in $n$. In particular, we derive the following theorem:

\begin{proposition}[MMD variance for random initialization under restricted topology]
\label{thm:mmd-trainability-random-kregular}
Consider an $n$-qubit IQP ansatz whose interaction graph is $K$-regular with
\begin{equation}
K \in \mathcal{O}(\log(\mathrm{poly}(n))).
\end{equation}
Let $\mathcal{L}(\boldsymbol{\theta})$ denote the low-body MMD loss using a Gaussian kernel with bandwidth
\begin{equation}
\sigma \in \Theta(\sqrt{n}).
\end{equation}
Assume that the target distribution has at least one non-vanishing low-body correlator: there exists a subset
$A \subseteq [n]$ with $|A|\in\mathcal{O}(1)$ such that the corresponding target correlator satisfies
\begin{equation}
|t_A| \in \Omega\!\left(\frac{1}{\mathrm{poly}(n)}\right).
\end{equation}
If the circuit parameters $\boldsymbol{\theta}$ are drawn i.i.d.\ uniformly from the full-angle range
$\mathrm{Unif}([-\pi/2,\pi/2])$, then the loss variance under random
initialization is lower bounded by an inverse polynomial:
\begin{equation}
\Var_{\boldsymbol{\theta}}\!\big[\mathcal{L}(\boldsymbol{\theta})\big]
\in \Omega\!\left(\frac{1}{\mathrm{poly}(n)}\right).
\end{equation}
\end{proposition}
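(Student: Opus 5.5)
The plan is to reuse the variance decomposition from the proof of Theorem~\ref{thm:mmd-loss-concentration-random-alltoall-appendix}, but to lower bound it instead of upper bounding it. The key observation is that this decomposition never used all-to-all connectivity. It relied only on Proposition~\ref{prop:cross-term-vanish} and on $\Ebb_{\thv}[z_A(\thv)]=0$. Both hold for any interaction graph, as long as every qubit carries a single-qubit rotation and the angles are drawn from the full range.

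For the vanishing of $\Ebb[z_Az_B]$ with $A\neq B$, the single-qubit generator $X_j$ with $j\in A\triangle B$ lies in $\mathcal{A}_A\triangle\mathcal{A}_B$, which produces a factor $\sinc(\pi)=0$. The argument for $\Ebb[z_A^2z_B]=0$ likewise uses only single-qubit gates. I would state this explicitly as a remark before starting.

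With these facts in hand I would write
\begin{equation}
\Var_{\thv}[\LC(\thv)]=\Var_{\thv}\Big[\sum_{A\neq\emptyset}w_A z_A(\thv)^2\Big]+4\sum_{A\neq\emptyset}w_A^2t_A^2\,\Ebb_{\thv}[z_A(\thv)^2].
\end{equation}
To get this, expand $\LC=\sum_A w_A(z_A^2-2t_Az_A)+\text{const}$. The cross covariance $\Cov[z_A^2,z_B]$ equals $\Ebb[z_A^2z_B]-\Ebb[z_A^2]\Ebb[z_B]=0$. The linear part then contributes $4\sum_{A,B}w_Aw_Bt_At_B\Ebb[z_Az_B]=4\sum_A w_A^2t_A^2\Ebb[z_A^2]$.

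The first term is a variance and hence nonnegative, so it can be dropped. All remaining summands are nonnegative too. Keeping only the distinguished low-body subset $A$ from the hypothesis gives
\begin{equation}
\Var_{\thv}[\LC(\thv)]\geq 4w_A^2t_A^2\,2^{-d_A},
\end{equation}
where Proposition~\ref{prop:correlator-variance-random-initialization-exact} was used for $\Ebb[z_A^2]=\Var[z_A]=2^{-d_A}$.

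It remains to check that each factor is at least inverse-polynomial.
\begin{itemize}
\item By assumption, $t_A^2\in\Omega(1/\poly(n))$.
\item By Corollary~\ref{coro:correlator-variance-random-initialization-k-regular}, $d_A\leq(K+1)|A|$. Since $|A|\in\OC(1)$ and $K\in\OC(\log\poly(n))$, this gives $d_A\in\OC(\log n)$ and so $2^{-d_A}\in\Omega(1/\poly(n))$.
\item For the weights, $\sigma\in\Theta(\sqrt n)$ gives $p_\sigma=\tfrac{1}{2}(1-e^{-1/(2\sigma^2)})=\Theta(1/n)$. Hence $(1-p_\sigma)^{n-|A|}\geq(1-p_\sigma)^n=\Theta(1)$, bounded below by a constant such as $e^{-c}$. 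Together with $p_\sigma^{|A|}=\Theta(n^{-|A|})$ this yields $w_A\in\Omega(n^{-|A|})$, which is inverse-polynomial because $|A|$ is constant.
\end{itemize}
Multiplying the three factors completes the argument.

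The main obstacle is the justification that the cross terms vanish for a general graph, that is, that Proposition~\ref{prop:cross-term-vanish} transfers verbatim. I expect it does, because its proof only needs some generator in the symmetric difference of the anticommuting sets. The weight asymptotics are the only other delicate point. There the estimate to check is that $n\,p_\sigma$ stays bounded, so that $(1-p_\sigma)^n$ does not decay.
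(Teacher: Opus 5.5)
Your proposal is correct and takes the same route as the paper. The paper also starts from the cross-term-free decomposition, drops the nonnegative term $\Var\big[\sum_{A\neq\emptyset} w_A z_A(\thv)^2\big]$, and keeps the single summand $4w_A^2t_A^2\,\Ebb[z_A(\thv)^2]$, which it bounds using $\Ebb[z_A(\thv)^2]\geq 2^{-(K+1)|A|}$ and $w_A\in\Theta(n^{-|A|})$. Your remark that the cross terms vanish for any graph with single-qubit rotations on every qubit is likewise stated in the paper's proof of the vanishing-cross-term proposition.
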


\begin{proof}
Let us start by recalling that Corollary~\ref{coro:correlator-variance-random-initialization-k-regular} shows that the effect of full-angle random initialization depends strongly on the underlying interaction graph. This observation can be extended to obtain graph-dependent lower bounds on the MMD variance in corresponding regimes. 
For interested reader, notice that the MMD variance in Eq.~\eqref{eq:MMD-variance-full-angle-general-without-cross-terms} can be lower-bounded as follows

\begin{align}
        \Var_{\thv}[\LC(\thv)] &=\sum_{\substack{A,B \subseteq[n]\\ A,B\neq\emptyset}}w_Aw_B \left(\Cov[z_A(\thv)^2,z_B(\thv)^2]+4t_A^2\Ebb[z_A(\thv)^2]\delta_{A,B}\right)\\
        &\geq \sum_{\substack{A \subseteq[n]\\ A\neq\emptyset}}4w_A^2 t_A^2\Ebb[z_A(\thv)^2]\;, \label{eq:mmd-variance-lower-bound-full-angle-random}
\end{align}
where the lower-bound is obtained using $\Var[\sum_j X_j]=\sum_{i,j}\Cov[X_i,X_j]$ such that
\begin{equation}
    \sum_{\substack{A,B \subseteq[n]\\ A,B\neq\emptyset}}w_Aw_B \Cov[z_A(\thv)^2,z_B(\thv)^2]=\Var\left[\sum_{\substack{A \subseteq[n]\\ A\neq\emptyset}}w_Az_A(\thv)^2\right]\geq 0\;.
\end{equation}
 Notice that all terms in Eq.~\eqref{eq:mmd-variance-lower-bound-full-angle-random} are nonnegative, so it suffices to lower-bound a single dominant term.
 
 Since, in the low-body MMD regime (i.e. $\sigma\in\Theta(\sqrt{n})$), the weights satisfy $w_A\in \Theta(n^{-|A|})$, we restrict attention to subsets $A$ with $|A|\in\OC(1)$, for which $w_A\in\Theta(1/\poly(n))$.
Moreover, if $K|A|\in\order{\log({\rm poly}(n))}$, Corollary~\ref{coro:correlator-variance-random-initialization-k-regular} implies
\begin{equation}
 \Ebb[z_A(\thv)^2]\geq 2^{-\min(n,(K+1)|A|)}\in\Omega(1/\poly(n))\;.
\end{equation}
Hence, under the assumption$K\in\order{\log({\rm poly}(n))}$, it suffices to consider terms with $|A|\in\OC(1)$. If there exists at least one such subset $A$ for which $|t_A|\in\Omega(1/\poly(n))$, 
then the corresponding contribution to Eq.~\eqref{eq:mmd-variance-lower-bound-full-angle-random} is still inverse-polynomial in $n$. Therefore,
\begin{align}
    \Var_{\thv}[\LC(\thv)] \in\Omega\left(\frac{1}{{\poly}(n)}\right)\,,
\end{align}
since every term contributes nonnegatively to the variance. This concludes the proof of the theorem.
\end{proof}
Note that this proof only holds for low-body MMD since high-body terms are exponentially suppressed. Thus the variance guarantees for this case are only directly applicable to learning low-body data sets (or low-body approximations of a data set) and would likely fail to capture information about higher order correlations. 
\section{Correlator variance for identity initialization strategy}
\label{app:correlator-variance-restricted-angle-identity-initialization}

Having established that IQP-QCBMs exhibit exponential concentration over the full loss landscape, we now turn to the analysis of local subregions. In particular,
concentration over a large domain does not preclude the
existence of smaller regions with high curvature and only
inverse-polynomial loss variance, in which gradient signals
remain accessible. 

We first illustrate this phenomenon at the correlator level by examining patches centered at the identity. This serves mainly as an illustrative example to motivate further study on the MMD loss itself (see Section~\ref{sec:restricted-patch-can-avoid-BP} in the main and coming Appendix~\ref{app:patch-variance-curvature}).

Consider the identity strategy where $\thv^* = \vec{0}$ and each $\theta_\alpha$ is sampled independently from $\mathrm{Unif}([-r, r])$. We provide a lower-bound on the correlator variance around identity initialization as follows
\begin{proposition}[Non-vanishing correlator variance with identity initialization]
\label{prop:correlator-variance-around-identity}
Let $m_A$ denote the number of circuit generators that anti-commute with $Z_A$. If the patch half-width satisfies
\begin{equation}
    r^2\in\OC(\log(m_A)/m_A)\quad {\rm and}\quad r^2\in\Omega(1/\poly(m_A)),
\end{equation}
then
\begin{equation}
    \Var_{\thv}[z_A(\thv)]\in\Omega\!\left(\frac{1}{\poly(m_A)}\right).
\end{equation}
In particular, for all-to-all topology,
\begin{equation}
    m_A=|A|(n+1-|A|)\le \left(\frac{n+1}{2}\right)^2.
\end{equation}
\end{proposition}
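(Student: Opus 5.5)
The plan is to compute both moments of $z_A(\thv)$ in closed form for general half-width $r$, using the machinery already developed, and then expand in the small parameter $r$. By Eq.~\eqref{eq:general-first-oder-moment-correlator} we have $\Ebb[z_A]=\sinc(2r)^{m_A}$. For the second moment I would not use the two-qubit specialization. Instead I would return to Lemma~\ref{lemma:correlators-fourier-expansion} and repeat the change of variables $\zv+\zv'\to\zv$ used to derive Eq.~\eqref{eq:second-moment-correlator-small-angle-sum-general}, together with $\sinc(2r(1+(-1)^{y}))\in\{1,\sinc(4r)\}$. This should give
\begin{equation}
\Ebb_{\thv}[z_A(\thv)^2]=\Ebb_{\zv\sim{\rm Unif}}\big[c^{N(\zv)}\big],\qquad c:=\sinc(4r),\quad N(\zv):=\big|\{\bv\in\mathcal{A}_A:\ \bv\cdot\zv=0 \bmod 2\}\big|.
\end{equation}

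Next I would extract the statistics of $N$. Every $\bv\in\mathcal{A}_A$ has odd overlap with $A$, so it is a non-zero string. For distinct non-zero strings, the parities $\bv\cdot\zv$ are unbiased and pairwise independent under uniform $\zv$. Hence $\Ebb[N]=m_A/2$ and $\Var[N]=m_A/4$.

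I would then split the variance around the reference value $c^{m_A/2}$:
\begin{equation}
\Var[z_A]=c^{m_A/2}\Big(\Ebb\big[e^{\lambda(N-m_A/2)}\big]-1\Big)+\Big(c^{m_A/2}-\sinc(2r)^{2m_A}\Big),\qquad \lambda=\ln c<0.
\end{equation}
For the first bracket, a second-order expansion together with $\Var[N]=m_A/4$ and $\lambda\approx-8r^2/3$ gives a positive contribution of about $\lambda^2 m_A/8=\Theta(m_Ar^4)$. For the second bracket I would use the exact identity $\sinc(4r)=\sinc(2r)\cos(2r)$. Comparing $\cos x$ with $\sinc(x)^3$ shows that this bracket is negative but also $\Theta(m_Ar^4)\,c^{m_A/2}$.

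The main obstacle is proving that the net coefficient of $m_A r^4$ is strictly positive, while keeping the higher cumulants of $N$ and the $O(m_A r^6)$ remainders under control. If the leading-order comparison is too tight, I would fall back on computing $\Ebb[c^{N}]$ more exactly. In the all-to-all case, $N$ is a function of the parity-count structure of $\zv$ restricted to $A$ and $A^c$, which should allow the sum to be evaluated directly. Failing that, I would exploit the additional $r^4$ term from $\Var[N]$ with a sharper convexity bound, $e^{x}\ge 1+x+x^2/2+x^3/6$.

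Once $\Var[z_A]\gtrsim e^{-\Theta(m_Ar^2)}\,m_A r^4$ is established, the two assumptions on $r$ finish the argument. The condition $r^2\in\OC(\log m_A/m_A)$ makes $e^{-\Theta(m_Ar^2)}\in\Omega(1/\poly(m_A))$. The condition $r^2\in\Omega(1/\poly(m_A))$ makes $m_Ar^4\in\Omega(1/\poly(m_A))$. Together these give $\Var[z_A]\in\Omega(1/\poly(m_A))$.

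For the all-to-all count, the anticommuting generators are the $|A|$ single-qubit gates plus the $|A|(n-|A|)$ cut edges. This gives $m_A=|A|(n+1-|A|)$, which is maximized over $|A|$ at $|A|=(n+1)/2$, yielding $m_A\le\left(\frac{n+1}{2}\right)^2$.
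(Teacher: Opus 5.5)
Your reduction is correct, and it is genuinely different from the paper's. The identity $\Ebb_{\thv}[z_A^2]=\Ebb_{\zv}[c^{N(\zv)}]$ with $c=\sinc(4r)$ holds, and so does the pairwise-independence computation $\Ebb[N]=m_A/2$, $\Var[N]=m_A/4$. But the proposal stops exactly at the step that carries the proof. Your split produces two contributions of order $m_Ar^4$ with opposite signs, and you never determine which one wins. You also treat the first bracket via a ``second-order expansion'', which is not a lower bound: $e^{x}\geq 1+x+x^2/2$ fails for $x<0$, and $\lambda(N-m_A/2)$ takes both signs. As written, the bound $\Var[z_A]\gtrsim m_Ar^4e^{-\Theta(m_Ar^2)}$ is assumed rather than derived.

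The gap closes inside your framework. Every $\bv\in\mathcal{A}_A$ has odd overlap with $A$, so the shift $\zv\mapsto\zv\oplus\av$ flips every parity $\bv\cdot\zv$. Hence $N\mapsto m_A-N$, and $N-m_A/2$ is symmetric. This gives $\Ebb[c^{N-m_A/2}]=\Ebb[\cosh(\lambda(N-m_A/2))]\geq 1+\lambda^2m_A/8$ exactly, with no higher cumulants needed. (This is also why your cubic fallback would work: the third central moment vanishes.) Writing $\sinc(2r)^4=\rho\,c$ with $\rho=\sinc(2r)^3/\cos(2r)$, you get
\begin{equation}
\Var_{\thv}[z_A(\thv)]\geq c^{m_A/2}\Big(1+\tfrac{\lambda^2m_A}{8}-\rho^{m_A/2}\Big).
\end{equation}
Since $\lambda^2=\tfrac{64}{9}r^4(1+\OC(r^2))$ and $\ln\rho=\tfrac{16}{15}r^4(1+\OC(r^2))$, the bracket equals $(\tfrac89-\tfrac8{15})m_Ar^4(1+o(1))=\tfrac{16}{45}m_Ar^4(1+o(1))$, \emph{provided} $m_Ar^4\to0$. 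That proviso is genuinely needed: the negative term grows like $e^{\Theta(m_Ar^4)}$, while the positive one is only linear. Your hypothesis $r^2\in\OC(\log m_A/m_A)$ supplies it.

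The paper avoids this cancellation entirely. It expands $z_A$ as a sum of mutually orthogonal products of $\cos(2\theta_{\bv})$ and $\sin(2\theta_{\bv})$ with coefficients $\pm1$, and keeps only the all-cosine path. This gives $\Ebb[z_A^2]\geq k_+^{m_A}$ with $k_+=\Ebb_\theta[\cos^2 2\theta]$, while $\Ebb[z_A]^2=k_-^{m_A}$ with $k_-=\Ebb_\theta[\cos 2\theta]^2$. Then $k_+^{m_A}-k_-^{m_A}\geq m_A(k_+-k_-)k_-^{m_A-1}$, and $k_+-k_-=\Var_\theta[\cos 2\theta]$ is manifestly nonnegative factor by factor. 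The paper obtains $\tfrac{16}{45}m_Ar^4e^{-c'm_Ar^2}$, with $c'=-2\ln\sinc(2)$, for every $r\leq1$ and with no condition on $m_Ar^4$. Your route gives an exact formula for the second moment and recovers the same constant $16/45$, which is a good consistency check. The cost is a delicate leading-order cancellation and a restriction to an asymptotic regime.
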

Proposition~\ref{prop:correlator-variance-around-identity} demonstrates that the exponential concentration observed in the full-angle regime is not an intrinsic property of the IQP architecture alone. Instead, it  depends critically also on the initialization patch size. In sufficiently small patches around the identity, correlators can remain initially trainable.

Notably, by proving this proposition directly for the IQP architecture, we obtain a guaranteed width that is strictly larger than what is achievable through standard curvature techniques. While the latter provide more general bounds, this architecture-specific approach yields a tighter, more practical characterization of the initialization landscape.

\subsection{Proof of Proposition~\ref{prop:correlator-variance-around-identity}: Non-vanishing correlator variance with identity initialization}

Before delving into the main proof, let us consider the following Lemma which serves as a preliminary.
\begin{lemma} \label{lemma:lower-bound-product-cosine-variance}
    Let $r\in [0,1]$ and $m\in \mathbb{N}$. Therefore, we have
    \begin{equation}
        \left(\frac{1+\sinc(4r)}{2}\right)^{m}-\left(\sinc(2r)\right)^{2m}\geq \frac{16r^4m}{45}\exp(-cmr^2)\;,
    \end{equation}
    where $c=-2\ln(\sinc(2))\approx 1.58$.
\end{lemma}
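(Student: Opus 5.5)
The plan is to interpret both terms probabilistically. Let $\theta\sim\mathrm{Unif}([-r,r])$. Then
\begin{equation}
\Ebb[\cos 2\theta]=\sinc(2r),\qquad \Ebb[\cos^2 2\theta]=\tfrac{1}{2}\bigl(1+\Ebb[\cos 4\theta]\bigr)=\frac{1+\sinc(4r)}{2}.
\end{equation}
Set $a:=\frac{1+\sinc(4r)}{2}$ and $b:=\sinc(2r)^2$. Then $a-b=\Var[\cos 2\theta]\ge 0$, so $a\ge b\ge 0$.

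The first step is to factor the difference of powers:
\begin{equation}
a^m-b^m=(a-b)\sum_{k=0}^{m-1}a^{m-1-k}b^k\ \ge\ m\,(a-b)\,b^{m-1}.
\end{equation}
This reduces the lemma to two one-variable estimates: a quartic lower bound on $\Var[\cos 2\theta]$, and an exponential lower bound on $b$.

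For the variance, write $U=\theta^2\in[0,r^2]$ and $g(u)=\cos(2\sqrt u)$, so that $\cos 2\theta=g(U)$. We have
\begin{equation}
|g'(u)|=\frac{\sin(2\sqrt u)}{\sqrt u}=2\,\sinc(2\sqrt u).
\end{equation}
Because $2\sqrt u\le 2r\le 2<\pi$, this derivative is bounded below on $[0,r^2]$ by $2\sinc(2r)>0$. Take $U'$ an independent copy of $U$. Using $\Var[g(U)]=\tfrac12\Ebb[(g(U)-g(U'))^2]$ together with the mean value theorem gives
\begin{equation}
\Var[\cos 2\theta]\ \ge\ 4\sinc(2r)^2\,\Var[\theta^2]=4b\left(\frac{r^4}{5}-\frac{r^4}{9}\right)=\frac{16r^4}{45}\,b.
\end{equation}
Combining this with the factorization yields $a^m-b^m\ge \frac{16 r^4 m}{45}\,b^m$. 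The extra factor $b$ from the variance bound conveniently upgrades $b^{m-1}$ to $b^m$.

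The final step, which I expect to be the only delicate point, is $b^m\ge \exp(-cmr^2)$ with $c=-2\ln\sinc(2)$. I would use the product formula
\begin{equation}
\sinc(x)=\prod_{k\ge1}\left(1-\frac{x^2}{k^2\pi^2}\right).
\end{equation}
It gives $\ln\sinc(2r)=\sum_k \ln\bigl(1-\tfrac{4r^2}{k^2\pi^2}\bigr)$. Viewed as a function of $s=r^2\in[0,1]$, this is a sum of concave functions, so it is concave in $s$, and it vanishes at $s=0$. By concavity the graph lies above the chord from $s=0$ to $s=1$, hence $\ln\sinc(2r)\ge r^2\ln\sinc(2)$. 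Squaring and raising to the power $m$ gives $b^m\ge \exp(2mr^2\ln\sinc 2)=\exp(-cmr^2)$, which completes the bound.

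The points to double-check are that every factor stays positive on $r\in[0,1]$ (since $2r<\pi$) and that the moments of $\theta^2$ are right: $\Ebb[\theta^4]=r^4/5$ and $(\Ebb[\theta^2])^2=r^4/9$.
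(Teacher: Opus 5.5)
Your proof is correct. It shares the paper's skeleton: both identify $a-b=\Var[\cos 2\theta]$ and use $a^m-b^m\ge m(a-b)b^{m-1}$. The difference is in the two one-variable estimates.

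The paper expands $a$ and $b$ in power series and pairs consecutive terms of the resulting alternating series to get $a-b\ge\frac{16r^4}{45}(1-\frac{4r^2}{7})$. It then bounds $b\ge e^{-cr^2}$. It still has to absorb the leftover factor, via $1-\frac{4r^2}{7}\ge e^{-\tilde c r^2}$ with $\tilde c=\ln(7/3)<c$, so that $e^{-\tilde c r^2}e^{-c(m-1)r^2}\ge e^{-cmr^2}$.

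You instead get $a-b\ge\frac{16r^4}{45}\,b$ from the mean value theorem for $u\mapsto\cos(2\sqrt u)$, whose derivative has modulus at least $2\sinc(2r)$ on $[0,r^2]$. The correction is therefore exactly one extra factor of $b$: $b^{m-1}$ becomes $b^m$, and neither the auxiliary constant $\tilde c$ nor the absorption step is needed.

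Your intermediate variance bound is slightly weaker. Your prefactor is $b=1-\frac{4r^2}{3}+O(r^4)$, against the paper's $1-\frac{4r^2}{7}$. The paper discards that extra sharpness in its last step anyway, so both arguments land on the same final bound.

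Your route also proves something the paper only asserts. The paper defines $c=\max_{x\in[0,1]}-2\ln\sinc(2x)/x^2$ and states without proof that the maximum is attained at $x=1$. Your concavity argument for $s\mapsto\ln\sinc(2\sqrt s)$, via Euler's product formula, is exactly the proof of that claim.

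On balance, your version is shorter and self-contained. The paper's series computation is preferable only if you want the sharper explicit lower bound on $\Var[\cos 2\theta]$ for its own sake.
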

\begin{proof}
 Let us identify $k_+=\left(\frac{1+\sinc(4r)}{2}\right)$ and $k_-=\sinc(2r)^2$  for simplification and notice that $k_+\geq k_-$ since $k_+-k_-=\Var_{\theta\sim \DC(0,r) }[\cos(2\theta)]\geq 0$ (recall that $\DC(0,r)= {\rm Unif}[ \VC(0,r)]$, as in Eq.~\eqref{eq:param-drawn}). We first lower-bound Eq.~\eqref{eq:var_za_around_identity-sinc-form} using geometric series as follow
 \begin{align}
     k_+^m-k_-^m&=(k_+-k_-)\sum_{k=0}^{m-1}k_+^{m-1-k}k_-^k\\
     &\geq (k_+-k_-)k_-^{m-1}m\;,\label{eq:eqtolowerbound_with_kpkm}
 \end{align}
 where the inequality is obtained by lower-bounding each term in the sum by the smallest one (i.e. the term $k=m-1$).
 Now, we use the Taylor expansion of sines and cosines together with the definitions of $k_\pm$: 
 \begin{align}
    k_+&=\frac{1+\sinc(4r)}{2}=1+\frac{1}{2}\sum_{k=1}^{\infty}\frac{(-1)^k(4r)^{2k}}{(2k+1)!}\;, \\
     k_-&=\frac{\sin^2(2r)}{4r^2}=\frac{1-\cos(4r)}{8r^2}=1+\sum_{k=1}^\infty \frac{(-1)^k(4r)^{2k}}{(2k+1)!(k+1)}\;.
 \end{align}
 So assuming $r\leq 1$, we first have
 \begin{equation}
     k_+-k_-=\sum_{k=2}^{\infty}\frac{(-1)^k(4r)^{2k}}{(2k+1)!}\left(\frac{1}{2}-\frac{1}{k+1}\right)=\sum_{k=0}^{\infty}\frac{(-1)^k(4r)^{2k+4}}{(2k+5)!}\left(\frac{k+1}{2(k+3)}\right)\geq \frac{16r^4}{45}\left(1-\frac{4r^2}{7}\right)\;,\label{eq:bound_kp_minus_km}
 \end{equation}
 where we respectively keep the first positive ($k=0$) and negative terms ($k=1$) for the lower-bound given that the magnitude of summands decreases and alternating sign. Specifically, we have that summing terms $k=2q$ and $k=2q+1$ for any $q\in \mathbb{N}$  gives a positive contribution:
 \begin{equation}
     \sum_{k=2q}^{2q+1}\frac{(-1)^k(4r)^{2k+4}}{(2k+5)!}\left(\frac{k+1}{2(k+3)}\right)=\frac{(4r)^{4q+4}(2q+1)}{(4q+5)!2(2q+3)}\left(1-\frac{(4r)^2(2q+2)(2q+3)}{(4q+6)(4q+7)(2q+1)(2q+4)}\right)\geq 0 \;,
 \end{equation}
 where in the inequality, we used the fact that for $r\leq 1$ and $q\geq 0$ we have
 \begin{equation}
     \frac{(4r)^2(2q+2)(2q+3)}{(4q+6)(4q+7)(2q+1)(2q+4)}\leq\frac{16(2q+2)(2q+3)}{42(2q+1)(2q+4)}=\frac{8}{21}\left(1+\frac{1}{2q^2+5q+2}\right)\leq \frac{4}{7}<1\;.
 \end{equation}
 where we used that $\max_{r\leq1, q\geq0}\frac{(4r)^2}{(4q+6)(4q+7)} = \frac{16}{42} = \frac{8}{21}$.

Now we can lower-bound Eq.~\eqref{eq:eqtolowerbound_with_kpkm} using Eq.~\eqref{eq:bound_kp_minus_km} to obtain
\begin{align}
    k_+^m - k_-^m \geq & (k_+-k_-)k_-^{m-1}m\\
    \geq & \frac{16r^4 m}{45}\left(1-\frac{4r^2}{7}\right)k_-^{m-1}\\\
    \geq &  \frac{16r^4 m}{45}\left(1-\frac{4r^2}{7}\right)e^{-c (m-1) r^2}
\end{align}
where in the last inequality we used that $k_-\geq e^{-r^2 c}$ for $x\in[0,1]$, and with $c=\max_{x\in[0,1]} -\frac{2\ln({\rm sinc}(2x))}{x^2} = -2\ln[\sinc(2)]\approx 1.58$. Similarly, to have a compact expression we can see that $1-\frac{4x^2}{7}\geq e^{-\tilde{c} x^2}$, and we find that $\tilde{c} = \max_{x\in[0,1]}-\frac{\ln[1-4x^2/7]}{x^2} = \ln[7/3]\approx0.85$. Thus we can finally lower bound the previous expression as
\begin{align}
    k_+^m - k_-^m \geq &  \frac{16r^4 m}{45}\left(1-\frac{4r^2}{7}\right)e^{-c m r^2}\\
    \geq & \frac{16r^4 m}{45}e^{-c m r^2}
\end{align}
where we used that $\tilde{c}<c$ and thus $e^{-cx^2}<e^{-\tilde{c}x^2}$ which completes the proof of the Lemma.
\end{proof}

We are now ready to prove Proposition~\ref{prop:correlator-variance-around-identity}

\begin{proof} 
The first steps are similar to those in Lemma~\ref{lemma:correlators-fourier-expansion}, but we restate each individual step here for completeness. We begin with a compact expression for a general circuit, and then specialize to the case of single-qubit and two-qubit gates with all-to-all topology. For a given $A$, we identify bit string $\av$ s.t. $a_j=1$ for all $j\in A$ otherwise $a_j=0$. So that all the gates that anti-commutes with $Z_A$ are analogously identified by $\bv$ such that $\av\cdot \bv=1$ and let $\mathcal{A}_A$ be the label set of anti-commuting elements i.e. $\mathcal{A}_A=\{\bv|\av\cdot \bv=1\}$. Using this notation we can write the correlator $z_A(\thv)$ as:

\begin{align}
    \langle \vec{0}|U(\thv)Z_AU^\dagger(\thv)|\vec{0}\rangle&= \langle\vec{0}|\left(\prod_{\bv\in\mathcal{A}_A}e^{2i\theta_{\bv}X_{\bv}}\right)Z_A|\vec{0}\rangle\\
    &=\langle\vec{0}|\left(\prod_{\bv\in\mathcal{A}_A}e^{2i\theta_{\bv}X_{\bv}}\right)|\vec{0}\rangle\\
    &=\langle\vec{0}|\prod_{\bv\in\mathcal{A}_A}\left(\cos(2\theta_{\bv})+i\sin(2\theta_{\bv})X_{\bv}\right)|\vec{0}\rangle\\
    &=\sum_{S\subseteq \mathcal{A}_A}\left(\prod_{\bv\in \mathcal{A}_A\backslash S} \cos(2\theta_{\bv})\right)\left(\prod_{\bv\in S}\sin(2\theta_{\bv})\right)i^{|S|}\langle\vec{0}|X_{\sum_{\bv\in S}\bv}|\vec{0}\rangle\\
    &=\sum_{S\subseteq \mathcal{A}_A}\left(\prod_{\bv\in \mathcal{A}_A\backslash S} \cos(2\theta_{\bv})\right)\left(\prod_{\bv\in S}\sin(2\theta_{\bv})\right)i^{|S|}\prod_{j=1}^n\left(\frac{1+(-1)^{\sum_{\bv\in S}b_j}}{2}\right)\;, \label{eq:z_a-as-sine-cosine-series}
\end{align}
 where we first consider anti-commuting relations, then $Z_A|\vec{0}\rangle =|\vec{0}\rangle$. Then, we expand the exponential as a sum of sine and cosine and rewrite the product of the sum as a sum over the product as we did in Eqs.~(\ref{eq:correlation_as_as_product_ofexp1}-\ref{eq:correlation_as_as_product_ofexp2}). Finally, in the last equality we notice the terms that are non-zero are those the exponent of the $-1$ term is zero (that corresponds to the product of $X$ being equal to identity), i.e.
 \begin{equation}
     \prod_{j=1}^n\left(\frac{1+(-1)^{\sum_{\bv\in S}b_j}}{2}\right)=\begin{cases}
         1 &{\rm if }\;  \sum_{\bv\in S}\bv=\vec{0}\; {\rm mod}\; 2\,,\\
         0 & {\rm else}\,.
     \end{cases}
 \end{equation}
 so remaining terms must satisfy $\sum_{\bv\in S}b_j=0$ mod $2$ for all $j\in [n]$.

Notice that $|S|$ is even for all $S\subseteq\mathcal{A}_A$ satisfying $\sum_{\bv\in S}\bv=\vec{0}$ mod $2$. Now, let us consider $1$- and $2$-qubits gate IQP circuit with all-to-all topology. For the single qubit gates, we denote $x_j\in\{0,1\}$ as the bit indicating whether $S$ contains the single qubit gate that acts on the qubit $j$ or not. Notice that we only consider anti-commuting gates since $S\subseteq\mathcal{A}_A$ so $j\in A$. Similarly for the $2$-qubit gates, $x_{jk}$ indicates whether $S$ contains the $2$-qubit gate associated with qubits $j\in A$ and $k\in A^c$ or not (since anti-commuting $2$-qubit gates needs one element in $A$ and one in $A^c$). Let us make this more explicit by defining $\vec{e}_j$ as the unit vector with component $1$ in the $j$-th element (so other components are set to $0$). Therefore, $\forall j\in A$ we have $x_j=1$ if $\vec{e}_j\in S$ and $x_j=0$ otherwise. Similarly for the $2$-qubit gate, $\forall j\in A$ and $\forall k\in A^c$ we have $x_{jk}=1$ if $\vec{e}_j+\vec{e}_k\in S$  and $x_{jk}=0$ otherwise.
Therefore, we have
\begin{equation} \label{eq:sum_b_in_s-proof}
    \vec{v}:=\sum_{\bv\in S}\bv=\sum_{j\in A}x_j\vec{e}_j+\sum_{j\in A}\sum_{k\in A^c}x_{jk}(\vec{e}_j+\vec{e}_k)=\vec{0} \text{ mod }2\;,
\end{equation}
where we use the notation $\vec{v}$ for ease of notation and $\vec{v}$ must be equal to $\vec{0}$ mod $2$. This leads to a system of $n$ equations obtained by taking the scalar product with each $\vec{e}_j$ for all $j\in [n]$ (i.e. one equation per component in Eq.~\eqref{eq:sum_b_in_s-proof}). Therefore, for each $\vec{e}_j$ with $j\in A$, this reads as
\begin{equation}
   \vec{e}_j\cdot\vec{v}= x_j+\sum_{k\in A^c}x_{jk}=0 \text{ mod } 2 \;.
\end{equation} 
And for each $\vec{e}_k$, $k\in A^c$, we have 
\begin{equation}
    \vec{e}_k\cdot\vec{v}=\sum_{j\in A}x_{jk}=0\text{ mod } 2\;.
\end{equation}
Thus, the cardinality of $S$, which is simply the sum of all $x_j$ with $j\in A$ and $x_{jk}$ for all $j\in A$ and all $k\in A^c$, is even:
\begin{equation}
    |S|=\sum_{j\in A}x_j+\sum_{j\in A}\sum_{k\in A^c}x_{jk}=0 \text{ mod } 2\;.
\end{equation} 
Therefore, $i^{|S|}\in\{-1,+1\}$ and we only have real terms in the sum of Eq.~\eqref{eq:z_a-as-sine-cosine-series}. 
 
 Now, we have a sum of orthogonal trigonometric functions (Pauli paths orthogonality with symmetric distribution) with coefficients $\pm1$ so the second moment with perturbation $r$ around $\vec{0}$ can be written as  
 \begin{equation}
     \Ebb_{\thv}[z_A(\thv)^2]=\sum_{S\subseteq \mathcal{A}_A}\left( \Ebb_{\theta}[\cos^2(2\theta)]\right)^{|\mathcal{A}_A|-|S|}\left(\Ebb_{\theta}[\sin^2(2\theta)]\right)^{|S|} \prod_{j=1}^n\left(\frac{1+(-1)^{\sum_{\bv\in S}b_j}}{2}\right)\;,
 \end{equation}
 and the first order expectation kills all the sines terms i.e. only $S=\emptyset$ survives:
 \begin{equation}
    \Ebb_{\thv}[z_A(\thv)] = \Ebb_{\theta}[\cos(2\theta)]^{|\mathcal{A}_A|}.
 \end{equation}
 Now, notice that $\Ebb_\theta[\cos^2(2\theta)]\geq \Ebb_\theta[\cos(2\theta)]^2$ (due to positivity of variance). Moreover, all the terms in the second moment are positive, so by keeping only the term $S=\emptyset$ (which is the largest with identity initialization) in the second moment, we can still get a positive variance lower-bound given by
 \begin{align}
     \Var_{\thv}[z_A(\thv)]&\geq \Ebb_\theta[\cos^2(2\theta)]^{|\mathcal{A}_A|}-\Ebb_{\theta}[\cos(2\theta)]^{2|\mathcal{A}_A|}\\
     &= \left(\frac{1+\sinc(4r)}{2}\right)^{|\mathcal{A}_A|}-\left(\sinc(2r)^2\right)^{|\mathcal{A}_A|}\;. \label{eq:var_za_around_identity-sinc-form}
 \end{align}

Therefore, we can use  Lemma~\ref{lemma:lower-bound-product-cosine-variance} given that $r\leq 1$ to obtain
\begin{equation}
    \Var_{\thv}[z_A(\thv)]\geq \frac{16r^4m_A}{45}\exp(-cm_Ar^2)\;,
\end{equation}
where we defined $m_A=|\mathcal{A}_A|$ as the number of gates affecting $Z_A$ non-trivially.
Now, from the assumptions on $r$, we have
\begin{equation}
   r^2\in\OC(\log(m_A)/m_A)\implies \exp(-cm_Ar^2)\in\Omega(1/\poly(m_A))\;,
\end{equation}
together with
\begin{equation}
     r^2\in\Omega(1/\poly(m_A))\implies r^4m_A\in\Omega(1/\poly(m_A))\;, 
\end{equation}
which finally leads to:
\begin{equation}
    \Var_{\thv}[z_A(\thv)]\in\Omega(1/\poly(m_A))\;.
\end{equation}

Although this bound is still quite general, for the all-to-all case, we will have $|A|$ single qubit gates and $|A|\cdot|A^c|=|A|(n-|A|)$ two qubit gates that anti-commute with a given $Z_A$. Therefore, we have $m_A=|\mathcal{A}_A|=|A|(1+n-|A|)\leq\left(\frac{n+1}{2}\right)^2$ which completes the proof of the proposition. 
\end{proof}

\section{Evading barren plateaus with small-angle initializations: General non-linear loss and MMD loss variance lower bounds}
\label{app:patch-variance-curvature}
A unifying approach to investigating the loss variance in the vicinity of a fixed point is to analyze its local curvature, i.e., the second-order derivatives at that point. This analytical framework has been introduced and effectively deployed for loss functions based on standard expectation values in Ref.~\cite{mhiri2025unifying} analyzing various circuit architectures. However, for certain applications (including those in quantum machine learning) we need to consider non-linear losses. 

In this section, we extend curvature-based analyses to a broad class of general non-linear loss functions, with a specific focus on the MMD loss used in generative modeling. In particular, we provide a formal version of Theorem~\ref{thm:MMD-variance-patch-lower-bound-curvature-informal} together with its proof. 

\begin{theoremappendix}[Lower bound guarantee of an arbitrary non-linear loss and MMD loss with sufficient curvature, formal]\label{thm-sup:patch-variance-double-derivative}
Consider an arbitrary non-linear loss $\LC(\thv)$ and any quantum process parametrized with the set of parameters $\thv$. Further suppose that the parameters are uniformly distributed over a hypercube patch with the patch center at $\thv^*$ and the patch half-width $r$
\begin{equation}
\thv\sim \thv^*+\mathrm{Unif}\big([-r,r]^m\big) \;.
\end{equation}

Assume that there exists a parameter in the set $\theta_\alpha \in \thv$ such that the following derivative conditions are satisfied
\begin{enumerate}
    \item For all $k\ge 1$ the $2k$-order derivatives withe respect to $\theta_\alpha$ satisfies
    \begin{align}\label{eq:assumption-even-deriv-bound-thm}
\left|\left.\left(\frac{\partial^{2k}\mathcal{L}(\thv)}{\partial\theta_\alpha^{2k}}\right)\right|_{\theta_\alpha=\theta_\alpha^*}\right|
\le a\,\gamma^{2k} \;\;,
    \end{align}
    for some constants $a>0$ and $\gamma>0$ such that $r\le \frac{3}{2\gamma}$.
    \item The mixed fourth derivatives follow
\begin{equation}\label{eq:assumption-fourth-deriv-bound-thm}
\left|\frac{\partial^4 \mathcal{L}(\thv)}{\partial\theta_\alpha^2\,\partial\theta_j^2}\right|
\le \gamma_j\;\;,
\end{equation}
for all $j\neq \alpha$ with some constants $\gamma_j$.
\end{enumerate}

Denote the local curvature at the patch center as
\begin{equation}
c_\alpha(\thv^*):=\left|\left.\left(\frac{\partial^2 \mathcal{L}(\thv)}{\partial\theta_\alpha^2}\right)\right|_{\thv=\thv^*}\right|\;\;,
\end{equation}
and further denote 
\begin{equation}
\beta_1=\sum_{j\neq\alpha}\frac{\gamma_j}{6}\quad {\rm and}\quad \beta_2 := \frac{a^2\gamma^6}{6} \;\;.
\end{equation}

\smallskip

If $r$ satisfies
\begin{equation}\label{eq:r-condition-thm}
r^2 \le \Delta\,\frac{c_\alpha(\thv^*)^2}{2\beta_1 c_\alpha(\thv^*)+\beta_2},
\end{equation}
with some constant $\Delta\in(0,1)$, then the loss variance can be lower bounded as\begin{equation}\label{eq:variance-lb-thm}
\Var_{\thv}\!\big[\mathcal{L}(\thv)\big]
\;\ge\; \Ebb_{\thv}\!\big[\Var_{\theta_\alpha}[\mathcal{L}(\thv)]\big]
\;\ge\; (1-\Delta)\,\frac{r^4}{45}\,c_\alpha(\thv^*)^2.
\end{equation}

In particular, the (local) MMD loss in Eq.~\eqref{eq:MMD-loss} and the IQP circuit in Eq.~\eqref{eq:IQP-circuit} with single- and two-qubit gates satisfy the derivative conditions in Eq.~\eqref{eq:assumption-even-deriv-bound-thm} and Eq.~\eqref{eq:assumption-fourth-deriv-bound-thm} with
\begin{equation}
a=4p_\sigma(1-p_\sigma),\qquad \gamma=4,\qquad 
\beta_1=192(m-1)p_\sigma^2(1-p_\sigma)^2,\qquad
\beta_2=\frac{a^2\gamma^6}{6},\qquad 0\leq \Delta\leq 3/8 \;\;.
\end{equation}
Lastly, if the curvature is at least polynomially large
\begin{align}
    c_\alpha(\thv^*) \in \Omega\left(\frac{1}{\poly(n)}\right) \;\;,
\end{align}
then there exists a patch with half-width $r\in \mathcal{O}(1/\poly(n))$ such that
\begin{equation}
\Var_{\thv}[\mathcal{L}(\thv)]\in \Omega\!\left(\frac{1}{\poly(n)}\right).
\end{equation}

\end{theoremappendix}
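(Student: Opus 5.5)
The plan is to reduce the full variance to a one-dimensional variance in the distinguished direction $\theta_\alpha$ and then extract the curvature from a Taylor expansion. First I will apply the law of total variance, conditioning on $\thv_{-\alpha}$ (all parameters except $\theta_\alpha$):
\begin{equation}
\Var_{\thv}[\LC]=\Ebb_{\thv_{-\alpha}}\big[\Var_{\theta_\alpha}[\LC]\big]+\Var_{\thv_{-\alpha}}\big[\Ebb_{\theta_\alpha}[\LC]\big]\ge \Ebb_{\thv_{-\alpha}}\big[\Var_{\theta_\alpha}[\LC]\big].
\end{equation}
This gives the first inequality in Eq.~\eqref{eq:variance-lb-thm}.

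Next, I fix $\thv_{-\alpha}$, write $x=\theta_\alpha-\theta_\alpha^*\sim{\rm Unif}[-r,r]$, and split $\LC$ into its even and odd parts in $x$. Because the distribution of $x$ is symmetric, the two parts are uncorrelated, so $\Var_x[\LC]=\Ebb[\LC_{\rm odd}^2]+\Var[\LC_{\rm even}]\ge\Var[\LC_{\rm even}]$. I then expand $\LC_{\rm even}=\LC_0+\tfrac12 c(\thv_{-\alpha})x^2+R(x)$, where $c(\thv_{-\alpha})=\partial^2_{\theta_\alpha}\LC$ evaluated at $\theta_\alpha^*$ and $R=\sum_{k\ge2}\LC^{(2k)}x^{2k}/(2k)!$. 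Since $\Var[x^2]=r^4(\tfrac15-\tfrac19)=\tfrac{4r^4}{45}$, the quadratic term alone contributes $c^2r^4/45$. For the cross term and $\Var[R]$ I will use the bound \eqref{eq:assumption-even-deriv-bound-thm} (read as holding for every $\thv_{-\alpha}$ in the patch, as the MMD verification will ensure). Together with $r\le 3/(2\gamma)$, which makes $\sum_{k\ge2}(\gamma r)^{2k}/(2k)!$ dominated by a geometric series led by the $k=2$ term, this should give
\begin{equation}
\Var[\LC_{\rm even}]\ge \tfrac{r^4}{45}\big(c^2-\beta_2 r^2\big)
\end{equation}
up to the stated constants. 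For this I expand $\Var[\tfrac12 c x^2+R]$, bound the covariance by Cauchy--Schwarz, and absorb it using $ab\le$ (small)$\cdot a^2$ + (large)$\cdot b^2$. The factor $a^2\gamma^6/6$ should come out naturally from the $x^6$ order of the cross term.

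Third, I pass from $c(\thv_{-\alpha})$ to $c_\alpha(\thv^*)$. I will Taylor-expand $c$ in each $\theta_j$, $j\ne\alpha$, to second order. The linear terms average to zero over the symmetric patch, and the quadratic remainder is controlled by the mixed fourth derivatives \eqref{eq:assumption-fourth-deriv-bound-thm} with $\Ebb[x_j^2]=r^2/3$. This yields $|\Ebb c-c_\alpha(\thv^*)|\le \beta_1 r^2$, and then $\Ebb[c^2]\ge(\Ebb c)^2\ge c_\alpha^2-2\beta_1 r^2 c_\alpha$. Combining,
\begin{equation}
\Ebb[\Var_{\theta_\alpha}\LC]\ge\tfrac{r^4}{45}\big(c_\alpha^2-(2\beta_1c_\alpha+\beta_2)r^2\big),
\end{equation}
and the condition \eqref{eq:r-condition-thm} turns the bracket into $(1-\Delta)c_\alpha^2$.

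For the MMD/IQP specialization I will use Lemma~\ref{lemma:correlators-fourier-expansion}. Each $z_A$ is a trigonometric polynomial of frequency $2$ in each parameter with sup-norm at most $1$, so each term $(z_A-t_A)^2$ has frequency at most $4$ in $\theta_\alpha$. Bernstein's inequality then bounds its $k$-th derivative by $4^k$ times its amplitude. Only terms with $\alpha\in A$ depend on $\theta_\alpha$; a single-qubit $\theta_\alpha$ anticommutes exactly with the $Z_A$ having $\alpha\in A$. Summing $w_A$ over $A\ni\alpha$ (and $A\ni j$ for the mixed derivatives) gives prefactors of order $p_\sigma(1-p_\sigma)$ and its square, which produces $a$, $\gamma=4$ and $\beta_1\propto(m-1)p_\sigma^2(1-p_\sigma)^2$. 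The admissible range of $\Delta$ comes from checking compatibility of $r\le 3/8$ with \eqref{eq:r-condition-thm}. Finally, if $c_\alpha\in\Omega(1/\poly(n))$, I choose $r^2$ equal to the right-hand side of \eqref{eq:r-condition-thm}. Since $\beta_1,\beta_2\in\OC(\poly(n))$, this $r$ is inverse-polynomial, so $r^4c_\alpha^2\in\Omega(1/\poly(n))$.

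The main obstacle I expect is the remainder bookkeeping in the second step: keeping the cross term between $\tfrac12cx^2$ and $R$ and the variance of $R$ under control with clean constants, uniformly over $\thv_{-\alpha}$. If the Cauchy--Schwarz route is too lossy, I would instead first truncate $R$ at order $x^4$ using the Lagrange remainder, and only then bound the tail.
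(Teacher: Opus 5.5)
Your proposal is correct and follows the paper's proof essentially step for step: reduction to $\Ebb_{\thv}[\Var_{\theta_\alpha}[\LC]]$, a single-direction curvature bound, Jensen, a one-parameter-at-a-time second-order Taylor bound over the remaining parameters via the mixed fourth derivatives, then the condition on $r$. The only real differences are that you re-derive the bound $\Var_{\theta_\alpha}[\LC]\ge \frac{r^4}{45}c^2-\frac{a^2\gamma^6r^6}{270}$, which the paper imports as Proposition~\ref{prop:variance-lower-bound-with-bounded-derivative-prop3warmstart}, and you use Bernstein's inequality instead of $\partial_{\theta_\alpha}^2 z_A=-4z_A$ for the MMD constants. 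For the exact $a^2\gamma^6r^6/270$ term, bound $2\Cov(\tfrac12 c x^2,R)$ directly: it is linear in $|c|$, with leading term $|c|a\gamma^4r^6/315$ and a geometric tail of ratio at most $1/16$ when $r\le 3/(2\gamma)$. Then close with $|c|\le a\gamma^2$ from the $k=1$ hypothesis, rather than Young's inequality, which would erode the exact $c^2r^4/45$ coefficient.
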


For the rest of this appendix, we provide the proof of the theorem. The technical development of the proof is structured below.
\begin{itemize}
    \item Appendix~\ref{app:proof-prelimm-non-linear-loss} (Technical Preliminaries): We establish the necessary analytical foundations. This includes both restated results from Ref.~\cite{mhiri2025unifying} and original technical statements required for the non-linear extension.
    \item Appendix~\ref{app:proof_patch-variance-double-derivative} (General Non-Linear Case): We present the core proof of the variance-curvature relationship for arbitrary non-linear loss functions.
    \item Appendix~\ref{app:proof_mmd-patch-variance} (Application to MMD and IQP): We specialize the proof to the MMD loss architecture, specifically focusing on the structure of IQP circuits to obtain tighter variance bounds.
\end{itemize}

\subsection{Preliminaries: Key technical statements}\label{app:proof-prelimm-non-linear-loss}

Here we provide key technical necessarily to prove the variance lower bound of general non-linear and MMD losses. Some of these statements are derived in details in Ref.~\cite{mhiri2025unifying} and here we are merely restating them.

\begin{proposition}[Multiple-variable variance decomposition; Proposition~4 in Ref.~\cite{mhiri2025unifying}]\label{prop:var_decomp}
Consider a multivariable function $f(\vec{\th})=f(\th_1,\dots,\th_\nparams)$ depending on $\nparams$ parameters such that $f: \mathbb{R}^\nparams \rightarrow \mathbb{R}$. We assume that each parameter is sampled independently from some distribution $\PC$ i.e., $\thv \sim \PC^{\otimes m}$. Then for any permutation $\pi$, 
the variance of the function $f$ can be expressed as 
\begin{align}
    \Var_{\vec{\th} \sim \mathcal{P}^{\otimes \nparams}}\left[f(\vec{\th})\right] = \sum_{k=1}^\nparams \Ebb_{\pi(\nparams),\dots,\pi(k+1)}[\Var_{\pi(k)}[\Ebb_{\pi(k-1),\dots,\pi(1)}[f(\vec{\th})]]] \;\;.
\end{align}
\end{proposition}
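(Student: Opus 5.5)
The plan is a telescoping (martingale) decomposition of the variance along a fixed ordering of the coordinates. The identity only needs independence of the coordinates and square integrability of $f$; the permutation $\pi$ just relabels the coordinates, so I will prove it for the identity permutation. For $k=0,\dots,m$, define the partial conditional expectations
\begin{equation}
F_k(\theta_{k+1},\dots,\theta_m):=\Ebb_{\theta_1,\dots,\theta_k}[f(\thv)] .
\end{equation}
Thus $F_0=f$, $F_m=\Ebb[f]$ is a constant, and by Fubini (independence) $F_k=\Ebb_{\theta_{k}}[F_{k-1}]$.

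\textbf{Step 1 (telescoping).} Write
\begin{equation}
f-\Ebb[f]=\sum_{k=1}^{m}\big(F_{k-1}-F_k\big)=:\sum_{k=1}^m D_k .
\end{equation}
Here $D_k$ depends only on $(\theta_k,\dots,\theta_m)$ and satisfies $\Ebb_{\theta_k}[D_k]=0$ for every fixed $(\theta_{k+1},\dots,\theta_m)$.

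\textbf{Step 2 (orthogonality of increments).} Take $j<k$. Then $D_k$ does not depend on $\theta_j$, and condition on $(\theta_{j+1},\dots,\theta_m)$. Averaging $D_jD_k$ over $\theta_j$ gives $D_k\,\Ebb_{\theta_j}[D_j]=0$, so $\Ebb[D_jD_k]=0$. Squaring the telescoped sum and taking the full expectation therefore gives
\begin{equation}
\Var[f]=\sum_{k=1}^m\Ebb[D_k^2].
\end{equation}

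\textbf{Step 3 (identify each term).} With $(\theta_{k+1},\dots,\theta_m)$ fixed, $F_k$ is exactly the $\theta_k$-mean of $F_{k-1}$. Hence
\begin{equation}
\Ebb_{\theta_k}[D_k^2]=\Var_{\theta_k}[F_{k-1}]=\Var_{\theta_k}\big[\Ebb_{\theta_{k-1},\dots,\theta_1}[f]\big].
\end{equation}
$D_k$ does not depend on $\theta_1,\dots,\theta_{k-1}$. So taking the remaining expectation over $\theta_{k+1},\dots,\theta_m$ yields the $k$-th summand $\Ebb_{k+1,\dots,m}[\Var_k[\Ebb_{k-1,\dots,1}[f]]]$. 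Replacing indices by $\pi(\cdot)$ gives the stated formula for arbitrary $\pi$.

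\textbf{Main obstacle.} The only delicate point is the bookkeeping in Step 2: one must check that $\Ebb_{\theta_j}[D_j]=0$ holds pointwise in the later variables. This is where independence, i.e.\ the product measure $\PC^{\otimes m}$, is used, and Fubini applies since $f$ is bounded in our applications (or $L^2$ in general).
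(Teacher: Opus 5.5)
Your proof is correct and complete. The increments $D_k=F_{k-1}-F_k$ are orthogonal because $\Ebb_{\theta_j}[D_j]=0$ for fixed $(\theta_{j+1},\dots,\theta_m)$, while every later increment $D_k$ ($k>j$) does not depend on $\theta_j$. The identity $\Ebb_{\theta_k}[D_k^2]=\Var_{\theta_k}[F_{k-1}]$ then pins down each summand exactly.

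The paper gives no proof of this proposition; it restates it from Ref.~\cite{mhiri2025unifying}. The natural comparison is therefore the standard inductive route. Apply the law of total variance conditioning on all coordinates except $\theta_{\pi(1)}$:
$\Var_{\thv}[f]=\Ebb_{\pi(m),\dots,\pi(2)}[\Var_{\pi(1)}[f]]+\Var_{\pi(m),\dots,\pi(2)}[\Ebb_{\pi(1)}[f]]$.
Then recurse on the second term with $f$ replaced by $\Ebb_{\pi(1)}[f]$. That route is shorter if the law of total variance is taken as given. However, it leaves implicit what your argument makes explicit: the $m$ summands are the squared norms of mutually orthogonal martingale differences (a Doob/Efron--Stein-type decomposition), obtained in one pass rather than by induction.

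Your proof also uses only that the law of $\thv$ is a product measure, never that the marginals coincide. It therefore covers directly the paper's actual use, $\thv\sim\thv^*+\mathrm{Unif}([-r,r]^m)$, where the centers $\theta_j^*$ differ across coordinates for the unbiased and data-dependent initializations. The stated i.i.d.\ version also suffices there after shifting by $\thv^*$.
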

\begin{proposition}[Variance lower bound of a function with bounded derivative; Proposition~3 in Appendix~B of Ref.~\cite{mhiri2025unifying}]
\label{prop:variance-lower-bound-with-bounded-derivative-prop3warmstart}

  Let $f(\thv)$ be a function of $m$ parameters $\thv$. If there exist 
positive constants $a$ and $\gamma$ such that
\begin{equation}   \label{eq:assumption-beta-warm-start-1st-constant} \left|\left.\frac{\partial^{2k} f(\thv)}{\partial\theta_{\alpha}^{2k}}\right|_{\theta_{\alpha}=\theta_{\alpha}^*}\right|\leq a\gamma^{2k}\;,
\end{equation}
 for all $k\geq1$ and if $r$ satisfies $r\leq\frac{3}{2\gamma}$ , then we have
\begin{equation} \label{eq:bound-single-param-variance-general-warm-start-2}
    \Var_{\theta_\alpha}[f(\thv)]\geq \frac{r^4}{45}\left(\left.\frac{\partial^{2} f(\thv)}{\partial\theta_{\alpha}^{2}}\right|_{\theta_{\alpha}=\theta_{\alpha}^*}\right)^2-\frac{a^2\gamma^6r^6}{270}\;.
\end{equation}
\end{proposition}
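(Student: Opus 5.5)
The plan is to treat $f$ as a function of the single variable $\theta_\alpha$ with all other parameters held fixed. Write $\theta_\alpha=\theta_\alpha^*+x$ with $x\sim\mathrm{Unif}([-r,r])$, and Taylor expand $f$ around $x=0$. The key structural fact is that for a symmetric uniform distribution the odd powers of $x$ have vanishing mean. However, they do not drop out of the variance, so I would organise the expansion as $f=f_0+f_1x+\tfrac12 f_2x^2+R(x)$ and compute $\Var_x[f]$ exactly at low order. The moments of $x$ are $\Ebb[x^2]=r^2/3$ and $\Ebb[x^4]=r^4/5$, which give
\begin{equation}
\Var_x\!\left[\tfrac12 f_2x^2\right]=\tfrac{f_2^2}{4}\left(\tfrac{r^4}{5}-\tfrac{r^4}{9}\right)=\tfrac{r^4}{45}f_2^2 .
\end{equation}
This explains the constant $1/45$ in Eq.~\eqref{eq:bound-single-param-variance-general-warm-start-2}.

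Next I would split $f$ into its even and odd parts in $x$, writing $f=f_e+f_o$. Because $x$ is symmetric, $\Cov[f_e,f_o]=0$, so $\Var[f]=\Var[f_e]+\Ebb[f_o^2]\geq\Var[f_e]$. This step discards all odd-order derivatives, which is why only the even-derivative hypothesis \eqref{eq:assumption-beta-warm-start-1st-constant} is needed. For the even part, write $f_e=f_0+\tfrac12 f_2x^2+R_e(x)$ with $R_e(x)=\sum_{k\geq2}f_{2k}x^{2k}/(2k)!$. Then
\begin{equation}
\Var[f_e]\geq \Var\!\left[\tfrac12 f_2x^2\right]+2\Cov\!\left[\tfrac12 f_2x^2,R_e\right]\geq \tfrac{r^4}{45}f_2^2-2\sqrt{\tfrac{r^4}{45}f_2^2}\sqrt{\Var[R_e]},
\end{equation}
and one can alternatively use $(\sqrt{A}-\sqrt{B})^2\geq A-2\sqrt{AB}$.

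That route produces a cross term that is linear in $|f_2|$, whereas the target bound is clean: $A$ minus a term of size $r^6$. So I would instead aim to bound the covariance term directly. Both $x^2$ and $x^{2k}$ are increasing in $|x|$, so $\Cov[x^2,x^{2k}]\geq0$. The signs of the $f_{2k}$ are arbitrary, however, so I would bound $|\Cov[f_2x^2/2,R_e]|$ termwise using $|f_{2k}|\leq a\gamma^{2k}$ and $\Cov[x^2,x^{2k}]\leq\Ebb[x^{2k+2}]\leq r^{2k+2}$. The dominant contribution is the $k=2$ term, which scales like $|f_2|a\gamma^4r^6$. Summing the series requires $\gamma r\leq 3/2$ so that it is controlled by its first term.

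I expect the main obstacle to be this step: converting the mixed cross term $|f_2|\,a\gamma^4 r^6$ into the pure form $a^2\gamma^6r^6/270$ claimed in the statement. My first attempt would be to note that $|f_2|\leq a\gamma^2$, which is the $k=1$ case of \eqref{eq:assumption-beta-warm-start-1st-constant}. Then $|f_2|a\gamma^4r^6\leq a^2\gamma^6 r^6$, and the constant comes from the moments $\Ebb[x^6]=r^6/7$ together with the $1/(2!\,4!)$ Taylor prefactors, using $\Var[R_e]\geq0$ to drop the pure-remainder variance. The geometric tail over $k\geq3$ is absorbed using $\gamma r\leq3/2$, which is exactly where that hypothesis is consumed. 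If the constants do not close to $1/270$, I would tighten the estimate by computing $\Cov[x^2,x^4]=r^6(1/7-1/15)=8r^6/105$ exactly and bounding only the $k\geq3$ tail crudely.
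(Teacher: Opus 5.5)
The paper does not prove this proposition; it restates it from Ref.~\cite{mhiri2025unifying}, so there is no in-paper argument to compare with. Judged on its own, your architecture is right. The even/odd split gives $\Var[f]=\Var[f_e]+\Ebb[f_o^2]\ge\Var[f_e]$, so odd derivatives never enter. The term $h=\tfrac12 f_2x^2$ supplies exactly $\tfrac{r^4}{45}f_2^2$, and dropping $\Var[R_e]\ge 0$ leaves only $2\Cov[h,R_e]$ to control.

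The gap is the quantitative step, which does not close as you first set it up. With $\Cov[x^2,x^4]\le\Ebb[x^6]=r^6/7$, the $k=2$ term alone is $2\cdot\frac{1}{2!\,4!}\cdot\frac17\,|f_2f_4|r^6=\frac{|f_2f_4|r^6}{168}$, and $\frac{1}{168}>\frac{1}{270}$. Your fallback is therefore necessary: the exact value $\Cov[x^2,x^4]=\frac{8r^6}{105}$ gives $\frac{|f_2f_4|r^6}{315}\le\frac{a^2\gamma^6r^6}{315}$. That leaves a tail budget of only $\frac{1}{270}-\frac{1}{315}=\frac{1}{1890}$. The crude bound $\Cov[x^2,x^{2k}]\le r^{2k+2}$ from your chain overshoots this at $k=3$. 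There it contributes $\frac{(\gamma r)^2}{720}\,a^2\gamma^6r^6$, which at the allowed value $\gamma r=\frac32$ equals $\frac{a^2\gamma^6r^6}{320}$.

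What closes the argument is the exact covariance for uniform $x$,
\[
\Cov[x^2,x^{2k}]=\frac{4k\,r^{2k+2}}{3(2k+1)(2k+3)} .
\]
This gives $|2\Cov[h,R_e]|\le a^2\gamma^6r^6\sum_{k\ge2}T_k$, where $T_k=\frac{4k(\gamma r)^{2k-4}}{3(2k)!(2k+1)(2k+3)}$. Here $T_2=\frac{1}{315}$, and $T_{k+1}/T_k=\frac{(\gamma r)^2}{2k(2k+5)}\le\frac{(\gamma r)^2}{36}\le\frac{1}{16}$ when $\gamma r\le\frac32$. Hence $\sum_{k\ge2}T_k\le\frac{16}{15}\cdot\frac{1}{315}<\frac{7}{6}\cdot\frac{1}{315}=\frac{1}{270}$. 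Alternatively, bounding only the $k\ge 3$ terms by $\Ebb[x^{2k+2}]=\frac{r^{2k+2}}{2k+3}$ gives a tail of about $3.6\times10^{-4}$, which also fits inside $\frac{1}{1890}$.

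So the hypothesis $\gamma r\le\frac32$ is used to make the tail a small fraction of the leading cross term. It is not needed for convergence, which holds for every $r$ because $\sum_k a(\gamma r)^{2k}/(2k)!=a\cosh(\gamma r)$.

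One further point should be made explicit. Every step uses $f_e(\theta_\alpha^*+x)=\sum_k f_{2k}x^{2k}/(2k)!$ on $[-r,r]$, i.e.\ real-analyticity of $f$ in $\theta_\alpha$ on the patch. The pointwise derivative bounds do not imply this. For example, take a smooth $f$ that equals $x^2$ for $|x|\le\delta$ and is constant for $|x|\ge2\delta$. It satisfies the hypotheses with $a=2/\gamma^2$ for every $\gamma$. Yet its variance vanishes as $\delta\to0$, while the right-hand side tends to $\frac{4r^4}{45}$ as $\gamma\to0$. So analyticity must be assumed. It does hold in the paper's application, because the MMD loss is a trigonometric polynomial in each angle.
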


Notably, we introduce the following proposition, which constitutes an original derivation central to our analysis of the initialization landscape.

\begin{proposition}[Distance between average and fixed point of parametrized functions]
\label{prop:distance-average-fixed-point-Taylor}
Let $f(\thv)=f(\theta_j,\theta_{j+1},..,\theta_m)$ be a parametrized function such that
\begin{equation}
    \left|\frac{\partial^2 f(\thv)}{\partial\theta_j^2}\right|\leq \gamma_j\;,
\end{equation}
for all $l\in\{j,j+1,\dots,m\}$, and let the parameters be independently distributed as $\th_l\sim\th_l^*+{\rm Unif}[-r,r]$. Then, the following upper-bound holds
\begin{equation}
   |\Ebb_{j,j+1,..,m}[f(\thv)]-f(\thv^*)| \leq \sum_{l=j}^m\gamma_l\frac{r^2}{6}\;.
\end{equation}
\end{proposition}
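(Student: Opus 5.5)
The plan is to reduce the multivariate statement to a sum of one-dimensional estimates by a second-order Taylor expansion in each coordinate separately. Independence of the parameters lets us telescope. Write $\thv=(\theta_j,\dots,\theta_m)$ and define the partially averaged functions
\begin{equation}
F_l(\theta_j,\dots,\theta_{l-1}):=\Ebb_{l,l+1,\dots,m}\big[f(\theta_j,\dots,\theta_{l-1},\theta_l,\dots,\theta_m)\big],
\end{equation}
with $F_{m+1}=f$ evaluated at the remaining free variables. The quantity to bound is $\Ebb_{j,\dots,m}[f]-f(\thv^*)$. I would write it as the telescoping sum
\begin{equation}
\sum_{l=j}^m \Big(\Ebb_{l,\dots,m}[f](\theta_j^*,\dots,\theta_{l-1}^*)-\Ebb_{l+1,\dots,m}[f](\theta_j^*,\dots,\theta_{l}^*)\Big).
\end{equation}
In each term the earlier coordinates are frozen at their centers, and the two pieces differ only by whether $\theta_l$ is averaged or set to $\theta_l^*$. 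By Fubini, each term equals $\Ebb_{l+1,\dots,m}\big[\Ebb_{\theta_l}[h]-h(\theta_l^*)\big]$, where $h(\theta_l)$ is $f$ with coordinates $<l$ at the center and coordinates $>l$ held fixed.

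The key step is the one-dimensional estimate for $h$. Taylor's theorem with Lagrange remainder gives
\begin{equation}
h(\theta_l^*+\epsilon)=h(\theta_l^*)+h'(\theta_l^*)\,\epsilon+\tfrac12 h''(\xi)\,\epsilon^2 .
\end{equation}
Averaging over $\epsilon\sim{\rm Unif}[-r,r]$ removes the linear term by symmetry. The hypothesis $|\partial^2 f/\partial\theta_l^2|\le\gamma_l$ holds uniformly, so it also holds for $h$. This yields
\begin{equation}
|\Ebb_{\theta_l}[h]-h(\theta_l^*)|\le \tfrac{\gamma_l}{2}\,\Ebb[\epsilon^2]=\tfrac{\gamma_l}{2}\cdot\tfrac{r^2}{3}=\gamma_l\tfrac{r^2}{6}.
\end{equation}
Because this bound is uniform in the frozen and remaining coordinates, it survives the outer expectation $\Ebb_{l+1,\dots,m}$. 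Summing over $l$ with the triangle inequality then gives the claim.

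The only delicate point is that the second-derivative bound must hold globally, or at least on the whole patch, rather than just at $\thv^*$. This is needed because the Lagrange point $\xi$ and the averaged coordinates range over the patch. I read the hypothesis that way, consistent with its statement without evaluation at $\thv^*$. I would also note the evident index typo in the hypothesis, where the bound is meant for $\partial^2/\partial\theta_l^2$ for every $l$. Measurability and interchange of expectations are routine, since $f$ is smooth and bounded on a compact patch.
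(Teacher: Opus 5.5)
Your proposal is correct and follows essentially the same route as the paper. The paper also uses a one-coordinate second-order Taylor expansion with Lagrange remainder: the linear term averages out, and $\Ebb[\epsilon^2]=r^2/3$ gives $\gamma_l r^2/6$. It then peels off coordinates one at a time by a recursive triangle inequality, freezing each at its center. Your explicit telescoping sum is that recursion written out, and your reading of the hypothesis as a uniform bound on $\partial^2 f/\partial\theta_l^2$ for every $l$ is the one the paper uses.
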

\begin{proof}
    First, let us show that 
    \begin{equation}\label{eq:distance-average-fixed-point-propproof-single}
        |\Ebb_{j}[f(\thv)]-f(\theta_j^*,\theta_{j+1},..,\theta_m)|\leq \frac{r^2}{6}\gamma_j\;,
    \end{equation}
    provided that $\gamma_j$ satisfies 
    \begin{equation} \label{eq:double-deriv-condition-gammaj-prop-distance-average}
        \left|\frac{\partial^2 f(\thv)}{\partial\theta_j^2}\right|\leq \gamma_j\;.
    \end{equation}
    Indeed, using Taylor expansion up to second order of $f(\thv)$ around $\theta_j=\theta_j^*$,
    \begin{align}
        f(\thv)-f(\theta_j^*,\theta_{j+1},..,\theta_m)=(\th_j-\th_j^*)\partial_jf(\theta_j^*,\theta_{j+1},..,\theta_m)+ \frac{(\th_j-\th_j^*)^2}{2}\partial_j^2f(\tilde{\theta}_j,\theta_{j+1},..,\theta_m)\;,
    \end{align}
    where $\tilde{\theta}_j=\th_j^*+c(\th_j-\th_j^*)$ for some constant $c\in (0,1)$.
    Thus, by taking the average with respect to $\th_j$, the first order term vanishes and we have 
    \begin{align}
        |\Ebb_j[f(\thv)]-f(\theta_j^*,\theta_{j+1},..,\theta_m)|&= \left|\Ebb_j\left[\frac{(\th_j-\th_j^*)^2}{2}\partial_j^2f(\tilde{\theta}_j,\theta_{j+1},..,\theta_m)\right]\right|\\
        &\leq\Ebb_j\left[\frac{(\th_j-\th_j^*)^2}{2}\gamma_j\right]\\
        &=\frac{r^2}{6}\gamma_j\;,
    \end{align}
    where we used Eq.~\eqref{eq:double-deriv-condition-gammaj-prop-distance-average} for the inequality. So, given that Eq.~\eqref{eq:double-deriv-condition-gammaj-prop-distance-average} is satisfied, we indeed recover Eq.~\eqref{eq:distance-average-fixed-point-propproof-single}.
    Now, for a function $f(\thv)=f(\theta_j,\theta_{j+1},..,\theta_m)$, we can recursively use triangle inequality as follows 
\begin{align}
   |\Ebb_{j,j+1,..,m}[f(\thv)]-f(\thv^*)|&\leq \underbrace{|\Ebb_{j,j+1,..,m}[f(\thv)]-\Ebb_{j+1,..,m}[f(\theta_j^*,\theta_{j+1},..,\theta_m)]|}_{\leq \gamma_j\, r^2/6 }+|\Ebb_{j+1,..,m}[f(\theta_j^*,\theta_{j+1},..,\theta_m)]-f(\thv^*)|\;,
\end{align}
together with Eq.~\eqref{eq:distance-average-fixed-point-propproof-single} such that
\begin{equation}
   |\Ebb_{j,j+1,..,m}[f(\thv)]-f(\thv^*)| \leq \sum_{l=j}^m\gamma_l\frac{r^2}{6}\;.
\end{equation}
\end{proof}

\subsection{Proof of Theorem~\ref{thm-sup:patch-variance-double-derivative}: Lower bound guarantee of an arbitrary non-linear loss and MMD loss with sufficient curvature}

\subsubsection{General non-linear loss part of the theorem}\label{app:proof_patch-variance-double-derivative}

\begin{proof}
The variance of $\LC(\thv)$ with respect to the set of $m$ parameters, $\thv$, can be decomposed using  Proposition~\ref{prop:var_decomp} as
\begin{equation}
    \Var_{\thv}[\mathcal{L}(\thv)]=\sum_{k=1}^m\Ebb_{\pi(m),...,\pi(k+1)}[\Var_{\pi(k)}[\Ebb_{\pi(k-1),...,\pi(1)}[\mathcal{L}(\thv)]]]\;,
\end{equation}
where $\Ebb_{i_1, \dots, i_l}$ (and similarly $\Var_{i_1, \dots, i_l}$) denotes the expectation (and variance) over the parameter subset $\{\theta_{i_1}, \dots, \theta_{i_l}\}$. Here, $\pi(\cdot)$ represents a permutation operator over the original index set $\{1, \dots, m\}$; importantly, the total variance remains invariant under any such relabelling.

Since each term in this summation is non-negative, we can derive a valid lower bound by isolating a single term $\Ebb_{\bar{\thv}}[\Var_{\theta_{\alpha}}[\mathcal{L}(\thv)]]$ corresponding to an arbitrary parameter $\theta_\alpha$ where $\bar{\thv}$ is the set of parameters excluding $\theta_\alpha$. Due to the permutation invariance of the indices, $\alpha$ can be chosen to represent any parameter in the set, yielding
\begin{equation}
    \Var_{\thv}[\mathcal{L}(\thv)]\geq \Ebb_{\thv}[\Var_{\theta_{\alpha}}[\mathcal{L}(\thv)]] \;\;.
\end{equation}

To further analyze the quantity $\Ebb_{\thv}[\Var_{\theta_{\alpha}}[\mathcal{L}(\thv)]]$, we follow three primary technical steps.

\noindent\underline{Step~1: Bounding the variance.} We first find a lower bound on $\Var_{\theta_\alpha}[\mathcal{L}(\thv)]$ using Proposition~\ref{prop:variance-lower-bound-with-bounded-derivative-prop3warmstart} which states that if we can find some 
positive constants $a$ and $\gamma$ such that
\begin{equation}   \label{eq:assumption-beta-warm-start-1st-constant-2} \left|\left.\left(\frac{\partial^{2k} \mathcal{L}(\thv)}{\partial\theta_{\alpha}^{2k}}\right)\right|_{\theta_{\alpha}=\theta_{\alpha}^*}\right|\leq a\gamma^{2k}\;\;,
\end{equation}
 for all $k\geq1$ and if $r$ satisfies $r\leq\frac{3}{2\gamma}$ , then we have
\begin{equation} \label{eq:bound-single-param-variance-general-warm-start}
    \Var_{\theta_\alpha}[\mathcal{L}(\thv)]\geq \frac{r^4}{45}\left[\left(\left.\frac{\partial^{2} \mathcal{L}(\thv)}{\partial\theta_{\alpha}^{2}}\right)\right|_{\theta_{\alpha}=\theta_{\alpha}^*}\right]^2-\frac{a^2\gamma^6r^6}{270}\;.
\end{equation}

\medskip

\noindent\underline{Step~2: Bounding the expectation of the variance term.}
We perform the average over remaining parameters in Eq.~\eqref{eq:bound-single-param-variance-general-warm-start}, leading to
\begin{align}
\Ebb_{\bar{\thv}}[\Var_{\theta_{\alpha}}[\mathcal{L}(\thv)]] & \geq \Ebb_{\bar{\thv}} \left[ \frac{r^4}{45}\left[\left(\left.\frac{\partial^{2} \mathcal{L}(\thv)}{\partial\theta_{\alpha}^{2}}\right)\right|_{\theta_{\alpha}=\theta_{\alpha}^*}\right]^2-\frac{a^2\gamma^6r^6}{270}\right] \\
& = \frac{r^4}{45} \Ebb_{\bar{\thv}} \left[ \left[\left(\left.\frac{\partial^{2} \mathcal{L}(\thv)}{\partial\theta_{\alpha}^{2}}\right)\right|_{\theta_{\alpha}=\theta_{\alpha}^*}\right]^2 \right] - \frac{a^2\gamma^6r^6}{270} \\
& \geq \frac{r^4}{45} \left[ \Ebb_{\bar{\thv}} \left(\left.\frac{\partial^{2} \mathcal{L}(\thv)}{\partial\theta_{\alpha}^{2}}\right)\right|_{\theta_{\alpha}=\theta_{\alpha}^*}  \right]^2 - \frac{a^2\gamma^6r^6}{270} \;\;, 
\end{align}
where the last inequality is due to the positivity of the variance.

\medskip

\noindent\underline{Step~3: Expressing the bound in terms of local curvature.} To obtain the final bound in terms of the double derivative at the fixed point $\thv^*$ (i.e., the local curvature), we bound the difference between the above average and the curvature at the fixed point using reverse triangle inequality as follows 
\begin{align} \label{eq:bound-average-double-deriv-as-distance-with-fixed-point-warm-start}
    \left[ \Ebb_{\bar{\thv}} \left(\left.\frac{\partial^{2} \mathcal{L}(\thv)}{\partial\theta_{\alpha}^{2}}\right)\right|_{\theta_{\alpha}=\theta_{\alpha}^*}  \right]^2 \geq \left[\left|\left(\left.\frac{\partial^{2} \mathcal{L}(\thv)}{\partial\theta_{\alpha}^{2}}\right)\right|_{\thv=\thv^*}\right|-\left|\Ebb_{\thv}\left(\left.\frac{\partial^{2} \mathcal{L}(\thv)}{\partial\theta_{\alpha}^{2}}\right)\right|_{\theta_{\alpha}=\theta_{\alpha}^*}-\left.\left(\frac{\partial^{2} \mathcal{L}(\thv)}{\partial\theta_{\alpha}^{2}}\right)\right|_{\thv=\thv^*}\right|\right]^2\;\;.
\end{align}
Note that since $\left(\left.\frac{\partial^{2} \mathcal{L}(\thv)}{\partial\theta_{\alpha}^{2}}\right)\right|_{\theta_\alpha=\theta_\alpha^*}$ has the parameter $\theta_\alpha$ evaluated already at $\theta_\alpha^*$, we have $\Ebb_{\bar{\thv}} \left(\left.\frac{\partial^{2} \mathcal{L}(\thv)}{\partial\theta_{\alpha}^{2}}\right)\right|_{\theta_\alpha=\theta_\alpha^*} = \Ebb_{\thv} \left(\left.\frac{\partial^{2} \mathcal{L}(\thv)}{\partial\theta_{\alpha}^{2}}\right)\right|_{\theta_\alpha=\theta_\alpha^*}$.

Then, we use Proposition~\ref{prop:distance-average-fixed-point-Taylor} to bound the distance between the average and the fixed point $\thv^*$. In our case, we identify the function in Proposition~\ref{prop:distance-average-fixed-point-Taylor} as the double derivative of the loss with the parameter indexed $\alpha$ evaluated as $\theta_\alpha = \theta^*_\alpha$. That is, we have $f(\thv) = \left.\frac{\partial^2\LC(\thv)}{\partial\theta_\alpha^2}\right|_{\theta_\alpha = \theta_\alpha^*}$. Hence, by invoking the proposition, if there exists some constant $\gamma_j$ that satisfies
\begin{equation} \label{eq:bounded-4st-derivative-gamma-j-assumption-warm-start}
    \left|\left.\left(\frac{\partial^{4} \mathcal{L}(\thv)}{\partial\theta_{j}^{2}\partial\theta_{\alpha}^{2}}\right)\right|_{\theta_\alpha = \theta_\alpha^*}\right|\leq \gamma_j\;,
\end{equation}
then we have
\begin{equation}\label{eq:proof-var-curvature-00}
    \left|\Ebb_{\thv}\left[\left.\left(\frac{\partial^{2} \mathcal{L}(\thv)}{\partial\theta_{\alpha}^{2}}\right)\right|_{\theta_{\alpha}=\theta_{\alpha}^*}\right]-\left(\left.\frac{\partial^{2} \mathcal{L}(\thv)}{\partial\theta_{\alpha}^{2}}\right)\right|_{\thv=\thv^*}\right|\leq \sum_{j\neq\al} \gamma_j\frac{r^2}{6}\;.
\end{equation}

Therefore, combining this Eq.~\eqref{eq:proof-var-curvature-00} together with Eq.~\eqref{eq:bound-average-double-deriv-as-distance-with-fixed-point-warm-start} into Eq.~\eqref{eq:bound-single-param-variance-general-warm-start} leads to

\begin{equation}
\Var_{\thv}[\mathcal{L}(\thv)]\geq \frac{r^4}{45}\left(\left|\left(\left.\frac{\partial^{2} \mathcal{L}(\thv)}{\partial\theta_{\alpha}^{2}}\right)\right|_{\thv=\thv^*}\right|-\sum_{j\neq\alpha}\gamma_j \frac{r^2}{6}\right)^2-\frac{a^2\gamma^6r^6}{270}\;\;,
\end{equation}
provided that the following condition is satisfied
\begin{equation}
    \left|\left.\left(\frac{\partial^{2} \mathcal{L}(\thv)}{\partial\theta_{\alpha}^{2}}\right)\right|_{\thv=\thv^*}\right|\geq \sum_{j\neq\alpha}\gamma_j \frac{r^2}{6}\;\;.
\end{equation}

Before going into the final step, let us introduce some notation to simplify the expression. Denote $c_{\alpha}(\thv^*)=\left|\left(\left.\frac{\partial^{2} \mathcal{L}(\thv)}{\partial\theta_{\alpha}^{2}}\right)\right|_{\thv=\thv^*}\right|$, \,$\beta_1=\sum_{j\neq\alpha} \frac{\gamma_j}{6}$ and $\beta_2=\frac{a^2\gamma^6}{6}$. Now the above expressions can be written as: Provided that $c_{\alpha}(\thv^*)\geq r^2\beta_1$, the variance lower-bound follows
\begin{align}
    \Var_{\theta_\alpha}[\mathcal{L}(\thv)]&\geq\frac{r^4}{45}\left(\left(c_{\alpha}(\thv^*)-r^2\beta_1\right)^2-r^2\beta_2\right) \\
    &\geq\frac{r^4}{45}\left(c_{\alpha}(\thv^*)^2-(2\beta_1c_{\alpha}(\thv^*)+\beta_2)r^2\right)\;\;,\label{eq:proof-var-curvature-01}
\end{align}
where the last inequality comes from expanding the square and throwing away the last positive term in the expansion.

Lastly, to further ensure that the lower bound is informative, we require the difference in Eq.~\eqref{eq:proof-var-curvature-01} is non-negative, leading to the condition on $r$ as follows
\begin{align}\label{eq:proof-var-curvature-02} 
     r^2\leq \Delta\frac{ c_{\alpha}(\thv^*)^2}{2\beta_1c_{\alpha}(\thv^*)+\beta_2}\;\;,
\end{align}
for some arbitrary constant $0<\Delta<1$.

Crucially, if $r$ is chosen such that it respects the condition in Eq.~\eqref{eq:proof-var-curvature-02}, it also automatically satisfies satisfies $\beta_1r^2\leq c_{\alpha}(\thv^*)$ since 
\begin{equation}
  r^2\leq \Delta\frac{ c_{\alpha}(\thv^*)^2}{2\beta_1c_{\alpha}(\thv^*)+\beta_2}\leq \Delta\frac{ c_{\alpha}(\thv^*)}{2\beta_1} \leq \frac{ c_{\alpha}(\thv^*)}{\beta_1} \;\;.
\end{equation}

Altogether, given that $r$ satisfies Eq.~\eqref{eq:proof-var-curvature-02} and also $r\leq \frac{3}{2\gamma}$ (i.e., the condition from \textit{Step 1}), then we have the following lower-bound on the variance
\begin{equation}
    \Var_{\thv}[\mathcal{L}(\thv)]\geq \Ebb_{\thv}[\Var_{\theta_{\alpha}}[\mathcal{L}(\thv)]]\geq (1-\Delta)\frac{r^4}{45}c_{\alpha}(\thv^*)^2\;\;.
\end{equation} 
Hence, this completes the proof for the general non-linear loss of the theorem.
\end{proof}

\subsubsection{MMD loss part of the theorem}\label{app:proof_mmd-patch-variance}
\begin{proof} We specify the non-linear loss $\LC(\thv)$ in Theorem~\ref{thm-sup:patch-variance-double-derivative}  to be the MMD loss in Eq.~\eqref{eq:MMD-loss} 
\begin{align}
    \mathcal{L}(\thv)=\sum_{A\subseteq[n]} w_A \big(z_A(\thv)-t_A\big)^2 \;\;,
\end{align}
where, for the Gaussian kernel of bandwidth $\sigma$, the weights are given by $w_A=(1-p_\sigma)^{n-|A|}p_\sigma^{|A|}$ with $p_\sigma=\frac{1-e^{-1/(2\sigma^2)}}{2}$. We also further consider the IQP circuit architecture as defined in Eq.~\eqref{eq:IQP-circuit-general}.

\medskip

Crucially, we show that for the (local) MMD loss and the IQP circuit the required conditions in Eq.~\eqref{eq:assumption-even-deriv-bound-thm} and Eq.~\eqref{eq:assumption-fourth-deriv-bound-thm} are satisfied. In particular, we provide the constants $a$, $\gamma$, $\gamma_j$ and $\Delta$ for this specific setting.

\bigskip

\noindent\underline{Let us first find constants $a$ and $\gamma$ that satisfies Eq.~\eqref{eq:assumption-even-deriv-bound-thm}.} Consider the correlator $z_A(\thv)$ with a general IQP circuit in Eq.~\eqref{eq:correlator-expansion-fourier-lemma} from Lemma~\ref{lemma:correlators-fourier-expansion} 
\begin{align}
    z_A(\thv)
=\langle \vec{0}|U(\thv)Z_AU^\dagger(\thv)|\vec{0}\rangle=
\frac{1}{2^n}
\sum_{\zv\in\{0,1\}^n}
\exp\!\left(
2i\sum_{\bv\in\mathcal{A}_A}
\theta_{\bv} (-1)^{\bv\cdot\zv}
\right)\;\;,
\end{align}
where $\mathcal{A}_A$ denotes the set of indicator strings corresponding to rotations that anti-commute with $Z_A$, i.e. $\{X_{\bv},Z_A\}=0$.

One can observe that if the generator $X_{\alpha}$ associated with the parameter $\theta_\alpha$ (-- can be either one- or two-qubit gates) anti-commutes with a certain Pauli string  $Z_A$ then, by a direct computation we have 
\begin{equation} \label{eq:double-derivative-za-gives-4za}
    \frac{\partial^2 z_A(\thv)}{\partial\theta_{\alpha}^2}=-4z_A(\thv)\;,
\end{equation}
which could also be understood from $-[X_{\alpha},[X_{\alpha},Z_A]]=-4Z_A$. Therefore, we can bound the even order derivatives as follows
\begin{equation}   \left|\frac{\partial^{2k} z_A(\thv)}{\partial{\theta_{\alpha}^{2k}}}\right|=|(-4)^kz_A(\thv)|\leq 2^{2k}\;,
\end{equation}
where we used $|z_A(\thv)|\leq 1$ in the inequality.

Now, for the terms $z_A^2(\thv)$ in MMD, we use the product rule for derivatives as follows
\begin{align}
  \frac{\partial^{2k} z_A^2(\thv)}{\partial{\theta_{\alpha}^{2k}}}  
  &=\sum_{l=0}^{2k}\binom{2k}{l}\frac{\partial^{2k-l} z_A(\thv)}{\partial{\theta_{\alpha}^{2k-l}}}\frac{\partial^{l} z_A(\thv)}{\partial{\theta_{\alpha}^{l}}}\\
  &\leq \sum_{l=0}^{2k}\binom{2k}{l}2^{2k-l}2^{2k}\\
  &=4^{2k}\;,
\end{align}
 where the inequality uses the previous bound on single correlator derivatives and the last equality is obtained by recognising the binomial sum.

The higher-order derivatives of the MMD loss (with $k\geq 1$) are then bounded as follows 
\begin{align}
    \frac{\partial^{2k} \mathcal{L}(\thv)}{\partial\theta_{\alpha}^{2k}}&=\sum_{\substack{A\subseteq{[n]}\\
    \{X_{\alpha},Z_A\}= 0}}w_A\left(\frac{\partial^{2k} z_A^2(\thv)}{\partial{\theta_{\alpha}^{2k}}}-2t_A\frac{\partial^{2k} z_A(\thv)}{\partial{\theta_{\alpha}^{2k}}}\right)\\
    &\leq\sum_{\substack{A\subseteq{[n]}\\
    \{X_{\alpha},Z_A\}= 0}}w_A\left(4^{2k}+2|t_A|2^{2k}\right)\\
    &\leq \left(4^{2k}+2^{2k+1}\right)\sum_{\substack{A\subseteq{[n]}\\
    \{X_{\alpha},Z_A\}= 0}}w_A\\
    &\leq 2\cdot 4^{2k}\sum_{\substack{A\subseteq{[n]}\\
    \{X_{\alpha},Z_A\}= 0}}w_A\;,
\end{align}
where in the last inequality we used $2^{2k+1}\leq 4^{2k}$ to group terms together and identify the constant $\gamma=4$ satisfying Eq.~\eqref{eq:assumption-beta-warm-start-1st-constant}. 

Now, let us bound the remaining sum. If $\alpha$ labels a single qubit gate we have to sum all the weight $w_A$ such that $\alpha\in A$ that is $p_{\sigma}$ (which can be interpreted as the probability/weight of having $Z$ on a qubit $\alpha$). For two-qubits gates, assuming $\alpha=(i,j)$, we need either $i\in A$ and $j\in A^c$, or inversely $i\in A^c$ and $j\in A$ which summed up to $p_{\sigma}(1-p_{\sigma})$ each so the total sum of weight anti-commuting with 2-qubit gates is $2p_{\sigma}(1-p_{\sigma})$. In this case, the sum is slightly larger than for single qubits, but it is important to notice that it scales with $p_{\sigma}$ which can be small for low-body MMD.  Therefore , we have
\begin{align}\label{eq:even-derivative-bound-single-param-warm-start}
    \frac{\partial^{2k} \mathcal{L}(\thv)}{\partial\theta_{\alpha}^{2k}}&\leq \underbrace{4p_{\sigma}(1-p_{\sigma}) }_{=a}\,\underbrace{4^{2k}}_{=\gamma^{2k}}\;,
\end{align}
where we identify constants $a=4p_{\sigma}(1-p_{\sigma})$ and $\gamma=4$ which satisfies Eq.~\eqref{eq:bound-single-param-variance-general-warm-start} provided that $r\leq 3/8$ (the condition that we will show we satisfy later in the proof).
Therefore, we have 
\begin{equation}
    \beta_2=\frac{a^2\gamma^6}{6}\,,\quad \text{with}\quad a=4p_{\sigma}(1-p_{\sigma})\quad \text{and}\quad \gamma=4\;.
\end{equation}

\bigskip

\noindent \underline{Let us next find $\gamma_j$ which satisfy the condition in Eq.~\eqref{eq:assumption-fourth-deriv-bound-thm} of the theorem.} $\gamma_j$ are obtained by upper-bounding $\frac{\partial^{4} \mathcal{L}(\thv)}{\partial\theta_{\alpha}^{2}\partial\theta_{j}^{2}}$ for all $j\neq \alpha$. This is similar as the case of derivatives with respect to single parameter since Eq.~\eqref{eq:double-derivative-za-gives-4za} can be applied twice for correlators that are affected by both gates. Indeed, given that some $Z_A$ anti-commutes with both generators ($j$ and $\alpha$), we have $[X_j,[X_j,[X_{\alpha},[X_{\alpha},Z_A]]]]=16Z_A$. By a similar argument as previously, mixed fourth-order derivatives of a the correlator follow
\begin{align}
    \left|\frac{\partial^{4} z_A(\thv)}{\partial\theta_{\alpha}^{2}\partial\theta_{j}^{2}} \right| \leq 2^4 \;\;,
\end{align}
and, mixed fourth-order derivatives of the MMD loss are

\begin{align}
  \left|\frac{\partial^{4} \mathcal{L}(\thv)}{\partial\theta_{\alpha}^{2}\partial\theta_{j}^{2}} \right|&\leq \left(4^{4}+2^{5}\right)\sum_{\substack{A\subseteq{[n]}\\
    \{X_{\alpha},Z_A\}= 0\\ \{X_{j},Z_A\}= 0}}w_A\; \;\;,
\end{align}
for any $\theta_j \in \thv$.

Next, we compute the sum of the remaining weight. If both $j$ and $\alpha$ are single qubit labels, then the sum of the remaining weight is $p_{\sigma}^2$ (i.e. probability weight of having $Z$ on two given qubits). Now, if $\alpha$ is a two-qubits label $(\alpha_1,\alpha_2)$ and $j$ is a single qubit, we need either $j,\a_1\in A$ and $\a_2\in A^c$ which gives $p_\sigma^2(1-p_\sigma)$ (probability of having $Z$ on two given qubits and $I$ on a third given one) or $\a_1\in A^c$ and $j,\a_2\in A$ which gives the same contribution so $2p_\sigma^2(1-p_\sigma)$ in total. Now, if both are two qubit gates, we have two pairs of two indices that should corresponds to different Paulis ($I$ and $Z$) so there are four such choices with weight $p_\sigma^2(1-p_\sigma)^2$ each which is the largest possible case so that 
\begin{align}
  \left|\frac{\partial^{4} \mathcal{L}(\thv)}{\partial\theta_{\alpha}^{2}\partial\theta_{j}^{2}} \right|&\leq \left(4^{4}+2^{5}\right)4p_{\sigma}^2(1-p_\sigma)^2=1152p_{\sigma}^2(1-p_\sigma)^2=\gamma_j\;. 
\end{align}
Therefore, with $m$ parameters in total, we have
\begin{equation}
    \beta_1=\sum_{j\neq \al}\frac{\gamma_j}{6}=192(m-1)p_{\sigma}^2(1-p_\sigma)^2\;.
\end{equation}

\bigskip

\noindent\underline{Lastly, let us identify the constant $\Delta$} that satisfies the condition $r\leq \frac{3}{2\gamma}=\frac{3}{8}$ which is listed in \textit{Step 1} of the proof of the theorem. From Eq.~\eqref{eq:even-derivative-bound-single-param-warm-start}, we have $c_{\alpha}(\thv^*)\leq 4^3p_{\sigma}(1-p_{\sigma})$. Therefore, using this together with the definition of $\beta_2$ leads to the following upper bound for Eq.~\eqref{eq:proof-var-curvature-02}: 
\begin{equation}
    r^2\leq \Delta\frac{ c_{\alpha}(\thv^*)^2}{2\beta_1c_{\alpha}(\thv^*)+\beta_2}\leq\Delta\frac{ c_{\alpha}(\thv^*)^2}{\beta_2}\leq \Delta \frac{3}{8}\;,
\end{equation}
which always satisfies the condition $r\leq \frac{3}{8}$ provided that $\Delta \leq \frac{3}{8}$. This completes the final part of the proof of the theorem.
\end{proof}

\section{Curvature mechanisms at initialization: identity, unbiased, and data-dependent centers}
\label{app:curvature-mechanisms}

This appendix compares the local curvature of the MMD loss at several initialization centers relevant to the discussion in the main text: the identity center, an unbiased data-independent reference center, and a data-dependent marginal-matching center.
Our goal is to isolate the mechanisms that produce non-vanishing local curvature (mismatch versus model sensitivity), and to identify regimes in which the curvature remains inverse-polynomial in $n$.
Whenever such an inverse-polynomial curvature lower bound is established (i.e., there is in our terminology \textit{polynomially large} curvature), the corresponding inverse-polynomial patch-variance guarantee follows directly from Theorem~\ref{thm-sup:patch-variance-double-derivative}.

\subsection{General curvature decomposition}
\label{app:general-curvature-decomposition}

The double derivative of the MMD with respect to a single-qubit gate parameter $\theta_{\alpha}$ can be written as
\begin{align} \label{eq:MMD-double-derivative-single-parameter}
    \frac{\partial^2 \mathcal{L}(\thv)}{\partial\theta_{\al}^2}&= \sum_{\substack{A\subseteq [n]\\ \alpha \in A}} 2w_A  (4\underbrace{{z_A(\thv)(t_A-z_A(\thv))}}_{\rm Mismatch}+\underbrace{g_A^{(\al)}(\thv)^2)}_{\rm Model \; sensitivity}  \;,
\end{align}
where we used the fact that $\frac{\partial^2 z_A(\thv)}{\partial{\theta_{\alpha}}^2}=-4z_A(\thv)$ since $z_A(\thv)$ is a trigonometric series with product of sines and cosines with parameters $2\theta_k$ (see Eq.~\eqref{eq:z_a-as-sine-cosine-series}). A similar expression holds for a two-qubit gate parameter, although the anti-commuting set differs (i.e., the terms retained in the sum are determined by the corresponding anti-commutation relations). In Eq.~\eqref{eq:MMD-double-derivative-single-parameter}, we can clearly distinguish two contributions: one arising from the \textit{mismatch} between the target and the model, weighted by the value of the model correlator, and another corresponding to what we term \textit{model sensitivity}, given by the squared magnitude of the model correlators. The latter is, in principle, independent of the target distribution, except in the data-dependent initialization setting. This decomposition will be used below to compare the mechanisms underlying local trainability for different initialization centers.

\subsection{Data-agnostic initialization strategies}\label{app:data-agnostic-initialization-curvature}

Here we provide a formal version of Theorem~\ref{thm:variance-guarantee-agnostic-initialisation-main}, i.e., a non-vanishing variance guarantee for the data-agnostic initialization strategies, together with its proof. In particular, we focus on the identity initialization strategy which is mismatch-driven, and the unbiased data-independent initialization which is sensitivity-driven.

\begin{theoremappendix}[Variance guarantee for agnostic initialization strategies, formal]
\label{thm:variance-guarantee-agnostic-initialisation-appendix}
Consider the low-body MMD loss with $\sigma\in\Theta(\sqrt{n})$ (so that $w_A\in\Theta(n^{-|A|})$ for $A\subseteq[n]$), the IQP circuit and the following initialization scenarios, either
\begin{enumerate}
    \item The initialization center is at identity $\thv^*=\vec{0}$, and further assume that there exists at least one single bit marginal that is system size independent, that is, there exists at least one qubit indexed by $\al$ such that $1-t_{\al}\in\Theta(1)$, or:
    \item The unbiased initialization center such that single qubit gates parameters are set to $\pi/4$ and two-qubit gates parameters are set to $0$.
\end{enumerate}     
Then there exists at least one single-qubit parameter index $\alpha$ such that
\begin{equation}
\left|\frac{\partial^2 \mathcal{L}(\thv^*)}{\partial\theta_\alpha^2}\right|
\in \Theta\!\left(\frac{1}{n}\right) \;\;.
\end{equation}
Consequently, by Theorem~\ref{thm-sup:patch-variance-double-derivative}, there exists a patch half-width
\begin{equation}
r\in\OC\!\left(\frac{1}{\poly(n)}\right) \;\;,
\end{equation}
such that, for $\thv\sim\thv^*+\mathrm{Unif}([-r,r]^m)$ 
\begin{equation}
\Var_{\thv}[\mathcal{L}(\thv)]\in\Omega\!\left(\frac{1}{\poly(n)}\right) \;\;.
\end{equation}
\end{theoremappendix}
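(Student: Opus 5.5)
The plan is to compute the curvature $\partial^2_{\theta_\alpha}\mathcal{L}(\thv^*)$ explicitly in each scenario using the decomposition in Eq.~\eqref{eq:MMD-double-derivative-single-parameter}. We then show it is $\Theta(1/n)$ and invoke Theorem~\ref{thm-sup:patch-variance-double-derivative}. For the low-body regime $\sigma\in\Theta(\sqrt{n})$, we have $p_\sigma=\Theta(1/n)$, and hence $w_{\{\alpha\}}=p_\sigma(1-p_\sigma)^{n-1}=\Theta(1/n)$ since $(1-p_\sigma)^{n-1}=\Theta(1)$. The same counting used in the MMD part of the proof of Theorem~\ref{thm-sup:patch-variance-double-derivative} gives
\[
\sum_{A\ni\alpha}w_A=p_\sigma=\Theta(1/n).
\]
This sum serves as the universal upper bound on the curvature, since each bracket in the decomposition is $O(1)$. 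The upper half of the $\Theta(1/n)$ claim is therefore immediate, and the work lies in the matching lower bound.

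\emph{Identity center.} At $\thv^*=\vec 0$ every $z_A(\vec 0)=1$ and every $g_A^{(\alpha)}(\vec 0)=0$. The sensitivity term vanishes, and the curvature equals $-\sum_{A\ni\alpha}8w_A(1-t_A)$. All summands share the same sign because $t_A\le 1$, so there is no cancellation. Keeping only $A=\{\alpha\}$ gives
\[
|\partial_\alpha^2\mathcal{L}(\vec 0)|\ge 8w_{\{\alpha\}}(1-t_\alpha)\in\Omega(1/n),
\]
using the assumption $1-t_\alpha\in\Theta(1)$. The upper bound gives $\le 16\sum_{A\ni\alpha}w_A=O(1/n)$.

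\emph{Unbiased center.} Here $\theta_j^*=\pi/4$ and $\theta_{jk}^*=0$. The state is a product state with $z_A(\thv^*)=\prod_{j\in A}\cos(2\theta_j^*)=0$ for nonempty $A$, so the mismatch terms vanish. Differentiating the product gives $g_A^{(\alpha)}=-2\sin(2\theta_\alpha^*)\prod_{j\in A\setminus\{\alpha\}}\cos(2\theta^*_j)$. This is nonzero only for $A=\{\alpha\}$, where it equals $-2$. The curvature is therefore exactly $8w_{\{\alpha\}}=\Theta(1/n)$ for every single-qubit index $\alpha$.

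\emph{Variance.} With $c_\alpha(\thv^*)=\Theta(1/n)$, we plug into condition \eqref{eq:r-condition-thm}. Using $a=4p_\sigma(1-p_\sigma)=\Theta(1/n)$ and $\gamma=4$, we get $\beta_2=\Theta(1/n^2)$ and $\beta_1=192(m-1)p_\sigma^2(1-p_\sigma)^2=\Theta(m/n^2)=\Theta(1)$, since $m=\Theta(n^2)$ for all-to-all connectivity. The admissible $r^2$ is then of order $c_\alpha^2/(\beta_1c_\alpha+\beta_2)=\Theta(1/n)$. Choosing $r$ at this scale, which is $\OC(1/\poly(n))$, the bound $(1-\Delta)r^4c_\alpha^2/45$ is $\Omega(1/\poly(n))$, here roughly $n^{-4}$.

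The main obstacle is bookkeeping rather than a conceptual gap. One must check that $\beta_1$ does not blow up in a way that forces $r$ to be super-polynomially small, and that the $(1-p_\sigma)^{n-1}$ factors stay $\Theta(1)$. Both follow from $p_\sigma=\Theta(1/n)$, since $1-e^{-1/(2\sigma^2)}\approx 1/(2\sigma^2)$.
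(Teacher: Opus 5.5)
Your proposal is correct and follows the paper's proof essentially step for step. You use the same mismatch/sensitivity decomposition, the same observation that at $\vec{0}$ all curvature terms share a sign (the paper keeps the weight-$\le 2$ truncation, while you keep only $A=\{\alpha\}$, which suffices), and the same exact evaluation $8w_{\{\alpha\}}$ at the unbiased center. Your explicit $O(p_\sigma)$ upper bound on the curvature and your $\beta_1=\Theta(1)$, $\beta_2=\Theta(n^{-2})$ bookkeeping, which give $r=\Theta(n^{-1/2})$ and variance $\Omega(n^{-4})$, supply details the paper leaves implicit when it simply invokes Theorem~\ref{thm-sup:patch-variance-double-derivative}.
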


\medskip

Before going into the proof of the theorem (see Appendix~\ref{app:identity-initialization-curvature} for the identity initialization and Appendix~\ref{app:unbiased-initialization-curvature} for the unbiased initialization), we provide some technical notes and interpretation regarding the data-agnostic initialization strategies.

\medskip

\underline{Identity initialization corresponds to a local maximum.} We show later on that the double derivatives of the identity initialization are negative
in Eq.~\eqref{eq:iden-local-max}. Notice that for 2-qubit gates only the number of remaining terms $A$ in the sum is different, therefore each double derivative is also negative. Now, we also have that the gradient and the partial derivatives with respect to two different parameters cancels. 
Indeed, the gradient components are given by
\begin{equation}\label{eq:MMD-gradient-component-appendix-identity}
    \frac{\partial \mathcal{L}(\thv^*)}{\partial\theta_{\alpha}}=\sum_{A\subseteq[n]} 2w_A\, g_A^{(\al)}\!(\thv^*)(z_A(\thv^*)-t_A)\;,
\end{equation}
which vanishes for $\thv^*=\vec{0}$ due to $g_A^{(\al)}\!(\vec{0})=0$. Now, if we derive with respect to another parameter with label $\beta$, we have
\begin{equation}
    \frac{\partial^2 \mathcal{L}(\thv^*)}{\partial\theta_{\alpha}\partial\theta_{\beta}}=\sum_{A\subseteq[n]} 2w_A\, \left(\frac{\partial^2z_A(\thv^*)}{\partial\theta_{\alpha}\partial\theta_{\beta}}(z_A(\thv^*)-t_A)+g_A^{(\al)}\!(\thv^*)g_A^{(\beta)}\!(\thv^*)\right)\;,
\end{equation}
which also vanishes for $\thv^*=\vec{0}$ because if $\al,\beta\in\mathcal{A}_A$, then  $\frac{\partial^2z_A(\thv^*)}{\partial\theta_{\alpha}\partial\theta_{\beta}}=-4\sin(2\theta_\al)\sin(2\theta_\beta)=0$ (from Eq.~\eqref{eq:z_a-as-sine-cosine-series}).
Therefore, the Hessian at $\thv=\vec{0}$ is diagonal with negative entries and the gradient vanishes so we are at a local maximum which makes sense given that we are at the boundary of the space of correlations. Notice that the true maximum is attained by fixing $\theta_j=0$ if $t_j<0$, $\theta_j=\pi/2$ (i.e. such that $z_j(\thv)=-1$) if $t_j>0$, and both choices are valid for $t_j=0$ (leading to a degenerate global maximum).

\medskip

\underline{Geometric interpretation.}
In the correlator representation $\{z_A\}_{A\neq\emptyset}$, the identity center corresponds to an extremal aligned configuration ($z_A=1$ for all $A$), whereas the unbiased $\pi/4$ center sits at the origin ($z_A=0$ for all non-trivial $A$). As opposed to identity initialization, the unbiased initialization does not correspond to a local maximum, but a region which usually have non-negligible gradient (except if all $t_j=0$ or are exponentially vanishing). Indeed,  Eq.~\eqref{eq:MMD-gradient-component-appendix-identity} with this unbiased initialization leads to:
\begin{equation}
    \frac{\partial \mathcal{L}(\thv^*)}{\partial\theta_{\alpha}}=-2\sum_{A\subseteq[n]} w_A\, g_A^{(\al)}\!(\thv^*)t_A=4w_{\{\al\}}t_{\{\al\}}\;.
\end{equation}
These two data-independent references can therefore exhibit large local curvature for qualitatively different geometric reasons: the former through large mismatch, the latter through maximal local sensitivity of selected low-weight features.

\subsubsection{Proof of Theorem~\ref{thm:variance-guarantee-agnostic-initialisation-appendix}: Identity initialization part}
\label{app:identity-initialization-curvature}

\begin{proof}
We first consider the identity initialization center, $\thv^*=\vec{0}$.
At this point, the prepared state is $\ket{0}^{\otimes n}$, so all Pauli-$Z$ correlators are extremal:
\begin{equation}
z_A(\vec{0})=1
\qquad\text{for all } A\subseteq[n].
\end{equation}
Moreover, the correlator derivatives vanish,
\begin{equation}
g_A^{(\alpha)}(\vec{0})=0,
\end{equation}
since $\thv^*=\vec{0}$ is an extremal point of each correlator $z_A(\thv)$ along any parameter direction.

Substituting into Eq.~\eqref{eq:MMD-double-derivative-single-parameter}, the model-sensitivity term disappears and the curvature reduces to a purely mismatch-driven contribution:
\begin{align}
\label{eq:MMD-double-derivative-identity-initialisation}
\frac{\partial^2 \mathcal{L}(\vec{0})}{\partial\theta_{\alpha}^2}
= \sum_{\substack{A\subseteq [n]\\ \alpha \in A}} 8w_A\,(t_A-1).
\end{align}
Since $t_A\le 1$ for all $A$, each term in the sum is non-positive, and therefore
\begin{equation}\label{eq:iden-local-max}
\left|\frac{\partial^2 \mathcal{L}(\vec{0})}{\partial\theta_{\alpha}^2}\right|
=
-\frac{\partial^2 \mathcal{L}(\vec{0})}{\partial\theta_{\alpha}^2}.
\end{equation}

Equation~\eqref{eq:MMD-double-derivative-identity-initialisation} shows that identity initialization can exhibit large local curvature whenever the target differs appreciably from the fully aligned correlator configuration.
In particular, we do not expect the curvature to be exponentially small unless the target distribution is highly concentrated around the all-zero bit string. Such strongly peaked targets do not exhibit the structural complexity associated with meaningful generative tasks. Thus, identity initialization provides a natural \emph{mismatch-driven baseline} for local trainability.

To make this explicit, consider the MMD loss truncated to correlators of weight at most $2$, denoted by $\tilde{\mathcal{L}}_2(\thv)$.
Then
\begin{align}
\label{eq:second-order-MMD-double-derivative-identity-initialisation}
\left|\frac{\partial^2 \tilde{\mathcal{L}}_2(\vec{0})}{\partial\theta_{\alpha}^2}\right|
&=
8p_{\sigma}(1-p_\sigma)^{n-1}(1-t_{\alpha})
+8p_{\sigma}^2(1-p_\sigma)^{n-2}\sum_{j\neq \alpha}(1-t_{\alpha j})\;,
\end{align}
which is also a lower-bound on the full MMD curvature since each term has the same sign.  
Hence, if at least one single qubit correlator $\al$ such that $1-t_{\al}\in\Theta(1)$, then the local curvature scales as $\theta(1/n)$ and Theorem~\ref{thm-sup:patch-variance-double-derivative} yields a non-vanishing patch-variance guarantee around $\vec{0}$ which completes the proof.
\end{proof}

\subsubsection{Proof of Theorem~\ref{thm:variance-guarantee-agnostic-initialisation-appendix}: Unbiased data-independent initialization part}
\label{app:unbiased-initialization-curvature}

\begin{proof}
To clarify the curvature decomposition further, it is useful to consider a second data-independent initialization center, distinct from the identity.
As a reference point, we take
\begin{align}
\theta_j^* &= \pi/4 \qquad \text{for all single-qubit parameters } j, \\
\theta_{jk}^* &= 0 \qquad \text{for all two-qubit parameters } (j,k).
\end{align}
The two-qubit angles are set to zero, so from Eq.~\eqref{eq:z_a-as-sine-cosine-series} the correlators factorize over the single-qubit rotations:
\begin{equation}
z_A(\thv^*)=\prod_{j\in A}\cos(2\theta_j^*).
\end{equation}
Therefore, for every nontrivial subset $A\neq\emptyset$,
\begin{equation}
z_A(\thv^*)=0,
\end{equation}
because $\cos(2\cdot \pi/4)=\cos(\pi/2)=0$.
Equivalently, the corresponding model distribution is uniform over computational basis states, so this initialization is unbiased with respect to all measured Pauli-$Z$ correlators.

For a single-qubit parameter $\theta_\alpha$, the only nonzero first derivative at $\thv^*$ is the single-site one:
\begin{align}
z_{\{\alpha\}}(\theta_\alpha) &= \cos(2\theta_\alpha),\\
\partial_{\theta_\alpha} z_{\{\alpha\}}(\theta_\alpha) &= -2\sin(2\theta_\alpha),
\end{align}
hence
\begin{equation}
\left.\partial_{\theta_\alpha} z_{\{\alpha\}}(\thv)\right|_{\thv=\thv^*}=-2,
\qquad
\left.\partial_{\theta_\alpha} z_A(\thv)\right|_{\thv=\thv^*}=0
\quad\text{for }A\neq\{\alpha\}.
\end{equation}
Thus, at $\thv^*$ the mismatch term in Eq.~\eqref{eq:MMD-double-derivative-single-parameter} vanishes (because it is proportional to $z_A(\thv^*)$), while the model-sensitivity term remains nonzero through the single-site feature.
The curvature therefore reduces to
\begin{align} \label{eq:mmd-double-derivative-unbiased-app}
\frac{\partial^2 \mathcal{L}(\thv^*)}{\partial\theta_\alpha^2}
&=
\sum_{\substack{A\subseteq[n]\\ \alpha\in A}} 2w_A \big(g_A^{(\alpha)}(\thv^*)\big)^2
=2w_{\{\alpha\}}\cdot 4
=8\,w_{\{\alpha\}}.
\end{align}
Therefore we can use that $e^{-\frac{1}{x}}> 1-\frac{1}{x}$, to see that for $\sigma\in\Theta(\sqrt{n})$ we obtain $p_\sigma \in\Theta(1/n)$ which leads to $w_{\{\al\}}\in\Theta(1/n)$. Finally, applying Theorem~\ref{thm-sup:patch-variance-double-derivative} completes the proof.
\end{proof}

\subsection{Data-dependent initialization strategy}
\label{app:data-dependent-initialization-curvature}

We now consider a data-dependent initialization chosen to match the first-order target correlators at initialization, i.e.,
\begin{equation}
z_j(\thv^*)=t_j
\qquad\text{for all }j\in[n].
\end{equation}
Our goal is to show that, despite this improved target alignment (and hence reduced mismatch at low order), the MMD curvature can remain non-vanishing at initialization. Intuitively, there are two sources of curvature in this case: (i) the mismatch of 2-nd order and above moments and (ii) the model curvature (which theoretically could be enhanced by the data-dependent initialization).

As a simple and explicit example, we consider the standard marginal-matching initialization in which the single-qubit angles are chosen such that the model matches the one-site marginals, while all two-qubit angles are set to zero.
Concretely, one may choose (equivalently) $\theta_j^*$ so that
\begin{equation}
\cos(2\theta_j^*)=t_j,
\end{equation}
which matches the first-order Pauli-$Z$ correlators.
Since the two-qubit parameters are initialized at $0$, from Eq.~\eqref{eq:z_a-as-sine-cosine-series} the correlators factorize as in the unbiased reference case:
\begin{equation}
z_A(\thv^*)=\prod_{j\in A}\cos(2\theta_j^*)
=\prod_{j\in A} t_j.
\end{equation}
Similarly, for a single-qubit parameter $\theta_\alpha$ and any $A$ with $\alpha\in A$, we have
\begin{equation}
g_A^{(\alpha)}(\thv^*)
=\left.\frac{\partial z_A(\thv)}{\partial\theta_\alpha}\right|_{\thv=\thv^*}
=-2\sin(2\theta_\alpha^*)\prod_{j\in A\setminus\{\alpha\}}\cos(2\theta_j^*)
=-2\sqrt{1-t_\alpha^2}\prod_{j\in A\setminus\{\alpha\}} t_j,
\end{equation}
where we used $\sin^2(2\theta_\alpha^*)=1-t_\alpha^2$.

Substituting these expressions into Eq.~\eqref{eq:MMD-double-derivative-single-parameter}, the curvature becomes
\begin{equation}\label{eq:double_deribative_w_mismatch_and_modelsense}
    \frac{\partial^2 \mathcal{L}(\thv^*)}{\partial\theta_{\al}^2}=\sum_{{\substack{ A\subseteq [n]\\
    {\al}\in A}}} 8 w_{A}  \left(\underbrace{\left(t_{A}-\prod_{j\in {A}}t_j\right)\prod_{j\in {A}}t_j}_{\rm Mismatch}+\underbrace{(1-t_{\al}^2)\prod_{j\in A\backslash \{\alpha\}}t_j^2}_{\rm Model\; Sensitivity}\right)\;.
\end{equation}
This expression makes the mechanism decomposition explicit in the data-dependent setting.
The first term is a residual mismatch contribution: it vanishes when the target correlator $t_A$ factorizes exactly into one-site marginals, and is therefore expected to be small in weakly correlated target regimes.
The second term is a model-sensitivity contribution, which remains nonzero provided the matched one-site marginals are not saturated (i.e., $1-t_\alpha^2$ is bounded away from zero).

Thus, in contrast to the identity initialization, where the curvature is purely mismatch-driven, the data-dependent initialization retains a nonzero sensitivity contribution even after matching low-order target statistics.
This is the key mechanism underlying the local trainability guarantee in the weakly-correlated target regime analyzed below. \\

\paragraph*{Illustrative second-order truncation.}
To make this comparison more explicit, consider again the MMD truncated to correlators of weight at most $2$, denoted by $\tilde{\mathcal{L}}_2(\thv)$.
Then the curvature at the data-dependent center reads
\begin{align} \label{eq:second-order-MMD-double-derivative-data-dependent-initialisation}
   \frac{\partial^2 \tilde{\mathcal{L}}_2(\thv^*)}{\partial\theta_{\al}^2}&=8p_{\sigma}(1-p_\sigma)^{n-1}\underbrace{(1-t_{\al}^2)}_{\rm Model\; Sensitivity} + 8p_{\sigma}^2(1-p_\sigma)^{n-2}\sum_{j\neq {\al}}(\underbrace{(t_{\alpha j}-t_{\al}t_j)t_{\al}t_j}_{\rm Mismatch}+\underbrace{(1-t_{\al}^2)t_j^2)}_{\rm Model\; Sensitivity}\;.
\end{align}

Compared with Eq.~\eqref{eq:second-order-MMD-double-derivative-identity-initialisation}, this shows explicitly that the leading contribution at the data-dependent center can be sensitivity-driven rather than mismatch-driven.

\subsubsection{Target analysis under an approximately factorizable correlated target assumption}
\label{app:target-analysis}
We now analyse the curvature at the data-dependent initialization under target assumptions of approximately factorizable correlations and non-saturated marginals (Assumptions~\ref{ass:weakly-correlated-target} and~\ref{ass:no-scaling-with-n}), and then invoke Theorem~\ref{thm-sup:patch-variance-double-derivative} to obtain a non-vanishing patch-variance guarantee. Concretely, Assumption~\ref{ass:weakly-correlated-target}  states that the target is close to a product distribution. More precisely, we assume

\begin{equation}\label{eq:assumption-correlations}
\left|t_A-\prod_{j\in A}t_j\right|\le \left(\frac{C}{n}\right)^{|A|/2},
\end{equation}
where the constant $C$ is smaller than $n$. Fig.~\ref{fig:correlation-decay-genomic-dataset-n20} shows the empirical upper-bound for the genomic dataset with $n=20$, which appears to be consistent with our assumption.

\begin{figure*}
    \centering
    \includegraphics[width=0.58\linewidth]{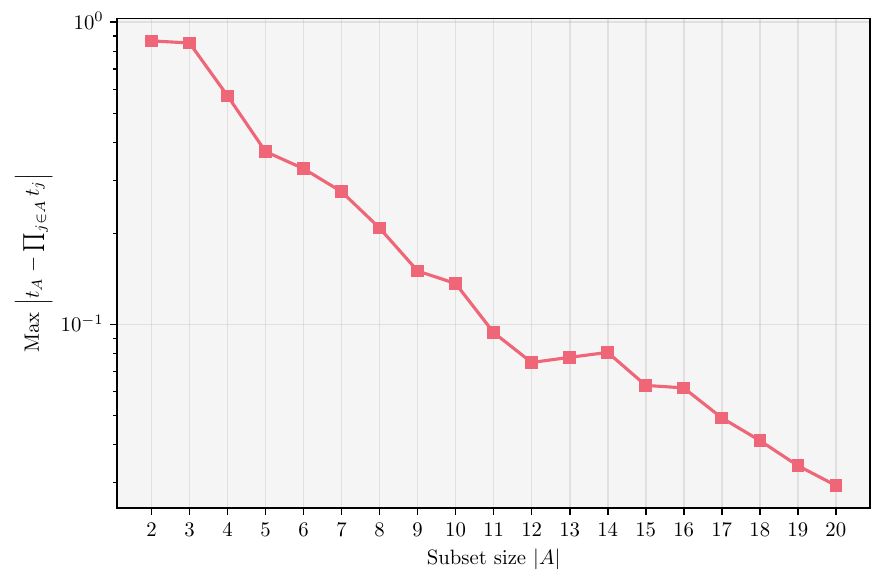}
    \caption{Empirical evaluation of Assumption~\ref{ass:weakly-correlated-target} for the genomic dataset  with $n=20$ qubits.}
    \label{fig:correlation-decay-genomic-dataset-n20}
\end{figure*}
Assumption~\ref{ass:no-scaling-with-n} ensures that at least one local marginal retains non-vanishing fluctuation, i.e. is non-saturated, namely

\begin{equation}\label{eq:assumption-nonsaturated-marginals-app}
\exists\al \in[n] \;: \quad1-t_{\al}^2\in\Theta(1).
\end{equation}

When $t_j=0$ for all $j\in[n]$, the one-site marginals are uniform, and our assumptions are inspired by a mean-field-type regime away from criticality, such as the high-temperature Curie--Weiss model~\cite{kirsch2019curie, arous1999increasing,vsamaj1988improved}. Notice that our Assumption~\ref{ass:weakly-correlated-target} is weaker than the results shown in Ref.~\cite{kirsch2019curie}. More generally, when some $t_j\neq 0$, the one-site marginals become biased, and the same assumptions instead describe a moderately biased analogue of this regime, where the qubits behave approximately like weakly dependent biased coins.

As will become clear, we suspect these assumptions are only necessary for our proof strategy, and that the initialization strategy is likely to be effective for a broader regime of target distributions.
However, (conveniently) under these assumptions, the residual mismatch term in the data-dependent curvature is controlled, while the model-sensitivity term remains non-negligible. The following theorem thus holds. 

\begin{theoremappendix}[Variance guarantee for data-dependent initialization strategy under approximately factorizable target, formal]
\label{thm:poly-patch-variance-data-dependent-app}
Consider the low-body MMD loss with $\sigma\in\Theta(\sqrt{n})$ (so that $w_A\in\Theta(n^{-|A|})$ for $A\subseteq[n]$) and the IQP circuit. Let $\thv^*$ be a data-dependent initialization such that
\begin{enumerate}
    \item all single-qubit marginals are matched, i.e. $z_j(\thv^*)=t_j$ for all $j\in[n]$,
    \item all two-qubit parameters are initialized to $0$.
\end{enumerate}
Assume the target correlators satisfy Eq.~\eqref{eq:assumption-correlations} and the non-saturation condition Eq.~\eqref{eq:assumption-nonsaturated-marginals-app}.

Then there exists at least one single-qubit parameter index $\alpha$ such that
\begin{equation}
\left|\frac{\partial^2 \mathcal{L}(\thv^*)}{\partial\theta_\alpha^2}\right|
\in \Theta\!\left(\frac{1}{n}\right).
\end{equation}
Consequently, by Theorem~\ref{thm-sup:patch-variance-double-derivative}, there exists a patch half-width
\begin{equation}
r\in\OC\!\left(\frac{1}{\poly(n)}\right)
\end{equation}
such that, for $\thv\sim\thv^*+\mathrm{Unif}([-r,r]^m)$,
\begin{equation}
\Var_{\thv}[\mathcal{L}(\thv)]\in\Omega\!\left(\frac{1}{\poly(n)}\right).
\end{equation}
\end{theoremappendix}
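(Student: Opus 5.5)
We take $\alpha$ to be the index guaranteed by Assumption~\ref{ass:no-scaling-with-n}, so that $1-t_\alpha^2\in\Theta(1)$. We start from the exact curvature formula Eq.~\eqref{eq:double_deribative_w_mismatch_and_modelsense} at the marginal-matching center and split it into three pieces:
\begin{itemize}
\item the first-order term $A=\{\alpha\}$, which is purely sensitivity;
\item the remaining sensitivity terms $S_A^{(\alpha)}$ with $|A|\ge2$;
\item the residual mismatch terms $M_A^{(\alpha)}$ with $|A|\ge 2$.
\end{itemize}
For $A=\{\alpha\}$ the mismatch vanishes by construction, since $t_\alpha-t_\alpha=0$. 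This leaves $8w_{\{\alpha\}}(1-t_\alpha^2)$. Because $w_{\{\alpha\}}=p_\sigma(1-p_\sigma)^{n-1}$ with $p_\sigma\in\Theta(1/n)$ when $\sigma\in\Theta(\sqrt n)$, and $(1-p_\sigma)^{n-1}\in\Theta(1)$, this term is $\Theta(1/n)$. It will be the dominant contribution.

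For the lower bound we note that every $S_A^{(\alpha)}=(1-t_\alpha^2)\prod_{j\in A\setminus\{\alpha\}}t_j^2$ is non-negative. We can therefore drop them (or keep them) and still have
\[
\frac{\partial^2\mathcal L(\thv^*)}{\partial\theta_\alpha^2}\ \ge\ 8w_{\{\alpha\}}(1-t_\alpha^2)-\sum_{|A|\ge2,\,\alpha\in A}8w_A|M_A^{(\alpha)}|.
\]
Using $|\prod_{j\in A}t_j|\le1$ and Assumption~\ref{ass:weakly-correlated-target}, we get $|M_A^{(\alpha)}|\le (C/n)^{|A|/2}$. We group the subsets by $k=|A|$; there are $\binom{n-1}{k-1}$ of them and each has weight $w_A=p_\sigma^k(1-p_\sigma)^{n-k}\le p_\sigma^k$. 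The mismatch sum is then bounded by
\[
8\sum_{k\ge2}\binom{n-1}{k-1}p_\sigma^k\Big(\frac{C}{n}\Big)^{k/2}\le 8p_\sigma\sum_{k\ge2}\frac{(np_\sigma)^{k-1}C^{k/2}}{(k-1)!\,n^{k/2}}.
\]
With $np_\sigma\in\Theta(1)$, the $k$-th term is $O(n^{-k/2})$ times a summable factorial series. The whole sum is therefore $O(p_\sigma\, n^{-1})=O(n^{-2})$, which is dominated by the $\Theta(1/n)$ first-order term for $n$ large.

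The upper bound $O(1/n)$ comes from bounding the full positive sensitivity sum as well. We use $\sum_{A\ni\alpha}w_A=p_\sigma$, $S_A^{(\alpha)}\le1$ and $|M_A^{(\alpha)}|\le1$, which gives $|\partial^2_\alpha\mathcal L|\le 16p_\sigma\in O(1/n)$; this is also consistent with the generic bound $c_\alpha\le 4^3p_\sigma(1-p_\sigma)$ derived earlier. Together the two bounds give $c_\alpha(\thv^*)\in\Theta(1/n)$.

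Finally we invoke Theorem~\ref{thm-sup:patch-variance-double-derivative} with the MMD constants $a=4p_\sigma(1-p_\sigma)$, $\gamma=4$ and $\beta_1=192(m-1)p_\sigma^2(1-p_\sigma)^2$. Since $m\in O(n^2)$, we have $\beta_1\in O(1)$ and $\beta_2\in O(1/n^2)$. The admissible radius from Eq.~\eqref{eq:r-condition-thm}, $r^2\sim\Delta c_\alpha^2/(2\beta_1c_\alpha+\beta_2)$, is then $\Omega(1/n)$ up to polynomial factors. Choosing $r$ at this inverse-polynomial value gives variance at least $(1-\Delta)r^4c_\alpha^2/45\in\Omega(1/\poly(n))$.

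The main obstacle is controlling the combinatorial growth of the mismatch sum. The number of subsets of size $k$ grows like $n^{k-1}$, and this growth has to be beaten by the product $p_\sigma^k(C/n)^{k/2}$. The $n^{-k/2}$ decay in Assumption~\ref{ass:weakly-correlated-target} is exactly what makes the sum converge, with the $1/(k-1)!$ factor handling large $k$. One must also check that the constants ($C$, $np_\sigma$) do not make the $O(n^{-2})$ remainder competitive with the leading term at moderate $n$. This is where the requirement "$n$ large enough" enters.
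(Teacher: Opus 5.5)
Your proof is correct. Its skeleton matches the paper's: isolate the $\Theta(1/n)$ term $8w_{\{\alpha\}}(1-t_\alpha^2)$, show that the $|A|\ge 2$ terms cannot cancel it, then feed $c_\alpha(\thv^*)$ into Theorem~\ref{thm-sup:patch-variance-double-derivative}. Where you differ is in how the higher-order terms are controlled.

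\textbf{The paper's route.} It does not drop the sensitivity terms. For each $|A|=K\ge 2$ it pairs the mismatch with its own sensitivity partner and writes their sum as $x\bigl((1-t_\alpha^2)x-(C/n)^{K/2}|t_\alpha|\bigr)$, where $x=\prod_{j\in A\setminus\{\alpha\}}|t_j|$. Minimizing over $x$ gives the per-term floor $-(C/n)^{K}t_\alpha^2/\bigl(4(1-t_\alpha^2)\bigr)$. The weighted sum of these floors is then evaluated with the binomial theorem and a telescoping bound, giving a negative correction of order $n^{-3}$.

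\textbf{Your route.} You discard the nonnegative $S_A^{(\alpha)}$ and bound $|M_A^{(\alpha)}|\le (C/n)^{|A|/2}$ linearly. You control the combinatorics with $\binom{n-1}{k-1}\le n^{k-1}/(k-1)!$ and an exponential series, which gives an $\OC(n^{-2})$ correction. This is more elementary and keeps the $1/(1-t_\alpha^2)$ factor out of the remainder. You also make explicit three things the paper leaves implicit: the matching $\OC(1/n)$ upper bound on the curvature, the admissible radius $r^2$ of order $1/n$, and the resulting $\Omega(n^{-4})$ variance.

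\textbf{What the paper's step buys.} Completing the square makes the negative correction quadratic rather than linear in the factorization defect $|t_A-\prod_{j\in A}t_j|$. It would therefore survive a weaker decay than Assumption~\ref{ass:weakly-correlated-target}. It also shows immediately that every higher-order term is nonnegative when $t_\alpha=0$.
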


\begin{proof}
 We use the curvature expression derived in the previous subsection, concretely in Eq.~\eqref{eq:double_deribative_w_mismatch_and_modelsense}
\begin{equation}\label{eq:data-dependent-curvature-recalled}
\frac{\partial^2 \mathcal{L}(\thv^*)}{\partial\theta_{\alpha}^2}
=\sum_{{\substack{ A\subseteq [n]\\
    {\al}\in A}}} 8 w_{A}  \left(\underbrace{\left(t_{A}-\prod_{j\in {A}}t_j\right)\prod_{j\in {A}}t_j}_{\rm Mismatch}+\underbrace{(1-t_{\al}^2)\prod_{j\in A\backslash \{\alpha\}}t_j^2}_{\rm Model\; Sensitivity}\right)\;.
\end{equation}
The first term is the residual mismatch contribution, and the second term is the model-sensitivity contribution. Before starting to analyze the curvature, let us recall that for the low-body MMD, we set the bandwidth $\sigma\in\Theta(\sqrt{n})$ in order to have $np_{\sigma}\in\Theta(1)$ i.e. the effective bodyness of the MMD is constant~\cite{rudolph2023trainability}. Therefore, the weights $w_A$ in the low-body MMD scales as
\begin{equation}\label{eq:weight-MMD-scaling-last-proof}
    w_A\in\Theta\left(n^{-|A|}\right)\;. 
\end{equation}
In particular, we have that $p_{\sigma}^k\in\Theta(n^{-k})$ and $(1-p_{\sigma})^k\in\Theta(1)$ for $k\leq n$.

In Eq.~\eqref{eq:data-dependent-curvature-recalled}, notice that we only keep terms $Z_A$ with $\alpha\in A$ since the single qubit gate with parameter $\theta_{\alpha}$ must act non-trivially on qubit $\alpha$ to avoid vanishing derivative. Therefore, the first order contribution ($|A|=1$) in Eq.~\eqref{eq:data-dependent-curvature-recalled} is the term $A=\{\al\}$ which is given by
\begin{equation} \label{eq:first-order-curvature-term-proof-app}
    8w_{\{\al\}}(1-t_{\al}^2)\in\Theta(1/n)\;,
\end{equation}
where we used the non-saturation condition in Eq.~\eqref{eq:assumption-nonsaturated-marginals-app} together with the scaling of the weights in Eq.~\eqref{eq:weight-MMD-scaling-last-proof}. We might expect some negative contributions due to higher-order correlators arising from the mismatch term in Eq.~\eqref{eq:data-dependent-curvature-recalled}. In the following, we will bound these negative contributions using the assumption in Eq.~\eqref{eq:assumption-correlations} in order to ensure negligible mismatch contribution from the higher-order terms. 

 For a fixed order $|A|=K$ with $\alpha\in A$, we bound each summand from below as
\begin{equation}\label{eq:K-order-term-in-MMD-double-derivative}
\left(t_A-\prod_{j\in A}t_j\right)\prod_{j\in A}t_j
+(1-t_\alpha^2)\prod_{j\in A\setminus\{\alpha\}}t_j^2
\ge
\prod_{j\in A\setminus\{\alpha\}}|t_j|
\left(
(1-t_\alpha^2)\prod_{j\in A\setminus\{\alpha\}}|t_j|
-\left(\frac{C}{n}\right)^{K/2}|t_\alpha|
\right)\,,
\end{equation}
where we used Eq.~\eqref{eq:assumption-correlations}. 
In the following, we will first consider the case where $t_{\al}=0$ for at least one $\alpha\in [n]$, then generalizing to the case where $|t_{\alpha}|\geq 0$.  
\paragraph*{\underline{If $t_{\al}=0$},}
 the negative contribution (mismatch) in Eq.~\eqref{eq:K-order-term-in-MMD-double-derivative} vanishes and this trivially leads to a positive contribution (model sensitivity) of each term. Therefore, the first order term ensures that the curvature scales as $\Omega(1/n)$ i.e. 
\begin{equation} \label{eq:curvatue-t_alpha=0-case-proof-last-thm}
    \frac{\partial^2 \mathcal{L}(\thv^*)}{\partial\theta_{\alpha}^2}\geq 8w_{\{\al\}}\in\Theta\left(\frac{1}{n}\right)\;,
\end{equation}
where the inequality is obtained by keeping only the first order term given in  Eq.~\eqref{eq:first-order-curvature-term-proof-app} together with the assumption $t_{\al}=0$. Notice that the choice of $\alpha$ is arbitrary so we can fix $\al=\argmin_j(|t_j|)$ to maximize $1-t_{\al}^2$. In this case all single site marginals are uniform which is equivalent to the unbiased initialization, but the guarantee from Eq.~\eqref{eq:curvatue-t_alpha=0-case-proof-last-thm} is still valid if only a single site $\al$ satisfies $t_{\alpha}=0$. 

\paragraph*{\underline{If $|t_{\al}|\geq 0$,}}the expression in Eq.~\eqref{eq:K-order-term-in-MMD-double-derivative} can be further lower-bounded by setting $\prod_{j\in A\setminus\{\alpha\}}|t_j|=\left(\frac{C}{n}\right)^{K/2}\frac{|t_{\al}|}{2(1-t_{\al}^2)}$. Indeed, the function $f(x)=x(ax-b)$ reaches its extremum at $x=b/(2a)$ which is a minimum for positive $a$ and here we identify $a=1-t_{\al}^2$, $b=\left(\frac{C}{n}\right)^{K/2}|t_\alpha|$ and $x=\prod_{j\in A\setminus\{\alpha\}}|t_j|$.  Therefore, we have

\begin{equation} \label{eq:K-order-term-MMD-double-derivative-general-lowerbound}
    \left(t_A-\prod_{j\in A}t_j\right)\prod_{j\in A}t_j
+(1-t_\alpha^2)\prod_{j\in A\setminus\{\alpha\}}t_j^2
\ge -\left(\frac{C}{n}\right)^{K}\frac{t_\alpha^2}{4(1-t_{\al}^2)}\,.
\end{equation}

Let us substitute this in the full expression of the curvature in Eq.~\eqref{eq:data-dependent-curvature-recalled} for each term with $|A|>1$ (i.e. $A\neq\{\al\}$). Therefore, the first order contribution from Eq.~\eqref{eq:first-order-curvature-term-proof-app} is the only positive one and the contribution of all the remaining terms becomes:
\begin{equation} \label{eq:negative-contribution-lower-bound-assumption1-proof}
       \sum_{{\substack{ A\subseteq [n]\\
    {\al}\in A \\ A\neq\{\al\}}}} 8 w_{A}  \left(\left(t_{A}-\prod_{j\in {A}}t_j\right)\prod_{j\in {A}}t_j+(1-t_{\al}^2)\prod_{j\in A\backslash \{\alpha\}}t_j^2\right)  \geq   -\sum_{{\substack{ A\subseteq [n]\\
    {\al}\in A \\ A\neq\{\al\}}}} 8 w_{A}\left(\frac{C}{n}\right)^{|A|}\frac{t_\alpha^2}{4(1-t_{\al}^2)}\;,
\end{equation}
where the inequality is obtained from Eq.~\eqref{eq:K-order-term-MMD-double-derivative-general-lowerbound}. Therefore, inserting this together with the first order term from Eq.~\eqref{eq:first-order-curvature-term-proof-app} into Eq.~\eqref{eq:data-dependent-curvature-recalled} leads to
\begin{equation}\label{eq:curvature-lower-bound-proof-datadep-full-before-last}
    \frac{\partial^2 \mathcal{L}(\thv^*)}{\partial\theta_{\alpha}^2}\geq 8w_{\{\al\}}(1-t_{\al}^2)-\frac{2t_\alpha^2}{1-t_{\al}^2}\underbrace{\sum_{{\substack{ A\subseteq [n]\\
    {\al}\in A \\ A\neq\{\al\}}}}  w_{A}\left(\frac{C}{n}\right)^{|A|}}_{\mathcal{:=S}}\;,
\end{equation}
where we rearranged the sum in Eq.~\eqref{eq:negative-contribution-lower-bound-assumption1-proof} by factoring out the terms independent of the summation indices $A$.
Now, let us further lower-bound this expression by upper-bounding the magnitude of the negative contribution. In particular, we now upper-bound the following sum:
\begin{equation}\label{eq:sum-in-thm-last-proof-negative-contrib}
   \mathcal{S}:=\sum_{{\substack{ A\subseteq [n]\\
    {\al}\in A \\ A\neq\{\al\}}}}  w_{A}\left(\frac{C}{n}\right)^{|A|}\;.
\end{equation}

From the definition of the weights $w_A=(1-p_{\sigma})^{n-|A|}p_{\sigma}^{|A|}$, we can explicitly compute $\mathcal{S}$ as follows:
\begin{align}
 \mathcal{S}   &=\sum_{{\substack{ A\subseteq [n]\\
    {\al}\in A \\ A\neq\{\al\}}}}  (1-p_{\sigma})^{n-|A|}p_{\sigma}^{|A|}\left(\frac{C}{n}\right)^{|A|}\\
    &=\sum_{k=1}^{n-1}\binom{n-1}{k}(1-p_{\sigma})^{n-1-k}p_{\sigma}^{k+1}\left(\frac{C}{n}\right)^{k+1}\\
    &=\frac{Cp_{\sigma}}{n}\sum_{k=1}^{n-1}\binom{n-1}{k}(1-p_{\sigma})^{n-1-k}\left(\frac{Cp_{\sigma}}{n}\right)^{k}\\ \label{eq:curvature-negative-contribution-proof}
    &=\frac{Cp_{\sigma}}{n}\left(\left(1-p_{\sigma}\left(1-\frac{C}{n}\right)\right)^{n-1}-(1-p_{\sigma})^{n-1}\right)\;,
\end{align}
  The second equality is obtained by noticing that the summands depend only on $|A|$ (the cardinality of $A$). Therefore, we can substitute $k=|A|-1$ such that the number of subsets $A$ containing $\alpha$ with a fixed $k$ is $\binom{n-1}{k}$, i.e. the number of choices of $k$ elements between the remaining $n-1$ ones in $[n]\backslash\{\al\}$, and notice that the sum starts from $k=1$ since the first order term is not considered here ($|A|\neq\{\al\}$). In the third equality, we conveniently rewrite the expression to highlight the binomial series. Finally, we compute the binomial series by subtracting $(1-p_{\sigma})^{n-1}$ corresponding to the term $k=0$.

Now, let us upper-bound the following term:
\begin{equation}\label{eq:scaling-sum-high-order-terms-proof-last-thm}
    \left(1-p_{\sigma}\left(1-\frac{C}{n}\right)\right)^{n-1}-(1-p_{\sigma})^{n-1}\;.
\end{equation}
 We use telescoping series as done previously in Eq.~\eqref{eq:eqtolowerbound_with_kpkm}, to get 
\begin{align}
    \left(1-p_{\sigma}\left(1-\frac{C}{n}\right)\right)^{n-1}-(1-p_{\sigma})^{n-1}&= \frac{p_{\sigma}C}{n}\sum_{k=0}^{n-2}\left(1-p_{\sigma}\left(1-\frac{C}{n}\right)\right)^{n-2-k}(1-p_{\sigma})^{k}\\
    &\leq \frac{p_{\sigma}C}{n}(n-1)\left(1-p_{\sigma}\left(1-\frac{C}{n}\right)\right)^{n-2}\;,\label{eq:tocite_later_upperbound_on_p_sigma_et_al}
\end{align}
where the last inequality is obtained by upper-bounding each summands by the largest one using $1-p_{\sigma}\left(1-\frac{C}{n}\right)\geq 1-p_{\sigma}$.  Notice that $0\leq1-p_{\sigma}\left(1-\frac{C}{n}\right)\leq 1$ for $n\geq C$. So for a sufficiently large $n$, we have 
\begin{equation}
  \left(1-p_{\sigma}\left(1-\frac{C}{n}\right)\right)^{n-2}\leq 1\;.  
\end{equation}

Therefore, the expression of $\mathcal{S}$ in Eq.~\eqref{eq:curvature-negative-contribution-proof} is further upper-bounded using Eq.~\eqref{eq:tocite_later_upperbound_on_p_sigma_et_al} as follows:
\begin{equation}
    \mathcal{S}\leq \left(\frac{p_{\sigma}C}{n}\right)^2(n-1)\in\Theta\left(\frac{1}{n^3}\right)\;,
\end{equation}
where we recall $p_{\sigma}\in\Theta(1/n)$ for low-body MMD. Therefore, $\mathcal{S}\in\OC(1/n^3)$ and recalling its definition in Eq.~\eqref{eq:sum-in-thm-last-proof-negative-contrib} together with the curvature lower-bound in Eq.~\eqref{eq:curvature-lower-bound-proof-datadep-full-before-last} leads to
\begin{equation}
        \frac{\partial^2 \mathcal{L}(\thv^*)}{\partial\theta_{\alpha}^2}\geq \underbrace{8w_{\{\al\}}(1-t_{\al}^2)}_{\Theta(1/n)}-\underbrace{\frac{2t_\alpha^2}{1-t_{\al}^2}}_{\OC(1)}\underbrace{\left(\frac{p_{\sigma}C}{n}\right)^2(n-1)}_{\OC(1/n^3)}\;,
\end{equation}
where we recall the scaling of first order term in Eq.~\eqref{eq:first-order-curvature-term-proof-app} together with  the non-saturation assumption in Eq.~\eqref{eq:assumption-nonsaturated-marginals-app} such that.
Finally, the curvature is dominated by the first order term which scales as $\Theta(1/n)$ (see Eq.~\eqref{eq:first-order-curvature-term-proof-app}) and the negative contribution arising for all higher-order terms whose magnitude scales as $\OC(1/n^3)$  such that for sufficiently large $n$ we have:
\begin{equation}
    \frac{\partial^2 \mathcal{L}(\thv^*)}{\partial\theta_{\alpha}^2}\geq 8w_{\{\al\}}(1-t_{\al}^2)-\OC\left(\frac{1}{n^3}\right)=\Theta\left(\frac{1}{n}\right)-\OC\left(\frac{1}{n^3}\right)\in\Theta\left(\frac{1}{n}\right)\;.
\end{equation}
Therefore, applying Theorem~\ref{thm-sup:patch-variance-double-derivative} completes the proof of the theorem.
\end{proof}

\section{Numerical experiment for global MMD}

In this section, we present further numerical experiments with the \textit{global} MMD loss by fixing a \textit{constant bandwidth} $\sigma\in\OC(1)$. We compare the three initialization strategies studied analytically (identity, unbiased and marginal-matching) together with the data-dependent strategy from Ref.~\cite{recio2025train} as done in Sec.~\ref{sec:numerical-studies} from the main text. For the new data-dependent strategy considered here, we recall that single qubit gates are fixed and correspond to the marginal-matching data-dependent strategy presented in previous section. The two-qubit gates are correlated between each other and depends on the target covariances $C_{ij}=t_{ij}-t_it_j$ as:
\begin{equation}\label{eq:covariance-initialization-2-qubit-gate-appendix}
    \theta_{ij}\sim \frac{C_{ij}}{C_{\max}}{\rm Unif}\left(\left[-s\frac{\pi}{2}, s\frac{\pi}{2}\right]\right)\;,
\end{equation}
where $C_{\max}=\max_{ij}|C_{ij}|$ and $s\in [0,1]$ denotes the initialization scale. For the three cases studied analytically, we have $s=\frac{\pi}{2}r$. 

We usually expect global MMD to be untrainable, since weights are now distributed more uniformly between each correlator. In particular, we have $(1-p_{\sigma})\in\OC(1)$ and $p_{\sigma}\in\OC(1)$ such that
\begin{equation} \label{eq:global-MMD-weight-concentration}
    w_A=(1-p_{\sigma})^{n-|A|}p_{\sigma}^{|A|}\in\OC(\exp(-n))\;.
\end{equation}
Thus, the local terms, for which we have a signal, contribute only a small fraction of the global MMD loss.

\begin{figure}
    \centering
    \includegraphics[width=1\linewidth]{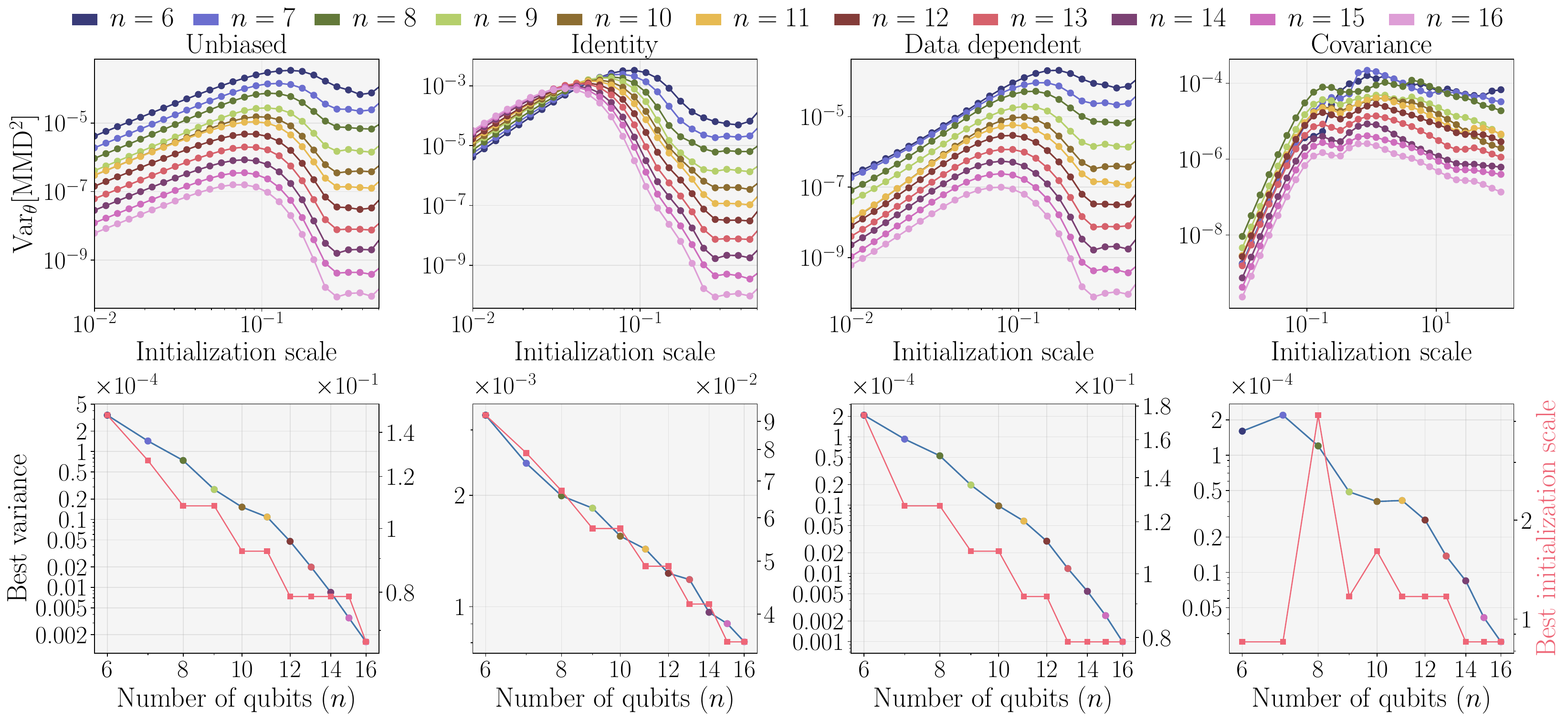}
    \caption{\textbf{Variance of the global MMD estimator versus initialization scale.}
Top row: Variance of the MMD estimator as a function of the initialization scale for four initialization schemes (identity, unbiased, data-dependent, and covariance) using a kernel with bandwidth that scales as $\sigma\in\OC(1)$. Curves correspond to different numbers of qubits $n$ ranging from $n=6$ to $n=16$. For the first three schemes, parameters are initialized as
$\theta_j\sim\theta_j^*+{\rm Unif}[-\frac{\pi}{2}s,\frac{\pi}{2}s]$
where $s$ is the initialization scale. The covariance initialization (right column) follows Eq.~\eqref{eq:covariance-initialization-2-qubit-gate-appendix}, where two-qubit gate parameters are correlated and perturbations are rescaled according to the target distribution covariances. The target distribution is given by a genomic dataset.
Bottom row: Maximum variance over initialization scales (blue) and the initialization scale achieving this maximum (red dashed) as functions of the number of qubits $n$.}
\label{fig:mmd-blobal-variance-patch-scaling}
\end{figure}
However, the results in Fig.~\ref{fig:mmd-blobal-variance-patch-scaling} show that identity initialization yields surprisingly large gradients within restricted patches in terms of scaling with $n$. Indeed, for the unbiased and data-dependent case, we observe a lower gradient around initialization both in terms of magnitude and scaling with the number of qubits (notice that a constant spacing between curves in the top figures correspond to exponential decay with $n$). This can be partially understood from Proposition~\ref{prop:correlator-variance-around-identity}, where we derived an inverse-polynomial variance guarantee for the correlators within those restricted patches. 
Moreover, the specific structure of the distribution at identity center where all correlators are set to $z_A(\vec{0})=1$ leads to a large curvature if the target is far from from this distribution and recalling that this is a local maximum of the MMD loss (see Appendix~\ref{app:data-agnostic-initialization-curvature}).
As a trivial example, let us assume that the target is a uniform distribution such that the curvature of the identity initialization in Eq.~\eqref{eq:MMD-double-derivative-identity-initialisation} becomes
\begin{equation}    \left|\partial^2_{\th_\al}\mathcal{L}(\vec{0})\right| = 8\sum_{\substack{A\subseteq n \\ \al\in A}}w_A=8p_{\sigma}\in\OC(1),
\end{equation}
where here we recall that for $\sigma\in\OC(1)$ we have $p_{\sigma}\in\OC(1)$. In contrast, the curvature of the unbiased initialization center in Eq.~\eqref{eq:mmd-double-derivative-unbiased-app} only contains a single term:
\begin{equation}    \left|\partial^2_{\th_\al}\mathcal{L}(\thv^*)\right| = 8w_{\{\al\}}=8p_{\sigma}(1-p_{\sigma})^{n-1}\in\OC(\exp(-n))\;,
\end{equation}
where we recall the scaling of the weights from Eq.~\eqref{eq:global-MMD-weight-concentration}. Therefore, we can see that the specific structure of the correlators at identity initialization usually lead to non-negligible curvature and thus a non-negligible MMD variance from Theorem~\ref{thm-sup:patch-variance-double-derivative}. 

\end{document}